\documentclass[10pt]{article}
\usepackage[T1]{fontenc}
\usepackage[utf8]{inputenc}
\usepackage[a4paper,width=150mm,top=25mm,bottom=25mm]{geometry}

\usepackage{mathpazo,tgpagella}
\usepackage{microtype}
\usepackage{newunicodechar}
\newunicodechar{Š}{\v{S}}
\newunicodechar{č}{\v{c}}
\newunicodechar{ł}{\l}
\newunicodechar{ń}{\'{n}}

\usepackage{macros}
\usepackage{thmtools,thm-restate}
\makeatletter
\let\c@definition\c@theorem
\let\c@remark\c@theorem
\@for\statementcounter:=theorem,proposition,corollary,lemma,claim,theme,fact,observation,definition,remark\do{%
  \expandafter\counterwithout\expandafter{\statementcounter}{section}%
  \expandafter\def\csname the\statementcounter\endcsname{\thesection.\arabic{theorem}}%
  \expandafter\def\csname theH\statementcounter\endcsname{\theHsection.\arabic{theorem}}%
}
\counterwithin{theorem}{section}
\makeatother
\crefname{observation}{Observation}{Observations}
\Crefname{observation}{Observation}{Observations}
\crefname{appendix}{Appendix}{Appendices}
\Crefname{appendix}{Appendix}{Appendices}
\crefname{subappendix}{Appendix}{Appendices}
\Crefname{subappendix}{Appendix}{Appendices}
\crefname{subsubappendix}{Appendix}{Appendices}
\Crefname{subsubappendix}{Appendix}{Appendices}

\newcommand{\e}{\mathrm{e}}
\newcommand{\rk}{\mathsf{rk}}
\renewcommand{\epsilon}{\varepsilon}
\renewcommand{\emptyset}{\varnothing}

\bibliography{refs}

\def\cM{{\mathcal{M}}}
\def\cN{{\mathcal{N}}}
\def\cB{{\mathcal{B}}}
\def\cI{{\mathcal{I}}}

\def\cU{{\mathcal{U}}}
\def\cK{{\mathcal{K}}}
\def\cL{{\mathcal{L}}}
\def\cE{{\mathcal{E}}}
\def\cR{{\mathcal{R}}}
\def\sL{{\mathscr{L}}}

\def\*#1{\boldsymbol{#1}} 
\def\+#1{\mathcal{#1}} 
\def\-#1{\mathrm{#1}} 
\def\=#1{\mathbb{#1}} 
\def\!#1{\mathfrak{#1}} 

\newcommand{\inner}[2]{\left\langle #1, #2\right\rangle}
\newcommand{\tp}[1]{\left(#1\right)}

\newcommand{\eps}{\varepsilon}

\usepackage{xifthen}
\def\oPr{\mathbf{Pr}}
\renewcommand{\Pr}[2][]{ \ifthenelse{\isempty{#1}}
  {\oPr\left[#2\right]}
  {\oPr_{#1}\left[#2\right]} } 

\def\oE{\mathbb{E}}
\renewcommand{\E}[2][]{ \ifthenelse{\isempty{#1}}
  {\oE\left[#2\right]}
  {\oE_{#1}\left[#2\right]} }

\usepackage{xparse}

\def\oVar{\mathbf{Var}}
\renewcommand{\Var}[2][]{ \ifthenelse{\isempty{#1}}
{\oVar\left[#2\right]}
{\oVar_{#1}\left[#2\right]} }

\title{An $\mathsf{FPRAS}$ for Counting Common Bases of Two Matroids} 
\author{Xiaoyu Chen\thanks{Email: \texttt{xiaoyu@mit.edu},
Massachusetts Institute of Technology. Supported by the NSF CAREER grant CCF-2443045, and the Reed Fund at MIT.} \and Kuikui Liu\thanks{Email: \texttt{liukui@mit.edu},
Massachusetts Institute of Technology. Supported by the NSF CAREER grant CCF-2443045, and the Reed Fund at MIT.}}
\date{}

\begin{document}
\maketitle

\begin{abstract}
We design the first fully polynomial-time randomized approximation scheme ($\FPRAS$) for approximately counting and uniformly sampling common bases of two matroids given by their independence oracles. Moreover, our algorithms generalize far beyond this to Hadamard products of two probability measures on the Boolean cube satisfying a simple nonnegative curvature condition. These algorithmic primitives have myriad applications in statistical physics, polyhedral combinatorics, the theory of quantum many-body systems, and beyond.

At a high level, the central difficulty is that the hard constraints imposed by the two input matroids (or the two input measures, more generally) obstruct traditional approaches based on local Markov chains. Relaxing these constraints can restore rapid mixing, but configurations that satisfy the original hard constraints may become exponentially rare if this is done na\"{i}vely. Our main contribution is a novel approach for approximate counting that marries Markov chain Monte Carlo (MCMC) with convex programming to overcome these obstacles. More concretely:

\begin{itemize}[leftmargin=1.5em,itemsep=2pt,topsep=3pt]
    \item We relax the intersection by imposing an overlap penalty on the product measure formed by the two input measures. We prove, via an integrated Bochner-type method, that this ``soft intersection'' satisfies a Poincar\'{e} Inequality uniformly over all external fields.
    \item We use samples from the relaxed distribution to solve a dual maximum entropy convex program inspired by the Gurvits capacity. This program reweights the soft intersection measure using external fields to boost the aggregate probability of configurations satisfying the hard intersection constraint, without changing their relative weights. We bound the success probability directly using the uniform Poincar\'{e} Inequality and smallness of the gradient norm.
\end{itemize}

\vspace{\baselineskip}

\textbf{AI Disclosure} GPT-5.6 Sol Ultra and GPT-6 Astra Ultra were heavily used to develop the ideas in this paper. A more complete discussion is included in \cref{subsec:ack}.
\end{abstract}

\newpage
{\small
\tableofcontents
}
\newpage

\section{Introduction}\label{sec:intro}
Matroids are beautiful combinatorial abstractions of the notion of linear independence in vector spaces \cite{Oxley2011}, and arise in a multitude of applications, from combinatorial optimization to error-correcting codes. They also form the foundation for a theory of discrete convexity \cite{Murota2003}. Formally, they are defined as a family of subsets of a finite ground set $\cU$ possessing a certain local \emph{exchange property}; see \cref{subsec:matroids} for background on matroids. 
The inclusion-wise maximal sets within the family are called the \emph{bases} of the matroid.

Matroid intersection is the problem of finding a common basis of two matroids, say $\cM,\cN$, on a common ground set $\cU$ of size $n$, with an important classical example being that of finding a perfect matching in a bipartite graph. The seminal work of Edmonds~\cite{Edmonds1970,Edmonds1979MatroidIntersection} gave the first polynomial-time algorithm for matroid intersection when the two matroids are given by their independence oracles. More generally, it is well-known that one can optimize any linear functional over the matroid intersection polytope in polynomial-time; see e.g. \cite{Schrijver2003} and references therein. Intersections of two matroids lie right at the boundary of tractable optimization problems, since even deciding whether or not there exists a common basis of three or more matroids is $\NP$-hard.

A longstanding open problem in theoretical computer science has been to design an algorithm that (approximately) counts or uniformly samples a common basis of two matroids. This is motivated in part by a natural philosophical question concerning the relationship between the computational complexity of optimization and the computational complexity of sampling. In this paper, we fill in this gap by designing the first polynomial-time algorithms that achieve the following. Let $\cM,\cN$ be two arbitrary matroids specified to the algorithm via their independence oracles, and let $\cB(\cM)$ (resp. $\cB(\cN)$) denote the collection of bases of $\cM$ (resp. $\cN$), respectively.
\begin{itemize}
    \item[] \textbf{Approximate Counting:} Estimate the number $\abs{\cB(\cM) \cap \cB(\cN)}$ of common bases up to $(1 \pm \epsilon)$-multiplicative error, for any desired target accuracy $0 < \epsilon < 1$. Note that exact counting is $\#\mathsf{P}$-hard and hence, unlikely to admit an efficient algorithm \cite{Valiant1979}.
    \item[] \textbf{Approximate Sampling:} Sample a common basis from the uniform distribution over $\cB(\cM) \cap \cB(\cN)$ up to $\pm\epsilon$-additive total variation error, for any $0 < \epsilon < 1$.
\end{itemize}

Counting and sampling common bases of two matroids has received significant attention in the literature over the past several decades. For the moment, we highlight two landmark results that previously solved important special cases of the problem. Jerrum--Sinclair--Vigoda designed the first polynomial-time algorithms for counting and sampling (edge weighted) perfect matchings in bipartite graphs \cite{JerrumSinclairVigoda2004}. As a result, they also obtained the first $\mathsf{FPRAS}$ for estimating the permanent of any nonnegative matrix. Anari--Liu--Oveis Gharan--Vinzant gave the first polynomial-time algorithms for counting and sampling bases of a single matroid \cite{AnariEtAl2019}. Both results are based on the Markov chain Monte Carlo (MCMC) method.

Our algorithms are also based on MCMC and hence, are randomized. We note that randomization is necessary when the matroids can only be accessed via their independence oracles. Indeed, even for a single matroid $\cM$, no efficient deterministic algorithm can approximate $\abs{\cB(\cM)}$ to within a simply exponential multiplicative factor in the oracle model \cite{AzarBroderFrieze1994}. This lower bound is nearly matched by the deterministic $2^{O(r)}$-approximation algorithm of Anari--Oveis Gharan--Vinzant \cite{AnariOveisGharanVinzant2018}, which also applies to common bases of two matroids.

\begin{definition}[$\mathsf{FPRAS}$]\label{def:fpras}
Let $Z\ge0$ be a quantity determined by an input instance. A
\emph{fully polynomial-time randomized approximation scheme}
($\mathsf{FPRAS}$) for $Z$ is a randomized algorithm that, given the
instance and an accuracy parameter $0<\eps\le1$, returns
$\widehat Z\ge0$ satisfying
\begin{align*}
    \Pr{(1-\eps) \cdot Z\le\widehat Z\le(1+\eps) \cdot Z}&\ge\frac34,
\end{align*}
in time polynomial in the size of the instance and $1/\eps$.\footnote{By the standard median trick, returning the median of $O(\log(1/\delta))$ independent calls boosts the success probability to $1-\delta$ for any $0<\delta<1$.} In the oracle model, each oracle call has unit cost.
\end{definition}

\begin{theorem}[Counting Common Bases]\label{cor:common-bases}
There is an $\mathsf{FPRAS}$ that, given independence oracles for two matroids $\cM,\cN$ on a common ground set $\cU$, estimates the number $\abs{\cB(\cM)\cap\cB(\cN)}$ of common bases.
\end{theorem}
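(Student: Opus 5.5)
The plan follows the strategy laid out in the abstract: relax the hard intersection to a soft one, then reweight with external fields so that common bases are not exponentially rare. Let $\mu_\cM$ and $\mu_\cN$ be the uniform distributions on $\cB(\cM)$ and $\cB(\cN)$, viewed as measures on $\{0,1\}^{\cU}$. Both have nonnegative curvature; this is the log-concavity / strong Rayleigh-type property of matroid basis-generating polynomials.

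Fix a penalty $\beta>0$ and external fields $\lambda\in\mathbb{R}_{>0}^{\cU}$. I would define the soft intersection on pairs $(B,B')\in\cB(\cM)\times\cB(\cN)$ by
\[
\pi_{\lambda,\beta}(B,B')\propto \lambda^{B}\,\lambda^{B'}\,\e^{-\beta\abs{B\triangle B'}} .
\]
Conditioned on $B=B'$, this measure is proportional to $\lambda^{2B}$ on common bases. Uniformity is recovered on the slice $B=B'$ after dividing out $\lambda^{2B}$, or more simply by using the reweighting only as a proposal and correcting by importance weights.

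\textbf{Step 1 (uniform Poincaré).} Show that the product-of-bases walk, which alternately performs an exchange move in $B$ and in $B'$ with Metropolis acceptance, has spectral gap at least $1/\mathrm{poly}(n)$ for every $\lambda$ and for $\beta\le O(\log n)$. I would do this with a Bochner-type argument integrated along the tilt. Each factor $\mu_\cM$, $\mu_\cN$ has nonnegative curvature uniformly in the external field. The coupling term $\e^{-\beta\abs{B\triangle B'}}$ factorizes over coordinates as $\prod_i \e^{-\beta[\![ i\in B]\!]-\beta[\![ i\in B']\!]+2\beta[\![ i\in B\cap B']\!]}$. The aim is to show that this term perturbs the curvature by at most a controlled amount.

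\textbf{Step 2 (convex program).} Choose $\lambda$ by solving the Gurvits-capacity-like dual
\[
\min_{\theta}\ \log\sum_{B,B'}\e^{\langle\theta,\mathbf 1_B+\mathbf 1_{B'}\rangle-\beta\abs{B\triangle B'}}-\langle\theta,2p\rangle ,
\]
where $p$ is a fractional point in the matroid intersection polytope that is well inside the relative interior, for example the max-entropy point. The program is convex in $\theta$. Its stochastic gradient is $\E[\mathbf 1_B+\mathbf 1_{B'}]-2p$, which we estimate from samples produced in Step 1, so projected stochastic gradient descent finds $\theta$ with small gradient norm in polynomial time.

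\textbf{Step 3 (success probability).} Under the near-optimal $\lambda$, bound $\Pr{B=B'}\ge 1/\mathrm{poly}(n)$. The heuristic is that at the optimum both marginals equal $p$. The quantity $\abs{B\triangle B'}$ is then a function whose variance is controlled by the Poincaré inequality, with Dirichlet form $O(n)$. The target is $\E\abs{B\triangle B'}=O(1/\beta)$, which is what the penalty buys when the marginals agree. From there, a small-ball / Markov argument combined with the penalty yields inverse-polynomial mass on $\{B=B'\}$. I expect this step to be the main obstacle: turning gradient smallness and the Poincaré bound into a polynomial lower bound on exact agreement probability. I would first try an annealing in $\beta$, where $\beta$ is increased gradually and the ratio of partition functions between consecutive values is bounded using the Poincaré inequality.

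\textbf{Step 4 (counting).} Combine rejection sampling, which returns $B$ conditioned on $B=B'$ and reweighted by $\lambda^{-2B}$, with the standard telescoping-product estimator. The telescoping uses a sequence of $\lambda$'s and $\beta$'s starting from the trivially computable product of two single-matroid counts, each of which is approximable by \cite{AnariEtAl2019}. Since each ratio is at least $1/\mathrm{poly}(n)$, Chebyshev's inequality gives the $(1\pm\eps)$ estimate with probability at least $3/4$ in polynomial time. This proves \cref{cor:common-bases}.
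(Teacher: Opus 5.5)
The proposal has a genuine gap, and it begins with your choice of fields. Write $\Psi(\theta)$ for your objective and $\pi_\theta$ for your law with $\lambda=\e^{\theta}$. Because $\lambda$ enters both copies with the same sign, different common bases carry different field factors. Conditioned on $B=B'$ your law is $\propto\lambda^{2B}$, so undoing it with the importance weight $\lambda^{-2B}$ gives an estimator whose relative variance is governed by the spread of $\lambda^{2B}$ over $\cB(\cM)\cap\cB(\cN)$. That spread can be exponential for fields far from constant, and estimating $\sum_B\lambda^{2B}$ over common bases directly is the original problem again.

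For the same reason $\Psi$ is not a capacity for the target. We have $\pi_\theta(B=B')=\e^{-\Psi(\theta)}\sum_{B\in\cB(\cM)\cap\cB(\cN)}\e^{2\langle\theta,\mathbf 1_B-p\rangle}$, and the second factor depends on $\theta$, so minimizing $\Psi$ does not maximize agreement. Worse, stationarity only enforces $\pi(u\in B)+\pi(u\in B')=2p_u$, and nothing in your program controls the difference $\pi(u\in B)-\pi(u\in B')$. Since $\pi(u\in B\setminus B')-\pi(u\in B'\setminus B)=\pi(u\in B)-\pi(u\in B')$, any leftover mismatch is a lower bound on $\pi(B\ne B')$. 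So the claim that ``at the optimum both marginals equal $p$'' has no basis.

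The paper avoids all of this by dualizing $\cN$. It uses pairs $(S,T)=(B,\cU\setminus B')$ with weight $\bm{z}^S\bm{z}^T t^{\abs{S\cap T}}$. In your coordinates this is your penalty with $t=\e^{-2\beta}$, but with the \emph{antisymmetric} field $\bm{z}^{\cU}\bm{z}^{B}\bm{z}^{-B'}$. Every target pair then carries the same monomial $\bm{x}^{\cU}$. The objective becomes $\Phi(\bm{y})=\log\sum_{B,B'}\e^{\langle\bm{y},\mathbf 1_B-\mathbf 1_{B'}\rangle-\beta\abs{B\triangle B'}}$, with no target point $p$, and its gradient is exactly the marginal mismatch $\pi(u\in B)-\pi(u\in B')$. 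Moreover $\pi_{\bm{y}}(B=B')=\abs{\cB(\cM)\cap\cB(\cN)}\,\e^{-\Phi(\bm{y})}$ holds exactly, so no reweighting is ever needed.

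Step~3 is the other missing piece, and the route you sketch would not close it. A Poincar\'e bound on the variance of $\abs{B\triangle B'}$ does not by itself produce an atom at $0$; what is needed is $\E{\abs{B\triangle B'}}<1$. Your $O(1/\beta)$ is asserted rather than derived. The penalty alone does not give it (see \cref{rmk:penalty-alone}), and the right scale is $n\e^{-\beta}$, not $1/\beta$, which is why $\beta\approx\log n$.

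The paper's key idea is the per-coordinate argument of \cref{lem:field-concavity}. Let $p_j$ be the probability that $u$ lies in exactly $j$ of $S,T$. Then $p_0+p_2\le\abs{p_2-p_0}+2\sqrt{p_0p_2}$ with $p_2-p_0=\partial_u\Phi$. To bound $\sqrt{p_0p_2}$, remove the penalty at $u$ and multiply $z_u$ by $\sqrt{tp_0/p_2}$. Under the new law the two mismatch probabilities coincide, $q_0=q_2\le 1/2$. Apply its Poincar\'e inequality to $\allone[u\notin S\cup T]$, whose Dirichlet form is at most $q_1$ because removing the unique copy of $u$ has total rate at most $1$. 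This yields $\sqrt{p_0p_2}\le(2\sqrt t/\gamma)p_1$.

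A union bound over $u$ then gives $\pi_{t,\bm{z}}(T\ne\cU\setminus S)\le n(\|\nabla\Phi\|_\infty+4\sqrt t/\gamma)$. Taking $t=1/(256n^2)$, $\gamma=1$, and $\|\nabla\Phi\|_\infty\le1/(4n)$ (found by stochastic coordinate descent) makes complementary pairs have probability at least $1/2$. A Monte Carlo estimate of $\pi_{t,\bm{z}}(T=\cU\setminus S)$, together with \cref{lem:sampling-to-counting} applied to $Z_{t,\bm{z}}$, finishes the count; no annealing in $\beta$ or in the fields is needed.

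This argument needs the Poincar\'e inequality for \emph{non-uniform} penalties, which the paper proves for all $\bm{t}\in(0,1]^{\cU}$ and $\bm{z}$. It does not treat the coupling as a small perturbation of curvature; in your coordinates the coupling rewards each shared element by $\e^{2\beta}$, hardly small when $\beta\approx\log n$. Instead it applies \cref{thm:poincare} to the translated independent-set pair law, whose penalty factor has dual curvature $1$, and then induces on pairs of bases as the field tends to infinity (\cref{lem:bases-pair-poincare}).
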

\cref{cor:common-bases} and its generalizations (i.e. \cref{thm:log-concave-convolution,thm:curvature-convolution} discussed below) are fundamental algorithmic primitives that have abundant applications in statistical physics, the theory of quantum many-body systems, high-dimensional statistics, and combinatorics. We refer interested readers to \cref{sec:additional-application} for a more comprehensive discussion.

\begin{remark}
The running time of our algorithm is $\widetilde{O}(n^{13}/\epsilon^{2})$, which certainly isn't optimal. Since our goal is to present our algorithms and proofs in the simplest way possible, we made no attempt to optimize the running time.
\end{remark}
All problems we consider in this paper will be self-reducible in the sense of \cite{JVV1986}. Hence, approximate counting and sampling are equivalent under polynomial-time reductions and we will move fluidly between them. In particular, \cref{cor:common-bases} also implies a polynomial-time algorithm for approximately sampling a uniformly random common basis.

As a special case, \cref{cor:common-bases} gives an $\mathsf{FPRAS}$ for counting
\emph{strong orientations} and \emph{undirected Euler tours},
resolving the approximate counting questions in
\cite[Section~3.5]{Jerrum2024PartialRejection} and
\cite{GeStefankovic2010EulerTours}, respectively; see
\cref{app:strong-orientations,app:euler-tours}.
A variant of \cref{cor:common-bases} for weighted common bases (proved below) applies to \emph{network reliability}, where the goal is to estimate the probability of directed root-to-all reachability, or of an unpaired vertex-disjoint linkage between two terminal sets, under independent edge failures; see \cref{app:rooted-reliability,app:reliability-proof}. The former settles the general directed reachability question in \cite[Section~7]{GuoJerrum2019Reliability}; the latter extends the two-terminal result of \cite{FengFuGuo2026} to multiple disjoint paths. Finally, we mention that \cref{cor:common-bases} also yields a new $\mathsf{FPRAS}$ for perfect matchings in bipartite graphs (and the permanent of nonnegative matrices more generally).

\subsection{Subset Convolution of Log-Concave Measures}

\cref{cor:common-bases} is an important special case of ``intersections'' of multi-affine log-concave measures, which we now describe. Let $\mu : 2^{\cU}\to\R_{\geq0}$ be a nonnegative weight function, which need not be normalized to sum to $1$, and define its multi-affine generating polynomial with variables $\{x_{u} : u \in \cU\}$ by
\begin{align}\label{eq:genpoly}
    g_{\mu}(\bm{x}) \coloneqq \sum_{S \subseteq \cU} \mu(S) \cdot \bm{x}^{S},
\end{align}
where $\bm{x}^{S} \coloneqq \prod_{u \in S} x_{u}$. We write $\wrapb{\bm{x}^{S}}g_{\mu}$ for the coefficient $\mu(S)$ of $\bm{x}^{S}$ in the polynomial $g_{\mu}$.
\begin{definition}[Subset Convolution; \cite{BjorklundEtAl2007}]\label{def:subset-convolution}
Let $\mu,\nu : 2^{\cU} \to \R_{\geq0}$ be two nonnegative weight functions. We define their \emph{subset convolution} by the quantity
\begin{align*}
    (\mu \ast \nu)(T) &\coloneqq \sum_{S \subseteq T} \mu(S) \cdot \nu(T \setminus S)
             =\wrapb{\bm{x}^{T}}\wrapp{g_\mu(\bm{x}) \cdot g_\nu(\bm{x})},
    \qquad \forall T\subseteq\cU.
\end{align*}
\end{definition}

\begin{definition}[Log-Concave Polynomial; \cite{Gurvits2009,AnariOveisGharanVinzant2018,BrandenHuh2020}]\label{def:log-concave}
A nonzero polynomial $g$ with nonnegative coefficients is
\emph{log-concave} on $\mathbb R_{\geq0}^{\cU}$ if
\begin{align*}
    \nabla^2\log g(\bm{x}) \preceq 0, \qquad\forall \bm{x}\in\R_{>0}^{\cU}.
\end{align*}
\end{definition}

We will design an $\mathsf{FPRAS}$ for the subset convolution (and a natural generalization of it) of two nonnegative weight functions $\mu,\nu$ assuming $g_{\mu},g_{\nu}$ are log-concave on $\R_{\geq0}^{\cU}$. To bound the running time, let us also define the \emph{coefficient range} $\cR_\mu$ for a weight function $\mu$ as
\begin{align} \label{eq:coefficient-range}
    \cR_{\mu} \coloneqq \frac{\max_{S \in \supp(\mu)} \mu(S)}{\min_{S \in \supp(\mu)} \mu(S)} \geq 1.
\end{align}
To guarantee that our algorithms run in polynomial-time, we will only require that $\cR_{\mu},\cR_{\nu} \leq \exp\wrapp{\mathsf{poly}(n)}$, which is easily satisfied in most applications of interest. The algorithms below take these parameters as input; upper bounds supplied by the user may be used instead, with the same replacement in the running-time bounds.

\begin{theorem}[$\mathsf{FPRAS}$ for Log-Concave Subset Convolution]\label{thm:log-concave-convolution}
Let $\mu,\nu : 2^{\cU} \to \R_{\geq0}$ be two nonnegative weight functions given by evaluation oracles,\footnote{This means that there is a black-box that returns the weight $\mu(S)$ (resp. $\nu(S)$) on input $S \subseteq \cU$. The number of calls to this black-box is counted in the overall running time of our algorithms.} and suppose $g_\mu,g_\nu$ are log-concave. Assume initial sets $S_{0},T_{0} \subseteq \cU$ satisfying $\mu(S_{0})>0,\nu(T_{0})>0$, upper bounds on $\cR_\mu,\cR_\nu$, and an error parameter $0 < \epsilon \leq 1$ are given. Then there is an $\mathsf{FPRAS}$ for any specified coefficient
\begin{align*}
 Z_{\bm{\alpha}} \coloneqq \wrapb{\bm{x}^{\bm{\alpha}}}\wrapp{g_\mu(\bm{x}) \cdot g_\nu(\bm{x})},
 \qquad \bm{\alpha}\in\{0,1,2\}^{\cU}
\end{align*}
that uses at most
\begin{align*}
    \widetilde{O}\wrapp{\frac{n^{11} \cdot \wrapp{n + \log(\cR_{\mu} \cdot \cR_{\nu})}^{2}}{\epsilon^{2}}}
\end{align*}
arithmetic operations and evaluation oracle calls.
\end{theorem}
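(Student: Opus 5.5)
We follow the outline sketched in the abstract. Write the target coefficient as a sum over pairs,
\begin{align*}
    Z_{\bm{\alpha}}=\sum_{(S,T)} \mu(S)\,\nu(T)\,\mathbf{1}\bigl[\mathbf{1}_S+\mathbf{1}_T=\bm{\alpha}\bigr].
\end{align*}
The strategy is to reduce estimating this sum to sampling from a relaxed product measure on pairs. The first step is to split coordinates by the value of $\bm{\alpha}$. Coordinates with $\alpha_u=2$ force $u\in S\cap T$, and coordinates with $\alpha_u=0$ force $u\notin S\cup T$; both reductions amount to conditioning on $u$ (pinning it to $1$ or $0$), which preserves log-concavity of $g_\mu,g_\nu$. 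The remaining coordinates, $U_1=\{u:\alpha_u=1\}$, must satisfy the hard constraint $\mathbf{1}_S+\mathbf{1}_T=\mathbf{1}$ on $U_1$, i.e.\ $S$ and $T$ partition $U_1$. This is the "hard intersection" constraint; for matroids, taking $\nu$ to be the dual matroid's bases reduces common bases to exactly this form.

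Next we soften the constraint. For $\lambda\ge 0$ and external fields $\bm{\theta}\in\R^{U_1}$, define
\begin{align*}
    \pi_{\lambda,\bm{\theta}}(S,T)\propto \mu(S)\,\nu(T)\,e^{-\lambda\abs{S\cap T}-\lambda\abs{U_1\setminus(S\cup T)}}\prod_{u\in S}e^{\theta_u}\prod_{u\in T}e^{-\theta_u}.
\end{align*}
The fields do not change the relative weights of exact partitions, since $\sum_{u\in S}\theta_u-\sum_{u\in T}\theta_u$ is then fixed up to the choice of split. To make this precise we normalize by $\prod_{u\in U_1} e^{\theta_u}$ and use $x^S y^T$ fields in a way that is constant on the partition set. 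The key analytic input is a Poincaré inequality for Glauber/down-up dynamics on $\pi_{\lambda,\bm\theta}$, uniform in $\bm\theta$. We would prove it with a Bochner-type argument: the product $\mu\otimes\nu$ is strongly log-concave in the sense of \cref{def:log-concave}, and the overlap penalty is a sum of pairwise "submodular-type" interactions between copy $u$ in $S$ and copy $u$ in $T$. The spectral gap should then be controlled by integrating the curvature along the interpolation $\lambda'\in[0,\lambda]$, giving a gap at least $1/\mathrm{poly}(n)$ independent of $\bm\theta$. Mixing from $(S_0,T_0)$ then takes $\mathrm{poly}(n)\cdot\log(\cR_\mu\cR_\nu)$ steps, since the warm-start ratio is bounded by the coefficient ranges and $e^{O(n\lambda+\|\bm\theta\|_1)}$.

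Third, we choose $\bm\theta$ via a convex program. The dual max-entropy/capacity objective is $F(\bm\theta)=\log\sum_{S,T}\pi_{\lambda,0}(S,T)e^{\langle\bm\theta,\mathbf 1_S-\mathbf 1_T\rangle}$ (adjusted by the marginal target). Its gradient is the vector of expected coordinate imbalances $\E{\mathbf{1}_S-\mathbf{1}_T}$ compared against the partition's target marginals. We run stochastic gradient descent with samples from the chain until $\|\nabla F\|$ is at most $1/\mathrm{poly}(n)$. Boundedness of the iterates is controlled via $\lambda$ and $\log\cR$. At such a near-stationary $\bm\theta^\ast$, each coordinate's "defect" $D_u=\mathbf 1[u\in S]+\mathbf 1[u\in T]-1$ has near-zero mean. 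Applying the uniform Poincaré inequality to the linear statistic $\sum_u D_u^2$, or rather to the count of defects, gives variance $O(n)$, so this count concentrates. Combined with a large $\lambda=\Theta(\log n)$, this forces $\Pr[\pi_{\lambda,\bm\theta^\ast}]{\text{no defects}}\ge 1/\mathrm{poly}(n)$.

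Finally, we estimate the partition function of $\pi_{\lambda,\bm\theta^\ast}$ by a standard annealing/telescoping product over $\lambda$ and $\bm\theta$ schedules, again using the uniform Poincaré inequality. We multiply by the estimated success probability and divide out the known field factor to get $Z_{\bm\alpha}$. Chebyshev then gives relative error $\epsilon$ with $O(\mathrm{poly}(n)/\epsilon^2)$ samples, and the sample count times chain length yields the stated bound. The main obstacle is the third step: showing that small gradient plus the Poincaré inequality really lower bounds the mass on exact partitions. What I would try first is to show that the defect count under $\pi_{\lambda,\bm\theta^\ast}$ is stochastically dominated by a sum of $1/\mathrm{poly}$-mean Bernoullis, via a negative-dependence property inherited from log-concavity.
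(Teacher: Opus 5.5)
\textbf{There is a genuine gap, at the step you flag yourself, and the setup of that step is also wrong.}

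\emph{The fields point the wrong way.} With weight $\prod_{u\in S}e^{\theta_u}\prod_{u\in T}e^{-\theta_u}$, an exact partition $(S,U_1\setminus S)$ gets the factor $\exp\bigl(2\sum_{u\in S}\theta_u-\sum_{u\in U_1}\theta_u\bigr)$. This depends on $S$, so your tilt does change the relative weights of the configurations you want to count. Also, the gradient of your $F$ is $\mathbb{E}[\mathbf 1_S-\mathbf 1_T]$, which carries no information about defects. The tilt that is constant on partitions is a common field $\bm z^S\bm z^T$, under which every partition gets $\bm z^{U_1}$. Then, with $t=e^{-\lambda}$, the potential $\Phi(\bm y)=\log Z_{t,\bm z}-\sum_u y_u$ satisfies $\partial_u\Phi=p_2-p_0$, where $p_j$ is the probability that $u$ lies in exactly $j$ of $S,T$. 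Moreover $\pi_{t,\bm z}(T=U_1\setminus S)=Z_{\bm\alpha}e^{-\Phi(\bm y)}$.

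\emph{Small gradient does not force mass onto exact partitions.} Even with the right fields, ``small gradient plus a variance bound plus large $\lambda$'' does not suffice. A small gradient only says $p_0\approx p_2$, which allows $p_0=p_2=1/2$. A Poincar\'e bound on the variance of the defect count only controls fluctuations around its mean $\sum_u(p_0+p_2)$, which may be $\Theta(n)$. The penalty alone does not help either (\cref{rmk:penalty-alone}). Your negative-dependence domination is unsupported: log-concavity of $g_\mu,g_\nu$ gives nothing of the kind for the penalized pair law. Indeed, when $|S|+|T|=|U_1|$, each doubly covered coordinate forces an uncovered one.

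\emph{The missing idea} is the per-coordinate inequality $p_0p_2\le 4t\,p_1^2/\gamma^2$ of \cref{lem:field-concavity}. Remove the penalty at $u$ alone and multiply $z_u$ by $s=\sqrt{tp_0/p_2}$, so the new law has $(q_0,q_1,q_2)\propto(p_0,sp_1,p_0)$. Apply its Poincar\'e inequality to $\mathbf 1[u\notin S\cup T]$. Its Dirichlet form is at most $q_1$, because each move updates one set and the unique copy of a singly covered element is removed at total rate at most $1$. This gives $\gamma q_0/2\le q_1$, hence $p_0+p_2\le|p_2-p_0|+2\sqrt{p_0p_2}\le|\partial_u\Phi|+4\sqrt t/\gamma$. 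A union bound with $\|\nabla\Phi\|_\infty\le1/(4n)$ and $t=1/(256n^2)$ then puts probability at least $1/2$ on exact partitions.

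\emph{The Poincar\'e step does not supply what this argument needs.} The argument requires a Poincar\'e inequality uniform over coordinate-dependent penalty vectors $\bm t\in(0,1]^{U_1}$, not only over fields, for a chain with the removal-rate bound above. Your sketch (``strong log-concavity'' and integrating curvature in $\lambda'$) gives neither. Your non-homogeneous dynamics with an extra penalty on uncovered coordinates also has no evident Bochner structure. The paper proceeds as follows.
\begin{enumerate}
\item It splits $g_\mu,g_\nu$ into homogeneous layers, log-concave by \cref{lem:lc-homogeneous-layers}. Within each pair of complementary degrees $|S|+|T|=|U_1|$, so the overlap penalty alone also penalizes uncovered coordinates.
\item It translates to $g(\mathbf 1+c\,\bm z\bm x)$. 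The product of the two translates has downward-closed support and upward curvature $0$, while the dual of $\psi_{\bm t}$ has upward curvature $1$. So \cref{thm:poincare} gives gap $1$ for every $\bm t,\bm z,c$.
\item Letting $c\to\infty$, the chain induced on pairs of bases becomes the exchange chain of \cref{lem:bases-pair-poincare}.
\end{enumerate}

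\emph{Counting step.} Annealing from $\lambda=0$, $\bm\theta=0$ starts at $g_\mu(\mathbf 1)g_\nu(\mathbf 1)$, which is itself a counting problem. The paper instead applies the self-reduction of \cref{lem:sampling-to-counting} to the $2n$ membership bits. This works because pinning preserves homogeneity, log-concavity, and the coefficient-range bounds.
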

\begin{proof}[Proof of \cref{cor:common-bases}]
This is an immediate corollary of \cref{thm:log-concave-convolution} in the case $\bm{\alpha} = \allone$ by setting $\mu(S) = \allone[S \in \cB(\cM)]$ and $\nu(T) = \allone[\cU \setminus T \in \cB(\cN)]$, both of which can be evaluated given independence oracles for $\cM,\cN$. Log-concavity of $g_{\mu},g_{\nu}$ was established in \cite{AnariOveisGharanVinzant2018,BrandenHuh2020}, and we can find initializing sets $S_{0},T_{0}$ by greedy algorithms.
\end{proof}

We highlight three further applications of the more general
\cref{thm:log-concave-convolution} that aren't captured directly by \cref{cor:common-bases}. Each admits an $\mathsf{FPRAS}$;
precise definitions and reductions appear in \cref{sec:additional-application}.
\begin{itemize}
 \item \textbf{Mixed discriminants.} Approximate the mixed discriminant
 of positive semidefinite matrices, a generalization of the permanent
 \cite{CelisEtAl2017}; see \cref{app:determinantal-applications}.
 This resolves the mixed-discriminant question in
 \cite[Section~4, Problem~2]{Gurvits2007MixedVolume}, disproving the
 conjectured nonexistence of an $\mathsf{FPRAS}$. In upcoming work~\cite{ChenLiu2026MixedVolumes}, we use this mixed discriminant algorithm
 as a key ingredient to design an $\mathsf{FPRAS}$ for computing mixed volumes of
 well-presented compact convex sets.
 \item \textbf{The six-vertex model.} Estimate its partition function
 on loopless $4$-regular multigraphs when the nonnegative weights satisfy
 $a\le b+c$, $b\le a+c$, and $c\le a+b$; see \cref{app:six-vertex}.
 For positive weights, this completes the general-graph approximation
 classification left open by \cite[Section~6]{CaiLiuLu2019}.
 \item \textbf{Quantum Max-Cut and the EPR model.} Estimate the optimal
 value of Quantum Max-Cut on bipartite graphs and of the EPR model on
 arbitrary graphs, with nonnegative edge weights; see \cref{app:quantum}.
 This resolves the bipartite complexity question in
 \cite[Open question~3.4]{Gharibian2023QuantumNP} and proves the
 EPR-in-$\mathsf{BPP}$ conjecture of
 \cite[Conjecture~1]{MarwahaSud2026}.
\end{itemize}

\subsection{Counting and Sampling for Hadamard Products via Curvature}
The key input into our approximate counting and sampling algorithms is a Poincar\'{e} Inequality for the Hadamard product of two probability measures $\mu,\nu$ assumed to satisfy suitable ``curvature'' conditions. {For the Hadamard-product dynamics and Poincar\'{e} inequality below, let $\mu,\nu : 2^{\cU} \to \R_{\geq0}$ have a common nonempty support $\Omega$. For weights with different supports, first restrict both to their support intersection and set them to zero outside it.}
\begin{definition}[Hadamard Product]\label{def:hadamard}
Let $\rho = \mu \odot \nu$ denote the probability measure supported on $\Omega$ defined by $(\mu \odot \nu)(S) \propto \mu(S) \cdot \nu(S)$ for all $S \subseteq \cU$.
\end{definition}
Let us now introduce a natural local Markov chain that samples from $\rho = \mu \odot \nu$. {Write $S+u\coloneqq S\cup\{u\}$ and $S-u\coloneqq S\setminus\{u\}$.} For each $u \in \cU$, define the Laplacian $\sL_{u}$ via the following operator on functions $f : 2^{\cU} \to \R$, {evaluated at $S\in\Omega$}:
\begin{align*}
    (\sL_{u}f)(S) = r_{u}(S) \cdot \wrapp{f(S + u) - f(S - u)} \enspace \text{where} \enspace r_{u}(S) \coloneqq \begin{cases}
        -\mu(S + u) / \mu(S - u), &\quad \text{if } u \notin S \\
        \nu(S - u) / \nu(S + u), &\quad \text{if } u \in S
    \end{cases}
\end{align*}
gives the relative birth and death rates. The operator $\sL_{u}$ encodes the updates at coordinate $u$. The full Laplacian is defined as
\begin{align}\label{eq:full-laplacian}
 \sL&\coloneqq\sum_{u\in\cU}\sL_u.
\end{align}
It acts on $S\in\Omega$ with infeasible additions {and deletions} omitted.
\begin{fact}
$\sL_{u}$ is reversible with respect to $\rho$ for all $u \in \cU$, i.e. $\langle f, \sL_{u}g \rangle_{\rho} = \langle \sL_{u}f, g \rangle_{\rho}$ for all functions $f,g : 2^{\cU} \to \R$. Moreover, $\sL_{u}$ is positive semidefinite, i.e. $\langle f, \sL_{u} f \rangle_{\rho} \geq 0$ for all $f : 2^{\cU} \to \R$.
\end{fact}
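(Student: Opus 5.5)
The plan is to rewrite $\sL_u$ as the generator of a continuous-time birth–death chain, check detailed balance with respect to $\rho$, and then read off both claims from the resulting Dirichlet form.

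\textbf{Step 1: unpack the definition.} Fix $S\in\Omega$ and split on whether $u\in S$.
\begin{itemize}
\item If $u\notin S$, then $S-u=S$ and $r_u(S)=-\mu(S+u)/\mu(S)$. So
\[
(\sL_uf)(S)=\frac{\mu(S+u)}{\mu(S)}\bigl(f(S)-f(S+u)\bigr).
\]
\item If $u\in S$, then $S+u=S$. So
\[
(\sL_uf)(S)=\frac{\nu(S-u)}{\nu(S)}\bigl(f(S)-f(S-u)\bigr).
\]
\end{itemize}
In both cases the rates are well defined, since $\mu,\nu>0$ on $\Omega$. By the stated convention, a term is dropped whenever the neighbouring set $S\pm u$ lies outside $\Omega$. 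Thus $-\sL_u$ is the generator of the jump process that moves $S\to S+u$ at rate $\mu(S+u)/\mu(S)$ and $S\to S-u$ at rate $\nu(S-u)/\nu(S)$, restricted to moves whose endpoints both lie in $\Omega$.

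\textbf{Step 2: detailed balance.} Take $T\in\Omega$ with $u\notin T$ and $T+u\in\Omega$, and write $Z=\sum_{S\in\Omega}\mu(S)\nu(S)$. Then
\[
\rho(T)\cdot\frac{\mu(T+u)}{\mu(T)}=\frac{\nu(T)\,\mu(T+u)}{Z},
\qquad
\rho(T+u)\cdot\frac{\nu(T)}{\nu(T+u)}=\frac{\mu(T+u)\,\nu(T)}{Z}.
\]
These agree, so the flux across each edge $\{T,T+u\}$ is the same in both directions. Define the edge conductance $c_u(T)\coloneqq \mu(T+u)\nu(T)/Z\ge0$.

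\textbf{Step 3: the Dirichlet form.} Expand $\langle f,\sL_ug\rangle_\rho=\sum_{S\in\Omega}\rho(S)f(S)(\sL_ug)(S)$. Group the terms by the edges $\{T,T+u\}$ with both endpoints in $\Omega$; each such edge contributes once from $S=T$ and once from $S=T+u$. By Step 2 both contributions carry the same weight $c_u(T)$, so
\[
\langle f,\sL_ug\rangle_\rho=\sum_{T}c_u(T)\bigl(f(T+u)-f(T)\bigr)\bigl(g(T+u)-g(T)\bigr).
\]
The sum runs over $T\in\Omega$ with $u\notin T$ and $T+u\in\Omega$. The right-hand side is symmetric in $f$ and $g$, which gives reversibility. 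Setting $g=f$ gives a nonnegative combination of squares, which gives positive semidefiniteness.

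The only real care needed is in Step 1. One must get the sign of $r_u$ right in the case $u\notin S$, so that $-\sL_u$ has nonnegative off-diagonal rates. One must also check that omitting infeasible moves is symmetric: a move $T\to T+u$ is kept exactly when the reverse move $T+u\to T$ is kept, because both require $T,T+u\in\Omega$. With these two points in place, the remaining steps are routine bookkeeping.
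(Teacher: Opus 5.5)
Your proposal is correct and is essentially the argument the paper relies on. The paper states this Fact without a separate proof, but its implicit justification is the same: detailed balance for each addition/deletion pair at $u$, namely $\rho(T)\mu(T+u)/\mu(T)=\mu(T+u)\nu(T)/Z=\rho(T+u)\nu(T)/\nu(T+u)$, followed by the Dirichlet-form identity from the preliminaries, which is symmetric in $f,g$ and nonnegative when $f=g$. Your check that infeasible moves are dropped symmetrically (both directions require $T,T+u\in\Omega$) is the right point of care, and it agrees with the paper's convention.
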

\begin{remark}[Discrete-time implementation]\label{rem:discrete-chain}
The continuous-time chain with generator $-\sL$ can be simulated by a
simple discrete-time chain. Let $B\ge\max_{S\in\Omega,\,u\in\cU}|r_u(S)|$
be any known bound on the rates. From the current state $S\in\Omega$:
\begin{enumerate}[label=(\arabic*),nosep,leftmargin=2em]
 \item With probability $1/2$, stay at $S$.
 \item Otherwise, choose $u\in\cU$ uniformly at random, and
 \begin{itemize}[nosep,leftmargin=1.5em]
  \item if $u\notin S$ and $S+u\in\Omega$, move to $S+u$ with probability
  $\mu(S+u)/(B\mu(S))$;
  \item if $u\in S$ {and $S-u\in\Omega$}, move to $S-u$ with probability $\nu(S-u)/(B\nu(S))$;
  \item otherwise, stay at $S$.
 \end{itemize}
\end{enumerate}
Each probability in (2) is at most $1$ by the choice of $B$. The
resulting transition matrix is $P=I-\sL/(2nB)$, which is lazy and
reversible with respect to $\rho$, and its spectral gap is that of $\sL$
divided by $2nB$. The Poincar\'e inequality below therefore gives the
mixing bound \cref{eq:poincare-mixing} with $R=nB$.
\end{remark}

Having defined the dynamics, let us now introduce the local ``curvature'' criterion that will be used to deduce a Poincar\'{e} Inequality for $\sL$.

\begin{definition}[(Upward) $\kappa$-Spectral Curvature]\label{def:spectral-curvature}
Fix a parameter $\kappa \in \R$. For a nonnegative weight function $\mu : 2^{\cU} \to \R_{\geq0}$ that is not identically zero, and each $S \in \supp(\mu)$, define the $\cU \times \cU$ \emph{(upward) $\kappa$-curvature matrix}\footnote{Note that the only nonzero entries of $A_{S}^{\mu,\kappa}$ come from elements in $\cU \setminus S$.}
\begin{align*}
    A_{S}^{\mu,\kappa}(u,v) \coloneqq \begin{cases}
        \frac{\mu(S + v)^{2}}{\mu(S)^{2}} - \kappa \cdot \frac{\mu(S + v)}{\mu(S)}, &\quad\text{if } u,v \notin S, u = v \\
        \frac{\mu(S + u)}{\mu(S)} \cdot \frac{\mu(S + v)}{\mu(S)} - \frac{\mu(S + u + v)}{\mu(S)}, &\quad\text{if } u,v \notin S, u \neq v \\
        0, &\quad\text{otherwise}
    \end{cases}
\end{align*}
We say $\mu$ has \emph{(upward) $\kappa$-spectral curvature} if $A_{S}^{\mu,\kappa} \succeq 0$ for every $S \in \supp(\mu)$.
\end{definition}
\begin{remark}
In this paper, we are primarily interested in the $\kappa \geq 0$ regime. However, we note that $\kappa$ can be any real number; see \cref{app:curvature-examples} for examples.
\end{remark}
\begin{remark}
Unlike the influence/correlation/covariance matrices appearing in the theory of spectral independence \cite{ALO2020} and localization schemes \cite{ChenEldan2022}, whose entries are $\#\mathsf{P}$-hard to compute exactly, the entries of $A_{S}^{\mu,\kappa}$ are straightforward to compute for any $S \in \supp(\mu)$. Hence, for many applications, verifying spectral curvature can be done easily by hand.
\end{remark}

The notion of curvature appearing in \cref{def:spectral-curvature} has previously appeared in the literature in various guises; we discuss these connections in greater depth in \cref{subsec:prior-notions}. Before we continue, note that we can also define a ``downward'' analog of $\kappa$-spectral curvature by considering the (upward) $\kappa$-spectral curvature of the \emph{dual $\mu^{*}$}, defined as follows.
\begin{definition}[Dual]\label{def:dual-measure}
Let $\mu : 2^{\cU} \to \R_{\geq0}$ be nonzero. We define its \emph{dual} $\mu^{*} : 2^{\cU} \to \R_{\geq0}$ by $\mu^{*}(S) \coloneqq \mu(\cU \setminus S)$ for all $S \subseteq \cU$.
\end{definition}
We elaborate on the role of the ``downward'' analog of $\kappa$-spectral curvature in \cref{sec:poincare-proof}. Our main technical theorem is the following.

\begin{theorem}[Main Poincar\'{e} Inequality]\label{thm:poincare}
Let $\mu,\nu : 2^{\cU} \to \R_{\geq0}$ have {a common nonempty support $\Omega$, connected under single-element additions and deletions}, and let $\rho = \mu \odot \nu$ be their Hadamard product, which is supported on $\Omega$. If $\mu$ has $\eta$-spectral curvature and $\nu^{*}$ has $\kappa$-spectral curvature, where $\eta,\kappa \in \R$ satisfy $\eta + \kappa > 0$, then $\rho$ obeys the Poincar\'{e} Inequality
\begin{align*}
    (\eta + \kappa) \cdot \Var[\rho]{f} \leq \langle f, \sL f \rangle_{\rho}, \qquad \forall f : \Omega \to \R.
\end{align*}
\end{theorem}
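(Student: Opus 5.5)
The plan is a discrete Bochner argument. I will prove the $\Gamma_2$-type inequality
\begin{align*}
    \langle \sL f, \sL f\rangle_\rho \;\ge\; (\eta+\kappa)\,\langle f,\sL f\rangle_\rho \qquad \text{for all } f:\Omega\to\R,
\end{align*}
and then deduce the Poincar\'e Inequality spectrally. Since $\sL$ is self-adjoint and positive semidefinite on $L^2(\rho)$, apply the inequality to an eigenfunction $\sL f=\lambda f$. This gives $\lambda^2\|f\|^2\ge(\eta+\kappa)\lambda\|f\|^2$, so every nonzero eigenvalue satisfies $\lambda\ge \eta+\kappa$. The Dirichlet form is a sum of squared differences along the edges $\{S,S\pm u\}$ of $\Omega$. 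Because $\Omega$ is connected under single-element moves, the kernel of $\sL$ is exactly the constants. Hence the spectral gap is at least $\eta+\kappa$, which is the stated inequality $(\eta+\kappa)\Var[\rho]{f}\le\langle f,\sL f\rangle_\rho$.

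\textbf{Step 1: set up $\sL f$ and the Dirichlet form.} Writing $g_u(S)=f(S+u)-f(S)$ for $u\notin S$ and $h_u(S)=f(S)-f(S-u)$ for $u\in S$, I will express
\begin{align*}
    (\sL f)(S) = -\sum_{u\notin S}a_u(S)\,g_u(S)+\sum_{u\in S}b_u(S)\,h_u(S),
\end{align*}
where $a_u=\mu(S+u)/\mu(S)$ and $b_u=\nu(S-u)/\nu(S)$. Infeasible moves carry zero weight, because $\mu$ and $\nu$ vanish off $\Omega$. The edge $\{S,S+u\}$ has conductance $\rho(S)a_u(S)\propto \mu(S+u)\nu(S)=\rho(S+u)\,b_u(S+u)$. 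Consequently both $\E[\rho]{\sum_{u\notin S}a_u g_u^2}$ and $\E[\rho]{\sum_{u\in S}b_u h_u^2}$ equal $\langle f,\sL f\rangle_\rho$, since each edge is counted exactly once in each sum.

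\textbf{Step 2: expand $\|\sL f\|_\rho^2=\|A\|^2+\|B\|^2-2\langle A,B\rangle$,} where $A$ is the upward sum and $B$ the downward sum.

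For $\|A\|^2$, I will apply $\eta$-spectral curvature in the form
\begin{align*}
    aa^{\top}\succeq \eta\,\mathrm{diag}(a)+M, \qquad M(u,v)=\frac{\mu(S+u+v)}{\mu(S)}\ (u\ne v),
\end{align*}
to the vector $(g_u)_{u\notin S}$. The diagonal part then yields $\eta\,\langle f,\sL f\rangle_\rho$ by Step 1.

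For $\|B\|^2$, I will do the same with the $\kappa$-curvature of $\nu^*$. Under the complementation $S\mapsto \cU\setminus S$, this becomes a downward statement $bb^\top\succeq\kappa\,\mathrm{diag}(b)+N$ with $N(u,v)=\nu(S-u-v)/\nu(S)$. Its diagonal part yields $\kappa\,\langle f,\sL f\rangle_\rho$.

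\textbf{Step 3 (main obstacle): show the leftover terms are nonnegative.} After Step 2, three second-order pieces remain:
\begin{itemize}
    \item $\sum_S\rho(S)\sum_{u\ne v\notin S}\frac{\mu(S+u+v)}{\mu(S)}g_ug_v$, from the upward curvature;
    \item the analogous downward sum with weights $\frac{\nu(S-u-v)}{\nu(S)}$;
    \item the cross term $-2\sum_S\rho(S)\sum_{u\notin S,\,v\in S}a_ub_v\,g_uh_v$.
\end{itemize}
I must show their sum is $\ge0$. I will reindex all three over two-dimensional faces $\{T,T+u,T+v,T+u+v\}$ of the cube. After cancelling $\mu(S)$ and $\nu(S)$ against $\rho(S)$, every weight becomes $\mu(T+u+v)\nu(T)/Z$. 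I then expect the sum to collapse, as a summation by parts, to
\begin{align*}
    \sum_{T,\,u\ne v}\frac{\mu(T+u+v)\,\nu(T)}{Z}\,\big(f(T+u+v)-f(T+u)-f(T+v)+f(T)\big)^2 \;\ge\; 0,
\end{align*}
possibly up to a constant factor. The point to check is that the upward and downward terms on a face reproduce the two ``paths'' $T\to T+u\to T+u+v$ and $T\to T+v\to T+u+v$. The cross term pairs the edge $T+v\to T+u+v$ with $T+v\to T$, which supplies the remaining products in the square.

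The delicate part is bookkeeping on faces that are only partially contained in $\Omega$. If $\mu(T+u+v)=0$ or $\nu(T)=0$, the face weight vanishes, but the cross term may still contribute. I will verify that such contributions either cancel exactly or have a favorable sign; this is what I would check first. If the clean square identity fails, my fallback is to keep the face weights asymmetric and use $2xy\le x^2+y^2$ on each face, absorbing the error back into the curvature slack.
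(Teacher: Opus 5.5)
Your route is the paper's: the Bochner criterion (your eigenfunction argument is \cref{prop:bochner-criterion}), the expansion of $\|\sL f\|_\rho^2$ with the two curvature quadratic forms split off, and the regrouping of the leftover off-diagonal terms over faces $\{T,T+u,T+v,T+u+v\}$. Your three leftover pieces are exactly the entries of the paper's matrix $B_S$. The gap is in the step you flag, and your diagnosis of it is inverted.

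When $\mu(T+u+v)\nu(T)=0$ nothing goes wrong. The cross-term coefficient at $T+u$ is $\rho(T+u)\,a_v(T+u)\,b_u(T+u)=\mu(T+u+v)\nu(T)/Z$, which is the same face weight, so it vanishes too. The dangerous case is the opposite one: $T,T+u+v\in\Omega$, so $w\coloneqq\mu(T+u+v)\nu(T)/Z>0$, but a middle vertex such as $T+v$ lies outside $\Omega$. Then the cross term that should sit at $T+v$ is missing. What remains on the face is not a square and has no definite sign; with your feasible-move bookkeeping it is just the lone term $-2w\,g_v(T+u)\,h_u(T+u)$. This term neither cancels nor has a favorable sign. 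Your fallback of absorbing it into curvature slack also fails: the curvature matrices may be singular, so there is no slack.

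The missing idea is that the hypothesis rules such faces out, provided you use spectral curvature as stated. The condition is $A_T^{\mu,\eta}\succeq0$ as a full $\cU\times\cU$ matrix, including the rows for infeasible additions. Suppose $\mu(T+v)=0$. Then $A_T^{\mu,\eta}(v,v)=0$, while $A_T^{\mu,\eta}(u,v)=-\mu(T+u+v)/\mu(T)\neq0$, which is incompatible with positive semidefiniteness. The same argument applies to $T+u$.

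Hence $T,T+u+v\in\Omega$ forces $T+u,T+v\in\Omega$, so every face of positive weight lies entirely in $\Omega$. Its four contributions then combine exactly into $w\,(\Delta_{uv}f)(T)^2$ for each ordered pair $(u,v)$, with no inequality needed. This is the support argument the paper places just before \cref{obs:bochner-coefficients}.
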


{
A family $\mathcal F\subseteq 2^W$ is \emph{downward closed} if it contains every subset of each of its members, and \emph{upward closed} if it contains every superset contained in $W$ of each of its members.

\begin{remark}
By \cref{lem:positive-curvature-support}, a positive spectral-gap bound from \cref{thm:poincare} forces $\Omega$ to be upward or downward closed after removing coordinates that are constant on $\Omega$.
\end{remark}
}


Through \Cref{thm:poincare}, we can design an $\mathsf{FPRAS}$ for the subset convolution of a pair of downward closed measures that satisfy nonnegative curvature.
\begin{theorem}\label{thm:curvature-convolution}
Suppose $\mu,\nu : 2^{\cU} \to \R_{\geq0}$ have downward closed supports and upward spectral curvatures $\eta,\kappa$, with $\eta+\kappa\ge0$.
Fix an error parameter $0<\epsilon<1$.
Put $\gamma\coloneqq \max\{\eta,\kappa\}$; if $\gamma>0$,
let $\chi\ge1$ be a supplied bound, polynomial in $n$ and $\epsilon^{-1}$, such that, for each
$\omega\in\{\mu,\nu\}$,
\begin{align*}
 \omega(S+u)&\le\chi \cdot \gamma \cdot \omega(S),
 \qquad \forall S\in\supp\omega, \, \forall u\notin S.
\end{align*}
In all other cases, take $\chi=1$ without imposing this bound.
Given the curvature parameters $\eta,\kappa$, value oracles for $\mu,\nu$, upper bounds on $\cR_\mu,\cR_\nu$, and $\epsilon$, there is an
$\mathsf{FPRAS}$ for any specified coefficient
\begin{align*}
 Z_{\bm{\alpha}}&\coloneqq [\bm{x}^{\bm{\alpha}}]\bigl(g_\mu(\bm{x}) \cdot g_\nu(\bm{x})\bigr),
 \qquad \bm{\alpha}\in\{0,1,2\}^{\cU}
\end{align*}
that uses at most
\begin{align*}
 \widetilde{O}\wrapp{\frac{n^{11}(n+\log(\cR_\mu\cR_\nu))^2}{\eps^2}
       +\frac{n^5\chi(n+\log(\cR_\mu\cR_\nu))}{\eps^3}}
\end{align*}
arithmetic operations and evaluation oracle calls.
\end{theorem}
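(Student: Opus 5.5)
We reduce to the Hadamard-product setting of \cref{thm:poincare} and then combine MCMC with a capacity-type reweighting. Write $Z_{\bm\alpha}=\sum_{S,T}\mu(S)\nu(T)\,\allone[\allone_S+\allone_T=\bm\alpha]$. Coordinates with $\alpha_u=2$ force $u\in S\cap T$, and coordinates with $\alpha_u=0$ force $u\notin S\cup T$. Conditioning (pinning) preserves downward closedness on the remaining ground set and preserves upward spectral curvature, because the curvature matrices of the pinned measure are principal submatrices of the original ones. So we may assume $\bm\alpha=\allone$. Then $Z=\sum_S \mu(S)\,\nu^*(S)$. This is the total mass of $\mu\odot\nu^*$, except that it is a total mass and not a normalized measure: $\supp\mu$ is downward closed while $\supp\nu^*$ is upward closed, so their intersection is generally small and badly connected.

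The plan is to relax the intersection to a ``soft intersection'' on pairs. We take the product measure $\pi(S,T)=\mu(S)\nu(T)\lambda^{|S\cap T|}\prod_u e^{\theta_u(\allone[u\in S]+\allone[u\in T]-1)}$ on $\supp\mu\times\supp\nu$, with an overlap penalty $\lambda\le 1$ and external fields $\bm\theta$. We then view it as a Hadamard product on the doubled ground set $\cU\sqcup\cU$. One factor is $\mu\otimes\nu$ tilted by $\bm\theta$. External fields rescale $A_S$ by a positive diagonal congruence together with a shift of the diagonal by $(e^{\theta}-\ldots)$. Since the spectral curvature of $\mu$ is $\eta$ and that of $\nu$ is $\kappa$, I expect $\min$-type bounds, and the hypothesis $\eta+\kappa\ge 0$ is what allows the split into ``$\mu$-part with curvature $\eta$'' and ``dual part with curvature $\kappa$'' in \cref{thm:poincare}. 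The other factor is the penalty, treated as the dual measure. For $\eta+\kappa=0$ we add a small positive curvature by choosing $\lambda$ slightly below $1$ or adding a uniform field, and we pay polynomially through $\chi$. The hypothesis $\omega(S+u)\le\chi\gamma\,\omega(S)$ is exactly what bounds the rates $B$ in \cref{rem:discrete-chain} relative to the curvature, giving relaxation time $\mathrm{poly}(n)\cdot\chi$ uniformly over all $\bm\theta$.

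The counting step runs as follows. The target $Z$ equals $\pi$-mass on the event $E=\{S\cap T=\emptyset,\ S\cup T=\cU\}$, and on $E$ the fields contribute a factor of exactly $1$. Hence $Z=Z_\pi(\bm\theta)\cdot\pi_{\bm\theta}(E)$ for every $\bm\theta$. We estimate $Z_\pi(\bm\theta)$ by the standard annealing/telescoping product. This uses samples at $\mathrm{poly}(n,\log\cR)$ intermediate temperatures, where we interpolate from a product measure at $\lambda=1$ and fields $0$, whose partition function is $g_\mu(\allone)g_\nu(\allone)$ and is itself estimable by the same annealing from the empty set using downward closedness. We estimate $\pi_{\bm\theta}(E)$ by direct sampling. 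To make $\pi_{\bm\theta}(E)\ge 1/\mathrm{poly}$, we choose $\bm\theta$ by minimizing the convex dual $\log Z_\pi(\bm\theta)$, which is Gurvits-capacity-like. Its gradient is $\E[\pi_{\bm\theta}]{\allone_{u\in S}+\allone_{u\in T}-1}$. We run stochastic gradient descent with MCMC gradient estimates until $\|\nabla\|_2\le 1/\mathrm{poly}(n)$, and the domain bounds come from $\log(\cR_\mu\cR_\nu)+n$.

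The main obstacle is the lower bound $\pi_{\bm\theta}(E)\ge 1/\mathrm{poly}(n)$ at an approximately stationary $\bm\theta$. My first attempt is as follows. Near-zero gradient means the coverage statistic $f=|S|+|T|-n$ has mean near $0$. The uniform Poincar\'e inequality, applied to $f$ and to linear statistics, bounds $\Var{f}$ by $\langle f,\sL f\rangle/(\eta+\kappa)=O(n\cdot\mathrm{rates})$. We also need the overlap $|S\cap T|$ to be small. For that we choose $\lambda$ inverse-polynomially small, so that the penalty makes $\E{|S\cap T|}$ bounded, with the Poincar\'e inequality again controlling concentration. Then we combine this with log-concavity along the $\bm\theta$-direction (variance of $f$ equal to the Hessian) and an anti-concentration argument, a Chebyshev-plus-unimodality step, to obtain mass $\Omega(1/\sqrt n)$ or better on $\{f=0,\ S\cap T=\emptyset\}=E$. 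Once this holds, the sample complexity totals give the stated $\widetilde O$ bound, where the $\epsilon^{-3}\chi$ term comes from the degenerate-curvature regularization.
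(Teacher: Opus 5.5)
Your proposal has genuine gaps. The first is a missed structural fact, and it makes the soft-intersection machinery unnecessary where it matters. When $\gamma>0$, say $\kappa=\gamma$, \cref{lem:positive-curvature-support} forces $\supp\nu=2^V$. After pinning $\cU\setminus V$ (as you already do for $\bm\alpha$), $\supp\nu^*=2^{\cU}$. So $\rho(A)\propto\mu(A)\nu(\cU\setminus A)$ lives on the downward closed, connected set $\supp\mu$, contrary to your claim that this intersection is ``small and badly connected''. Its partition function is exactly $Z_\cap$.

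The paper applies \cref{thm:poincare} directly to this one-copy product. The factors are $\widetilde\mu(A)=t^{|A|}\mu(A)$ with $t=1-\eps/(16n)$, which has curvature $t\eta$ and changes $Z_\cap$ by at most a $(1-\eps/16)$ factor, and $\nu^*$, whose dual has curvature $\kappa$. This is where $\eta+\kappa\ge0$ enters: the gap is at least $\eps\gamma/(16n)$. All rates are at most $\chi\gamma$ because no external fields are present. \Cref{lem:sampling-to-counting} then gives the $n^5\chi\eps^{-3}$ term, and $\eta=\kappa=0$ goes through \cref{prop:zero-curvature-log-concavity} to \cref{thm:log-concave-convolution}.

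Your doubled construction does not give what you claim. Its two Hadamard factors are $(\mu\otimes\nu)_{\bm\theta}$ and the penalty, whose dual has curvature $1$, so $\eta+\kappa\ge0$ plays no role. The product's curvature matrix is block diagonal, so its curvature is only $\min\{\eta,\kappa\}$. Fields act as $A_S^{\mu_{\bm z},c}=D_{\bm z}\bigl(A_S^{\mu,0}-c\operatorname{diag}(r_u/z_u)\bigr)D_{\bm z}$, where $D_{\bm z}=\operatorname{diag}(\bm z)$ and $r_u=\mu(S+u)/\mu(S)$. Hence a negative $\eta$ only guarantees curvature $\eta\max_u z_u$. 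For example, weights $(1,1,1,M)$ on $\{1,2\}$ under a uniform field $z$ have best curvature $z(2-M)$. Once this falls below $-1$, \cref{thm:poincare} gives nothing, so you have no Poincar\'e inequality uniform in $\bm\theta$ in cases like rainbow matchings, where $\eta=-q$ and $\kappa=q$. The addition rates also become $e^{\theta_u}\omega(S+u)/\omega(S)$, which $\chi\gamma$ does not bound, and for $\gamma=0$ no rate bound is supplied at all. In the zero-curvature case the paper handles this by homogenizing and passing to the $\lambda\to\infty$ induced exchange chain of \cref{lem:bases-pair-poincare}, whose rates are at most $1/t$ whatever the fields.

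The second gap is the step you call the main obstacle: it is not proved, and your suggested route does not work. A small penalty alone does not make the overlap small (\cref{rmk:penalty-alone}). Concentration or unimodality of the single statistic $f=|S|+|T|-n$ (unimodality is not even established) cannot control $E$, which needs every coordinate covered exactly once. You need a per-coordinate bound, which is what \cref{lem:field-concavity} provides: $p_0p_2\le(4t/\gamma^2)p_1^2$ for each $u$. Its proof retilts only coordinate $u$, removing its penalty and scaling $z_u$ by $\sqrt{tp_0/p_2}$ so that $q_0=q_2$. It then applies the Poincar\'e inequality to $\allone[u\notin S\cup T]$, whose Dirichlet form is at most $q_1$ because unique copies are removed at total rate at most $1$. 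Combined with $p_0+p_2\le|\partial_u\Phi|+2\sqrt{p_0p_2}$, small $\|\nabla\Phi\|_\infty$, and a union bound, this gives $\pi(E)\ge1/2$. That argument needs the Poincar\'e inequality uniformly over all fields and non-uniform penalties, which is exactly what your construction lacks when $\min\{\eta,\kappa\}<0$.
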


\begin{remark}\label{rmk:curvature-LC-generalization}
\Cref{thm:curvature-convolution} can be viewed as a generalization of
\cref{thm:log-concave-convolution}. For log-concave measures, translating
$g_\mu(\bm{x}),g_\nu(\bm{x})$ to
$g_\mu(\mathbf1+\bm{x}),g_\nu(\mathbf1+\bm{x})$ gives downward-closed measures
with zero curvature by \cref{prop:flc-curvature}.
After reducing to homogeneous polynomials whose degrees sum to $n$
(see \cref{app:convolution-proofs}), these translations preserve the
coefficient of $\bm{x}^{\cU}$. The translated weights, i.e. the coefficients of $g_{\mu}(\allone + \bm{x}), g_{\nu}(\allone + \bm{x})$, are respectively
\[
 \widehat\mu(S)=\sum_{T\supseteq S}\mu(T),
 \qquad \widehat\nu(S)=\sum_{T\supseteq S}\nu(T).
\]
These sums each admit an $\mathsf{FPRAS}$ since $\mu,\nu$ are log-concave~\cite{AnariEtAl2019}.
Using sufficiently accurate estimates of the queried weights and
controlling the accumulated sampling error gives the reduction,
with additional polynomial overhead. We state
\cref{thm:log-concave-convolution} separately to give a direct
algorithm using the original evaluation oracles and a sharper
running-time bound.
\end{remark}

{
\begin{remark}[Optimality of the curvature threshold]\label{rmk:curvature-threshold}
The condition $\eta+\kappa\ge0$ in \cref{thm:curvature-convolution}
is optimal among constant lower bounds on the curvature sum.
For every fixed $\delta>0$, there is no $\mathsf{FPRAS}$ under the
weaker condition $\eta+\kappa>-\delta$, unless
$\mathsf{NP}=\mathsf{RP}$. This holds even when $\mu=\nu$ have
downward-closed support, polynomial-time value oracles, and
$\log(\cR_\mu\cR_\nu)=O(n\log n)$.
The proof is deferred to \cref{app:curvature-threshold}.
\end{remark}
}

\paragraph{Additional Applications of \cref{thm:curvature-convolution}.} As noted in \cref{rmk:curvature-LC-generalization}, \cref{thm:curvature-convolution} generalizes \cref{thm:log-concave-convolution}. However, we highlight a couple interesting applications of the former that aren't captured by the latter. In particular, we show that each of the following approximate counting problems on simple graphs admits an
$\mathsf{FPRAS}$. Both counts include all sizes; precise definitions
and proofs appear in \cref{app:curvature-applications}.
\begin{itemize}
 \item \textbf{Rainbow matchings.} Count matchings with no repeated
 edge color, for an arbitrary edge coloring; see
 \cref{cor:rainbow-matchings}.
 This counts all sizes and does not address the rainbow perfect-matching
 enumeration studied by \cite{PerarnauSerra2013}.
 \item \textbf{Degree-constrained forests.} Count forests satisfying
 prescribed upper bounds on the vertex degrees. When every bound is
 two, this counts unions of vertex-disjoint paths; see
 \cref{cor:degree-constrained-forests}.
 This adds vertex-degree bounds to the forest-counting result of
 \cite{AnariEtAl2019}.
\end{itemize}

\subsection{Connections with Previous Notions}\label{subsec:prior-notions}
We compare the spectral curvature condition in
\cref{def:spectral-curvature} with earlier notions that appeared in the literature.

\paragraph{Trickle-Down Condition for Field Dynamics} Previous work of \cite{ChenEtAl2025RandomRegular} derived a sufficient condition for the so-called \emph{field dynamics} \cite{ChenFengYinZhang2024} to satisfy a Poincar\'{e} Inequality based on a certain ``pairwise dependency matrix''. This matrix turns out to be equivalent to the curvature matrix appearing in \cref{def:spectral-curvature} up to a simple transformation, which we now describe.

Let $\mu:2^{\cU}\to\mathbb R_{\ge0}$ be a nonzero weight function with downward closed support. For each $S\in\supp(\mu)$, put
\begin{align*}
    D_S \coloneqq \diag\wrapp{\frac{\mu(S + u)}{\mu(S)} : u \notin S, \mu(S + u) > 0}.
\end{align*}
The matrices below are indexed by the feasible
additions $u\notin S$ with $\mu(S+u)>0$.


\begin{definition}[Pairwise dependency matrix {\cite{ChenEtAl2025RandomRegular}}]
\label{def:pairwise-dependency}
Let $S\in\supp(\mu)$. The \emph{pairwise dependency matrix} of $\mu$
at $S$ is the symmetric matrix $M_S$ with entries
\begin{align*}
 M_S(u,v)&\coloneqq
 \begin{cases}
 \dfrac{\mu(S+u+v) \cdot \mu(S)}{\mu(S+u) \cdot \mu(S+v)}-1,&u\ne v,\\
 0,&u=v.
 \end{cases}
\end{align*}
\end{definition}
A straightforward calculation reveals the identity
\begin{align*}
 A_S^{\mu,\kappa}
 &=D_S \cdot \wrapp{I-M_S-\kappa D_S^{-1}} \cdot D_S.
\end{align*}
Here we restrict $A_S^{\mu,\kappa}$ to the feasible additions;
downward closure makes all other rows and columns zero.
Thus the trickle-down criterion for field dynamics
$M_S\preceq I+(1-\delta)D_S^{-1}$ for every $S$, with
$0<\delta<1$, in \cite[Theorems~1.9 and~1.14]{ChenEtAl2025RandomRegular}
is exactly upward $(\delta-1)$-spectral curvature.
The stronger condition $M_S\preceq I$ is the case $\kappa=0$.

A shorter proof of the spectral gap bound under the aforementioned trickle-down criterion appears in \cite{GoebelEtAl2026}, which is based on an integrated Bochner method. Our proof of \cref{thm:poincare} is a direct generalization of their argument.


\begin{remark}
For $\delta>0$, the $\delta$-ultra-log-concavity of
\cite[Definition~3.5]{ChenWang2026} extends this criterion to measures
on $\mathbb N^{\cU}$ with downward-closed support.
On the Boolean cube, it is exactly upward $(\delta-1)$-spectral
curvature; the case $\delta=1$ corresponds to zero upward curvature.
\end{remark}

\paragraph{Log-concave polynomials.}
Log-concavity of the generating polynomial gives another way to bound
spectral curvature.

\begin{definition}[Fractional log-concavity {\cite{AlimohammadiAnariShiragurVuong2021}}]
\label{def:fractional-log-concavity}
Let $0<\alpha\le1$, and let $p\in\mathbb R[x_1,\ldots,x_n]$
be a nonzero polynomial with nonnegative coefficients.
We say that $p$ is \emph{$\alpha$-fractionally log-concave} if
$\bm{x}\mapsto\log p(x_1^\alpha,\ldots,x_n^\alpha)$ is concave on
$\mathbb R_{>0}^n$.
\end{definition}
For multiaffine $p$, this gives a curvature bound for the coefficients
of $p(\mathbf1+\bm{x})$.

\begin{proposition}
\label{prop:flc-curvature}
Let $p$ be nonzero, multiaffine, and $\alpha$-fractionally log-concave,
with nonnegative coefficients and $0<\alpha\le1$. The coefficient
measure $\widehat\mu$ defined by
$g_{\widehat\mu}(\bm{x})\coloneqq p(\mathbf1+\bm{x})$ has downward-closed support and
upward $(1-\alpha^{-1})$-spectral curvature.
\end{proposition}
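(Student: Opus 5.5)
Downward closure is immediate. If $\widehat\mu(S)=\sum_{T\supseteq S}p_T>0$, then some $T\supseteq S$ has $p_T>0$, so every $S'\subseteq S$ also satisfies $\widehat\mu(S')\ge p_T>0$. For the curvature claim, I fix $S\in\supp(\widehat\mu)$ and express all entries of $A_S^{\widehat\mu,\kappa}$ through derivatives of $q(\bm{x})\coloneqq p(\mathbf1+\bm{x})$ at $\bm{x}=0$, restricted to coordinates outside $S$. Let $\partial^S$ denote the mixed derivative in the variables of $S$. Since $q$ is multiaffine, $\widehat\mu(S)=\partial^S q(0)$, $\widehat\mu(S+u)=\partial_u\partial^S q(0)$, and $\widehat\mu(S+u+v)=\partial_u\partial_v\partial^S q(0)$. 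Define $h\coloneqq\partial^S q$ and let $y$ range over the coordinates $u\notin S$. Then
\begin{align*}
A_S^{\widehat\mu,\kappa}(u,v)=\frac{\partial_u h\,\partial_v h}{h^2}-\frac{\partial_u\partial_v h}{h}-\kappa\,\mathbb 1[u=v]\frac{\partial_u h}{h}\Big|_{0},
\end{align*}
where I use $\partial_u^2h=0$ on the diagonal. This equals $-\nabla^2\log h(0)-\kappa\,\diag(\nabla h/h)(0)$. So the statement is that $-\nabla^2\log h(0)\succeq(1-\alpha^{-1})\diag(\nabla\log h(0))$.

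The plan is to transfer fractional log-concavity to $h$ and then compute the Hessian of the substituted function. Fractional log-concavity is preserved under the needed operations; this is the main obstacle. The property is stable under external fields and under taking partial derivatives / specializing variables, which is the known closure result of \cite{AlimohammadiAnariShiragurVuong2021}. I would argue it directly for multiaffine $p$ as follows. Because $p$ is multiaffine, $\partial_u p = \lim_{t\to\infty} p|_{x_u=t}/t$. Under $x_u\mapsto x_u^\alpha$ this is a limit of rescaled log-concave functions, and the rescaling by $t^{\alpha}$ keeps log-concavity. The shift $x\mapsto \mathbf1+x$ commutes with $\partial^S$ up to evaluating at shifted points. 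Hence $h(\bm{y})=(\partial^S p)(\mathbf1+\bm{y})$ with $g\coloneqq\partial^S p$ $\alpha$-fractionally log-concave in the remaining variables, and $g(\mathbf1)=\widehat\mu(S)>0$.

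Next comes the Hessian computation at $\bm{z}=\mathbf1$. Let $\phi(\bm{z})=\log g(z_1^\alpha,\dots)$, which is concave. Write $\bm{x}=\bm{z}^\alpha$, so that $x=1$ at $z=1$. The chain rule gives $\partial_{z_u}\phi=\alpha x_u^{(\alpha-1)/\alpha}\partial_u\log g$. Differentiating once more at $\mathbf1$ gives
\begin{align*}
\nabla^2\phi(\mathbf1)=\alpha^2\nabla^2\log g(\mathbf1)+\alpha(\alpha-1)\diag(\nabla\log g(\mathbf1))\preceq0.
\end{align*}
Dividing by $\alpha^2$ yields $-\nabla^2\log g(\mathbf1)\succeq(1-\alpha^{-1})\diag(\nabla\log g(\mathbf1))$. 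Since $h(\bm{y})=g(\mathbf1+\bm{y})$, derivatives of $h$ at $0$ are derivatives of $g$ at $\mathbf1$, so this is exactly $A_S^{\widehat\mu,1-\alpha^{-1}}\succeq0$ on the coordinates outside $S$. All rows and columns indexed by $S$ vanish by definition, so this proves the proposition.

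The only delicate point beyond the closure of fractional log-concavity under derivatives is this: coordinates $u\notin S$ with $\widehat\mu(S+u)=0$ have $\partial_u g\equiv0$ near $\mathbf1$, by nonnegativity of the coefficients. They therefore contribute zero rows and columns and cause no trouble.
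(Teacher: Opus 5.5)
Your proof is correct and follows essentially the same route as the paper: pass to $q\coloneqq\partial_S p$, show by a limiting argument that nonzero coordinate derivatives preserve $\alpha$-fractional log-concavity, and use the chain rule to identify $A_S^{\widehat\mu,1-\alpha^{-1}}$ with $-\alpha^{-2}\nabla^2_{\bm{x}}\log q(\bm{x}^\alpha)|_{\bm{x}=\mathbf1}\succeq0$. The only difference is cosmetic: you obtain $\partial_u p$ as a limit of rescaled restrictions of $\bm{x}\mapsto p(\bm{x}^\alpha)$ to the slices $\{x_u=\text{const}\}$, whereas the paper uses the rescaled translates $s^{-\alpha}p((\bm{x}+s\bm{e}_i)^\alpha)$; both work, since restriction to an affine slice preserves log-concavity just as translation does.
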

The proof is given in \cref{app:flc-curvature}.

\begin{remark}
Lorentzian polynomials extend the theory of multi-affine homogeneous log-concave polynomials to polynomials with $\mathrm M$-convex supports
\cite[Theorems~2.30 and~3.10]{BrandenHuh2020}.
For a nonzero multiaffine Lorentzian polynomial $p$,
\cref{prop:flc-curvature} with $\alpha=1$ gives zero upward spectral
curvature for the coefficient measure of $p(\mathbf1+\bm{x})$.
\end{remark}

$0$-spectral curvature is closely related to log-concavity through the
trickle-down principle~\cite{Oppenheim2018} and analogous local-to-global
results for polynomials~\cite{AnariEtAl2019,BrandenHuh2020}.
These polynomial results are stated for homogeneous polynomials,
meaning that every monomial has the same degree.
\cite{ChenEtAl2025RandomRegular} gives a direct extension to generating
polynomials that need not be homogeneous, for measures with downward-closed
support, yielding the following result.
\begin{proposition}[{\cite[Proposition~3.8]{ChenEtAl2025RandomRegular}}]
\label{prop:zero-curvature-log-concavity}
Let $\mu : 2^{\cU} \to \R_{\geq0}$ have downward closed support. If $\mu$ has (upward) $0$-spectral curvature, then $g_\mu$ is log-concave.
\end{proposition}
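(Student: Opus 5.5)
The plan is to argue directly on the generating polynomial, by induction on $n=\abs{\cU}$. The first step is to translate spectral curvature into analytic statements about $g_\mu$ and its derivatives. For $S\in\supp(\mu)$ put $g_S\coloneqq\partial^S g_\mu$. This is again multiaffine in the variables outside $S$, and its coefficients are $\mu(S\cup T)$. At the origin its value is $\mu(S)$, its gradient is $(\mu(S+u))_u$, and its Hessian has entries $\mu(S+u+v)$ off the diagonal and $0$ on it. So
\[
A_S^{\mu,0}=\frac{\nabla g_S\nabla g_S^{\top}-g_S\nabla^2 g_S}{g_S^2}\Big|_{\bm{x}=\bm 0}=-\nabla^2\log g_S(\bm 0),
\]
restricted to $\cU\setminus S$. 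Hence $0$-spectral curvature says precisely that every $\log g_S$ with $S\in\supp(\mu)$ is concave at the origin. Downward closure guarantees $g_S(\bm 0)=\mu(S)>0$ for every such $S$.

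The second step records that the hypotheses are stable under the operations used in the induction:
\begin{itemize}
\item Tilting, $\mu^{\bm{x}}(S)=\mu(S)\bm{x}^S$ with $\bm{x}>0$, replaces $A_S^{\mu,0}$ by $D_{\bm{x}}A_S^{\mu,0}D_{\bm{x}}$, where $D_{\bm{x}}=\diag(\bm{x})$, because $\kappa=0$ kills the diagonal-only correction.
\item Restriction $x_w=0$ (discarding sets containing $w$) keeps downward closure and simply deletes row and column $w$ from each $A_S$.
\item Taking the link $\partial_w$ (for $\{w\}\in\supp\mu$) gives the measure $T\mapsto\mu(T+w)$ on $\cU-w$. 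Its curvature matrices are the $A_{S+w}$, so it also has $0$-curvature and downward-closed support.
\end{itemize}
By homogeneity of the condition, log-concavity at a point $\bm{x}>0$ is equivalent to log-concavity of the tilted polynomial at $\mathbf 1$. So it suffices to prove $\nabla^2\log g(\mathbf 1)\preceq 0$, uniformly over the class.

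The third step is the induction. Fix $w\in\cU$ and write $g=p+x_w q$ with $p=g|_{x_w=0}$ and $q=\partial_w g$. By the second step and the induction hypothesis, $p$ and $q$ (if nonzero) are log-concave on the orthant in the remaining variables. Now view $g$ along any line through a point $\bm{x}>0$. Because $g$ is affine in $x_w$, we need a trickle-down type argument that glues the concavity of $\log p$ and $\log q$, together with the single "base" condition $A_\emptyset\succeq0$, into concavity of $\log g$.

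Concretely, I would use the equivalent formulation that $g$ is log-concave at $\bm{x}$ iff the matrix $g\nabla^2g-\nabla g\nabla g^\top$ is negative semidefinite there. I would then try to propagate this from the origin outward along the segment $t\mapsto t\bm{x}$. One route is a continuity/ODE argument in the style of the Oppenheim trickle-down proof. Along the segment, the Hessian of $\log g$ at $t\bm{x}$ should be expressible through the same quantities for the links $g_{\{u\}}$ at $t\bm{x}$, which are log-concave by induction. One then shows the largest eigenvalue cannot cross $0$, since at $t=0$ it is $\le 0$ by $A_\emptyset\succeq0$.

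This gluing step is the main obstacle. The naive statement "a sum $p+x_wq$ of log-concave polynomials is log-concave" is false in general, so the argument must genuinely use the curvature of the base level, presumably in the form of the identity $\sum_u x_u\,\partial_u\log g$-type averaging that drives trickle-down. If the direct approach stalls, the fallback I would try first is homogenization $y^{d}g(\bm{x}/y)$ followed by an appeal to the Brändén--Huh local-to-global criterion for Lorentzian polynomials. That would require showing that $0$-curvature forces the homogenized support to be M-convex, which is itself plausible given \cref{lem:positive-curvature-support}-type rigidity.
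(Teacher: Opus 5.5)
\paragraph{There is a genuine gap: the central step of Step~3 is missing.}
Your Steps~1 and~2 are correct. Indeed $A_S^{\mu,0}=-\nabla^2\log\partial^S g_\mu(\bm 0)$, and $0$-curvature is stable under tilting, pinning and links. But these facts only give concavity of each $\log\partial^S g_\mu$ \emph{at the origin}. The whole content of the proposition is carrying this to an arbitrary $\bm x>0$, and that is exactly the gluing step you leave open.

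As set up, the induction cannot close it. Knowing that $p=g|_{x_w=0}$ and $q=\partial_w g$ are log-concave does not control $p+x_wq$: for $1+x_1+x_2+3x_1x_2$, both $p$ and $q$ are log-concave (and even have $0$-curvature), yet $g$ is not log-concave. The ray/continuity argument is also never made concrete. You do not give any identity to replace Euler's relation, which is what drives trickle-down in the homogeneous case and is unavailable here. So Step~3 is a hope, not an argument.

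The fallback cannot work either. The measure with $g=1+x_1+x_2+2x_1x_2$ has $0$-curvature ($A^{\mu,0}_\emptyset=\begin{pmatrix}1&-1\\-1&1\end{pmatrix}$, higher levels trivially PSD) and is log-concave. Its support is M-convex, yet its homogenization $y^2+yx_1+yx_2+2x_1x_2$ has Hessian eigenvalues $2\pm\sqrt2$ and $-2$, so it is not Lorentzian. In general, Br\"and\'en--Huh's quadratic condition at level $T$ of the homogenization reads $\tfrac{m-1}{m}\bm a\bm a^\top\succeq\mu(T)\,C$. Here $\bm a=(\mu(T+u))_u$, $C=(\mu(T+u+v))_{u\ne v}$ with zero diagonal, and $m=d-|T|$. $0$-curvature is the same inequality without the factor $\tfrac{m-1}{m}$, so it is strictly weaker.

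\paragraph{The missing idea: move the base point by translation.}
The polynomial $\bm y\mapsto g_\mu(\bm x+\bm y)$ has coefficients $\partial^S g_\mu(\bm x)$. By your Step~1, its curvature matrix at $\emptyset$ is $-\nabla^2\log g_\mu(\bm x)$. So it suffices to show that downward-closed $0$-curvature survives $x_w\mapsto x_w+s$ for $s\ge0$, i.e.\ $\mu'(S)=\mu(S)+s\,\mu(S+w)$ for $w\notin S$ and $\mu'(S)=\mu(S)$ otherwise. You then translate one coordinate at a time, and no induction on $n$ is needed. The support is unchanged by downward closure, and matrices at sets containing $w$ are unchanged.

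Take $w\notin S\in\supp\mu$ with $m_1\coloneqq\mu(S+w)>0$; if $m_1=0$, nothing changes. Put
\begin{itemize}
\item $m_0\coloneqq\mu(S)$, $\bm a_0\coloneqq(\mu(S+u))_u$ and $\bm a_1\coloneqq(\mu(S+w+u))_u$;
\item $C_0,C_1$ the zero-diagonal matrices with entries $\mu(S+u+v)$ and $\mu(S+w+u+v)$;
\item $X\coloneqq\bm a_0\bm a_0^\top-m_0C_0$, $Y\coloneqq\bm a_1\bm a_1^\top-m_1C_1=m_1^2A^{\mu,0}_{S+w}$ and $\bm b\coloneqq m_1\bm a_0-m_0\bm a_1$.
\end{itemize}
Then $m_0^2A^{\mu,0}_S=\begin{pmatrix}m_1^2&\bm b^\top\\\bm b&X\end{pmatrix}$. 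The matrix $\mu'(S)^2A^{\mu',0}_S$ has the same first row and column, with lower block $X'=(\bm a_0+s\bm a_1)(\bm a_0+s\bm a_1)^\top-(m_0+sm_1)(C_0+sC_1)$. Expanding gives
\[
 X'-\frac{\bm b\bm b^\top}{m_1^2}
 =\frac{m_0+sm_1}{m_0}\left(X-\frac{\bm b\bm b^\top}{m_1^2}\right)
 +\frac{s(m_0+sm_1)}{m_1}\,Y\succeq0 .
\]
The bracket is the Schur complement of the $(w,w)$ entry in $m_0^2A^{\mu,0}_S\succeq0$, and $Y\succeq0$. Hence $A^{\mu',0}_S\succeq0$.

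This argument uses the curvature at $S$ and at $S+w$ \emph{jointly}, which is exactly the information your induction hypothesis discards. (The paper does not prove this proposition itself; it imports it from \cite{ChenEtAl2025RandomRegular}.)
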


\subsection{Prior Work}\label{subsec:prior-work}

\paragraph{Trickle-down and field dynamics.}
Trickle-down arguments grew out of earlier local-to-global ideas
\cite{Garland1973,KaufmanMass2017,DinurKaufman2017}
and establish rapid mixing of ``global'' random walks by
analyzing related ``local'' random walks, which are usually easier to study
\cite{Oppenheim2018,KaufmanOppenheim2020,AlevLau2020,AnariEtAl2019,ALO2020}.
Refinements and extensions appear in
\cite{GuoMousa2020,ALG2021,AbdolazimiOveisGharan2023,GotlibKaufman2023,AnariKoehlerVuong2024,LeakeLindbergOveisGharan2025,LeakeOveisGharan2025,GuoZhang2026Planar,BLO26}.
Related sampling results appear in
\cite{ChenLiuVigodaContraction2021,ChenLiuVigoda2021,FGYZ2022,BlancaEtAl2022,Liu2021Coupling,AnariEtAl2024}.
Field dynamics~\cite{ChenFengYinZhang2024} provides a mechanism
for reducing sampling from a ``hard'' distribution to sampling from
relatively ``easier'' conditional distributions with smaller external fields.
This approach and related
methods~\cite{Eldan2013,EldanKoehlerZeitouni2022,ChenEldan2022,ChenFengYinZhang2022Optimal}
have led to rapid-mixing results for the hardcore model (and more general antiferromagnetic two-spin systems)
in the tree-uniqueness regime and at the uniqueness
threshold~\cite{ChenChenYinZhang2025,ChenJuMiaoYinZhang2026}.
Our curvature condition builds on top of a criterion that lies between these two lines of work
\cite{ChenEtAl2025RandomRegular}.

\paragraph{Bochner-Type Methods.}
Bochner-type methods bound the spectral gap of random walks by relating it to the local ``curvature'' of the underlying space. The terminology originates from Bochner's formula in Riemannian geometry \cite{Bochner1946,Lichnerowicz1958}, and its application to continuous diffusions in high-dimensional probability is classical \cite{BakryEmery1985}; see also \cite{BGL2014}.
Adaptations of these techniques to discrete spaces appear in
\cite{BCDP2006,CaputoPosta2007,Caputo2008,CDP2009,KondratievKunaOhlerich2013,DaiPraPosta2013,Johnson2017,JungelYue2017,WeberZacher2021,Pedrotti2025}.
A recent flurry of work, starting with \cite{GoebelEtAl2026}, applies a variant of this method and closely related arguments for
spin systems on random regular and planar graphs
\cite{GoebelEtAl2026,GuoZhang2026Planar}, the Sherrington--Kirkpatrick model
\cite{Wang2026SK,BLO26}, graph colorings
\cite{ChenLiu2026Girth,Haqi2026LineGraph}, and ``ultra-log-concave'' distributions
\cite{ChenWang2026}.
It also gives simpler proofs of known trickle-down results; see
\cite{ChenLiu2026DownUp} and references therein.
Our proof of \cref{thm:poincare} uses this method.

\paragraph{Geometry of polynomials.}
{
Properties of generating polynomials have been used to prove inequalities
for combinatorial counts and probabilities
\cite{BorceaBrandenLiggett2009,Gurvits2009,Huh2012,HuhKatz2012,HuhWang2017,AdiprasitoHuhKatz2018,AnariOveisGharanVinzant2018,AnariEtAl2018Mason,BrandenHuh2020}.

Gurvits capacity and related entropy maximization programs connect
these inequalities to approximate counting and optimization
\cite{Gurvits2006,Gurvits2007MixedVolume,NikolovSingh2016,StraszakVishnoi2017,GurvitsLeake2021Capacity}.
Applications include permanents
\cite{LinialSamorodnitskyWigderson2000,Gurvits2006},
bases and common bases of matroids
\cite{AnariOveisGharan2017,StraszakVishnoi2017,AnariOveisGharanVinzant2018},
contingency tables and integer flows
\cite{Barvinok2009Contingency,BrandenLeakePak2023},
and lattice points in bounded totally unimodular polytopes
\cite{LeakeMohammadiYekta2026TU}.
These methods often give polynomial-time algorithms with exponential
approximation ratios. Recent work has improved deterministic approximation
guarantees for the permanent
\cite{Anari2026BeyondBethe,KudriaLuoMajid2026,DongJain2026Permanent}.

For sampling and randomized counting on matroids, earlier algorithms
covered special classes
\cite{FederMihail1992,JerrumSon2002,AnariOveisGharanRezaei2016}.
Later advances based on log-concavity handled general matroids and
improved the running times
\cite{AnariEtAl2019,CryanGuoMousa2021,AnariEtAl2021LCIV,AnariDerezinski2020}.
}

\paragraph{Annealing and adjusting weights.}
The Markov-chain approach to approximating the permanent was
initiated by Broder~\cite{Broder1986} and developed by Jerrum and
Sinclair~\cite{JerrumSinclair1989}.
Jerrum, Sinclair, and Vigoda~\cite{JerrumSinclairVigoda2004} gave the first
$\mathsf{FPRAS}$ for the permanent of an arbitrary nonnegative matrix.
This algorithm and subsequent $\mathsf{FPRAS}$ for the permanent
\cite{BezakovaEtAl2008,ChenVigodaYang2026,ChenGuoVigodaYang2026}
and two-terminal reliability \cite{FengFuGuo2026} add extra
configurations to make sampling easier, and gradually change the input
weights (annealing; see \cite{StefankovicVempalaVigoda2009,GuoEtAl2026})
while learning weights that keep desired
configurations, such as perfect matchings, from becoming rare.
Those algorithms learn correction weights along a carefully designed annealing
path; our method chooses fields directly by stochastic gradient descent.
The resulting fields make the desired pairs frequent.
We then estimate the total weight using the standard reduction from sampling
to counting~\cite{JVV1986}.
See \cref{subsec:field-optimization} for a detailed discussion.

\subsection{Acknowledgements}\label{subsec:ack}

This work was initiated and completed while the authors were visiting the Simons Institute for the Theory of Computing for the program on Spectral Theory Beyond Graphs (Fall 2026), where most of the discussions behind it took place. We are grateful to the Simons Institute for its hospitality.

The authors thank Shayan Oveis Gharan and Amit Rajaraman for asking clarifying questions that helped significantly improve the exposition of this paper. K.L. thanks Nima Anari, Shayan Oveis Gharan, Francisco Pernice, Amit Rajaraman, and Piyush Srivastava for insightful discussions on matroid intersection and related topics over the years. K.L. also thanks Ronen Eldan for providing access to GPT Pro.

\paragraph{AI Disclosure} For transparency, we detail the progression of this project and the role AI tools played. While the authors did provide many high level directions, we emphasize that AI contributions were critical to the success of this project.

This project began with a much more modest problem, namely establishing rapid mixing of Glauber dynamics for sampling from the distribution $\mu(S) \propto \lambda^{\abs{S}}$ for common independent sets $S$ of all sizes (see \cref{app:matroid-monomer-dimer}). Although the authors were unaware of any previous work formulating this specific problem, it is a somewhat obvious generalization of the monomer-dimer model on all matchings of a graph, for which Glauber dynamics provably mixes rapidly for all $\lambda$ \cite{JerrumSinclair1989,ChenLiuVigoda2021,ChenEtAl2025JerrumSinclair,MiaoYin2026Matchings,ChenChenZhang2026Holant}. Based on this, as well as rapid mixing for a similar distribution over all independent sets of a single matroid, the authors conjectured that no dynamical phase transition occurs, and that the mixing time remains polynomial for every fixed $\lambda \geq 0 $. This conjecture was previously posed by the authors, for instance, during ``The geometry of polynomials in combinatorics and sampling'' workshop at the American Institute of Mathematics in early March 2025. The main motivation for studying this specific problem was two-fold:
\begin{itemize}
    \item Its resolution would provide a first stepping stone towards an $\mathsf{FPRAS}$ for common bases of two matroids, paralleling the chronological development of the first $\mathsf{FPRAS}$ for the permanent of nonnegative matrices.
    \item Independent of whether or not it would eventually lead to an $\mathsf{FPRAS}$ for common bases of two matroids, the authors believed this problem was interesting because all previously known techniques for proving rapid mixing appear to fail for this model: {There are simple examples of pairs of matroids for which the spectral-independence parameter of the resulting distribution on common independent sets is $\Omega(\mathsf{poly}(n))$.\footnote{For example, at $\lambda=1$, the monomer--dimer model on an even cycle of length $\Theta(n^{1/3})$ with each edge replaced by $\Theta(n^{2/3})$ parallel edges has  $\Theta(n^{1/3})$-spectral-independence, where $n$ is the total number of edges; see \cite{ChenYangYinZhang2025} for the parallel-edge construction. This bipartite graph's matchings are exactly the common independent sets of two partition matroids.} It is also believed that constructing canonical paths or flows with polynomial congestion would be hard in general.}
\end{itemize}
In light of recent exciting work of \cite{GoebelEtAl2026}, the authors suspected that (integrated) Bochner-type methods had a chance of overcoming these obstacles. The authors instructed GPT 5.6 Sol Ultra to take inspiration from \cite{GoebelEtAl2026} and prove rapid mixing for all $\lambda \geq 0$ using an integrated Bochner-type argument. Initially, GPT 5.6 Sol Ultra worked on this task for 17 hours without success. Fortuitously, GPT 6 Astra Ultra was released at this time, and it was able to solve the problem with an additional 3 hours. The result was a fairly short and simple proof, parts of which are reproduced in \cref{sec:poincare-proof}, that was tailored to this specific problem. During the process of digesting this proof, the authors independently realized that the argument works much more generally for intersections of probability measures satisfying suitable ``curvature'' conditions, and that this notion of ``curvature'' is essentially equivalent to a notion contained in prior work of \cite{ChenEtAl2025RandomRegular}. This is how we came to the formulation of \cref{thm:poincare}.

{
Our progress on sampling common independent sets of matroids led us to ask
whether similar ideas could give an $\mathsf{FPRAS}$ for common bases.
To explore this question, we provided ChatGPT Pro with several separate drafts:
\begin{itemize}[leftmargin=2em,labelsep=0.5em,itemsep=2pt,parsep=0pt,topsep=4pt,partopsep=0pt]
    \item a revised note on sampling common independent sets;
    \item a revised note on a common framework for recent progress on Bochner-type methods;
    \item a draft on a faster $\mathsf{FPRAS}$ for the permanent~\cite{ChenGuoVigodaYang2026}.
\end{itemize}

ChatGPT first identified an
obstruction to obtaining an $\mathsf{FPRAS}$ directly from our sampler for
common independent sets. We replied that this was understandable and suggested
an annealing approach, as in algorithms for the permanent. It agreed, and
after a further prompt to ``keep trying,'' it found a solution.

While studying the proof produced by ChatGPT, we recognized the underlying
stochastic gradient descent framework (\cref{subsec:field-optimization}).
With further help from ChatGPT,
we generalized the $\mathsf{FPRAS}$ for common bases to subset convolution,
both for measures with log-concave generating polynomials and for measures
with downward closed supports satisfying our nonnegative curvature condition
(\cref{thm:log-concave-convolution,thm:curvature-convolution}).
We also used ChatGPT to search for reductions of other known problems
to these subset-convolution problems (see \cref{sec:additional-application}).
}

\paragraph{Concurrent Work} On October~6, 2026, one day after we posted the first version of this paper to arXiv,
\href{https://github.com/openai/math}{OpenAI} publicly released concurrent
work giving $\mathsf{FPRAS}$ for counting common bases of two matroids,
common integer bases of two integral polymatroids, and cell-bounded
contingency tables
\cite{OpenAI2026CommonBases,OpenAI2026PolymatroidBases,OpenAI2026ContingencyTables}.
Their analysis uses transport flows~\cite{ChenEtAl2025JerrumSinclair}
and restricted Poincar\'e inequalities~\cite{ChenVigodaYang2026},
with annealing and learned defect weights closely following
the approach of Jerrum, Sinclair, and Vigoda for the permanent
\cite{JerrumSinclairVigoda2004}.
Our analysis uses an integrated Bochner-type method, with external fields
chosen by convex programming.

Their polymatroid and contingency-table results allow binary-encoded,
exponentially large capacities by reducing to polynomial-size weighted
instances. Our corresponding reductions expand the coordinates into
individual copies and therefore give polynomial-time guarantees for
unary capacities or polynomially bounded entry ranges
(\cref{prop:integer-weight-intersection,cor:weighted-contingency-tables}).
Our subset-convolution framework extends in other directions, including
six-vertex partition functions under the triangle inequalities,
PSD mixed discriminants, and normalizing constants of products of two
determinantal point processes
(\cref{cor:six-vertex,cor:determinantal-applications}).

\paragraph{Follow-Up Work} In upcoming work~\cite{ChenLiu2026MixedVolumes}, we use the $\FPRAS$ for computing general mixed discriminants given by \cref{cor:determinantal-applications} to design the first $\FPRAS$ for computing mixed volumes of general convex bodies specified by (weak) membership oracles.

\section{Technical Overview}\label{subsec:convolution-outline}

\subsection{Poincar\'e Inequality via an Integrated Bochner Method}\label{subsubsec:outline-bochner}
We prove \cref{thm:poincare} in \cref{sec:poincare-proof} by an integrated
Bochner-type method, which bounds the spectral gap of a Laplacian using its
square. The following abstract criterion holds for the Laplacian $\cL$
of any irreducible reversible Markov chain; see
\cref{subsec:markov-chains} for the notation and proof.

\begin{proposition}\label{prop:bochner-criterion}
Let $\cL$ be the Laplacian of an irreducible reversible Markov chain on
a finite state space $\Omega$, with stationary distribution $\pi$. Then
its spectral gap satisfies
\begin{align*}
 \gamma\coloneqq \inf_{f\text{ nonconstant}}
   \frac{\langle f,\cL f\rangle_\pi}{\Var[\pi]{f}}
 &=\inf_{f\text{ nonconstant}}
   \frac{\langle\cL f,\cL f\rangle_\pi}{\langle f,\cL f\rangle_\pi}.
\end{align*}
\end{proposition}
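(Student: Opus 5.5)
The plan is to diagonalize $\cL$ in an orthonormal eigenbasis with respect to $\langle\cdot,\cdot\rangle_\pi$ and compare the two Rayleigh-type quotients eigenvalue by eigenvalue. Since the chain is reversible, $\cL$ is self-adjoint on $L^2(\pi)$. It is also positive semidefinite, so there is a $\pi$-orthonormal eigenbasis $\phi_0=\allone,\phi_1,\dots,\phi_{m}$ with eigenvalues $0=\lambda_0\le\lambda_1\le\dots\le\lambda_m$. Irreducibility implies the kernel of $\cL$ is exactly the constants, because $\langle f,\cL f\rangle_\pi=0$ forces $f$ to be constant along every edge. Hence $\lambda_1>0$. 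The standard variational characterization then identifies the left-hand infimum as $\gamma=\lambda_1$.

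Next, expand a nonconstant $f$ as $f=\sum_{i\ge0}c_i\phi_i$; nonconstant means $c_i\neq0$ for some $i\ge1$. This gives
\begin{align*}
 \Var[\pi]{f}=\sum_{i\ge1}c_i^2,\qquad
 \langle f,\cL f\rangle_\pi=\sum_{i\ge1}\lambda_i c_i^2,\qquad
 \langle \cL f,\cL f\rangle_\pi=\sum_{i\ge1}\lambda_i^2 c_i^2 .
\end{align*}
With weights $w_i\coloneqq\lambda_i c_i^2\ge0$, which are not all zero since $\lambda_i>0$ for $i\ge1$, the right-hand quotient equals $\sum_i\lambda_i w_i/\sum_i w_i$. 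This is a weighted average of $\{\lambda_i\}_{i\ge1}$, so it is at least $\lambda_1$. Equality holds at $f=\phi_1$, so the right-hand infimum is exactly $\lambda_1=\gamma$. The left-hand quotient is handled the same way with weights $c_i^2$, which also gives the variational formula for $\gamma$ used above.

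There is no real obstacle here. The only points needing care are that the denominators never vanish for nonconstant $f$, which uses irreducibility via the kernel computation above, and that self-adjointness and positivity of $\cL$ follow from reversibility through the Dirichlet form identity $\langle f,\cL f\rangle_\pi=\tfrac12\sum_{x,y}\pi(x)L(x,y)(f(x)-f(y))^2$. I would state both briefly before the eigen-expansion.
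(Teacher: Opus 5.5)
Your proof is correct. It uses the same inputs as the paper's proof: self-adjointness and positive semidefiniteness of $\cL$ from reversibility, and $\ker\cL$ equal to the constants from irreducibility. It finishes differently, though.

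The paper never evaluates either infimum. After reducing to mean-zero $f$, it substitutes $g\coloneqq\cL^{1/2}f$, which is a bijection of $\mathbf 1^\perp\setminus\{0\}$. This substitution turns $\langle\cL f,\cL f\rangle_\pi/\langle f,\cL f\rangle_\pi$ into $\langle g,\cL g\rangle_\pi/\langle g,g\rangle_\pi$. So the two quotients take exactly the same set of values over nonconstant functions, and their infima (and suprema) agree.

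Your eigen-expansion instead computes both infima explicitly as $\lambda_1$. This is slightly longer, but it also proves that $\gamma$ is the smallest nonzero eigenvalue, a fact the paper cites from Levin--Peres.

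One small notational point: in your Dirichlet-form identity the weights should be the transition rates $r(x,y)$. The off-diagonal entries of the positive Laplacian $\cL$ are $-r(x,y)$, so writing $L(x,y)$ there gets the sign wrong.
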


{We apply \cref{prop:bochner-criterion} to the Laplacian $\sL$
defined in \cref{eq:full-laplacian}.}
{The common support $\Omega$ is connected under single-element
additions and deletions, and every feasible transition has positive rate,
so the chain with Laplacian $\sL$ is irreducible.}
Thus, to prove $\gamma\ge\eta+\kappa$, it suffices to show
\begin{align*}
 \langle\sL f,\sL f\rangle_\rho\ge(\eta+\kappa) \cdot \langle f,\sL f\rangle_\rho
 \qquad \forall f:\Omega\to\R.
\end{align*}
We prove this by writing
$\langle f,(\sL^2-(\eta+\kappa) \cdot \sL)f\rangle_\rho$ as a sum-of-squares (\cref{prop:bochner-identity}), taking heavy inspiration from \cite{GoebelEtAl2026}. The spectral-curvature conditions make
the terms involving the curvature matrices nonnegative, and the
remaining terms are squared second differences of $f$ with nonnegative
weights.

\subsection{Reduction to the Homogeneous Log-Concave Case}\label{subsubsec:outline-reduction}
The following special case of \Cref{thm:log-concave-convolution} (and \cref{thm:curvature-convolution}) is the most interesting case. It is also, in some sense, the most difficult.

\begin{theorem}[Simplified version of \cref{thm:homogeneous-convolution}]
\label{thm:homogeneous-convolution-informal}
Let $\mu:\binom{\cU}{a}\to\mathbb R_{\ge0}$ and
$\nu:\binom{\cU}{b}\to\mathbb R_{\ge0}$, extended by zero to $2^{\cU}$,
where $a,b\ge0$ are integers with $a+b=n\coloneqq |\cU|\ge1$.
Suppose $g_\mu,g_\nu$ are log-concave. Given value oracles for $\mu,\nu$,
the coefficient ranges $\cR_\mu,\cR_\nu$, and one set in the support of
each measure, there is an $\mathsf{FPRAS}$ for the subset-convolution value
\begin{align*}
 Z_\cap&\coloneqq (\mu\ast\nu)(\cU)=[\bm{x}^{\cU}]\bigl(g_\mu(\bm{x}) \cdot g_\nu(\bm{x})\bigr)
       =\sum_{S\subseteq\cU}\mu(S)\cdot\nu(\cU\setminus S).
\end{align*}
\end{theorem}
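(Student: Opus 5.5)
The plan is to reduce counting $Z_\cap$ to sampling from a relaxed ``soft intersection'' measure, which mixes rapidly by \cref{thm:poincare}. The relaxed measure lives on pairs $(S,T)$ with $S\in\supp\mu$, $T\in\supp\nu^{*}$, i.e.\ on $S$ and the complement $\cU\setminus T$ of a $\nu$-set. It carries an overlap penalty and external fields $\bm{\lambda}\in\R^{\cU}$. The hard constraint $S=T$ is exactly $S\in\supp\mu$ with $\cU\setminus S\in\supp\nu$, and these are the configurations contributing to $Z_\cap$.

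\textbf{Relaxed measure.} I would set $\rho_{\bm\lambda}(S,T)\propto \mu(S)\,\nu(\cU\setminus T)\,\e^{\langle\bm\lambda,\allone_S-\allone_T\rangle}\,\theta^{|S\triangle T|}$ for a small penalty $\theta$. Encoding the pair as a single subset of a doubled ground set, this is a Hadamard-type product: one factor is built from $\mu\otimes\nu^{*}$, and the other is the coupling kernel $\theta^{|S\triangle T|}$ together with the fields. The $\mu$ and $\nu^{*}$ factors have zero curvature by \cref{prop:flc-curvature}, after the homogeneous reduction and the translation $\bm{x}\mapsto\allone+\bm{x}$ used in \cref{rmk:curvature-LC-generalization}. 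The coupling kernel is a product measure, so it has positive curvature, and external fields do not change curvature. Hence \cref{thm:poincare} gives a spectral gap uniform in $\bm\lambda$. Combined with \cref{rem:discrete-chain} and a $\log$ warm-start bound of order $n+\log(\cR_\mu\cR_\nu)$, this yields polynomial mixing. I expect the main bookkeeping burden here is finding the right translation or encoding so that the hypotheses of \cref{thm:poincare} (common connected support and both curvatures) actually hold.

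\textbf{Choosing fields.} The fields would come from the dual entropy program
\[
\min_{\bm\lambda}\ \log Z(\bm\lambda)-\log Z_{\cap}(\bm\lambda),
\]
or equivalently from minimizing a Gurvits-capacity-like convex function $\Phi(\bm\lambda)=\log Z(\bm\lambda)$ shifted by a linear term. The gradient of $\Phi$ is the vector $\E[\rho_{\bm\lambda}]{\allone_S-\allone_T}$ of marginal differences. I would estimate this gradient with samples and run projected stochastic gradient descent for $\mathrm{poly}(n)$ steps, with the fields kept in a box of radius $O(n+\log(\cR_\mu\cR_\nu))$. The target is a point where $\|\nabla\Phi\|_2$ is at most an inverse polynomial.

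\textbf{Main obstacle.} The key step, which I expect to be hardest, is showing that a small gradient forces $\Pr[\rho_{\bm\lambda}]{S=T}\ge 1/\mathrm{poly}(n)$. My first attempt would apply the uniform Poincar\'e inequality to the test functions $f=|S\triangle T|$ and the coordinatewise differences $\allone_S(u)-\allone_T(u)$. The Dirichlet form is controlled by the penalty rates $\theta$, and the mean of the differences equals the gradient, which is small. Together these should bound $\E{|S\triangle T|}$ by roughly $\theta n+\|\nabla\Phi\|\sqrt n$. A Markov-type inequality, or a second application of Poincar\'e to $\theta^{|S\triangle T|}$, would then lower-bound the probability that the overlap defect vanishes.

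\textbf{Counting.} Once the intersection event has probability at least $1/\mathrm{poly}(n)$, I would estimate $Z(\bm\lambda)$ by the standard self-reducible telescoping product of \cite{JVV1986}, annealing $\theta$ and the fields from a trivial instance. On the event $S=T$ the fields cancel, so $Z_\cap=Z(\bm\lambda)\cdot\Pr[\rho_{\bm\lambda}]{S=T}$. Both factors are then estimated to relative accuracy $\epsilon$ by Chernoff bounds with $\mathrm{poly}(n)/\epsilon^2$ samples.
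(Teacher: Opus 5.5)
Your architecture is the paper's. On the fixed-size support $|S\triangle T|=2|S\cap(\cU\setminus T)|$, so your $\rho_{\bm\lambda}$ is exactly the paper's $\pi_{t,\bm z}$ with $t=\theta^2$ and $\bm z=\mathrm{e}^{\bm\lambda}$. The bound you hope for, $\mathbb{E}|S\triangle T|\lesssim n\theta+\sqrt n\,\|\nabla\Phi\|_2$, is essentially \cref{lem:field-concavity}. The gap is in how you propose to prove it, and that is the heart of the argument. Applying the Poincar\'e inequality of $\rho_{\bm\lambda}$ to $D_u=\mathbf{1}_S(u)-\mathbf{1}_T(u)$ (or to $|S\triangle T|$) cannot work. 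In a chain that updates one set per move, every move that changes $\mathbf{1}_S(u)$ changes $D_u$ by the same amount. Hence $\cE(D_u,D_u)\ge\cE(\mathbf{1}_S(u),\mathbf{1}_S(u))\ge\gamma\,\mathrm{Var}(\mathbf{1}_S(u))$, which is of constant order whenever the marginal of $u$ is nondegenerate. The penalty does not make these Dirichlet forms small; it only makes exits from doubly covered states fast. The missing idea is a per-element change of measure. Let $p_j$ be the probability that $u$ lies in exactly $j$ of the sets $S$ and $\cU\setminus T$. Remove the penalty at $u$ only, and multiply $z_u$ by $s=\sqrt{tp_0/p_2}$. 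The tilted law has $(q_0,q_1,q_2)\propto(p_0,sp_1,p_0)$, so $q_0=q_2\le1/2$. Apply its Poincar\'e inequality to the indicator that $u$ is uncovered; this indicator has Dirichlet form at most $q_1$, because a unique copy is removed at total rate at most $1$. This gives $\gamma q_0/2\le q_1$, i.e.\ $\sqrt{p_0p_2}\le2\sqrt t\,p_1/\gamma$. Then $p_0+p_2\le|\partial_u\Phi|+2\sqrt{p_0p_2}$ and a union bound over $u$ finish. This needs the Poincar\'e inequality uniformly over \emph{non-uniform} penalty vectors $\bm t\in(0,1]^{\cU}$, together with the unique-copy rate bound. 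Your uniform-in-$\bm\lambda$ statement provides neither.

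Your Poincar\'e step is also wrong as stated. In your encoding, with the second set in $\supp\nu^*$, the kernel $\theta^{|S\triangle T|}$ is not a product measure on the doubled ground set. At each coordinate it is the attractive kernel with weights $1,\theta,\theta,1$ on $(0,0),(1,0),(0,1),(1,1)$. Its $2\times2$ curvature block (upward at $(0,0)$, downward at $(1,1)$) has diagonal entries $\theta^2-\kappa\theta$ and off-diagonal entries $\theta^2-1$. It is therefore not positive semidefinite for any $\kappa\ge0$ once $\theta<1/\sqrt2$. The paper keeps the second set in $\supp\nu$ and uses the repulsive kernel $\psi_{\bm t}(S,T)=\bm t^{S\cap T}$. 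This agrees with yours on the fixed-size support, but its dual has upward curvature $1$, with blocks equal to $(1-t_u)/t_u^2$ times the $2\times2$ all-ones matrix. The extension matters because \cref{thm:poincare} is applied on the downward-closed support. External fields preserve only zero curvature, since they rescale $\kappa$ coordinatewise. So they must sit in the zero-curvature factor $\widehat\mu_{\lambda\bm z}\otimes\widehat\nu_{\lambda\bm z}$, not with the kernel. Finally, the chain that \cref{thm:poincare} produces for translated measures has rates given by ratios of $\widehat\mu(S)=\sum_{A\supseteq S}\mu(A)$, which the value oracles do not provide. Its stationary law is also the extension rather than your soft intersection. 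The paper resolves both points by inducing the chain on pairs of bases and letting the scalar multiplier $\lambda\to\infty$ (\cref{lem:bases-pair-poincare}). This yields an exchange chain with gap $1$, oracle-computable rates, and unique-copy removal rate at most $1$, which are exactly the hypotheses of \cref{lem:field-concavity}. (A smaller point: projected descent on a box of radius $O(n+\log(\cR_\mu\cR_\nu))$ only certifies a small projected gradient. The paper instead runs unconstrained coordinate descent, whose iterates satisfy $\|\bm y\|_1=O(n(n+\log(\cR_\mu\cR_\nu)))$.)
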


In this subsection, we describe how we reduce \cref{thm:curvature-convolution} (which generalizes \cref{thm:log-concave-convolution} as mentioned in \cref{rmk:curvature-LC-generalization}) to \cref{thm:homogeneous-convolution-informal} and \cref{thm:poincare}. A pictorial representation is provided in \cref{fig:convolution-reduction}, with detailed proofs deferred to \cref{app:convolution-proofs}. Throughout this subsection, we focus on the case of convolution where $\bm{\alpha} = \allone$; this is essentially without loss of generality, as one can simply pin the coordinates with exponent $0$ or $2$ for a general $\bm{\alpha} \in \{0,1,2\}^{\cU}$. We briefly explain our reductions as follows:
\begin{itemize}[leftmargin=1.5em,itemsep=2pt,topsep=3pt]
\item For \cref{thm:curvature-convolution} with $\gamma=\max\{\eta,\kappa\}>0$, we order the factors so that $\kappa=\gamma$ and apply a uniform external field $t=1-\epsilon/(16n)$ to $\mu$. This gives a spectral gap of at least $\Omega(\epsilon\gamma/n)$ in \cref{thm:poincare}, while keeping the new partition function $\epsilon$-multiplicatively close to the original. After normalizing the transition rates, the resulting discrete-time chain has gap $\Omega(\epsilon/(\chi n^2))$. We then apply the standard self-reduction~\cite{JVV1986}.
\item For \cref{thm:log-concave-convolution}, which includes
\cref{thm:curvature-convolution} with $\eta=\kappa=0$, we split by degree into at most $n+1$
homogeneous pairs and apply \cref{thm:homogeneous-convolution-informal}
to each.
\end{itemize}

\begin{center}
\begin{tikzpicture}[
 >=stealth',
 box/.style={draw=black!55, rounded corners=2pt, fill=black!2,
             text width=4.05cm, minimum height=9mm, inner sep=5pt,
             align=center, font=\small},
 flow/.style={->, draw=black!65, line width=.6pt},
 note/.style={font=\footnotesize, fill=white, inner sep=2pt,
              align=center}
]
 \node[box, text width=8.3cm] (input) at (0,0)
   {\Cref{thm:curvature-convolution}: subset convolution\\
    $Z_\cap=[\bm{x}^{\cU}](g_\mu g_\nu)$};
 \node[box, minimum height=10mm] (positive) at (-5,-1.45)
   {$\eta+\kappa>0$};
 \node[box, minimum height=10mm] (boundary) at (0,-1.45)
   {$\eta+\kappa=0$\\$\max\{\eta,\kappa\}>0$};
 \node[box, minimum height=10mm] (zero) at (5,-1.45)
   {$\eta=\kappa=0$};
 \node[box, text width=6.2cm, minimum height=11mm]
   (direct) at (-2.5,-3.9)
   {\Cref{thm:poincare}: Poincar\'e inequality\\
    Self-reduction~\cite{JVV1986}};
 \node[box, minimum height=11mm] (lc) at (5,-3.9)
   {\Cref{thm:log-concave-convolution}\\(log-concave case)};
 \node[box, minimum height=11mm, fill=NavyBlue!5,
       draw=NavyBlue!65] (homogeneous) at (5,-5.95)
   {\Cref{thm:homogeneous-convolution-informal}\\
    (homogeneous log-concave case)};
 \draw[draw=black!65, line width=.6pt] (input.south) -- (0,-.75);
 \draw[flow] (0,-.75) -| (positive.north);
 \draw[flow] (0,-.75) -- (boundary.north);
 \draw[flow] (0,-.75) -| (zero.north);
 \draw[draw=black!65, line width=.6pt]
   (positive.south) -- node[note, midway]
   {{Positive gap}} (-5,-3.15) -- (-2.5,-3.15);
 \draw[draw=black!65, line width=.6pt]
   (boundary.south) -- node[note, midway]
   {Small tilt creates\\a positive gap} (0,-3.15) -- (-2.5,-3.15);
 \draw[flow] (-2.5,-3.15) -- (direct.north);
 \draw[flow] (zero.south) -- node[note, midway]
   {Zero curvature implies\\log-concavity} (lc.north);
 \draw[flow] (lc.south) -- node[note, midway]
   {Split by degree} (homogeneous.north);
\end{tikzpicture}
\captionsetup{hypcap=false}
\captionof{figure}{The three curvature cases in the proof of
\cref{thm:curvature-convolution}. Arrows point toward the result used
by each reduction.}
\label{fig:convolution-reduction}
\end{center}

\subsection{From Poincar\'{e} to Counting via Convex Programming and Softening}\label{subsec:field-optimization}
In this section, we outline the proof of \cref{thm:homogeneous-convolution-informal}.
The algorithm and its proof of correctness provide a general framework to turn sampling from a ``soft'' intersection of two measures on sets into approximate counting for the corresponding ``hard'' intersection. This method may be of independent interest, and is based on the following general principle: A Poincar\'e inequality that
holds uniformly over external fields turns subset convolution into a
dual maximum entropy convex program. Variants of this principle based on stronger assumptions (e.g. log-concavity of the generating polynomial) with significantly weaker approximation guarantees have previously appeared in the literature \cite{Gurvits2006, AnariOveisGharan2017}. We discuss this in greater detail in \cref{rmk:gurvits-capacity}.

Let $\mu,\nu:2^{\cU}\to\R_{\ge0}$ be
nonzero, put $n\coloneqq|\cU|$, and consider
\begin{align*}
 Z_\cap&\coloneqq \sum_{S\subseteq\cU}\mu(S) \cdot \nu(\cU\setminus S).
\end{align*}
When $Z_\cap>0$, the direct target measure $\mu\odot\nu^*$, with weights
proportional to $\mu(S)\nu(\cU\setminus S)$, can be poorly connected under
natural local moves: the hard constraint $T=\cU\setminus S$ forces changes
to the two sets to be coordinated.
To bypass this, it is natural to consider a variant of $\mu \odot \nu^*$ in which the hard constraint $T = \cU \setminus S$ is ``softened''.

For penalty vectors $\bm{t}\in(0,1]^{\cU}$, write
$\bm{t}^R\coloneqq\prod_{u\in R}t_u$. For external fields
$\bm{z}\in\R_{>0}^{\cU}$, the \emph{soft intersection} is the distribution
on $\supp\mu\times\supp\nu$ given by
\begin{align*}
 \pi_{\bm{t},\bm{z}}(S,T)&\propto\mu(S) \cdot \nu(T) \cdot \bm{z}^S\bm{z}^T \cdot \bm{t}^{S\cap T},
\end{align*}
that is, the Hadamard product
\begin{align*}
 \pi_{\bm{t},\bm{z}}&=(\mu_{\bm{z}}\otimes\nu_{\bm{z}})\odot\psi_{\bm{t}},&
 \mu_{\bm{z}}(S)&\coloneqq \mu(S) \cdot \bm{z}^S,\quad \nu_{\bm{z}}(T)\coloneqq \nu(T) \cdot \bm{z}^T,\quad
 \psi_{\bm{t}}(S,T)\coloneqq \bm{t}^{S\cap T}.
\end{align*}
\begin{samepage}
Let
\begin{align*}
    Z_{\bm{t},\bm{z}} \coloneqq \sum_{S,T \subseteq \cU} \mu(S) \cdot \nu(T) \cdot \bm{z}^S\bm{z}^T \cdot \bm{t}^{S\cap T}
\end{align*}
be its partition function.
\end{samepage}
When all penalties equal a scalar $t\in(0,1]$, we write a subscript
$t$ for the uniform penalty vector $\bm{t}=t \cdot \mathbf1$.


Note that $\pi_{\bm{t},\bm{z}}$ penalizes pairs $(S,T)$ where $S,T$ have a large intersection. On the other hand, complementary pairs $T=\cU\setminus S$ carry no penalty. Moreover, all such complementary pairs have the common factor $\bm{z}^{\cU}$, so
\begin{align*}
    Z_\cap&=\frac{Z_{\bm{t},\bm{z}}}{\bm{z}^{\cU}}\cdot\pi_{\bm{t},\bm{z}}(T=\cU\setminus S), \qquad \forall \bm{z} \in \R_{>0}^{\cU}.
\end{align*}

{Note that in our application, since $\mu$ and $\nu$ are log-concave measures, so is $\mu\otimes \nu$. Hence, the down-up walk (a.k.a. the basis exchange walk) on $\mu\otimes \nu$ is rapidly mixing~\cite{AnariEtAl2019}. $\mu \otimes \nu$ corresponds to the case $t = 1$, and we can hope that when $t$ is not too close to $0$, local Markov chains for sampling from $\pi_{t,\bm{z}}$ remain rapidly mixing.
In the later part of the proof, we will formally prove this by an application of \cref{thm:poincare} and show that there is an efficiently implementable Markov chain that samples from $\pi_{t,\bm{z}}$ in polynomial time provided $t \gtrsim 1/\mathsf{poly}(n)$.
}
%
This also allows us to approximate $Z_{t,\bm{z}}$ by standard counting-to-sampling reductions \cite{JVV1986}.

{
All that remains is to estimate $\pi_{t,\bm{z}}(T = \cU \setminus S)$ with good relative error, since putting it together with an accurate estimate of $Z_{t,\bm z}$ yields an accurate estimate of $Z_{\cap}$. The na{\"i}ve Monte Carlo method can work well as long as $\pi_{t,\bm{z}}(T = \cU \setminus S) \gtrsim 1/\mathsf{poly}(n)$.
The problem is that choosing $t$ to be as small as $1/\mathsf{poly}(n)$ does not automatically make $\pi_{t,\bm{z}}(T = \cU \setminus S)$ large.
See \Cref{rmk:penalty-alone} for an example.
}

{
\begin{remark}\label{rmk:penalty-alone}
An inverse-polynomial $t$ need not give complementary pairs
inverse-polynomial probability when $\bm{z}=\mathbf1$. Consider the
following ladder of Jerrum, Sinclair, and
Vigoda~\cite[Figure~2]{JerrumSinclairVigoda2004}.
In the upper diagrams, blue $S$ and orange $B=\cU\setminus T$ are bases of the partition matroids on blue and orange vertices, respectively. In the lower diagrams, filled and hollow vertices indicate matched and unmatched vertices.
Shared edges are drawn twice above and once below.
\begin{center}
\begin{tabular}{@{}c@{}}
\begin{tikzpicture}[x=.8cm,y=.72cm,
 edge/.style={draw=black!25,line width=.5pt},
 leftbasis/.style={draw=blue!75!black,line width=1.2pt},
 rightbasis/.style={draw=orange!90!black,line width=1.2pt},
 leftvertex/.style={circle,draw=blue!75!black,fill=blue!75!black,inner sep=1.35pt,line width=.5pt},
 rightvertex/.style={circle,draw=orange!90!black,fill=orange!90!black,inner sep=1.35pt,line width=.5pt},
 every node/.style={font=\footnotesize}]
 \draw[leftbasis] (2,1.9)--(2.65,1.9) node[right,text=black] {$S$};
 \draw[rightbasis] (4.1,1.9)--(4.75,1.9)
   node[right,text=black] {$B=\cU\setminus T$};
 \node[leftvertex,label=right:{$L$}] at (8.2,1.9) {};
 \node[rightvertex,label=right:{$R$}] at (9.4,1.9) {};
 \foreach \panel in {0,1} {
  \begin{scope}[xshift=\panel*6.4cm]
   \ifnum\panel=1
    \fill[black!6] (-.18,-.18) rectangle (1.18,1.18);
    \fill[black!6] (1.82,-.18) rectangle (3.18,1.18);
   \fi
   \draw[edge] (-.7,1)--(0,1)--(4,1) (0,0)--(4,0)--(4.7,0);
   \foreach \i in {0,...,4} {\draw[edge] (\i,0)--(\i,1);}
   \ifnum\panel=0
    \foreach \xa/\ya/\xb/\yb in {
      -.7/1/0/1,4/0/4.7/0,0/0/1/0,2/0/3/0,1/1/2/1,3/1/4/1} {
     \draw[leftbasis] (\xa,\ya) to[bend left=13] (\xb,\yb);
     \draw[rightbasis] (\xa,\ya) to[bend right=13] (\xb,\yb);
    }
   \else
    \foreach \xa/\ya/\xb/\yb in {
      0/0/0/1,1/0/1/1,2/0/3/0,2/1/3/1,4/0/4/1} {
     \draw[leftbasis] (\xa,\ya) to[bend left=13] (\xb,\yb);
     \draw[rightbasis] (\xa,\ya) to[bend right=13] (\xb,\yb);
    }
    \draw[leftbasis] (-.7,1)--(0,1);
    \draw[rightbasis] (4,0)--(4.7,0);
   \fi
   \foreach \i in {0,...,4} {
    \ifodd\i
     \node[leftvertex] at (\i,1) {};
     \node[rightvertex] at (\i,0) {};
    \else
     \node[leftvertex] at (\i,0) {};
     \node[rightvertex] at (\i,1) {};
    \fi
   }
   \node[leftvertex] at (-.7,1) {};
   \node[rightvertex] at (4.7,0) {};
  \end{scope}
 }
\end{tikzpicture}
\\[1.2ex]
\begin{tikzpicture}[x=.8cm,y=.65cm,
 edge/.style={draw=black!35,line width=.55pt},
 matched/.style={draw=black,line width=1.8pt},
 vertex/.style={circle,fill=black,inner sep=1.1pt},
 freevertex/.style={circle,draw=black,fill=white,inner sep=1.5pt,line width=.7pt},
 every node/.style={font=\footnotesize}]
 \foreach \panel in {0,1} {
  \begin{scope}[xshift=\panel*6.4cm]
   \ifnum\panel=1
    \fill[black!6] (-.18,-.18) rectangle (1.18,1.18);
    \fill[black!6] (1.82,-.18) rectangle (3.18,1.18);
   \fi
   \draw[edge] (-.7,1)--(0,1)--(4,1) (0,0)--(4,0)--(4.7,0);
   \foreach \i in {0,...,4} {\draw[edge] (\i,0)--(\i,1);}
   \ifnum\panel=0
    \draw[matched] (-.7,1)--(0,1) (4,0)--(4.7,0)
      (0,0)--(1,0) (2,0)--(3,0) (1,1)--(2,1) (3,1)--(4,1);
   \else
    \draw[matched] (0,0)--(0,1) (1,0)--(1,1)
      (2,0)--(3,0) (2,1)--(3,1) (4,0)--(4,1);
   \fi
   \foreach \i in {0,...,4} {
    \node[vertex] at (\i,0) {}; \node[vertex] at (\i,1) {};
   }
   \ifnum\panel=0
    \node[vertex] at (-.7,1) {}; \node[vertex] at (4.7,0) {};
    \node[align=center] at (2,-.65) {Unique perfect matching};
   \else
    \node[freevertex] at (-.7,1) {}; \node[freevertex] at (4.7,0) {};
    \node[align=center] at (2,-.65) {Leaves unmatched: two choices per shaded block};
   \fi
  \end{scope}
 }
\end{tikzpicture}
\end{tabular}
\end{center}
The pendant edges force a unique perfect matching $M$. Leaving both
leaves unmatched gives exponentially many near-perfect matchings $C$,
with two choices per shaded block and the final rung fixed. Completing $C$ as shown gives a
distinct soft configuration with $|S\cap T|=|S\setminus B|=1$,
hence only one factor of $t$. Their total weight makes the perfect matching exponentially unlikely.

\end{remark}
}

{
Since $Z_\cap$ isn't affected by the choice of $\bm z$ and $t$,
the hope is that we can still make $\pi_{t,\bm{z}}(T = \cU \setminus S) \gtrsim 1/\mathsf{poly}(n)$ by finding a good $\bm z \in \R_{>0}^{\cU}$.
To find such a $\bm z$, it is natural to consider the following dual maximum entropy convex program.
\begin{definition} \label{def:Phi-program}
Fix $t\in(0,1]$. Write the external field as $\bm{z}=\e^{\bm{y}}$,
where $\bm{y}\in\R^{\cU}$ gives its logarithmic coordinates, and define
\begin{align}\label{eq:field-gradient}
 \Phi(\bm{y})&\coloneqq \log Z_{t,\bm{z}}-\sum_{u\in\cU}y_u,&
 \partial_u\Phi(\bm{y})&=\pi_{t,\bm{z}}(u\in S)-\pi_{t,\bm{z}}(u\notin T).
\end{align}
We consider the following convex program:
\begin{align} \label{eq:Phi-program}
    \inf_{\bm{y}\in\R^{\cU}}\Phi(\bm{y}).
\end{align}
\end{definition}
}
The quantity $\exp(\inf_{\bm{y}\in\R^{\cU}}\Phi(\bm{y}))=\operatorname{Cap}_{\mathbf1}(Z_{t,\bm{x}})$ is the Gurvits capacity of $Z_{t,\bm{x}}$, viewed as a polynomial in $\bm{x}$; see \cref{rmk:gurvits-capacity} for the definition and its connection to earlier work.
{
The potential $\Phi$ is convex because its Hessian is the covariance
matrix of $\mathbf1_S+\mathbf1_T-\mathbf1$ for $(S,T)\sim\pi_{t,\e^{\bm{y}}}$.
When $Z_\cap>0$, it holds that
\begin{align*}
 \pi_{t,\e^{\bm{y}}}(T=\cU\setminus S)
 &=Z_\cap \cdot \exp\tp{-\Phi(\bm{y})}.
\end{align*}
Hence, minimizing $\Phi$ maximizes the probability $\pi_{t,\e^{\bm{y}}}(T=\cU\setminus S)$.

The next lemma shows that, under suitable Poincaré inequality assumptions, any near-stationary point $\bm{y}$ with small $\norm{\nabla\Phi(\bm y)}_\infty$ makes $\pi_{t,\e^{\bm{y}}}(T=\cU\setminus S)$ large when $t$ is sufficiently small. By \cref{eq:field-gradient}, the gradient of $\Phi$ can be estimated by sampling. Thus, we can find such a near stationary point by using standard stochastic gradient descent algorithms.

}


\begin{restatable}{lemma}{fieldconcavitylemma}\label{lem:field-concavity}
Let $\gamma>0$. Suppose that for every $\bm{t}\in(0,1]^{\cU}$ and
$\bm{z}\in\R_{>0}^{\cU}$, there is an irreducible continuous-time Markov chain
reversible with respect to $\pi_{\bm{t},\bm{z}}$, with spectral gap at least $\gamma$.
Each move updates only one of the two sets $S,T$.
Whenever $u\in S\oplus T$, the total rate of moves removing
its unique copy is at most $1$. Then for every scalar $t\in(0,1]$ and every
$\bm{z}\in\R_{>0}^{\cU}$, with $\bm{y}=\log\bm{z}$,
\begin{align}\label{eq:field-certificate}
 \pi_{t,\bm{z}}(T\ne\cU\setminus S)
 &\le n\left(\|\nabla\Phi(\bm{y})\|_\infty+\frac{4\sqrt t}{\gamma}\right).
\end{align}
\end{restatable}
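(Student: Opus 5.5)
The plan is to reduce \cref{eq:field-certificate} to a per-coordinate bound and then to a single estimate on the probability of double occupancy. For $(S,T)\sim\pi_{t,\bm z}$ put $X_u\coloneqq \mathbf 1[u\in S]+\mathbf 1[u\in T]-1\in\{-1,0,1\}$. Then $T=\cU\setminus S$ iff $X_u=0$ for all $u$, so a union bound gives
\begin{align*}
 \pi_{t,\bm z}(T\ne\cU\setminus S)\le\sum_{u\in\cU}\pi_{t,\bm z}(X_u\ne0).
\end{align*}
By \cref{eq:field-gradient},
\begin{align*}
 \partial_u\Phi(\bm y)=\pi(u\in S)-\pi(u\notin T)=\pi(X_u=1)-\pi(X_u=-1)=\E[\pi_{t,\bm z}]{X_u}.
\end{align*}
Hence $\pi(X_u\ne0)=2\pi(X_u=1)-\partial_u\Phi(\bm y)\le 2\,\pi(u\in S\cap T)+\|\nabla\Phi(\bm y)\|_\infty$. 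Summing over $u$, it therefore suffices to prove, for each fixed $u$,
\begin{align*}
 p_u\coloneqq\pi_{t,\bm z}(u\in S\cap T)\le\frac{2\sqrt t}{\gamma}.
\end{align*}
This step is routine; everything rests on the last bound.

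To bound $p_u$ I would use the hypothesis in its full strength, namely that it holds for \emph{all} penalty vectors $\bm t$ and not only scalar ones. I would fix every coordinate other than $u$ and vary only the penalty $t_u=s\in(0,1]$, writing $\pi^{(s)}$ for the resulting measure. States split by $u$ into \emph{double} ($u\in S\cap T$), \emph{single} ($u\in S\oplus T$), and \emph{empty}. Reweighting by $s$ multiplies exactly the double states by $s$, and leaves the conditional laws within each class unchanged. So $\pi^{(s)}(\text{double})=s\,w_2/(s\,w_2+w_1+w_0)$, where $w_i$ are the fixed class weights. In particular, $p_u=\pi^{(t)}(\text{double})$ can be large only if $w_2\gg (w_1+w_0)/t$.

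The key step is to rule this out with the Poincaré inequality applied at a suitable intermediate $s$. I expect the natural choice to be $s=\sqrt t$ or $s=1$, and this is what produces the $\sqrt t$. The test function is $f=\mathbf 1[\text{double}]$ (or a two-level function on double versus single states). Since each move updates only one of $S,T$, it changes $X_u$ by at most one, so the Dirichlet form of $f$ is carried entirely by double$\leftrightarrow$single transitions. By reversibility, this flux equals the flux single$\to$double, which I would try to bound using that the only controlled rates are the removal rates of a unique copy (at most $1$). Chaining single$\to$empty, and using that the single$\leftrightarrow$double ratio picks up a factor $s$ from the penalty, I would aim for an inequality of the form
\begin{align*}
 \gamma\,\Var[\pi^{(s)}]{f}\le \mathcal E^{(s)}(f,f)\lesssim \pi^{(s)}(\text{single}).
\end{align*}
I would then optimise over $s$ (an AM--GM balance between $s$ and $t/s$) to obtain $p_u\le 2\sqrt t/\gamma$.

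The main obstacle is exactly this flux estimate. The hypothesis only bounds the rates of moves \emph{removing a unique copy}, whereas the natural Dirichlet form of $\mathbf 1[\text{double}]$ involves removals from a double state. So the proof must transfer the rate bound across the double$\to$single$\to$empty ladder using detailed balance and the explicit factor $s$ in the weights. If the direct choice $f=\mathbf 1[\text{double}]$ does not close, I would first try instead $f=X_u$ (or $f=X_u$ restricted to $\{X_u\ge0\}$) under $\pi^{(s)}$. Its variance is controlled from below by $p_u$ and the gradient term, and its Dirichlet form is at most the total flux across single states, which is at most $\pi(\text{single})\le1$. I would then compare $\pi^{(s)}$ with $\pi^{(t)}$ through the explicit dependence on $s$ above.
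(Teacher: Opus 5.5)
\textbf{There is a genuine gap, in the step you call routine.} Write $p_j$ for the probability that $u$ lies in exactly $j$ of $S,T$. Your bound $p_0+p_2=2p_2-\partial_u\Phi(\bm{y})\le 2p_2+\|\nabla\Phi(\bm{y})\|_\infty$ discards the cancellation the lemma relies on. The resulting target $p_2\le 2\sqrt t/\gamma$ is false under the hypotheses.

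Take the bases pair law with $\cU=\{u,v\}$, $a=b=1$, and $\mu=\nu$ uniform on singletons. There the exchange chain of \cref{lem:bases-pair-poincare} satisfies every hypothesis with $\gamma=1$. Yet $p_2=tz_u^2/(tz_u^2+2z_uz_v+tz_v^2)\to1$ as $z_u/z_v\to\infty$, for every fixed $t$. So nothing can ``rule out $w_2\gg(w_0+w_1)/t$'': the field is arbitrary. When double occupancy dominates, \cref{eq:field-certificate} holds only because $\partial_u\Phi\approx1$.

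The Poincar\'e inequality can only control the balanced part. The correct reduction is $p_0+p_2\le|p_2-p_0|+2\sqrt{p_0p_2}$, so what you need is $p_0p_2\le 4t\,p_1^2/\gamma^2$.

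\textbf{Your mechanism for the key step also would not produce the $\sqrt t$.} With $z_u$ fixed and only $t_u$ varied, the class weights become $(p_0,p_1,(t_u/t)p_2)$. Choosing $t_u$ to balance $q_0=q_2$ leaves $q_1/q_0=p_1/p_0$, with no dependence on $t$.

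The missing idea is to use the hypothesis for all fields as well. Set $t'_u=1$ and $z'_u=sz_u$ with $s=\sqrt{tp_0/p_2}$ (assuming $p_0p_2>0$). Then $(q_0,q_1,q_2)\propto(p_0,sp_1,s^2p_2/t)=(p_0,sp_1,p_0)$, so single states carry the geometric mean.

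The obstacle you flagged then disappears if you test with the \emph{empty} indicator $g=\mathbf 1[u\notin S\cup T]$ instead of the double one. Since each move updates one set, no move joins empty and double states. Hence the Dirichlet form of $g$ is the flux from single to empty states, i.e.\ removals of a unique copy, at total rate at most $1$. This gives $\mathcal{E}(g,g)\le q_1$. Meanwhile $\mathrm{Var}(g)=q_0(1-q_0)\ge q_0/2$, because $q_0=q_2\le1/2$.

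The Poincar\'e inequality then yields $\gamma/2\le q_1/q_0=\sqrt t\,p_1/\sqrt{p_0p_2}$, hence $2\sqrt{p_0p_2}\le4\sqrt t\,p_1/\gamma\le 4\sqrt t/\gamma$. A union bound over $u$ finishes the proof.
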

{The proof is deferred to \cref{subsec:fpras-fields}.}

To summarize, we choose a small scalar penalty $t \asymp 1/\mathsf{poly}(n)$ and implement stochastic gradient descent on $\Phi$, seeking logarithmic coordinates
$\bm{y}$ with small $\nabla\Phi(\bm{y})$ and setting $\bm{z}=\e^{\bm{y}}$.
Note that by \cref{eq:field-gradient}, each gradient coordinate can
be efficiently estimated as the difference of two empirical frequencies
from random samples of $\pi_{t,\bm{z}}$.
With an inverse-polynomial lower bound on $\gamma$, taking $\|\nabla\Phi(\bm{y})\|_\infty\le1/\operatorname{poly}(n)$ sufficiently small
makes complementary pairs occur with constant probability, by
\cref{lem:field-concavity}. We can therefore estimate
this probability $\pi_{t,\bm{z}}(T=\cU\setminus S)$ from polynomially many samples as desired.

\paragraph{Application to the Homogeneous Log-Concave Case}
We apply the previous framework under the
hypotheses of \cref{thm:homogeneous-convolution-informal}, with degrees
$a,b$ and $a+b=n$. Here $\pi_{t,\bm{z}}$ is supported on
$\binom{\cU}{a}\times\binom{\cU}{b}$, and we call it the
\emph{bases pair law}; see \cref{def:pair-law}. The key input we need to verify is the Poincar\'{e} Inequality for $\pi_{t,\bm{z}}$ required by \cref{lem:field-concavity}. We describe here the argument at a high level; the formal argument is provided in \cref{subsec:fpras-activity-limit}.

We first extend the measures to downward-closed set families.
{Define the weights $\widehat\mu_{\bm{z}},\widehat\nu_{\bm{z}}$ through
their generating polynomials by}
\begin{align*}
 g_{\widehat\mu_{\bm{z}}}(\bm{x})&\coloneqq g_\mu(\mathbf1+\bm{z}\bm{x}),&
 g_{\widehat\nu_{\bm{z}}}(\bm{x})&\coloneqq g_\nu(\mathbf1+\bm{z}\bm{x}),
 \qquad \bm{z}\bm{x}\coloneqq (z_ux_u)_{u\in\cU}.
\end{align*}
The \emph{independent-set pair law}
on $2^{\cU}\times2^{\cU}$ (\cref{def:independent-pair}) is the
normalized Hadamard product
\begin{align*}
 \widehat\pi_{t,\bm{z}}
 &\coloneqq (\widehat\mu_{\bm{z}}\otimes\widehat\nu_{\bm{z}})\odot\psi_t.
\end{align*}
Conditioning $\widehat\pi_{t,\bm{z}}$ on
$\binom{\cU}{a}\times\binom{\cU}{b}$ gives exactly $\pi_{t,\bm{z}}$, and
$\widehat\pi_{t,\lambda \bm{z}}$ converges to $\pi_{t,\bm{z}}$ as
$\lambda\to\infty$.
The first factor has $0$-spectral curvature, and the dual of the second
factor has $1$-spectral curvature
(\cref{obs:w-0-curvature,obs:psi-star-1-curvature}).
Thus \cref{thm:poincare} gives
a Poincar\'e bound with constant $1$ for $\widehat\pi_{t,\lambda \bm{z}}$,
uniformly in $\lambda$. Taking $\lambda\to\infty$ transfers this bound
to an exchange chain for $\pi_{t,\bm{z}}$ (\cref{lem:bases-pair-poincare}),
yielding an efficient sampler using only the original value oracles.

\paragraph{Connections and Additional Intuition}

{
\begin{remark}[Partition-matroid interpretation]\label{rmk:partition-penalty}
The penalty $\psi_{\bm{t}}$ can be understood as an analogue of the
random-cluster model, where there is a soft penalty for non-independent
sets. Let $\mathcal P$ be the partition
matroid on $\cU\times\{1,2\}$ allowing at most one copy of each $u$,
and set $X\coloneqq(S\times\{1\})\cup(T\times\{2\})$.
For uniform penalties $t_u=t\in(0,1]$,
\[
 \psi_{t\allone}(S,T)=t^{|S\cap T|}
 =t^{|X|-{\rk_{\mathcal P}}(X)}.
\]
\end{remark}
}

{
\begin{remark}[Gurvits capacity]\label{rmk:gurvits-capacity}
For a nonzero finitely supported measure $w$ on $\mathbb Z_{\ge0}^n$,
write $g_w(\bm x)=\sum_{\bm k}w(\bm k)\bm x^{\bm k}$.
Its \emph{Gurvits capacity} at $\bm\alpha\in\R_{\ge0}^n$ is
\[
 \operatorname{Cap}_{\bm\alpha}(g_w)\coloneqq
 \inf_{\bm x\in\R_{>0}^n}\frac{g_w(\bm x)}{\bm x^{\bm\alpha}}.
\]
The Gurvits capacity at $\bm \alpha \in \supp w$ is always an upper bound on $w(\bm \alpha)$.
To see this, define the distribution
$p_{\bm z}(\bm k)=w(\bm k)\bm z^{\bm k}/g_w(\bm z)$.
If $\bm\alpha\in\supp w$, then
\[
 \operatorname{Cap}_{\bm\alpha}(g_w)
 =\frac{w(\bm\alpha)}{\sup_{\bm z>0}p_{\bm z}(\bm\alpha)} \geq w(\bm \alpha).
\]
Intuitively, it gives the best possible approximation of $w(\bm \alpha)$ that we can get by only tuning external fields applied to the base measure.

After taking logarithms, the optimization over external fields above has
a dual formulation as a minimum $\KL$-divergence convex program with prescribed mean,
or equivalently, a weighted maximum-entropy convex program~\cite{SinghVishnoi2014Entropy}.
For $\bm\alpha\in\supp w$, let
\[
 q^*=\operatorname*{arg\,min}_{q:\,\mathbb E_q[\bm X]=\bm\alpha}
 D_{\mathrm{KL}}\left(q\,\middle\|\,p_{\allone}\right),
\]
where $q$ ranges over probability distributions on $\supp w$,
$p_{\allone}=w/g_w(\mathbf1)$ is the normalized measure, and
$D_{\mathrm{KL}}$ denotes Kullback--Leibler divergence.
Entropy duality and the KL Pythagorean identity~\cite[Theorem~3.2]{CsiszarShields2004}
give
\[
 \log\frac{\operatorname{Cap}_{\bm\alpha}(g_w)}{w(\bm\alpha)}
 =D_{\mathrm{KL}}\left(\delta_{\bm\alpha}\,\middle\|\,q^*\right)
 =-\log q^*(\bm\alpha),
\]
where $\delta_{\bm\alpha}$ is the point mass at $\bm\alpha$.
In other words, we adjust the original distribution as little as possible
in KL divergence to obtain mean $\bm\alpha$. The approximation ratio is
the reciprocal of the probability that this distribution assigns to
$\bm\alpha$ itself.

For every homogeneous \emph{real stable} polynomial $g$ of degree $n$ in $n$
variables with nonnegative coefficients,
\[
 \frac{n!}{n^n}\operatorname{Cap}_{\mathbf1}(g)
 \le [x_1\cdots x_n]g
 \le \operatorname{Cap}_{\mathbf1}(g).
\]
The upper bound follows from the definition of capacity.
Gurvits~\cite{Gurvits2006} proves the lower bound, showing that capacity
overestimates the coefficient by at most a factor $n^n/n!$.
For the permanent of a nonnegative matrix $A$, he uses the polynomial
$g(\bm{x})=\prod_{i=1}^n(\sum_{j=1}^n A_{ij}x_j)$;
for mixed discriminants, he uses
$g(\bm{x})=\det(\sum_{i=1}^n x_iA_i)$, where the $A_i$ are positive semidefinite.
This approach has since been extended to other settings; see
\cref{subsec:prior-work}.

In our setting, the target coefficient
$[\bm{x}^{\cU}]Z_{t,\bm{x}}=Z_\cap$ is independent of $t$.
For fixed $t$, the Gurvits capacity is
\[
 \operatorname{Cap}_{\mathbf1}(Z_{t,\bm{x}})=
 \inf_{\bm{x}\in\R_{>0}^{\cU}}\frac{Z_{t,\bm{x}}}{\bm{x}^{\cU}}
 =\exp\left(\inf_{\bm{y}\in\R^{\cU}}\Phi(\bm{y})\right).
\]
When $Z_\cap>0$, the probability interpretation above gives the
approximation ratio
\[
 \frac{\operatorname{Cap}_{\mathbf1}(Z_{t,\bm x})}{Z_\cap}
 =\frac{1}{\sup_{\bm z>0}\pi_{t,\bm z}(T=\cU\setminus S)}.
\]
Under the hypotheses of \cref{lem:field-concavity}, this ratio satisfies
\[
 1\le\frac{\operatorname{Cap}_{\mathbf1}(Z_{t,\bm{x}})}{Z_\cap}
 \le\left(1-\frac{4n\sqrt t}{\gamma}\right)^{-1},
 \qquad 4n\sqrt t<\gamma.
\]
Indeed, $\Phi$ is bounded below and has a globally bounded Hessian,
so it admits points $\bm{y}$ with arbitrarily small gradient; applying
\cref{lem:field-concavity} to the corresponding fields $\bm{z}=\e^{\bm{y}}$
gives the upper bound on the ratio.
Thus decreasing $t$ makes the capacity bound nearly tight.
\end{remark}
}

{
\begin{remark}[Comparison with permanent algorithms]\label{rmk:permanent-annealing}
\looseness=1
The permanent algorithm of Jerrum--Sinclair--Vigoda~\cite{JerrumSinclairVigoda2004}
learns correction weights along an annealing schedule to balance the total
weight of perfect matchings with that of each near-perfect class, indexed
by its pair of unmatched vertices.
Our framework chooses external fields directly by stochastic gradient
descent. It can serve as a black box for a broad class of counting
problems admitting the above formulation, once the sampling and uniform
Poincar\'e hypotheses are verified.
\end{remark}
}



\section{Preliminaries}\label{sec:preliminaries}


{
\subsection{Notation}\label{subsec:notation}
For a finite ground set $\cU$, write $n\coloneqq|\cU|$ and
$\binom{\cU}{r}\coloneqq\{S\subseteq\cU:|S|=r\}$.
We use $S+u\coloneqq S\cup\{u\}$, $S-u\coloneqq S\setminus\{u\}$,
and $S\oplus T\coloneqq(S\setminus T)\cup(T\setminus S)$.
The symbol $\allone[\cdot]$ denotes an indicator, $\mathbf1_S$ the
indicator vector of $S$, and $\mathbf1$ the all-ones vector.

Vectors are written in bold. Products $\bm{z}\bm{x}$, exponentials
$\e^{\bm{y}}$, and logarithms $\log\bm{z}$ are taken coordinatewise. We write
$\bm{z}^S\coloneqq\prod_{u\in S}z_u$ for $S \subseteq \cU$, and
$\bm{z}^{\bm{\alpha}}\coloneqq\prod_{u\in\cU}z_u^{\alpha_u}$ for $\bm{\alpha} \in \Z_{\geq0}^{\cU}$.
For a polynomial $p$, $[\bm{z}^{\bm{\alpha}}]p$ denotes the coefficient
of $\bm{z}^{\bm{\alpha}}$ in $p$.
For symmetric matrices, $A\succeq B$ means that $A-B$ is positive semidefinite.

A nonnegative set measure $\mu:2^{\cU}\to\R_{\ge0}$ need not be
normalized. Its support is
$\supp\mu\coloneqq\{S\subseteq\cU:\mu(S)>0\}$.
An \emph{external field} is a vector $\bm{z}\in\R_{>0}^{\cU}$.
Applying it to $\mu$ gives the unnormalized measure
\[
 \mu_{\bm{z}}(S)\coloneqq\mu(S)\bm{z}^S.
\]
When $\mu$ is nonzero, the corresponding probability distribution is
$\mu_{\bm{z}}/g_\mu(\bm{z})$.
We write $\bm{y}=\log\bm{z}$ for the logarithmic coordinates of the
external field, so $\bm{z}=\e^{\bm{y}}$.
The product weights $(\mu\otimes\nu)(S,T)=\mu(S)\nu(T)$ are also unnormalized;
$\mu\odot\nu$ denotes the normalized Hadamard product of
\cref{def:hadamard}.

Logarithms are natural unless a base is specified.
$\widetilde{O}(\cdot)$ hides polylogarithmic factors.

\subsection{Matroids}\label{subsec:matroids}
A \emph{matroid} $\cM$ on $\cU$ has a family
$\cI(\cM)\subseteq2^{\cU}$ of \emph{independent sets} satisfying:
\begin{itemize}[leftmargin=1.5em,itemsep=2pt,topsep=3pt]
\item $\varnothing\in\cI(\cM)$.
\item If $I\in\cI(\cM)$ and $J\subseteq I$, then $J\in\cI(\cM)$.
\item If $I,J\in\cI(\cM)$ and $|I|<|J|$, then
$I\cup\{e\}\in\cI(\cM)$ for some $e\in J\setminus I$.
\end{itemize}
A \emph{basis} is a maximal independent set; write $\cB(\cM)$ for
the family of bases. The \emph{rank} of $S\subseteq\cU$ is
\[
 \rk_{\cM}(S)\coloneqq\max\{|I|:I\subseteq S,\ I\in\cI(\cM)\}.
\]
All bases have size $\rk_{\cM}(\cU)$.
An \emph{independence oracle} tests membership in $\cI(\cM)$.
Greedily adding elements of $S$ while preserving independence finds
a largest independent subset of $S$, and hence its rank, with at most
$|S|$ oracle queries.
The \emph{dual matroid} $\cM^*$ is the matroid on $\cU$ whose
independent sets are the sets $I\subseteq\cU$ such that
$\cU\setminus I$ contains a basis of $\cM$.
Equivalently, $\cB(\cM^*)=\{\cU\setminus B:B\in\cB(\cM)\}$;
see \cite{Oxley2011} for further background.

The basis generating polynomial
$g_{\cM}(\bm{x})\coloneqq\sum_{B\in\cB(\cM)}\bm{x}^B$
is log-concave~\cite{AnariOveisGharanVinzant2018,BrandenHuh2020}.
Additional matroid terminology and operations used in the applications
are recalled in \cref{app:application-background}.
}

\subsection{Markov Chains and Mixing Time}\label{subsec:markov-chains}
%
Let $\Omega$ be a finite state space and $\pi$ a probability law on
$\Omega$ with $\pi(x)>0$ for every $x\in\Omega$. For
$f,g:\Omega\to\mathbb R$, define
\begin{align*}
 \langle f,g\rangle_\pi&\coloneqq \sum_{x\in\Omega}\pi(x)f(x)g(x),&
 \|f\|_\pi^2&\coloneqq \langle f,f\rangle_\pi,&
 \Var[\pi]{f}&\coloneqq \|f-\E[\pi]{f}\|_\pi^2.
\end{align*}

Let $r(x,y)\ge0$, $x\ne y$, be transition rates on $\Omega$.
These define a continuous-time Markov chain $(X_t)_{t\ge0}$:
assign each directed edge $x\to y$ an independent Poisson clock of
rate $r(x,y)$. When its clock rings, the chain moves to $y$ if it is
currently at $x$; otherwise, nothing happens.
We assume the rates satisfy the \emph{detailed balance equations}
\begin{align}\label{eq:detailed-balance}
 \pi(x)\,r(x,y)&=\pi(y)\,r(y,x)\qquad\text{for all }x\ne y\in\Omega.
\end{align}
We then say that the chain is \emph{reversible} with respect to $\pi$.
All Markov chains used in this paper are finite and reversible.
The positive Laplacian and Dirichlet form associated with these rates
are
\begin{align}
 (\cL_r f)(x)&\coloneqq \sum_{y\ne x}r(x,y) \cdot \bigl(f(x)-f(y)\bigr),
 \label{eq:rate-laplacian}\\
 \cE_r(f,g)&\coloneqq \langle f,\cL_r g\rangle_\pi
 =\frac12\sum_{\substack{x,y\in\Omega\\x\ne y}}
 \pi(x)r(x,y) \cdot \bigl(f(y)-f(x)\bigr)\cdot\bigl(g(y)-g(x)\bigr).
 \label{eq:dirichlet-form}
\end{align}
The second equality in \cref{eq:dirichlet-form} follows from detailed balance.
The chain has generator $-\cL_r$, so its transition matrix
$P_t(x,y)\coloneqq\Pr{X_t=y\mid X_0=x}$ satisfies
\begin{align*}
 P_t&=\exp(-t\cL_r),\qquad t\ge0.
\end{align*}
We omit the subscript of $\cL_r$ when the rates are clear, and note that $\cE_{c \cdot r}(\cdot,\cdot)=c \cdot \cE_r(\cdot,\cdot)$
for $c>0$. We use $\cL$ for a general Laplacian and reserve $\sL$ for
{the Laplacian defined in \cref{eq:full-laplacian}}.

\paragraph{Spectral gap and Poincar\'e inequalities.}
The chain is \emph{irreducible} if, for every $x,y\in\Omega$, there is
a sequence $x=x_0,\ldots,x_k=y$ such that
$r(x_{j-1},x_j)>0$ for $1\le j\le k$.
In this case, its \emph{spectral gap} is
\begin{align}\label{eq:spectral-gap}
 \gamma(\cL_r)&\coloneqq \inf_{f\text{ nonconstant}}
 \frac{\cE_r(f,f)}{\Var[\pi]{f}},
\end{align}
the smallest nonzero eigenvalue of $\cL_r$; see
\cite[Lemma~13.7]{LevinPeres2017}. Thus $\gamma(\cL_r)\ge\gamma$ if and
only if the following \emph{Poincar\'e inequality} holds:
\begin{align}\label{eq:poincare-rate-form}
 \gamma \cdot \Var[\pi]{f}&\le\cE_r(f,f)
 \qquad\text{for every }f:\Omega\to\mathbb R.
\end{align}

\paragraph{Mixing time.}
The continuous-time chain $P_t$ has a lazy discrete-time implementation:
choose $R>0$ with $R\ge\max_x\sum_{y\ne x}r(x,y)$ and set
$P\coloneqq I-\cL_r/(2R)$.
Applying $P$ at the rings of an independent Poisson clock of rate $2R$
gives the chain $P_t$. For sampling from $\pi$, we run $P$ for a fixed
number of steps, bounded as follows.
For probability laws $\rho,\nu$ on $\Omega$, their \emph{total variation
distance} is
\[\|\rho-\nu\|_{\mathrm{TV}}\coloneqq
\frac12\sum_{x\in\Omega}|\rho(x)-\nu(x)|.\]
For $0<\eps<1$, the \emph{mixing time} of $P$ is
\begin{align*}
 t_{\mathrm{mix}}(\eps)&\coloneqq
 \min\left\{k\in\mathbb Z_{\ge0}:
 \max_{x\in\Omega}\|P^k(x,\cdot)-\pi\|_{\mathrm{TV}}\le\eps\right\}.
\end{align*}
Write $\pi_{\min}\coloneqq\min_{x\in\Omega}\pi(x)$.
Under \cref{eq:poincare-rate-form} with $\gamma>0$, the standard
spectral mixing bound \cite[Theorem~12.4]{LevinPeres2017} gives
\begin{align}\label{eq:poincare-mixing}
 t_{\mathrm{mix}}(\eps)
 &\le\left\lceil\frac{2R}{\gamma}
       \log\frac{1}{\eps\pi_{\min}}\right\rceil.
\end{align}
Running $P$ for this many steps from any starting state gives a sample
within total variation distance $\eps$ of $\pi$. Thus an efficient
implementation of each step gives an efficient sampler whenever this
mixing-time bound is polynomial.

\begin{proof}[Proof of \cref{prop:bochner-criterion}]
Since the chain is reversible, $\cL$ is self-adjoint and positive
semidefinite with respect to $\langle\cdot,\cdot\rangle_\pi$; since it
is irreducible, its kernel consists of the constant functions. Hence
$\cL$ is positive definite on $\mathbf1^\perp$
{(the functions with mean zero under $\pi$)}, and $\cL^{1/2}$ is a
bijection of $\mathbf1^\perp$. Both ratios are unchanged when a constant
is added to $f$, so we may take $f\in\mathbf1^\perp\setminus\{0\}$, for
which $\Var[\pi]{f}=\langle f,f\rangle_\pi$. Substituting
$g\coloneqq\cL^{1/2}f$ gives
\begin{align*}
 \frac{\langle\cL f,\cL f\rangle_\pi}{\langle f,\cL f\rangle_\pi}
 &=\frac{\langle g,\cL g\rangle_\pi}{\langle g,g\rangle_\pi},
\end{align*}
and $g$ ranges over $\mathbf1^\perp\setminus\{0\}$ as $f$ does. Hence
the two infima coincide.
\end{proof}

\subsection{Sampling to Counting}\label{subsec:sampling-to-counting}
We record below the standard reduction from counting to sampling due to
Jerrum, Valiant, and Vazirani~\cite{JVV1986}.

\begin{lemma}[{\cite{JVV1986}}]\label{lem:sampling-to-counting}
Let $m\ge1$, let $w:\{0,1\}^m\to\R_{\ge0}$ be nonzero, and write
$Z\coloneqq\sum_{\sigma}w(\sigma)$ and $\pi(\sigma)\coloneqq w(\sigma)/Z$.
For every nonempty conditional law obtained by fixing coordinates,
suppose a sampler, given any starting state in its support and an
accuracy $0<\tau<1$, returns
a sample in that support whose distribution is within total variation distance $\tau$ of
the conditional law, using fresh independent randomness.
Given an evaluation oracle for $w$, a state $\sigma_0\in\supp w$, and
$0<\eps,\delta<1$, one can compute $\widehat Z$ satisfying
\[
 \Pr{(1-\eps)Z\le\widehat Z\le(1+\eps)Z}\ge1-\delta.
\]
On every execution, the algorithm uses
$Q=O(m^3\eps^{-2}\log(2m/\delta))$ sampler calls, each with total variation distance at most
$\tau=\eps/(16m)$, one evaluation of $w$, and $O(Q)$ additional
arithmetic operations. It supplies a starting state for every call.
\end{lemma}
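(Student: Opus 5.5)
We follow the classical Jerrum--Valiant--Vazirani telescoping scheme. Using the given state $\sigma_0\in\supp w$, we fix the coordinates one at a time to the values of $\sigma_0$. For $0\le i\le m$, let $\pi_i$ be the conditional law of $\pi$ given $\sigma_j=\sigma_0(j)$ for all $j\le i$, and let $Z_i$ be the corresponding partial sum of $w$. Then $Z_0=Z$ and $Z_m=w(\sigma_0)$. Each $\pi_{i-1}$ is a nonempty conditional law, since it contains $\sigma_0$, and
\begin{align*}
 Z = w(\sigma_0)\prod_{i=1}^m \frac{1}{p_i},\qquad p_i\coloneqq\frac{Z_i}{Z_{i-1}}=\pi_{i-1}\bigl(\sigma_i=\sigma_0(i)\bigr).
\end{align*}
This uses the single evaluation of $w$. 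We always supply $\sigma_0$ as the starting state for the sampler of $\pi_{i-1}$.

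The usual worry is that some $p_i$ may be tiny. To avoid this, we choose the fixing value adaptively. Before committing at step $i$, we draw samples from $\pi_{i-1}$ and fix $\sigma_i$ to the empirically more frequent value. Since the two values' probabilities sum to $1$, the more likely value has $p_i\ge1/2$, and the empirical majority then has $p_i\ge 1/2-O(\tau+\text{sampling error})\ge1/3$ with high probability. The committed value must also lie in the support of the conditional law. It does, because some sample realized it, and every returned sample lies in the support. That sample then serves as the starting state for all later calls. This matches the statement's "supplies a starting state for every call": the algorithm should record the starting states it uses.

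Having fixed $p_i\ge1/3$, we estimate $p_i$ by the empirical frequency $\widehat p_i$ over $N=O(m^2\eps^{-2}\log(2m/\delta))$ samples, each with bias at most $\tau=\eps/(16m)$. The bias shifts the mean by at most $\tau$, so
\begin{align*}
 \bigl|\E{\widehat p_i}-p_i\bigr|\le\tau\le\frac{3\eps p_i}{16m}.
\end{align*}
A Chernoff or Hoeffding bound gives relative error $\eps/(8m)$ with failure probability $\delta/(2m)$ per coordinate. A union bound over the $m$ coordinates, and over the selection steps, then bounds the total failure probability by $\delta$. On the good event, the product of the $m$ factors $(1\pm\eps/(4m))$ stays inside $[1-\eps,1+\eps]$, so $\widehat Z\coloneqq w(\sigma_0')\prod_i\widehat p_i^{-1}$ satisfies the claim. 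The anchoring $\sigma_0'$ here is the final fully fixed state.

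One more consistency check: the single evaluation is of the final fixed state, which is in the support, so the count of one evaluation is respected. The total number of sampler calls is $mN=O(m^3\eps^{-2}\log(2m/\delta))$ on every execution, matching $Q$, and the arithmetic is $O(Q)$.

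The main obstacle is the bookkeeping, not any single hard step. Three things must hold together: the bias $\tau$ must be controlled relative to $p_i$, which requires the lower bound $p_i\ge1/3$ and hence the adaptive choice; every sampler call must receive a valid starting state; and the sample count must be deterministic on every execution. I would handle this by using a fixed number of samples per stage for both the selection and the estimation, so $Q$ is exact.
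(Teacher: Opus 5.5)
Your proposal is correct and is the standard Jerrum--Valiant--Vazirani self-reduction that the paper invokes by citation; the paper gives no proof of its own. As in that argument, you fix coordinates one at a time to an empirically more likely value, estimate each conditional marginal from fresh empirical frequencies, evaluate $w$ once at the final fully fixed state, and pass a realized sample forward as the next stage's starting state. There are two cosmetic slips: bias plus fluctuation gives per-factor relative error $5\eps/(16m)$ rather than $\eps/(4m)$ (this still suffices, or you can tighten the Hoeffding target), and the opening ``always supply $\sigma_0$'' should say the starting state is refreshed at every stage, as you describe later.
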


\subsection{Stochastic Coordinate-Descent}
Stochastic coordinate descent finds a point with small gradient using
noisy gradient estimates: it averages fresh estimates and steps along
any coordinate whose averaged estimate is large. The analysis combines
the coordinate-descent estimate of Nutini et
al.~\cite[Section~3, equation~(2)]{NutiniEtAl2015} with concentration
bounds for the noise. We state it with two guarantees that are
convenient when the gradient oracle is implemented by sampling: every
iterate stays in a bounded region, and the cost is bounded on every
execution.

\begin{theorem}\label{thm:stochastic-coordinate-descent}
Fix $d\ge1$, $\bm{x}_0\in\mathbb R^d$, $L,\sigma>0$, and $H_0\ge0$.
Assume the following.
\begingroup
\setlist[enumerate,1]{label=(\roman*),wide=0pt}
\begin{enumerate}
\item \textbf{Function.} The function $f:\mathbb R^d\to\mathbb R$
is convex and differentiable, and for all $\bm{x}\in\mathbb R^d$,
$u\in\mathbb R$, and $1\le i\le d$,
\begin{align}
 f(\bm{x}+u\bm{e}_i)&\le f(\bm{x})+u\partial_i f(\bm{x})+Lu^2/2,\label{eq:coordinate-smoothness}\\
 f(\bm{x})&\ge f(\bm{x}_0)-H_0.\label{eq:coordinate-lower-bound}
\end{align}
Here $(\bm{e}_i)_j\coloneqq \allone_{\{i=j\}}$.
\item \textbf{Gradient oracle.} Each query at $\bm{x}$ uses fresh independent
randomness and costs at most $T\ge d$. Its output $\bm{G}(\bm{x})$ is unbiased
with coordinatewise $\sigma$-sub-Gaussian noise: for all $\bm{x},u,i$ as above,
\begin{align*}
 \E{\bm{G}(\bm{x})}&=\nabla f(\bm{x}),\\
 \E{\exp\bigl(u(G_i(\bm{x})-\partial_i f(\bm{x}))\bigr)}
 &\le\exp(\sigma^2u^2/2).
\end{align*}
\end{enumerate}
For any $0<\eta\le1$ and $0<\delta<1$, a stochastic coordinate-descent
algorithm has the following guarantees.
\setlist[enumerate,1]{label=(\alph*),wide=0pt}
\begin{enumerate}
\item \textbf{Output and queries.} With probability at least $1-\delta$,
it returns $\bm{x}$ with $\|\nabla f(\bm{x})\|_\infty\le\eta$.
On every execution, every queried or returned point $\bm{y}$ satisfies
\begin{align}
 \|\bm{y}-\bm{x}_0\|_1&=O\left(\frac{H_0}{\eta}+\frac{\eta}{L}\right).\label{eq:scd-iterates}
\end{align}
\item \textbf{Cost.} On every execution, it uses $O(Q)$ oracle calls
and total time $O(TQ)$, where
\begin{align}
 Q&\coloneqq \left(1+\frac{\sigma^2}{\eta^2}\right)
       \left(1+\frac{LH_0}{\eta^2}\right)
       \log\left[\frac{2d}{\delta}
                   \left(1+\frac{LH_0}{\eta^2}\right)\right].\label{eq:scd-budget}
\end{align}
\end{enumerate}
\endgroup
\end{theorem}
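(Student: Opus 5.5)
The plan is a greedy stochastic coordinate descent with averaged gradient estimates, analysed through the coordinate smoothness inequality \cref{eq:coordinate-smoothness}. Set a batch size
\[
m\asymp(\sigma^2/\eta^2)\log(2dK/\delta)
\]
and an iteration budget $K\asymp 1+LH_0/\eta^2$.

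At the current point $\bm{x}_k$, the algorithm does the following.
\begin{enumerate}
\item Query the oracle $m$ times with fresh randomness and average the outputs to get $\bar{\bm G}$.
\item If $\|\bar{\bm G}\|_\infty\le 3\eta/4$, return $\bm{x}_k$.
\item Otherwise pick a coordinate $i$ with $|\bar G_i|>3\eta/4$ and step
\[
\bm{x}_{k+1}=\bm{x}_k-\frac{\eta}{4L}\operatorname{sgn}(\bar G_i)\,\bm{e}_i .
\]
\item If $K$ iterations pass without returning, return the last iterate.
\end{enumerate}
The step has the fixed size $\eta/(4L)$, not $\bar G_i/L$, so that the $\ell_1$ displacement is deterministic. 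This is what makes the bounds in (a) and (b) hold on every execution.

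For concentration, the averaged noise in each coordinate is $(\sigma/\sqrt m)$-sub-Gaussian. A union bound over $d$ coordinates and $K$ iterations then shows that, with probability at least $1-\delta$, every estimate satisfies $\|\bar{\bm G}-\nabla f(\bm{x}_k)\|_\infty\le\eta/4$ throughout the run; call this the good event. On the good event, a return at step 2 certifies $\|\nabla f\|_\infty\le\eta$. Also on the good event, a step along $i$ means $|\partial_i f(\bm{x}_k)|>\eta/2$ with the same sign as $\bar G_i$. Plugging $u=\mp\eta/(4L)$ into \cref{eq:coordinate-smoothness} gives
\[
f(\bm{x}_{k+1})\le f(\bm{x}_k)-\frac{\eta}{4L}\cdot\frac{\eta}{2}+\frac{\eta^2}{32L}=f(\bm{x}_k)-\frac{3\eta^2}{32L}.
\]
By \cref{eq:coordinate-lower-bound} the total decrease is at most $H_0$, so at most $O(LH_0/\eta^2)$ steps occur before a certified return. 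Choosing $K$ larger than this forces a return at step 2 on the good event, which proves (a) for the output.

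The cost bound (b) follows directly: there are at most $K$ iterations, each using $m$ oracle calls. Hence $Km=O(Q)$ calls and time $O(TQ)$, where the logarithmic factor $\log(2dK/\delta)$ matches \cref{eq:scd-budget}, and the additive $1+$ in $(1+\sigma^2/\eta^2)$ covers $m\ge1$.

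The iterate bound \cref{eq:scd-iterates} is the main obstacle, because it must hold on every execution, including off the good event. The fixed step size alone gives only
\[
\|\bm{y}-\bm{x}_0\|_1\le K\cdot\frac{\eta}{4L}=O\Bigl(\frac{\eta}{L}+\frac{H_0}{\eta}\Bigr),
\]
since $K\eta/L\asymp \eta/L+H_0/\eta$. This already has the required form and does not depend on how the noise behaves. I expect the only delicate point to be checking that the constants work out and that $K$ is set deterministically as a function of $L,H_0,\eta$. Convexity is not needed beyond ensuring that $f$ is well defined, and $H_0$ controls the descent budget.
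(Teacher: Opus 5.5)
Your proposal is correct and is essentially the paper's proof: averaged batches of fresh oracle outputs, a fixed-length signed coordinate step so that $\|\bm{y}-\bm{x}_0\|_1\le K\cdot(\text{step length})$ holds on every execution, a union bound putting every coordinate estimate within $\eta/4$, and a per-update decrease of order $\eta^2/L$ that caps the number of updates at $O(1+LH_0/\eta^2)$ via \cref{eq:coordinate-lower-bound}. The differences are cosmetic and do not affect the stated guarantees: you use threshold $3\eta/4$ and step $\eta/(4L)$, where the paper uses threshold $\eta/2$ and any step $s\in[\eta/(16L),\eta/(8L)]$ (chosen so that $s=\log(1+\eta/8)$ works in its arithmetic-only implementation), and you return the last iterate after $K$ updates, where the paper declares failure.
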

For instance, if each coordinate of $\bm{G}(\bm{x})$ takes values in an interval
of length at most $2\sigma$, the noise condition holds by Hoeffding's
lemma~\cite{Hoeffding1963}; centered Gaussian errors of variance at most
$\sigma^2$ are also allowed. The algorithm and its analysis are given in
\cref{app:coordinate-descent}.

\begin{remark}\label{rem:approximate-sampling}
If the $j$th oracle output, conditional on the history, is within total
variation distance $\xi_j$ of its prescribed law, then the conclusion of
\cref{thm:stochastic-coordinate-descent}(a) holds with probability at
least $1-\delta-\sum_j\xi_j$, and the remaining guarantees hold on every
execution. Indeed, coupling each output with an exact one, the two runs
agree except with probability at most $\sum_j\xi_j$.
\end{remark}

\section{Global Mixing via Local Curvature}\label{sec:poincare-proof}
In this section, we prove \cref{thm:poincare}.
We begin by setting up some notation. We extend functions on $\Omega$ by zero on $2^{\cU} \setminus \Omega$.

\begin{definition}[Discrete Gradient]\label{def:discrete-gradient}
For each $u \in \cU$ and $f : 2^{\cU} \to \R$, let
\begin{align*}
    (\nabla_{u}f)(S) \coloneqq f(S + u) - f(S - u).
\end{align*}
Further define the upper and lower discrete derivatives via
\begin{align*}
    (\nabla_{u}^{+}f)(S) \coloneqq (\nabla_{u}f)(S) \cdot \allone[u \notin S] \qquad\text{and}\qquad (\nabla_{u}^{-}f)(S) \coloneqq (\nabla_{u}f)(S) \cdot \allone[u \in S].
\end{align*}
Note that $(\nabla f)(S), (\nabla^{+}f)(S), (\nabla^{-}f)(S)$ are vectors in $\R^{\cU}$.
\end{definition}

It will also be notationally convenient to define a ``downward'' analog of spectral curvature. For each $S \in \supp(\mu)$, let
\begin{align*}
    D_{S}^{\mu,\kappa}(u,v) &\coloneqq A_{\cU \setminus S}^{\mu^{*},\kappa}(u,v) = \begin{cases}
        \frac{\mu(S - v)^{2}}{\mu(S)^{2}} - \kappa \cdot \frac{\mu(S - v)}{\mu(S)}, &\quad\text{if } u,v \in S, u = v \\
        \frac{\mu(S - u)}{\mu(S)} \cdot \frac{\mu(S - v)}{\mu(S)} - \frac{\mu(S - u - v)}{\mu(S)}, &\quad\text{if } u,v \in S, u \neq v \\
        0, &\quad\text{otherwise}
    \end{cases}.
\end{align*}
This is a $\cU \times \cU$ matrix supported only on $S$. The (downward) $\kappa$-spectral curvature is then the condition that $D_{S}^{\mu,\kappa} \succeq 0$.

We will use the following Bochner-type identity to prove \cref{thm:poincare}.
\begin{proposition}\label{prop:bochner-identity}
Under the hypotheses of \cref{thm:poincare}, for every function $f : 2^{\cU} \to \R$, we have the identity,
\begin{align*}
    \wrapa{f, \wrapp{\mathscr{L}^{2} - (\eta + \kappa) \cdot \mathscr{L}} f}_{\rho} &= \sum_{S\in\Omega} \rho(S) \cdot (\nabla^{+}f)(S)^{\top} \cdot A_{S}^{\mu,\eta} \cdot (\nabla^{+}f)(S) \\
    &+ \sum_{S\in\Omega} \rho(S) \cdot (\nabla^{-}f)(S)^{\top} \cdot D_{S}^{\nu,\kappa} \cdot (\nabla^{-}f)(S) \\
    &+ \frac{1}{Z} \sum_{S\in\Omega} \sum_{\substack{u,v\notin S : u\neq v\\S+u+v\in\Omega}} \mu(S + u + v) \cdot \nu(S) \cdot (\Delta_{uv}f)(S)^{2},
\end{align*}
where
\begin{align*}
    (\Delta_{uv}f)(S) \coloneqq (\nabla_{u}\nabla_{v}f)(S) = f(S + u + v) - f(S + u - v) - f(S - u + v) + f(S - u - v),
\end{align*}
and $Z = \sum_{S \subseteq \cU} \mu(S) \cdot \nu(S)$ is the normalizing constant for $\rho$.
\end{proposition}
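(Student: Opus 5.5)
The plan is a direct computation of $\langle \sL f,\sL f\rangle_\rho-(\eta+\kappa)\langle f,\sL f\rangle_\rho$, regrouping the terms pointwise so that three groups appear: the two curvature quadratic forms, and a leftover group of mixed second-order terms that will turn out to be a weighted sum of squared second differences over ``squares'' $\{R,R+u,R+v,R+u+v\}$.

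\textbf{Step 1: pointwise form of $\sL f$.} Fix $S\in\Omega$ and set $a_u(S)\coloneqq \mu(S+u)/\mu(S)$ for $u\notin S$ and $b_u(S)\coloneqq \nu(S-u)/\nu(S)$ for $u\in S$. Both are zero when the neighbouring set lies outside $\Omega$, since $\mu,\nu$ share the support $\Omega$. From the definition of $r_u$ we get
\begin{align*}
(\sL f)(S) &= -\,a(S)^\top(\nabla^+f)(S)+b(S)^\top(\nabla^-f)(S).
\end{align*}
Expanding the square $(\sL f)(S)^2$ gives three pieces:
\begin{itemize}
\item an upward piece $\bigl(a^\top\nabla^+f\bigr)^2$;
\item a downward piece $\bigl(b^\top\nabla^-f\bigr)^2$;
\item a cross piece $-2\sum_{u\notin S,\,v\in S}a_u b_v\,(\nabla^+_uf)(\nabla^-_vf)$.
\end{itemize}

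\textbf{Step 2: split the Dirichlet form in two ways.} By detailed balance, $\rho(S)a_u(S)=\mu(S+u)\nu(S)/Z=\rho(S+u)\,b_u(S+u)$. Hence the Dirichlet form can be written either as a sum over lower endpoints or over upper endpoints:
\begin{align*}
\langle f,\sL f\rangle_\rho=\sum_S\rho(S)\sum_{u\notin S}a_u(\nabla_u^+f)^2=\sum_S\rho(S)\sum_{u\in S}b_u(\nabla_u^-f)^2 .
\end{align*}
I will use the first form with coefficient $\eta$ and the second with coefficient $\kappa$.

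\textbf{Step 3: extract the curvature forms.} By \cref{def:spectral-curvature}, $A_S^{\mu,\eta}=aa^\top-\eta\,\mathrm{diag}(a)-N_S$, where $N_S(u,v)=\mu(S+u+v)/\mu(S)$ off the diagonal. Therefore
\begin{align*}
\bigl(a^\top\nabla^+f\bigr)^2-\eta\sum_{u\notin S}a_u(\nabla^+_uf)^2=(\nabla^+f)^\top A_S^{\mu,\eta}(\nabla^+f)+\sum_{u\ne v\notin S}\frac{\mu(S+u+v)}{\mu(S)}\nabla^+_uf\,\nabla^+_vf .
\end{align*}
The analogous identity holds for the downward piece with $D_S^{\nu,\kappa}$ and off-diagonal entries $\nu(S-u-v)/\nu(S)$. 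This produces the first two sums in \cref{prop:bochner-identity}, so what remains is to identify the leftover mixed terms with the $\Delta_{uv}$-sum.

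\textbf{Step 4 (main obstacle): reindex the mixed terms by squares.} After multiplying by $\rho(S)$, each of the three families of mixed terms carries the same weight $\mu(\text{top})\nu(\text{bottom})/Z$ for a square with bottom $R$ and top $R+u+v$:
\begin{itemize}
\item the upward terms are taken at $S=R$, with weight $\mu(R+u+v)\nu(R)/Z$;
\item the downward terms are taken at $S=R+u+v$, with weight $\mu(R+u+v)\nu(R)/Z$;
\item the cross terms are taken at $S=R+v$ with $u\notin S$, $v\in S$, where $\rho(S)a_ub_v=\mu(R+u+v)\nu(R)/Z$ (and symmetrically at $R+u$).
\end{itemize}
Write $x=f(R+u)-f(R)$, $y=f(R+v)-f(R)$, $x'=f(R+u+v)-f(R+v)$ and $y'=f(R+u+v)-f(R+u)$. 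Then $\Delta_{uv}f=x'-y=y'-x$, and
\begin{align*}
(\Delta_{uv}f)^2=(x'-y)(y'-x)=xy+x'y'-xx'-yy'.
\end{align*}
The four terms on the right are exactly:
\begin{itemize}
\item $xy$: the upward contribution at $R$;
\item $x'y'$: the downward contribution at the top $R+u+v$;
\item $-xx'$ and $-yy'$: the two cross contributions at the corners $R+u$ and $R+v$ (the factor $2$ in the cross piece matches the two orientations).
\end{itemize}
Summing over ordered pairs $(u,v)$ gives the stated third sum.

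The delicate points are bookkeeping rather than ideas. First, one must match the ordered-pair double counting consistently across the three families. Second, one must check that when a corner of a square lies outside $\Omega$, the corresponding term has zero weight (because $\mu$ or $\nu$ vanishes there) in every family simultaneously, so the extension of $f$ by zero never matters. I expect this boundary and indexing check to be where care is needed.
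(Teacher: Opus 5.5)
Your route is essentially the paper's. You expand $\langle\sL f,\sL f\rangle_\rho$, write the Dirichlet form once over lower endpoints (paired with $\eta$) and once over upper endpoints (paired with $\kappa$), peel off $A_S^{\mu,\eta}$ and $D_S^{\nu,\kappa}$, and regroup the leftover off-diagonal terms by squares carrying the common weight $w=\mu(R+u+v)\nu(R)/Z$. Steps 1--3 are correct.

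\textbf{Gap: the boundary check.} You assert that whenever a corner of a square lies outside $\Omega$, every family's term vanishes because $\mu$ or $\nu$ does. This is false. The weight $w$ depends only on the bottom $R$ and the top $T=R+u+v$. Suppose $R,T\in\Omega$, $R+u\notin\Omega$ and $R+v\in\Omega$. Then the upward term at $R$ survives, because the entry $\mu(T)/\mu(R)$ of your $N_R$ does not see that $a_u(R)=0$. The downward term at $T$ and the cross term at $R+v$ survive as well. Only the cross term at $R+u$ is absent, since that corner has no mass. The group therefore contributes $2w\bigl[(\Delta_{uv}f)(R)^2+(f(R+u)-f(R))(f(T)-f(R+u))\bigr]$. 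With $f$ extended by zero this is $2w\bigl[(\Delta_{uv}f)(R)^2-f(R)f(T)\bigr]$, so the identity fails.

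This configuration does occur in connected supports. Take ground set $\{u,v\}$, $\Omega=\{\varnothing,\{v\},\{u,v\}\}$ and $\mu=\nu\equiv1$ on $\Omega$. A direct computation shows the right side exceeds the left by $\tfrac23 f(\varnothing)f(\{u,v\})$, for any $\eta,\kappa$. So the proposition is not a purely algebraic identity, and your plan never uses the curvature hypotheses, which are needed exactly here.

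\textbf{Missing step: support closure from positive semidefiniteness.} If $R,T\in\Omega$ and $\mu(R+u)=0$, then $A_R^{\mu,\eta}(u,u)=0$ while $A_R^{\mu,\eta}(u,v)=-\mu(T)/\mu(R)\neq0$. This contradicts positive semidefiniteness; the same argument handles $R+v$, or you can use $D_T^{\nu,\kappa}$ instead. Hence every square whose bottom and top lie in $\Omega$ has all four corners in $\Omega$, and all four contributions appear with weight $w$. Squares whose bottom or top lies outside $\Omega$ do vanish in every family, as you say.

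\textbf{Minor labeling slip in Step 4.} With your definitions $x'=f(T)-f(R+v)$ and $y'=f(T)-f(R+u)$, one has $\Delta_{uv}f=x'-x=y'-y$, not $x'-y$. The cross terms at $R+v$ and $R+u$ are $x'y$ and $xy'$. The correct factorization is $(\Delta_{uv}f)^2=(x'-x)(y'-y)=xy+x'y'-x'y-xy'$. Your display becomes correct if you swap the names $x'$ and $y'$.
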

\cref{prop:bochner-identity} is proved in \cref{subsec:proof-bochner} below.

\begin{proof}[Proof of \cref{thm:poincare}]
In \Cref{prop:bochner-identity}, the term $(\Delta_{uv}f)(S)^2$ is non-negative automatically; and, by the spectral-curvature conditions, the first two terms are also non-negative.
Hence $\langle\sL f,\sL f\rangle_\rho\ge(\eta+\kappa) \cdot \langle f,\sL f\rangle_\rho$
for all $f$. The chain generated by $-\sL$ is reversible with respect
to $\rho$ and irreducible on $\Omega$, since $\Omega$ is
connected under single-element additions and deletions and every feasible
transition has positive rate. Thus \cref{prop:bochner-criterion} gives
spectral gap $\gamma\ge\eta+\kappa$, which is the desired Poincar\'e
inequality.
\end{proof}

\subsection{Proof of \cref{prop:bochner-identity}}\label{subsec:proof-bochner}
\begin{proof}
We show that subtracting the two curvature terms leaves a weighted
sum of squares of the mixed second differences $\Delta_{uv}f$,
the off-diagonal entries of the discrete Hessian of $f$.
Let $S\sim\rho$ and write
$\bm{r}(S)\coloneqq (r_u(S))_{u\in\cU}$,
where $r_u(S)$ is defined just before \cref{eq:full-laplacian}.
Since $r_u(S)$ is signed, the vector
of nonnegative transition rates is $|\bm{r}(S)|\coloneqq (|r_u(S)|)_{u\in\cU}$.
Self-adjointness and \cref{eq:dirichlet-form} give
\begin{align*}
 \inner{f}{\sL^2f}_\rho
 &=\inner{\sL f}{\sL f}_\rho
 =\E[S \sim \rho]{(\nabla f)(S)^\top \bm{r}(S)\bm{r}(S)^\top(\nabla f)(S)},\\
 \inner{f}{\sL f}_\rho
 &=\frac12 \cdot \E[S \sim \rho]{(\nabla f)(S)^\top\diag(|\bm{r}(S)|)(\nabla f)(S)}.
\end{align*}
Detailed balance pairs each addition with its reverse deletion, so
\begin{align*}
 \inner{f}{\sL f}_\rho
 &=\E{(\nabla^+f)(S)^\top\diag(|\bm{r}(S)|)(\nabla^+f)(S)}\\
 &=\E{(\nabla^-f)(S)^\top\diag(|\bm{r}(S)|)(\nabla^-f)(S)}.
\end{align*}
Define the matrix
\begin{align*}
 B_S&\coloneqq \offdiag\bigl(\bm{r}(S)\bm{r}(S)^\top-A_S^{\mu,\eta}-D_S^{\nu,\kappa}\bigr),
\end{align*}
where $\offdiag(M)$ sets the diagonal entries of $M$ to zero.
Subtracting the two curvature terms from the left-hand side gives
\begin{align*}
 \inner{f}{(\sL^2-(\eta+\kappa)\sL)f}_\rho
 &-\E{(\nabla^+f)(S)^\top A_S^{\mu,\eta}(\nabla^+f)(S)}
 -\E{(\nabla^-f)(S)^\top D_S^{\nu,\kappa}(\nabla^-f)(S)}\\
 &=\E{(\nabla f)(S)^\top B_S(\nabla f)(S)}.
\end{align*}
The diagonal terms cancel using the
preceding Dirichlet-form identities and the definitions of
$A_S^{\mu,\eta}$ and $D_S^{\nu,\kappa}$.
Expanding the expectation gives
\begin{align*}
 \E{(\nabla f)(S)^\top B_S(\nabla f)(S)}
 &=\sum_{X\in\Omega}\sum_{\substack{u,v\in\cU:\,u\ne v}}
   \rho(X)B_X(u,v)(\nabla_u f)(X)(\nabla_v f)(X).
\end{align*}

{
Whenever $S,T=S+u+v\in\Omega$ for distinct $u,v\notin S$, both
$S+u$ and $S+v$ also belong to $\Omega$.
Indeed, if $\mu(S+u)=0$, then $A_S^{\mu,\eta}(u,u)=0$ but
$A_S^{\mu,\eta}(u,v)=-\mu(T)/\mu(S)\ne0$, contradicting positive
semidefiniteness. The other intermediate set is handled symmetrically.}

\begin{observation}\label{obs:bochner-coefficients}
For $S\in\Omega$ and distinct $u,v\notin S$ with $T\coloneqq S+u+v\in\Omega$,
set $w\coloneqq \mu(T)\nu(S)/Z$. Then
\begin{align*}
 \rho(S)B_S(u,v)&=\rho(T)B_T(u,v)
 =-\rho(S+u)B_{S+u}(u,v)=-\rho(S+v)B_{S+v}(u,v)=w.
\end{align*}
\end{observation}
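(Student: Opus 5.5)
The proof is a direct entrywise computation. By definition, for $u\ne v$ we have $B_X(u,v)=r_u(X)r_v(X)-A_X^{\mu,\eta}(u,v)-D_X^{\nu,\kappa}(u,v)$. The structural fact that organizes everything is that $A_X^{\mu,\eta}(u,v)$ is nonzero only when both $u,v\notin X$, and $D_X^{\nu,\kappa}(u,v)$ is nonzero only when both $u,v\in X$. So I will evaluate $B_X(u,v)$ at the four sets $X\in\{S,\,S+u,\,S+v,\,T\}$, multiply by $\rho(X)=\mu(X)\nu(X)/Z$, and compare with $w$.

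Before computing, I use the remark just above the observation: $S+u,S+v\in\Omega$. This means every rate $r_u(X),r_v(X)$ at these four sets is a genuine feasible rate given by the formula for $r_u$, with no omitted moves.

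\emph{Case $X=S$.} Here $u,v\notin S$, so $D_S^{\nu,\kappa}(u,v)=0$. The rates are $r_u(S)=-\mu(S+u)/\mu(S)$ and $r_v(S)=-\mu(S+v)/\mu(S)$. Their product equals the first term of $A_S^{\mu,\eta}(u,v)$, so $B_S(u,v)=\mu(T)/\mu(S)$. Multiplying by $\rho(S)$ gives $\mu(T)\nu(S)/Z=w$.

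\emph{Case $X=T$.} Now $u,v\in T$, so $A_T^{\mu,\eta}(u,v)=0$. The rates are $r_u(T)=\nu(T-u)/\nu(T)$ and $r_v(T)=\nu(T-v)/\nu(T)$. Their product is cancelled by the first term of $D_T^{\nu,\kappa}(u,v)$, leaving $B_T(u,v)=\nu(T-u-v)/\nu(T)=\nu(S)/\nu(T)$. Multiplying by $\rho(T)$ gives $w$ again.

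\emph{Case $X=S+u$.} Here $u\in X$ and $v\notin X$, so both curvature entries vanish because the pair is mixed. Thus $B_{S+u}(u,v)=r_u(S+u)\,r_v(S+u)$. The deletion rate is $r_u(S+u)=\nu(S)/\nu(S+u)$ and the addition rate is $r_v(S+u)=-\mu(T)/\mu(S+u)$. Multiplying by $\rho(S+u)=\mu(S+u)\nu(S+u)/Z$ cancels both denominators and gives $-\mu(T)\nu(S)/Z=-w$.

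\emph{Case $X=S+v$.} This follows from the previous case by swapping $u$ and $v$.

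The only point requiring care is the feasibility of the intermediate sets. It guarantees that the relevant rates are the nonzero formula values rather than omitted transitions, and that $\rho(S+u),\rho(S+v)>0$ so the cancellations are legitimate. That is supplied by the positive semidefiniteness argument preceding the observation. Everything else is bookkeeping with the sign convention of $r_u$: negative for additions and positive for deletions. This sign convention is exactly what produces the pattern $+w,+w,-w,-w$, and that pattern is what later assembles the four contributions into $w\,(\Delta_{uv}f)(S)^2$.
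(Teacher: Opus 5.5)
Your proposal is correct and takes the same approach as the paper, which justifies the observation by direct substitution of the rates and curvature entries at the four sets $S,S+u,S+v,T$. Like the paper, it relies on the preceding positive-semidefiniteness remark to guarantee $S+u,S+v\in\Omega$, and your four case computations, including the vanishing of both curvature entries in the mixed cases, all check out.
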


We regroup by $S=X-u-v$, writing $T=S+u+v$.
{For each $X\in\{S,S+u,S+v,T\}\cap\Omega$, direct substitution gives
$\rho(X)B_X(u,v)=\pm\mu(T)\nu(S)/Z$.
This vanishes unless both $S,T\in\Omega$, since the two measures have
common support. In that case, both intermediate sets are supported
by the argument above.}
Thus
\begin{align*}
 \E{(\nabla f)(S)^\top B_S(\nabla f)(S)}
 &=\sum_{S\in\Omega}
   \sum_{\substack{u,v\notin S:\,u\ne v\\T=S+u+v\in\Omega}}
   \sum_{X\in\{S,S+u,S+v,T\}}
   \rho(X)B_X(u,v)(\nabla_u f)(X)(\nabla_v f)(X).
\end{align*}
The gradient $\nabla_u f$ is unchanged when $u$ is added or removed,
and similarly for $\nabla_v f$. Thus \cref{obs:bochner-coefficients} gives, for each group,
\begin{align*}
 &\sum_{X\in\{S,S+u,S+v,T\}}
   \rho(X)B_X(u,v)(\nabla_u f)(X)(\nabla_v f)(X)\\
 &=w \cdot \bigl((\nabla_u f)(T)-(\nabla_u f)(S)\bigr)
      \bigl((\nabla_v f)(T)-(\nabla_v f)(S)\bigr)
 =w \cdot (\Delta_{uv}f)(S)^2.
\end{align*}
The last equality holds because each of the two differences equals
$(\nabla_u\nabla_v f)(S)=(\Delta_{uv}f)(S)$.
Substituting into the regrouped sum finishes the proof.
 \end{proof}


\section{Subset Convolution for Homogeneous Log-Concave Measures}\label{sec:fpras}
In this section, we prove the following formal version of
\cref{thm:homogeneous-convolution-informal}.

\begin{theorem}
\label{thm:homogeneous-convolution}
Let $\mu:\binom{\cU}{a}\to\mathbb R_{\ge0}$ and
$\nu:\binom{\cU}{b}\to\mathbb R_{\ge0}$, extended by zero to $2^{\cU}$,
where $a,b\ge0$ are integers with $a+b=n\coloneqq |\cU|\ge1$.
Suppose $g_\mu,g_\nu$ are log-concave. Given value oracles for $\mu,\nu$,
the coefficient ranges $\cR_\mu,\cR_\nu$ in
\cref{eq:coefficient-range}, and sets $S_0,T_0$ with
$\mu(S_0)>0$ and $\nu(T_0)>0$, there is an $\mathsf{FPRAS}$ for the
subset-convolution value $(\mu*\nu)(\cU)$, namely
\begin{align}\label{def:Z-cap}
 Z_\cap&\coloneqq [\bm{x}^{\cU}]\bigl(g_\mu(\bm{x})g_\nu(\bm{x})\bigr)
       =\sum_{S\subseteq\cU}\mu(S)\nu(\cU\setminus S),
 \qquad \bm{x}^{\cU}\coloneqq \prod_{u\in\cU}x_u.
\end{align}
For every $0<\eps\le1$, it uses at most
\begin{align}\label{eq:main-cost}
 O\left(n^{10}(n+\log(\cR_\mu\cR_\nu))^2\eps^{-2}
       \log^2\frac{2(n+\log(\cR_\mu\cR_\nu))}{\eps}\right)
\end{align}
arithmetic operations and value-oracle calls on every execution.
\end{theorem}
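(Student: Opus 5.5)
The proof assembles the framework of \cref{subsec:field-optimization}, in four steps.

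\emph{Step 1: uniform Poincar\'e inequality for the bases pair law.} For every $\bm{t}\in(0,1]^{\cU}$ and $\bm{z}\in\R_{>0}^{\cU}$, consider the independent-set pair law $\widehat\pi_{\bm{t},\lambda\bm{z}}=(\widehat\mu_{\lambda\bm{z}}\otimes\widehat\nu_{\lambda\bm{z}})\odot\psi_{\bm{t}}$ on $2^{\cU}\times2^{\cU}\cong2^{\cU\times\{1,2\}}$. The first factor comes from translating the log-concave polynomial $g_\mu(\bm{x})g_\nu(\bm{x})$ (in separate variables). By \cref{prop:flc-curvature} with $\alpha=1$, applied to the Lorentzian/log-concave product, it has downward-closed support and $0$-spectral curvature.

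For the second factor, the dual $\psi_{\bm{t}}^*$ has $1$-spectral curvature. To see this, compute its curvature matrix directly. Removing a copy of $u$ from a doubly occupied coordinate multiplies the weight by $t_u^{-1}$, and the matrix decouples across coordinates. The diagonal entry is then $t_u^{-2}-t_u^{-1}\ge0$, or $1-1=0$ when only one copy is present. The off-diagonal entries within a coordinate pair are $t_u^{-1}t_u^{-1}-t_u^{-1}$ or similar, giving $2\times2$ PSD blocks.

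\cref{thm:poincare} then gives gap $\ge1$ for $\sL$, uniformly in $\lambda$. Conditioning on $\binom{\cU}{a}\times\binom{\cU}{b}$ and sending $\lambda\to\infty$, the Dirichlet form of $\sL$ restricted to functions on the top layer should converge, after rescaling rates by $\lambda$, to that of an exchange chain. In that chain, a delete followed by an add is collapsed into a single swap in $S$ or in $T$. This yields a Poincar\'e inequality with $\gamma=\Omega(1/n)$ for a swap chain with rates normalized so that removal rates of singly occupied $u$ are at most $1$. I expect this limiting argument, and keeping track of the normalization required by \cref{lem:field-concavity}, to be the main technical obstacle. I would handle it via the variational characterization, applying the inequality to the harmonic extension of $f$ to lower layers.

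\emph{Step 2: choose $t$ and find good fields.} Take $t=\Theta(\gamma^2/n^4)=1/\mathrm{poly}(n)$ so that $4n\sqrt t/\gamma\le1/4$. Run \cref{thm:stochastic-coordinate-descent} on $\Phi$ with target $\eta=1/(4n)$. The inputs are as follows.
\begin{itemize}
\item \emph{Smoothness.} $\partial_u^2\Phi$ is a variance of a quantity in $[-1,1]$, so $L\le4$.
\item \emph{Starting point and lower bound.} Start at $\bm{y}_0=0$ after rescaling $\mu,\nu$ by their values at $S_0,T_0$. Then $H_0\le\Phi(0)-\inf\Phi\le\log(Z_{t,\mathbf1})-\log Z_\cap=O(n\log(1/t)+\log(\cR_\mu\cR_\nu))$. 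This uses that if $Z_\cap>0$ then $\inf\Phi\ge\log Z_\cap$, and that $Z_{t,\mathbf1}\le4^n\max\mu\max\nu$ while $Z_\cap\ge\min\mu\min\nu$ restricted to the support.
\item \emph{Gradient oracle.} This is the empirical frequency difference from samples of $\pi_{t,\bm{z}}$, which has bounded range, so $\sigma=O(1)$.
\end{itemize}
Samples come from the exchange chain of Step 1, with mixing time bounded by \cref{eq:poincare-mixing}. Here $\log(1/\pi_{\min})$ is controlled by $n\log(1/t)$, $\log(\cR_\mu\cR_\nu)$, and the field range. The field range is bounded via \cref{eq:scd-iterates} by $O(H_0/\eta)$. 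Approximate sampling errors are absorbed by \cref{rem:approximate-sampling}. The case $Z_\cap=0$ is detected because complementary pairs are then never sampled; alternatively, check feasibility by matroid-intersection-style reasoning, or simply output $0$ when no complementary sample appears.

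\emph{Step 3: count.} With the returned $\bm{z}$, \cref{lem:field-concavity} gives $\pi_{t,\bm{z}}(T=\cU\setminus S)\ge1/2$. Estimate $Z_{t,\bm{z}}$ with \cref{lem:sampling-to-counting}. Its conditional laws fix coordinates, and I would check that they remain pair laws of the same type, namely contractions and deletions of log-concave measures, so the same sampler applies. Estimate the complementary-pair probability by $O(\eps^{-2})$ samples, and output $Z_{t,\bm{z}}\,\widehat p/\bm{z}^{\cU}$.

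\emph{Step 4: cost.} Multiply the number of sampler calls ($O(n^3\eps^{-2}\log n)$ for JVV, plus $Q$ from \cref{eq:scd-budget}) by the mixing time and per-step cost. The factor $(n+\log(\cR_\mu\cR_\nu))^2$ arises from $H_0$ together with $\log(1/\pi_{\min})$; collecting these terms should give the bound \cref{eq:main-cost}.
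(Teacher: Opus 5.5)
Your outline is the paper's route: the independent-set pair law with curvatures $0$ and $1$, then \cref{thm:poincare}, the induced chain via harmonic extension as $\lambda\to\infty$, coordinate descent to $\|\nabla\Phi\|_\infty\le 1/(4n)$, \cref{lem:field-concavity}, the JVV self-reduction, and the product estimate. It does yield an $\mathsf{FPRAS}$. However, it does not prove the cost bound \cref{eq:main-cost}, which is part of the statement, so Step~4's claim that collecting terms gives \cref{eq:main-cost} fails as written. The loss comes from Step~1 settling for $\gamma=\Omega(1/n)$ and Step~2 taking $t=\Theta(\gamma^2/n^4)=\Theta(n^{-6})$. With these choices $R=n/t=\Theta(n^7)$ and $R/\gamma=\Theta(n^8)$. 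Multiplying by $\log(1/\pi_{\min})=O(n^2B)$ and $O(n)$ work per step, each gradient sample costs about $n^{11}B$ instead of $O(n^6B)$. The total is then roughly $n^{15}B^2$ rather than $n^{10}B^2$, where $B=\Theta(n+\log(\cR_\mu\cR_\nu))$.

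The missing observation is that the $\lambda\to\infty$ limit loses nothing.
\begin{itemize}
\item \emph{No rescaling by $\lambda$ is needed, and rescaling would destroy the bound.} The deletion rates ($t_u^{-1}$ for a doubly occupied $u$, and $1$ otherwise) do not depend on $\lambda$, while addition rates are of order $\lambda$. So from a pair of bases, the induced chain deletes a copy at a $\lambda$-independent rate and then almost surely reinserts immediately, with law proportional to $z_v\mu(S-u+v)$.
\item \emph{The comparison is lossless.} \Cref{prop:induced-poincare}, which is exactly your harmonic-extension argument, transfers the constant $1$ from \cref{thm:poincare} without loss. Hence the exchange chain \cref{eq:pair-exchange-rates} has gap at least $1$ (\cref{lem:bases-pair-poincare}).
\item \emph{The normalization in \cref{lem:field-concavity} is automatic.} For $u\in S\setminus T$, the deletion rate is $t_u^0=1$, and the reinsertion probabilities over $v\neq u$ sum to at most $1$. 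So $\gamma=1$ with no $1/n$ factor.
\end{itemize}
Also, $4n\sqrt t/\gamma\le 1/4$ only requires $t\le\gamma^2/(256n^2)$, not $\gamma^2/n^4$. With $t=1/(256n^2)$ you get $R=O(n^3)$ and per-sample cost $O(n^6B)$ from \cref{cor:bases-pair-sampling}. Together with $Q_{\mathrm{cd}}=O(n^4B\log B)$ gradient samples and the JVV stage, this yields \cref{eq:main-cost}.

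Your $L\le4$ and the $n\log(1/t)$ term in $H_0$ are merely loose. One can take $L\le1$, since $\partial_u^2\Phi$ is the variance of a $\{-1,0,1\}$-valued variable. One can also take $H_0\le 2n+\log_2(\cR_\mu\cR_\nu)$, since $t\le1$ gives $Z_{t,\mathbf 1}\le 4^n\max\mu\max\nu$.
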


Throughout this section, $\mu,\nu$ satisfy the hypotheses of
\cref{thm:homogeneous-convolution}, with degrees $a,b$ and $a+b=n$.
We extend the measures to downward-closed set families by translating
their generating polynomials.

\begin{definition}[Independent-Set Pair Law]\label{def:independent-pair}
Let $\mu,\nu$ be as in \cref{thm:homogeneous-convolution},
$\bm{t}=(t_u)_{u\in\cU}\in(0,1]^{\cU}$, and $\bm{z}\in\mathbb R_{>0}^{\cU}$.
Write $\bm{z}\bm{x}\coloneqq (z_ux_u)_{u\in\cU}$ and, for $R\subseteq\cU$,
$\bm{z}^R\coloneqq \prod_{u\in R}z_u$ and $\bm{t}^R\coloneqq \prod_{u\in R}t_u$.
{Define the weights $\widehat\mu_{\bm{z}},\widehat\nu_{\bm{z}}$ through
their generating polynomials by}
\begin{align*}
 g_{\widehat\mu_{\bm{z}}}(\bm{x})&\coloneqq g_\mu(\mathbf1+\bm{z}\bm{x}),&
 g_{\widehat\nu_{\bm{z}}}(\bm{x})&\coloneqq g_\nu(\mathbf1+\bm{z}\bm{x}).
\end{align*}
Write $(\widehat\mu_{\bm{z}}\otimes\widehat\nu_{\bm{z}})(S,T)
\coloneqq \widehat\mu_{\bm{z}}(S)\widehat\nu_{\bm{z}}(T)$ and $\psi_{\bm{t}}(S,T)\coloneqq \bm{t}^{S\cap T}$.
The \emph{independent-set pair law}
on $2^{\cU}\times2^{\cU}$ is the normalized Hadamard product
\begin{align*}
 \widehat\pi_{\bm{t},\bm{z}}
 &\coloneqq (\widehat\mu_{\bm{z}}\otimes\widehat\nu_{\bm{z}})\odot\psi_{\bm{t}}.
\end{align*}
\end{definition}
{
\begin{remark}\label{rmk:completion-weights}
Explicitly, for $S,T\subseteq\cU$,
\[
 \widehat\mu_{\bm{z}}(S)
 =\bm{z}^S\sum_{S\subseteq A\subseteq\cU}\mu(A),
 \qquad
 \widehat\nu_{\bm{z}}(T)
 =\bm{z}^T\sum_{T\subseteq B\subseteq\cU}\nu(B).
\]
\end{remark}
}

For fixed $\bm{t},\bm{z}$, the law $\widehat\pi_{\bm{t},\lambda \bm{z}}$
converges as $\lambda\to\infty$ to the following bases pair law;
see \cref{subsec:fpras-activity-limit}.

\begin{definition}[Bases Pair Law]\label{def:pair-law}
With $\mu,\nu,\bm{t},\bm{z}$ as in \cref{def:independent-pair}, set
$\Omega\coloneqq \binom{\cU}{a}\times\binom{\cU}{b}$.
The \emph{bases pair law} is
$\widehat\pi_{\bm{t},\bm{z}}$ conditioned on $\Omega$.
Its partition function and probability mass function are
\begin{align*}
 Z_{\bm{t},\bm{z}}
 &\coloneqq \sum_{(S,T)\in\Omega}\mu(S)\nu(T)\bm{t}^{S\cap T}\bm{z}^S\bm{z}^T,\\
 \pi_{\bm{t},\bm{z}}(S,T)
 &\coloneqq \frac{\mu(S)\nu(T)\bm{t}^{S\cap T}\bm{z}^S\bm{z}^T}{Z_{\bm{t},\bm{z}}}.
\end{align*}
\end{definition}
A scalar penalty $t\in(0,1]$ denotes the constant vector $t\mathbf1$
in the notation for these laws and their partition functions.
{The three subsections below implement the framework described in
\cref{subsec:field-optimization}.}

\subsection{Poincar\'e Inequality for the Bases Pair Law \texorpdfstring{$\pi_{\bm{t},\bm{z}}$}{pi(t,z)}}\label{subsec:fpras-activity-limit}
\begin{lemma}\label{lem:bases-pair-poincare}
Let $\bm{t}\in(0,1]^{\cU}$, $\bm{z}\in\mathbb R_{>0}^{\cU}$, and let $\pi_{\bm{t},\bm{z}}$ be the
bases pair law of \cref{def:pair-law}. On $\supp\pi_{\bm{t},\bm{z}}$, consider
the exchange chain with rates
\begin{align}\label{eq:pair-exchange-rates}
 q\bigl((S,T),(S-u+v,T)\bigr)
 &\coloneqq t_u^{-\allone[u\in T]}
       \frac{z_v\mu(S-u+v)}
       {\sum_{w\in\cU\setminus(S-u)}z_w\mu(S-u+w)},
\end{align}
where $u\in S$ and $v\notin S$.
The rates updating $T$ are obtained by interchanging $(S,\mu)$ and
$(T,\nu)$; all other off-diagonal rates are zero. This chain is
reversible with respect to $\pi_{\bm{t},\bm{z}}$ and satisfies
\begin{align}\label{eq:bases-pair-poincare}
 \Var[\pi_{\bm{t},\bm{z}}]{f}&\le\cE_q(f,f), \quad \forall f:\supp\pi_{\bm{t},\bm{z}}\to\mathbb R.
\end{align}
\end{lemma}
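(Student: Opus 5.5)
The plan is to prove \cref{eq:bases-pair-poincare} as a limit of \cref{thm:poincare}. We apply \cref{thm:poincare} to the independent-set pair law $\widehat\pi_{\bm{t},\lambda\bm{z}}$, viewed as a measure on subsets of the doubled ground set $\cU\times\{1,2\}$, and then let $\lambda\to\infty$.

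\textbf{Step 1: a uniform Poincar\'e inequality.} Identify $(S,T)$ with $X=(S\times\{1\})\cup(T\times\{2\})$. Write $\widehat\pi_{\bm{t},\lambda\bm{z}}$ as the Hadamard product of two factors:
\begin{itemize}
\item the first factor is $\widehat\mu_{\lambda\bm{z}}\otimes\widehat\nu_{\lambda\bm{z}}$, which plays the role of $\mu$ in \cref{thm:poincare};
\item the second factor is $\psi_{\bm{t}}$, which plays the role of $\nu$.
\end{itemize}
Both factors have the downward-closed support $\supp\widehat\mu\times\supp\widehat\nu$, since $\psi_{\bm{t}}>0$ everywhere. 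This support is connected under single additions and deletions, because it is downward closed.

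By \cref{obs:w-0-curvature}, the first factor has upward $0$-spectral curvature. The curvature matrix of a tensor product is block diagonal, with zero cross terms, so the property passes from $\widehat\mu_{\lambda\bm{z}}$ and $\widehat\nu_{\lambda\bm{z}}$ to their product. By \cref{obs:psi-star-1-curvature}, $\psi_{\bm{t}}^*$ has $1$-spectral curvature. Then \cref{thm:poincare} with $\eta=0$ and $\kappa=1$ gives
\[
\Var[\widehat\pi_\lambda]{F}\le\langle F,\sL_\lambda F\rangle_{\widehat\pi_\lambda}
\qquad\text{for every } F \text{ and every } \lambda>0.
\]
The rates of $\sL_\lambda$ are as follows:
\begin{itemize}
\item deleting $(u,1)$ from $(S,T)$ has rate $\psi(S-u,T)/\psi(S,T)=t_u^{-\allone[u\in T]}$;
\item adding $(w,1)$ to $S'$ has rate $\widehat\mu_{\lambda\bm{z}}(S'+w)/\widehat\mu_{\lambda\bm{z}}(S')$, which is of order $\lambda$;
\item moves in the second coordinate are symmetric.
\end{itemize}

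\textbf{Step 2: choosing the test function.} Given $f$ on $\supp\pi_{\bm{t},\bm{z}}$, construct $F_\lambda$ on $2^{\cU}\times2^{\cU}$ as follows.
\begin{itemize}
\item Set $F_\lambda=f$ on $\Omega$.
\item On a state $Y$ of total deficiency one, for instance $(S-u,T)$, set $F_\lambda(Y)$ to the average of $f$ over its up-neighbours $Y'\in\Omega$. The weights are the edge weights $\widehat\pi(Y)\,r(Y,Y')$, which are proportional to $z_w\mu(S-u+w)\cdot t_w^{\allone[w\in T]}$ up to $O(1/\lambda)$ corrections, since higher completions contribute lower order in $\lambda$.
\item On deeper states, use any bounded extension, say $0$.
\end{itemize}

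\textbf{Step 3: order counting as $\lambda\to\infty$.} By \cref{rmk:completion-weights}, a state of deficiency $k$ has relative mass $\Theta(\lambda^{-k})$, and its up-moves have rate $\Theta(\lambda)$.
\begin{itemize}
\item Edges between deficiency $k\ge1$ and deficiency $k+1$ contribute $O(\lambda^{-k})\to0$ to the Dirichlet form, since $F_\lambda$ is bounded.
\item The edges between $\Omega$ and deficiency-one states contribute $O(1)$. At each deficiency-one state $Y$, the weighted-average choice minimises $\sum_{Y'}\widehat\pi(Y)r(Y,Y')(F(Y)-f(Y'))^2$. The minimum is $\tfrac{1}{2}\sum_{Y',Y''}\frac{c_{Y'}c_{Y''}}{\sum c}(f(Y')-f(Y''))^2$, where $c_{Y'}=\widehat\pi(Y)r(Y,Y')$.
\item Rewriting $c_{Y'}$ through detailed balance as $\widehat\pi(Y')\cdot t_u^{-\allone[u\in T]}$, and normalising the conditional weights, reproduces exactly the down-up exchange rates \cref{eq:pair-exchange-rates}.
\end{itemize}
Hence $\langle F_\lambda,\sL_\lambda F_\lambda\rangle\to\cE_q(f,f)$. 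Meanwhile $\widehat\pi_\lambda(\Omega)\to1$, so $\Var[\widehat\pi_\lambda]{F_\lambda}\to\Var[\pi_{\bm{t},\bm{z}}]{f}$. Passing to the limit in Step 1 yields \cref{eq:bases-pair-poincare}.

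\textbf{Reversibility.} This I would check directly: the rate of $(S,T)\to(S-u+v,T)$ times $\pi(S,T)$ is symmetric in $u$ and $v$, because the normaliser $\sum_{w}z_w\mu(S-u+w)$ depends only on $S-u$, and the factors $t$ combine correctly.

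The main obstacle is the precise bookkeeping in Step 3. It has two parts: identifying the limiting deficiency-one weights so that they match \cref{eq:pair-exchange-rates} exactly, including the $t_u^{-\allone[u\in T]}$ factor and the $t_w$ weighting of the newly added element; and confirming that the sub-leading completions in $\widehat\mu$ vanish uniformly. I would first handle the exact identity via detailed balance on the $\Omega$ side, where the deletion rate is visibly $t_u^{-\allone[u\in T]}$, so that the formula emerges without expanding $\widehat\mu$.
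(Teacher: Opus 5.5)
Your argument is correct, but it passes to the limit by a different mechanism than the paper.

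\textbf{The paper's route.} It applies \cref{thm:poincare} to $\widehat\pi_{\bm t,\lambda\bm z}$ and transfers the gap to the \emph{induced chain} on pairs of bases via \cref{prop:induced-poincare}. It then identifies the limiting induced rates by a first-jump argument: an excursion off $\Omega$ returns through a single addition with probability $1-O(\lambda^{-1})$. Reversibility is inherited in the limit.

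\textbf{Your route.} You bypass both the induced-chain comparison and the hitting-probability estimates. Instead you insert an explicit, nearly harmonic extension $F_\lambda$ directly into the Poincar\'e inequality for $\widehat\pi_{\bm t,\lambda\bm z}$. In effect you reprove the part of \cref{prop:induced-poincare} you need, using a concrete extension in place of $h(x)=\E{f(X_{\tau_B})\mid X_0=x}$. The price is checking reversibility of $q$ by hand, which you do correctly.

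\textbf{A formula to correct.} The bookkeeping you flag as the main obstacle is easier than you expect, but one formula is wrong. Addition rates come from $w_\lambda$ alone, and every up-neighbour $Y'=(S'+w,T)$ of a deficiency-one state $Y=(S',T)$ is a pair of bases. Hence $\widehat\pi(Y)\,r(Y,Y')$ is \emph{exactly} proportional to $z_w\mu(S'+w)$. There is no factor $t_w^{\allone[w\in T]}$ and no $O(1/\lambda)$ correction. In your detailed-balance form $\widehat\pi(Y')\,t_w^{-\allone[w\in T]}$, the $t_w$ cancels against the one inside $\widehat\pi(Y')$.

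\textbf{Consequences.} With this correction, the limit is clean:
\begin{itemize}
\item The edges between $\Omega$ and deficiency one contribute exactly $\widehat\pi_{\bm t,\lambda\bm z}(\Omega)\,\cE_q(f,f)$ for every $\lambda$.
\item Each deeper edge has weight at most $\widehat\pi(\text{upper endpoint})\cdot\max_u t_u^{-1}=O(\lambda^{-1})$.
\item The variance converges, because $\widehat\pi_{\bm t,\lambda\bm z}(\cdot\mid\Omega)=\pi_{\bm t,\bm z}$ for all $\lambda$ while $\widehat\pi_{\bm t,\lambda\bm z}(\Omega)\to1$.
\end{itemize}

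The paper's route is more modular and gets reversibility for free; yours is more elementary and self-contained.
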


Applying \cref{eq:poincare-mixing} to the exchange chain gives the
following sampling bound.
\begin{corollary}\label{cor:bases-pair-sampling}
Under the hypotheses of \cref{lem:bases-pair-poincare} with a common
scalar penalty $t$, let $z_{\min}\coloneqq \min_u z_u$ and $z_{\max}\coloneqq \max_u z_u$.
Given a pair $(S_0,T_0)\in\supp\pi_{t,\bm{z}}$, the coefficient ranges
$\cR_\mu,\cR_\nu$ in \cref{eq:coefficient-range}, and $0<\eps<1$, one can sample
within total variation distance $\eps$ of $\pi_{t,\bm{z}}$ using
\begin{align*}
 O\left(\frac{n^2}{t}\left(
 n+\log(\cR_\mu\cR_\nu)+n\log\frac{z_{\max}}{z_{\min}}
 +\min\{a,b\}\log\frac1t+\log\frac1\eps\right)\right)
\end{align*}
arithmetic operations and calls to the value oracles for $\mu,\nu$.
\end{corollary}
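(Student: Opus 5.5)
The plan is to apply the spectral mixing bound \cref{eq:poincare-mixing} to a discretization of the exchange chain of \cref{lem:bases-pair-poincare}. By \cref{eq:bases-pair-poincare}, the continuous-time chain has spectral gap at least $\gamma=1$. The work is then to bound three quantities: the maximal total exit rate $R$, the cost of one discrete step, and $\log(1/\pi_{\min})$.

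\textbf{Rate bound.} Fix $(S,T)$ and $u\in S$. The normalized weights $z_v\mu(S-u+v)/\sum_w z_w\mu(S-u+w)$ over $v\notin S-u$ sum to $1$, and the penalty factor $t_u^{-\allone[u\in T]}$ is at most $1/t$. Summing over $u\in S$ gives total $S$-rate at most $a/t$. The same argument for $T$ gives total $T$-rate at most $b/t$. Hence $R\le n/t$ works.

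\textbf{Implementing one step.} A step of $P=I-\cL_q/(2R)$ is simulated as follows.
\begin{itemize}
\item With probability $1/2$, stay put.
\item Otherwise, pick which set to update with probability proportional to $a$ versus $b$, and pick $u$ uniformly from that set (say $u\in S$).
\item Compute the $n-a+1$ weights $z_w\mu(S-u+w)$ with $O(n)$ oracle calls, and sample $v$ proportionally.
\item Accept with probability $t\cdot t^{-\allone[u\in T]}$, which is at most $1$.
\end{itemize}
One checks that this realizes rates $q/(\text{const}\cdot n/t)$, up to laziness. Each step therefore costs $O(n)$ oracle calls and arithmetic operations.

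\textbf{Bounding $\log(1/\pi_{\min})$.} Since $\pi(S,T)\ge \min\text{weight}/(|\Omega|\cdot\max\text{weight})$, and $|\Omega|\le 4^n$, we get
\[
\log\frac{1}{\pi_{\min}}\le O(n)+\log(\cR_\mu\cR_\nu)+\log\frac{\max \bm z^S\bm z^T}{\min \bm z^S\bm z^T}+\log\frac{1}{\min_{(S,T)} t^{|S\cap T|}}.
\]
The $\bm z$-term is at most $(a+b)\log(z_{\max}/z_{\min})=n\log(z_{\max}/z_{\min})$. The penalty term uses $|S\cap T|\le\min\{a,b\}$, giving at most $\min\{a,b\}\log(1/t)$.

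\textbf{Conclusion.} Plugging these into \cref{eq:poincare-mixing} with $R=n/t$ and $\gamma=1$ gives $O\bigl(\tfrac{n}{t}(\cdots+\log\tfrac1\eps)\bigr)$ steps. Multiplying by the $O(n)$ cost per step yields the stated bound. The initial state is $(S_0,T_0)$, which lies in the support.

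\textbf{Main obstacle.} The delicate point is matching the simulated discrete chain exactly to $I-\cL_q/(2R)$. This requires handling that the normalizing sum ranges over $w\in\cU\setminus(S-u)$, which includes $w=u$ (a self-loop). These self-loops do not affect the Dirichlet form, so the spectral gap still transfers.
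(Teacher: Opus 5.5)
Your proposal is correct and follows essentially the same route as the paper. Both arguments apply \cref{eq:poincare-mixing} with $\gamma=1$ and $R=n/t$, implement each lazy step at $O(n)$ cost by picking one of the $n$ present copies uniformly and resampling it with acceptance probability $t^{1-\allone[u\in T]}$, and bound $\pi_{\min}\ge t^{\min\{a,b\}}(z_{\min}/z_{\max})^n/(4^n\cR_\mu\cR_\nu)$. Your handling of the self-loop $v=u$ matches the paper's ``allowing reinsertion of $u$.'' With it, the off-diagonal transition probabilities coincide exactly with those of $I-\cL_q/(2R)$, so the Poincar\'e bound of \cref{lem:bases-pair-poincare} applies directly.
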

\begin{proof}
Apply \cref{eq:poincare-mixing} with $\gamma=1$ and $R=n/t$:
there are $a+b=n$ possible deletions, each with total rate at most $1/t$.
{Each step of the lazy discrete-time chain costs $O(n)$ operations and oracle calls:}
choose a present copy uniformly and, if it is $u\in S$, with probability
$t^{1-\allone[u\in T]}/2$ resample it using the reinsertion weights in
\cref{eq:pair-exchange-rates}, allowing reinsertion of $u$; the update
of $T$ is symmetric. Since there are at most $4^n$ pairs, each of total
size $n$,
\begin{align*}
 \pi_{\min}
 &\ge\frac{t^{\min\{a,b\}}}{4^n\cR_\mu\cR_\nu}
       \left(\frac{z_{\min}}{z_{\max}}\right)^n.
\end{align*}
Substituting this bound and multiplying by the cost per step proves
the claim.
\end{proof}

To prove \Cref{lem:bases-pair-poincare}, we first apply
\cref{thm:poincare} to the independent-set pair law
$\widehat\pi_{\bm{t},\lambda \bm{z}}$ of \cref{def:independent-pair}.
The induced chain on pairs of bases inherits the Poincar\'e bound, and
letting $\lambda\to\infty$ gives the desired exchange chain.

\paragraph{Poincar{\'e} inequality for $\widehat{\pi}_{\bm{t},\lambda \bm{z}}$.}
We will first use \Cref{thm:poincare} to show the Poincar{\'e} inequality on $\widehat{\pi}_{\bm{t},\lambda \bm{z}}$. For convenience, for $\lambda>0$, define the measure
\begin{align*}
 w_\lambda(S,T)&\coloneqq \widehat\mu_{\lambda \bm{z}}(S)
                    \widehat\nu_{\lambda \bm{z}}(T),
\end{align*}
whose support is identical to
\begin{align*}
    \supp\widehat{\pi}_{\bm{t},\lambda \bm{z}} &=\supp\widehat{\pi}_{\bm{t},\bm{z}} = \supp\widehat{\mu}_{\lambda \bm{z}}\times\supp\widehat{\nu}_{\lambda \bm{z}}.
\end{align*}
Restrict $\psi_{\bm{t}}$ from \cref{def:independent-pair} to this downward closed support, and extend both measures by zero outside this support.
Recall from \Cref{def:independent-pair} that
\begin{align} \label{eq:hat-pi-recall}
    \widehat{\pi}_{\bm{t},\lambda \bm{z}} = w_\lambda \odot \psi_{\bm{t}}.
\end{align}

\begin{corollary}
\label{cor:independent-pair-poincare}
Let $\bm{t}\in(0,1]^{\cU}$, $\bm{z}\in\mathbb R_{>0}^{\cU}$, and $\lambda>0$.
Let $\sL$ be the Laplacian in \cref{eq:full-laplacian} for the
factors $w_\lambda,\psi_{\bm{t}}$, and let $\widehat q_\lambda$ be its transition rates.
For every $F:\supp\widehat{\pi}_{\bm{t},\lambda \bm{z}}\to\mathbb R$, we have
\begin{align}\label{eq:independent-pair-poincare}
 \Var[\widehat\pi_{\bm{t},\lambda \bm{z}}]{F}
 &\le\cE_{\widehat q_\lambda}(F,F)
 =\langle F,\sL F\rangle_{\widehat\pi_{\bm{t},\lambda \bm{z}}}.
\end{align}
\end{corollary}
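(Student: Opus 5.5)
The plan is to apply \cref{thm:poincare} directly, on the doubled ground set $\cU\times\{1,2\}$. We identify a pair $(S,T)$ with $X=(S\times\{1\})\cup(T\times\{2\})$, so $w_\lambda$ and $\psi_{\bm{t}}$ become weights on $2^{\cU\times\{1,2\}}$. We take $w_\lambda$ as the first factor (the role of $\mu$ in the theorem) and $\psi_{\bm{t}}$ as the second (the role of $\nu$). The common support $\Omega=\supp\widehat\mu_{\lambda\bm z}\times\supp\widehat\nu_{\lambda\bm z}$ is a product of downward closed families, by \cref{rmk:completion-weights}. It is therefore downward closed, and hence connected under single-element additions and deletions through $\emptyset$. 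It remains to verify the curvature hypotheses with $\eta=0$ and $\kappa=1$. \cref{thm:poincare} then gives $\Var{F}\le\langle F,\sL F\rangle$, and the equality with $\cE_{\widehat q_\lambda}(F,F)$ is just \cref{eq:dirichlet-form} for the rates of $\sL$.

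\textbf{Step 1: $w_\lambda$ has upward $0$-curvature.} The product structure makes $A_X^{w_\lambda,0}$ block diagonal, with one block for the $S$-coordinates and one for the $T$-coordinates. The cross entries vanish because $w_\lambda(X+u_1+v_2)w_\lambda(X)=w_\lambda(X+u_1)w_\lambda(X+v_2)$. So it suffices to show that $\widehat\mu_{\lambda\bm z}$ has $0$-curvature, and likewise for $\nu$. Since $g_\mu$ is log-concave, $g_\mu(\bm{x}\lambda\bm z)$ is also log-concave, as a linear change of variables preserves concavity of $\log g$. Moreover $g_{\widehat\mu_{\lambda\bm z}}(\bm x)=g_{\mu_{\lambda\bm z}}(\mathbf 1+\bm x)$. \cref{prop:flc-curvature} with $\alpha=1$ then gives upward $0$-spectral curvature and downward closed support. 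The rescaling by $\lambda\bm z$ is harmless, since it conjugates the curvature matrix by a positive diagonal. This step is \cref{obs:w-0-curvature}, and it is essentially immediate.

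\textbf{Step 2: $\psi_{\bm t}^*$ has $1$-curvature.} We must show $D_X^{\psi,1}\succeq0$ for $X\in\Omega$. Restricting $\psi_{\bm t}$ to $\Omega$ is harmless because $\Omega$ is downward closed, so every deletion from $X$ stays in $\Omega$. Deleting a copy $(u,i)$ changes $\psi$ by a factor $\tau_{(u,i)}=t_u^{-1}$ if $u\in S\cap T$, and by $1$ otherwise. The diagonal entries are $\tau^2-\tau$. For the off-diagonal entries, write $(u,i)\neq(v,j)$. The entry $\tau_{(u,i)}\tau_{(v,j)}-\psi(X-(u,i)-(v,j))/\psi(X)$ vanishes unless $u=v$, because the penalty factors are multiplicative over distinct elements. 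When $u=v\in S\cap T$, the entry is $t_u^{-2}-t_u^{-1}$, since removing both copies only removes one factor $t_u$.

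The matrix $D_X^{\psi,1}$ therefore decomposes into independent blocks, one per element $u$. If $u\in S\cap T$, the block is the $2\times2$ matrix $(t_u^{-2}-t_u^{-1})\mathbf J$, where $\mathbf J$ is the all-ones matrix; this is PSD because $t_u\le1$. If $u$ lies in only one of the two sets, the block is the $1\times1$ entry $1-1=0$. So $D_X^{\psi,1}\succeq0$, which is \cref{obs:psi-star-1-curvature}. Since $\eta+\kappa=1>0$, \cref{thm:poincare} applies.

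\textbf{Main obstacle.} The delicate point is bookkeeping rather than analysis. The Laplacian \cref{eq:full-laplacian} must be formed for the pair $(w_\lambda,\psi_{\bm t})$ on the doubled ground set, so that births use ratios of $w_\lambda$ and deaths use ratios of $\psi_{\bm t}$. We also need the restriction of $\psi_{\bm t}$ to $\Omega$ to leave both factors with exactly the common support $\Omega$, as \cref{thm:poincare} requires. Once these are checked, $\widehat q_\lambda$ is precisely the rate function of this $\sL$, and \cref{eq:independent-pair-poincare} follows.
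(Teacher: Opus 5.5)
Your proposal follows the paper's proof: apply \cref{thm:poincare} to $w_\lambda\odot\psi_{\bm t}$ with $(\eta,\kappa)=(0,1)$, using \cref{obs:w-0-curvature,obs:psi-star-1-curvature}. Your Step~2 is the paper's block computation read from the non-dual side. Your block-diagonal reduction in Step~1 is a harmless variant of the paper's move, which applies \cref{prop:flc-curvature} to $g_\mu(\bm x)g_\nu(\bm y)$ jointly.

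One step in Step~1 needs fixing: the identity $g_{\widehat\mu_{\lambda\bm z}}(\bm x)=g_{\mu_{\lambda\bm z}}(\mathbf 1+\bm x)$ is false. The left side is $g_\mu(\mathbf 1+\lambda\bm z\bm x)$, whose coefficients are $(\lambda\bm z)^S\sum_{A\supseteq S}\mu(A)$ (\cref{rmk:completion-weights}). The right side is $g_\mu(\lambda\bm z+\lambda\bm z\bm x)$, whose coefficients are $\sum_{A\supseteq S}\mu(A)(\lambda\bm z)^A$. These differ in general. Their ratio is a weighted average of $(\lambda\bm z)^{A\setminus S}$, which is not a product over elements of $S$. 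So, as written, you have certified zero curvature for a different measure.

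The repair is to translate first and rescale second. \cref{prop:flc-curvature} with $\alpha=1$, applied to $g_\mu$ itself, gives zero curvature for $\widehat\mu_{\mathbf 1}$. Multiplying the weights by $(\lambda\bm z)^S$ then turns $A^{\widehat\mu_{\mathbf 1},0}_S$ into $DA^{\widehat\mu_{\mathbf 1},0}_SD$ with $D=\operatorname{diag}(\lambda z_u)$. This is the rescaling your last sentence of Step~1 refers to. It preserves positive semidefiniteness only because the curvature parameter is $0$. For $\kappa\ne0$, the diagonal term $-\kappa\,\mu(S+v)/\mu(S)$ scales by $\lambda z_v$, not by $(\lambda z_v)^2$.
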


In order to apply \Cref{thm:poincare} to $\widehat{\pi}_{\bm{t},\lambda \bm{z}}$, we have the following observations.
\begin{observation} \label{obs:w-0-curvature}
    $w_\lambda$ has $0$-spectral curvature.
\end{observation}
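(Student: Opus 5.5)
The plan is to reduce the claim to the single-factor statement of \cref{prop:flc-curvature}: first for one factor with field $\mathbf1$, then transport it through the external field $\lambda\bm{z}$, and finally through the tensor product. Here $w_\lambda$ is a measure on the doubled ground set $\cU\times\{1,2\}$, where a pair $(S,T)$ corresponds to $(S\times\{1\})\cup(T\times\{2\})$. So the curvature matrix $A^{w_\lambda,0}_{(S,T)}$ is indexed by the copies $(u,1)$ with $u\notin S$ and $(u,2)$ with $u\notin T$.

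\textbf{Step 1 (one factor, trivial field).} Let $\widehat\mu_{\mathbf1}$ be the coefficient measure of $g_\mu(\mathbf1+\bm{x})$. Since $g_\mu$ is multiaffine, log-concave and has nonnegative coefficients, it is $1$-fractionally log-concave in the sense of \cref{def:fractional-log-concavity}. Therefore \cref{prop:flc-curvature} with $\alpha=1$ shows that $\widehat\mu_{\mathbf1}$ has downward-closed support and upward $(1-1)=0$-spectral curvature. The same holds for $\widehat\nu_{\mathbf1}$.

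\textbf{Step 2 (external fields preserve $0$-curvature).} Substituting $\bm{x}\mapsto\lambda\bm{z}\bm{x}$ in $g_{\widehat\mu_{\mathbf1}}(\bm{x})=g_\mu(\mathbf1+\bm{x})$ gives $\widehat\mu_{\lambda\bm{z}}(S)=(\lambda\bm{z})^S\,\widehat\mu_{\mathbf1}(S)$, consistent with \cref{rmk:completion-weights}. For a measure $\omega$ and field $\bm{c}>0$, I will check directly from \cref{def:spectral-curvature} that, at $\kappa=0$,
\[
A^{\omega_{\bm c},0}_S=C\,A^{\omega,0}_S\,C,\qquad C\coloneqq \diag(\bm c).
\]
The checks are entrywise.
\begin{itemize}
\item Diagonal entries pick up a factor $c_v^2$.
\item Off-diagonal entries pick up a factor $c_uc_v$, since $\omega_{\bm c}(S+u+v)/\omega_{\bm c}(S)=c_uc_v\,\omega(S+u+v)/\omega(S)$.
\end{itemize}
This identity is exactly why $\kappa=0$ matters: the $\kappa$-term scales only linearly in $c_v$, so it would break the congruence. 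Since congruence preserves positive semidefiniteness, both $\widehat\mu_{\lambda\bm z}$ and $\widehat\nu_{\lambda\bm z}$ have $0$-spectral curvature.

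\textbf{Step 3 (tensor products).} For $(S,T)\in\supp w_\lambda$, I compute $A^{w_\lambda,0}_{(S,T)}$ block by block.
\begin{itemize}
\item Mixed entries, indexed by $(u,1)$ and $(v,2)$, equal
\[
\frac{\widehat\mu(S+u)}{\widehat\mu(S)}\cdot\frac{\widehat\nu(T+v)}{\widehat\nu(T)}-\frac{\widehat\mu(S+u)\,\widehat\nu(T+v)}{\widehat\mu(S)\,\widehat\nu(T)}=0,
\]
because $w_\lambda$ factorizes.
\item The diagonal blocks are exactly $A^{\widehat\mu_{\lambda\bm z},0}_S$ and $A^{\widehat\nu_{\lambda\bm z},0}_T$.
\end{itemize}
So the matrix is block diagonal with PSD blocks, and hence PSD.

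The only delicate point, which I expect to be the main (mild) obstacle, is bookkeeping about supports. I need to confirm that the support of $w_\lambda$ is the product of the two downward-closed supports, and that rows and columns corresponding to infeasible additions are handled consistently. Those entries are zero or match the single-factor matrices, and the restriction to this support noted before \cref{cor:independent-pair-poincare} does not change any curvature entry at points of the support. With that checked, the observation follows.
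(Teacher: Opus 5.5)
Your proof is correct and follows essentially the paper's route: \cref{prop:flc-curvature} with $\alpha=1$, followed by the observation that positively rescaling the variables turns each curvature matrix $A$ into $DAD$ with $D$ positive diagonal, which preserves positive semidefiniteness. The only difference is where the product is handled. The paper applies the proposition once to the log-concave product $g_\mu(\bm{x})g_\nu(\bm{y})$ on disjoint variables, whereas you apply it to each factor separately and then pass to the tensor product at the matrix level, by checking that the mixed block of $A^{w_\lambda,0}_{(S,T)}$ vanishes; both are valid.
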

\begin{proof}
    The generating polynomial of $w_\lambda$ is
    \begin{align*}
     g_{w_\lambda}(\bm{x},\bm{y})
     &=g_\mu(\mathbf1+\lambda \bm{z}\bm{x})\,
              g_\nu(\mathbf1+\lambda \bm{z}\bm{y}).
    \end{align*}
    Since $p(\bm{x},\bm{y})\coloneqq g_\mu(\bm{x})g_\nu(\bm{y})$ is log-concave,
    \cref{prop:flc-curvature} with $\alpha=1$ gives zero curvature for
    the coefficient measure of $p(\mathbf1+\bm{x},\mathbf1+\bm{y})$.
    {Positive rescaling of the variables preserves zero curvature:
    each curvature matrix $A$ becomes $DAD$, where $D$ is a diagonal matrix
    with positive entries, so positive semidefiniteness is preserved.}
    Applying the rescaling $(\bm{x},\bm{y})\mapsto(\lambda \bm{z}\bm{x},\lambda \bm{z}\bm{y})$
    therefore gives the assertion for
    $g_{w_\lambda}(\bm{x},\bm{y})=p(\mathbf1+\lambda \bm{z}\bm{x},\mathbf1+\lambda \bm{z}\bm{y})$.
\end{proof}
\begin{observation} \label{obs:psi-star-1-curvature}
    $\psi_{\bm{t}}^*$ has $1$-spectral curvature.
\end{observation}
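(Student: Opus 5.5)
The plan is to compute the upward $1$-curvature matrices of $\psi_{\bm{t}}^*$ directly from \cref{def:spectral-curvature}. They should turn out to be block diagonal, with one block per element $u\in\cU$, and each block visibly positive semidefinite.

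First I identify the dual. The ground set is $\cU\times\{1,2\}$, and a state is a pair $X=(A,B)$. Complementation sends $(S,T)$ to $(A,B)=(\cU\setminus S,\cU\setminus T)$, and $u\in S\cap T$ exactly when $u\notin A\cup B$. Hence
\begin{align*}
 \psi_{\bm{t}}^*(A,B)=\prod_{u\notin A\cup B}t_u .
\end{align*}
By construction $\psi_{\bm{t}}$ is restricted to the downward-closed product support $\supp\widehat\mu_{\lambda\bm{z}}\times\supp\widehat\nu_{\lambda\bm{z}}$. Therefore $\supp\psi_{\bm{t}}^*$ is upward closed. So for $X$ in the support, every set $X+e$ and $X+e+f$ appearing in $A_X^{\psi^*,1}$ also lies in the support, and no zero-weight corrections arise.

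Next I compute the one-step ratios $r_e\coloneqq\psi^*(X+e)/\psi^*(X)$ for a missing copy $e\in\{(u,1),(u,2)\}$. The ratio is $r_e=t_u^{-1}$ if $u\notin A\cup B$, since adding $e$ newly covers $u$. It is $r_e=1$ if the other copy of $u$ is already present.

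The two-step ratio factorizes over distinct underlying elements. So for copies $e,f$ of different elements $u\neq u'$ the off-diagonal entry $r_er_f-\psi^*(X+e+f)/\psi^*(X)$ vanishes. The only remaining off-diagonal entries come from the two copies of a single $u\notin A\cup B$. For these, adding both copies covers $u$ only once, so the entry is $t_u^{-2}-t_u^{-1}$. With $\kappa=1$ the diagonal entries are $r_e^2-r_e$. This equals $0$ when $r_e=1$, and $t_u^{-2}-t_u^{-1}$ when $r_e=t_u^{-1}$.

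Hence $A_X^{\psi^*,1}$ is block diagonal with the following blocks:
\begin{itemize}
 \item zero $1\times1$ blocks for each $u$ with exactly one copy missing;
 \item the $2\times2$ block $(t_u^{-2}-t_u^{-1})\left(\begin{smallmatrix}1&1\\1&1\end{smallmatrix}\right)$ for each $u\notin A\cup B$.
\end{itemize}
Since $t_u\in(0,1]$, the scalar $t_u^{-2}-t_u^{-1}$ is nonnegative, and the all-ones matrix is PSD. So every block is PSD, which gives the $1$-spectral curvature.

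I expect the main obstacle to be bookkeeping rather than any real difficulty. One point is getting the dual and the support conventions right. The other is checking that the restriction to $\supp w_\lambda$ does not create zero entries that would break positive semidefiniteness. Upward closedness of the dual support handles the latter.
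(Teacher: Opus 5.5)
Your proposal is correct and matches the paper's proof: you identify $\psi_{\bm{t}}^*$ as $\bm{t}^{\cU\setminus(S\cup T)}$ on an upward-closed support, and you show that the only nonzero entries of $A_Y^{\psi_{\bm{t}}^*,1}$ are $t_u^{-2}-t_u^{-1}=(1-t_u)/t_u^2$ on the two copies of each uncovered $u$, giving a direct sum of zero blocks and nonnegative multiples of the all-ones $2\times2$ matrix. Your support check is the same as the paper's appeal to downward closure of $\supp\widehat{\pi}_{\bm{t},\bm{z}}$.
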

\begin{proof}
By definition, the dual measure $\psi_{\bm{t}}^*$ is
\begin{align*}
\psi_{\bm{t}}^*(S,T)
&\coloneqq \psi_{\bm{t}}(\cU\setminus S,\cU\setminus T)
=\bm{t}^{\cU\setminus(S\cup T)}
\allone[(\cU\setminus S,\cU\setminus T)\in\supp\widehat{\pi}_{\bm{t},\bm{z}}].
\end{align*}
For $Y=(S,T)\in\supp\psi_{\bm{t}}^*$, downward closure of $\supp\widehat{\pi}_{\bm{t},\bm{z}}$
ensures that all additions to $Y$ remain supported.
Substitution into \cref{def:spectral-curvature} gives, for
$u,v\in\cU$ and $i,j\in\{1,2\}$,
\begin{align*}
A_Y^{\psi_{\bm{t}}^*,1}\bigl((u,i),(v,j)\bigr)
&=\begin{cases}
(1-t_u)/t_u^2,& u=v,\ u\notin S\cup T,\\
0,&\text{otherwise}.
\end{cases}
\end{align*}
Here, we use $(u,1)$ and $(u,2)$ to denote that the element $u$ is added to $S$ and $T$, respectively.
Thus $A_Y^{\psi_{\bm{t}}^*,1}$ is a direct sum of zero blocks and blocks
$\frac{1-t_u}{t_u^2}\begin{psmallmatrix}1&1\\1&1\end{psmallmatrix}$,
all positive semidefinite. Hence $\psi_{\bm{t}}^*$ has upward curvature one,
and \cref{thm:poincare} applies with $\eta=0$ and $\kappa=1$.
\end{proof}

\begin{proof}[Proof of \Cref{cor:independent-pair-poincare}]
Apply \cref{thm:poincare} to $w_\lambda\odot\psi_{\bm{t}}$ with
$(\eta,\kappa)=(0,1)$, using
\cref{obs:w-0-curvature,obs:psi-star-1-curvature}.
Normalizing the factors does not change their weight ratios.
\end{proof}

\paragraph{Poincar{\'e} inequality of $\pi_{\bm{t},\bm{z}}$ by letting $\lambda \to \infty$.}
We recall the induced-chain construction and its Poincar\'e comparison.
\begin{definition}[Induced chain]\label{def:induced-chain}
Let $(X_s)_{s\ge0}$ be an irreducible continuous-time Markov chain on a
finite state space $\Omega$, with off-diagonal rates $r(x,y)$, and let
$\varnothing\ne B\subseteq\Omega$. Its \emph{induced chain on $B$} is the chain
obtained by deleting all time spent outside $B$. Writing
$\tau_B\coloneqq \inf\{s\ge0:X_s\in B\}$, its off-diagonal rates are
\begin{align*}
 r^B(x,y)&\coloneqq r(x,y)+\sum_{u\in\Omega\setminus B}
 r(x,u){\Pr{X_{\tau_B}=y\mid X_0=u}}
 \qquad(x,y\in B,\ x\ne y).
\end{align*}
\end{definition}

\begin{proposition}[{\cite[Theorem~13.16]{LevinPeres2017}}]
\label{prop:induced-poincare}
In the setting of \cref{def:induced-chain}, suppose the chain is
reversible with stationary law $\pi$ and has spectral gap at least
$\gamma>0$. Then the induced chain is reversible with respect to
$\pi_B(x)\coloneqq \pi(x)/\pi(B)$ for $x\in B$, and for every $f:B\to\mathbb R$,
\begin{align*}
 \gamma\Var[\pi_B]{f}&\le\cE_{r^B}(f,f).
\end{align*}
\end{proposition}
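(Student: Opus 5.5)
The plan is to compare the two Dirichlet forms through the \emph{harmonic extension} of $f$ (equivalently, a Schur complement of the generator) rather than by tracking excursion paths. Fix $f:B\to\mathbb R$ and define its extension $F:\Omega\to\mathbb R$ by
\begin{align*}
 F(x)\coloneqq f(x)\ (x\in B),\qquad F(u)\coloneqq \sum_{y\in B}\Pr{X_{\tau_B}=y\mid X_0=u}\,f(y)\ (u\notin B).
\end{align*}
Since the chain is irreducible on a finite space, $\tau_B<\infty$ almost surely, so the hitting probabilities from each $u$ sum to $1$. A first-step analysis at $u\notin B$ shows $F$ is harmonic off $B$, i.e. $(\cL_rF)(u)=0$ for all $u\in\Omega\setminus B$.

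\textbf{Step 1: a Dirichlet form identity.} Because $\cL_rF$ vanishes off $B$,
\begin{align*}
 \cE_r(F,F)=\langle F,\cL_rF\rangle_\pi=\sum_{x\in B}\pi(x)f(x)(\cL_rF)(x).
\end{align*}
For $x\in B$, expand $(\cL_rF)(x)=\sum_{y\ne x}r(x,y)(f(x)-F(y))$. Substitute the hitting-probability formula for $F(u)$ at $u\notin B$, and use that the hitting probabilities sum to one to write $f(x)-F(u)=\sum_{y\in B}\Pr{X_{\tau_B}=y\mid X_0=u}(f(x)-f(y))$. The terms with $y=x$ vanish, which is why \cref{def:induced-chain} may omit them. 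This gives $(\cL_rF)(x)=(\cL_{r^B}f)(x)$ for every $x\in B$, so $\cE_r(F,F)=\pi(B)\langle f,\cL_{r^B}f\rangle_{\pi_B}$.

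\textbf{Step 2: reversibility.} In matrix form, $\cL_{r^B}$ is the Schur complement of the $\Omega\setminus B$ block in $\cL_r$. The detailed balance equations \cref{eq:detailed-balance} say that $\operatorname{diag}(\pi)\cL_r$ is symmetric. A Schur complement of a symmetric matrix is symmetric, so $\operatorname{diag}(\pi|_B)\cL_{r^B}$ is symmetric, which is detailed balance for $r^B$ with respect to $\pi_B$. Alternatively, I would polarize Step 1: it holds for every $f$, and the left side $\cE_r(F,G)$ is symmetric in $F,G$. Consequently $\langle f,\cL_{r^B}f\rangle_{\pi_B}$ is the Dirichlet form $\cE_{r^B}(f,f)$ in the sense of \cref{eq:dirichlet-form}.

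\textbf{Step 3: variance comparison and conclusion.} We have
\begin{align*}
 \Var[\pi]{F}=\min_c\sum_{x\in\Omega}\pi(x)(F(x)-c)^2\ge\min_c\sum_{x\in B}\pi(x)(f(x)-c)^2=\pi(B)\Var[\pi_B]{f}.
\end{align*}
Combining this with the spectral gap of the original chain gives
\begin{align*}
 \gamma\,\pi(B)\Var[\pi_B]{f}\le\gamma\Var[\pi]{F}\le\cE_r(F,F)=\pi(B)\,\cE_{r^B}(f,f).
\end{align*}
Dividing by $\pi(B)$ yields the claim.

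The main obstacle is Step 1, specifically the bookkeeping in the identity $(\cL_rF)(x)=(\cL_{r^B}f)(x)$. This is where irreducibility (almost sure hitting of $B$) and the convention excluding $y=x$ in $r^B$ both enter. I would verify it carefully by first-step analysis; everything else is formal.
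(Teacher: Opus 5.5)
Your proof is correct and is essentially the paper's argument. It uses the same harmonic extension $h(x)=\E{f(X_{\tau_B})\mid X_0=x}$, the same first-step identities ($\cL_r h=0$ off $B$ and $\cL_r h=\cL_{r^B}f$ on $B$), and the same variance comparison $\Var[\pi]{h}\ge\pi(B)\Var[\pi_B]{f}$, followed by the original Poincar\'e inequality.

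The only difference is the reversibility claim. The paper reverses excursion paths through $\Omega\setminus B$ and applies detailed balance. You instead read it off from the symmetry of the Schur complement of $\operatorname{diag}(\pi)\cL_r$. For your polarization variant, state the bilinear identity $\cE_r(F,G)=\pi(B)\langle f,\cL_{r^B}g\rangle_{\pi_B}$, which the same computation gives; polarizing the quadratic identity alone only recovers the symmetric part.
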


The proof of \cite[Theorem~13.16]{LevinPeres2017} is for discrete-time
Markov chains but extends directly to the continuous-time setting.
For the reader's convenience, we include a proof in
\cref{app:induced-poincare}.

\begin{proof}[Proof of \Cref{lem:bases-pair-poincare}]
We pass to the induced chain on pairs of bases and show that its rates converge
to \cref{eq:pair-exchange-rates}. Singleton support is immediate.
Recall from \cref{eq:full-laplacian} that the chain with generator
$-\sL$ has rates
\begin{align*}
 \widehat q_\lambda\bigl((S,T),(S+u,T)\bigr)
 &=\frac{w_\lambda(S+u,T)}{w_\lambda(S,T)}
 =\frac{\widehat\mu_{\lambda \bm{z}}(S+u)}{\widehat\mu_{\lambda \bm{z}}(S)}
 &&(u\notin S),\\
 \widehat q_\lambda\bigl((S,T),(S-u,T)\bigr)
 &=\frac{\psi_{\bm{t}}(S-u,T)}{\psi_{\bm{t}}(S,T)}
 =t_u^{-\allone[u\in T]}
 &&(u\in S).
\end{align*}
The rates updating $T$ are symmetric; unsupported additions are
omitted and all other off-diagonal rates are zero.

Let $q_\lambda^{\mathrm{ind}}$ be the rates of the chain induced on
$\supp\pi_{\bm{t},\bm{z}}$ from \cref{def:induced-chain}.
Conditioning $\widehat\pi_{\bm{t},\lambda \bm{z}}$ on pairs of bases cancels
the common factor $\lambda^n$, giving exactly $\pi_{\bm{t},\bm{z}}$.
By \cref{prop:induced-poincare,cor:independent-pair-poincare}, the induced chain
is therefore reversible with stationary law $\pi_{\bm{t},\bm{z}}$ and has
spectral gap at least one.

It remains to identify the limiting rates. After deleting $u\in S$
from a pair of bases $(S,T)$, put $S'\coloneqq S-u$. At $(S',T)$, additions
have positive total rate proportional to $\lambda$, whereas the
 total deletion rate is at most $n/\min_i t_i$. Thus the next jump is an
addition with probability $1-O(\lambda^{-1})$. Every such addition
returns to a pair of bases, and its conditional reinsertion law is
\begin{align*}
 \Pr{\text{reinsert }v\mid\text{next jump is an addition}}
 &=\frac{z_v\mu(S'+v)}{\sum_{w\notin S'}z_w\mu(S'+w)}
 \qquad(v\notin S').
\end{align*}
Multiplying by the initial deletion rate $t_u^{-\allone[u\in T]}$
gives \cref{eq:pair-exchange-rates} in the limit. The same argument
applies to deletions from $T$; excursions with a further deletion
have probability $O(\lambda^{-1})$. Hence
$q_\lambda^{\mathrm{ind}}(X,Y)\to q(X,Y)$ for every distinct pair
$X,Y\in\supp\pi_{\bm{t},\bm{z}}$.

Reversibility passes to the limit. Since the state space is finite,
passing to the limit in the induced chains' Poincar\'e inequalities
gives, for every $f$,
\begin{align*}
 \Var[\pi_{\bm{t},\bm{z}}]{f}
 &\le\lim_{\lambda\to\infty}\cE_{q_\lambda^{\mathrm{ind}}}(f,f)
 =\cE_q(f,f). \qedhere
\end{align*}
\end{proof}

\subsection{\texorpdfstring{{Choosing External Fields by Stochastic Gradient Descent}}{Choosing External Fields by Stochastic Gradient Descent}}\label{subsec:fpras-fields}

{
We now use the Poincar\'e bound to choose fields that make complementary
pairs occur with constant probability when $Z_\cap>0$.
We first prove \cref{lem:field-concavity}, restated below, and then apply
stochastic gradient descent to find logarithmic coordinates $\bm{y}$
with small $\nabla\Phi(\bm{y})$, then set $\bm{z}=\e^{\bm{y}}$.

\fieldconcavitylemma*

\begin{proof}
Fix $u$, and for $j \in \{0,1,2\}$, let $p_j$ be the probability under $\pi_{t,\bm{z}}$ that $u$ appears in exactly $j$ of the sets $S,T$. Since $p_2-p_0=\partial_u\Phi(\bm{y})$,
\begin{align*}
 p_0+p_2&\le|p_2-p_0|+2\sqrt{p_0p_2}
 =|\partial_u\Phi(\bm{y})|+2\sqrt{p_0p_2},
\end{align*}
so a union bound over $u$ reduces the lemma to
\begin{align*}
 p_0p_2&\le\frac{4t}{\gamma^2}\,p_1^2.
\end{align*}
Assume $p_0p_2>0$, and set $s\coloneqq \sqrt{tp_0/p_2}$. Remove the penalty at
$u$ and multiply $z_u$ by $s$: let $\bm{t}',\bm{z}'$ agree with $t\mathbf1,\bm{z}$
except at $u$, where $t'_u\coloneqq 1$ and $z'_u\coloneqq sz_u$. Then
\begin{align*}
 \pi_{\bm{t}',\bm{z}'}(S,T)&\propto\pi_{t,\bm{z}}(S,T)\cdot
 s^{\allone[u\in S]+\allone[u\in T]}\,t^{-\allone[u\in S\cap T]},
\end{align*}
so the probabilities $q_0,q_1,q_2$ under $\pi_{\bm{t}',\bm{z}'}$ that $u$ lies in
exactly $0,1,2$ sets satisfy
\begin{align*}
 (q_0,q_1,q_2)&\propto(p_0,\,sp_1,\,s^2p_2/t)=(p_0,\,sp_1,\,p_0),
 \qquad\text{so }q_0=q_2\le\tfrac12.
\end{align*}
Let $\cL'$ be the Laplacian supplied by the hypothesis for
$\pi_{\bm{t}',\bm{z}'}$. By assumption, the total rate of moves from one copy of
$u$ to none is at most $1$, and no move goes from two copies to none.
By reversibility, the Dirichlet form of
$g(S,T)\coloneqq \allone[u\notin S\cup T]$ is therefore at most $q_1$.
The Poincar\'e inequality gives
\begin{align*}
 \gamma\,q_0(1-q_0)=\gamma\Var[\pi_{\bm{t}',\bm{z}'}]{g}
 &\le\langle g,\cL' g\rangle_{\pi_{\bm{t}',\bm{z}'}}\le q_1,
 &&\text{hence}&
 \frac\gamma2&\le\frac{q_1}{q_0}=\frac{\sqrt t\,p_1}{\sqrt{p_0p_2}}.
 \qedhere
\end{align*}
\end{proof}
}

For the bases pair law, we apply \cref{lem:field-concavity} to the exchange chain of
\cref{lem:bases-pair-poincare}, which has gap $1$ for every $\bm{t},\bm{z}$
and updates only one set at a time. If $u\in S\setminus T$, its total
removal rate is
\begin{align*}
 \sum_{v\notin S}q\bigl((S,T),(S-u+v,T)\bigr)
 &=\frac{\sum_{v\notin S}z_v\mu(S-u+v)}
         {\sum_{w\notin(S-u)}z_w\mu(S-u+w)}\le1.
\end{align*}
The same bound holds for $u\in T\setminus S$, so the hypotheses of
\cref{lem:field-concavity} hold with $\gamma=1$.
Thus making complementary pairs frequent reduces to finding an external
field $\bm{z}$ with small $\nabla\Phi(\log\bm{z})$.
We use stochastic coordinate descent as follows.

\begin{corollary}
\label{cor:balanced-fields}
Under the hypotheses of \cref{thm:homogeneous-convolution}, put
\begin{align*}
 t&\coloneqq \frac1{256n^2},&
 B&\coloneqq 2n+\lceil\log_2(\cR_\mu\cR_\nu)\rceil.
\end{align*}
Given value oracles for $\mu,\nu$, the coefficient ranges $\cR_\mu,\cR_\nu$ in
\cref{eq:coefficient-range}, and a supported pair $(S_0,T_0)$,
there is a randomized algorithm that returns $\bm{z}\in\mathbb R_{>0}^{\cU}$
or reports failure. If $Z_\cap>0$, with probability at least $31/32$
it returns a field satisfying
\begin{align*}
 \pi_{t,\bm{z}}(T=\cU\setminus S)&\ge\frac12.
\end{align*}
On every execution, it uses
$O(n^{10}B^2\log(2B))$ arithmetic operations and value-oracle calls,
and every queried or returned field satisfies
\begin{align}\label{eq:balancing-bounds}
 \|\log \bm{z}\|_1&=O(nB).
\end{align}
\end{corollary}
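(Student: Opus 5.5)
We run the stochastic coordinate descent of \cref{thm:stochastic-coordinate-descent} on $f=\Phi$ with $d=n$ and target accuracy $\eta\coloneqq 1/(4n)$. Then we invoke \cref{lem:field-concavity} with $\gamma=1$, which applies by the discussion above. With $t=1/(256n^2)$ we get $4n\sqrt t=1/4$, so any returned $\bm y$ with $\|\nabla\Phi(\bm y)\|_\infty\le 1/(4n)$ gives
\[
\pi_{t,\bm z}(T\ne\cU\setminus S)\le n\cdot\tfrac1{4n}+\tfrac14=\tfrac12 .
\]
It remains to verify the hypotheses of \cref{thm:stochastic-coordinate-descent}, bound its cost, and account for approximate sampling via \cref{rem:approximate-sampling}.

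\textbf{Smoothness.} Along coordinate $u$, $\partial_u^2\Phi$ is the variance of $\allone[u\in S]+\allone[u\in T]$ under $\pi_{t,\bm z}$. This quantity lies in $\{0,1,2\}$, so the variance is at most $1$, and we may take $L=1$.

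\textbf{Gradient oracle.} Draw one approximate sample $(S,T)$ from $\pi_{t,\bm z}$ using \cref{cor:bases-pair-sampling}, started at $(S_0,T_0)$, which lies in the support for every field. Output $G_u=\allone[u\in S]-\allone[u\notin T]\in[-1,1]$. By Hoeffding's lemma this gives $\sigma=1$.

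\textbf{Initial gap $H_0$.} Start at $\bm y_0=\bm 0$. Since $Z_\cap>0$, for every $\bm y$ we have $\Phi(\bm y)\ge\log Z_\cap$, because $\pi_{t,\bm z}(T=\cU\setminus S)=Z_\cap e^{-\Phi}\le1$. On the other side,
\[
\Phi(\bm 0)=\log Z_{t,\mathbf1}\le\log\bigl(4^n\max\mu\max\nu\bigr).
\]
Also $Z_\cap\ge\min_{\supp}\mu\cdot\min_{\supp}\nu$. Hence
\[
H_0\le n\log4+\log(\cR_\mu\cR_\nu)=O(B).
\]
This is the one place where $Z_\cap>0$ is used. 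If $Z_\cap=0$ the algorithm may return anything, but it must still terminate within budget and keep its iterates bounded. The descent algorithm guarantees both on every execution, since they hold deterministically given $H_0$. We simply use the same budget regardless; when $\Phi$ is unbounded below the output guarantee is not needed.

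\textbf{Iterate bound and cost.} By \eqref{eq:scd-iterates}, every iterate satisfies $\|\bm y\|_1=O(H_0/\eta+\eta/L)=O(nB)$, which is \eqref{eq:balancing-bounds}. Consequently $\log(z_{\max}/z_{\min})=O(nB)$. The number of queries is
\[
Q=O\bigl(n^2\cdot n^2B\log(nB/\delta)\bigr)=O\bigl(n^4B\log(2B)\bigr)
\]
for constant $\delta$, using $B\ge 2n$ to absorb $\log n$. Each sampler call, by \cref{cor:bases-pair-sampling} with $1/t=O(n^2)$, costs
\[
O\bigl(n^4(n+\log(\cR_\mu\cR_\nu)+n\cdot nB+n\log n+\log(1/\xi))\bigr)=O\bigl(n^6B\bigr)
\]
once we choose per-call accuracy $\xi=1/(64Q)$, since $\log(1/\xi)=O(\log(nB))$. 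The total is therefore $O(n^{10}B^2\log(2B))$, matching the claim.

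\textbf{Success probability.} Take $\delta=1/64$. The summed total variation error is at most $Q\xi=1/64$. By \cref{rem:approximate-sampling}, success occurs with probability at least $31/32$. If the algorithm does not certify a small gradient, it reports failure.

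\textbf{Main obstacle.} The delicate step is the mutual dependence between the sampler cost and the iterate bound. The sampler cost grows with $n\log(z_{\max}/z_{\min})$, and this in turn depends on the trajectory. The iterate bound in \cref{thm:stochastic-coordinate-descent}(a) holds on every execution, which breaks the circularity. Even so, one must check carefully that the resulting $n\cdot nB$ term, multiplied by $n^2/t$, combines with $Q$ to give exactly $n^{10}B^2$ and not a higher power.
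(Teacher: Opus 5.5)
Your proposal is correct and follows the paper's proof essentially step for step. Both run stochastic coordinate descent on $\Phi$ from $\bm{y}_0=\bm{0}$ with $L=\sigma=1$, $\eta=1/(4n)$, $H_0=B$; both use the deterministic iterate bound $\|\bm{y}\|_1=O(nB)$ to get the $O(n^6B)$ sampler cost, and both split the failure probability as $1/64+1/64$. The one detail the paper adds is needed for the bound to count only arithmetic operations: it takes the step $s=\log(1+\eta/8)$, so each update multiplies or divides a single $z_u$ by $1+\eta/8$ and no exponential of $\bm{y}$ is ever evaluated.
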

\begin{proof}
By \cref{lem:field-concavity}, it suffices to find $\bm{y}$ with
$\|\nabla\Phi(\bm{y})\|_\infty\le\eta\coloneqq 1/(4n)$ and return $\bm{z}\coloneqq \exp(\bm{y})$:
then $n(\eta+ 4\sqrt t)=1/2$. We apply
\cref{thm:stochastic-coordinate-descent} with
$d=n$, $L=\sigma=1$, $H_0=B$, and $\bm{x}_0=\bm{0}$.
For $(S,T)\sim\pi_{t,\exp(\bm{y})}$, the vector
$\bm{G}\coloneqq \mathbf1_S+\mathbf1_T-\mathbf1$ satisfies
\begin{align*}
 \nabla\Phi(\bm{y})&=\E{\bm{G}},&
 \nabla^2\Phi(\bm{y})&=\operatorname{Cov}(\bm{G})\succeq0,\\
 \partial_u^2\Phi(\bm{y})&=\Var{G_u}\le1,&
 \bm{G}&\in[-1,1]^n.
\end{align*}
Thus $\Phi$ is convex and coordinatewise $1$-smooth, and Hoeffding's
lemma gives coordinatewise $1$-sub-Gaussian oracle noise.
When $Z_\cap>0$, every complementary pair contributes its unchanged
weight to $\exp(\Phi(\bm{y}))$. Writing
$M\coloneqq \max_S\mu(S)\max_T\nu(T)$, we therefore have
\begin{align*}
 \Phi(\bm{y})&\ge\log Z_\cap\ge\log\frac{M}{\cR_\mu\cR_\nu},&
 \Phi(\bm{0})&\le\log(4^nM).
\end{align*}
Thus the required gap bound is at most $B$, so take $H_0\coloneqq B$;
no minimizing point need exist.

Take $\delta=1/64$ in \cref{thm:stochastic-coordinate-descent}.
The algorithm uses at most $Q_{\mathrm{cd}}=O(n^4B\log(2B))$ gradient
samples and $O(T'Q_{\mathrm{cd}})$ operations, where $T'$ bounds the
cost of one gradient sample. It also gives
$\|\bm{y}\|_1=O(B/\eta+\eta)=O(nB)$ for every queried or returned point,
which proves \cref{eq:balancing-bounds}.
We stop and report failure before exceeding any of these bounds.
Thus they also hold when $Z_\cap=0$; when $Z_\cap>0$, this stopping
rule does not change the algorithm with exact samples.

Supply each gradient sample using \cref{cor:bases-pair-sampling},
with fresh randomness and total variation error at most
$1/(64Q_{\mathrm{cd}})$.
By \cref{rem:approximate-sampling}, these sampling errors add at most
$1/64$ to the failure probability. Hence the algorithm finds the
required field with probability at least $1-1/64-1/64=31/32$
when $Z_\cap>0$.

For the running time, the field bound gives
$n\log(z_{\max}/z_{\min})=O(n^2B)$ at every query.
Since $t^{-1}=256 n^2$ and $B=\Theta(n+\log(\cR_\mu\cR_\nu))$,
\cref{cor:bases-pair-sampling} gives $T'=O(n^6B)$ operations and
value-oracle calls per gradient sample.
Multiplying by $Q_{\mathrm{cd}}$ proves the claimed bound;
the $O(nQ_{\mathrm{cd}})$ additional operations cost less.
The field updates can be implemented using arithmetic operations only;
see \cref{app:counting-arithmetic}.
\end{proof}

\subsection{Wrapping Up}\label{subsec:fpras-counting}

\begin{proof}[Proof of \Cref{thm:homogeneous-convolution}]
The cases $a=0$ or $b=0$ are evaluated directly, so assume $n\ge2$.
We choose fields making complementary pairs frequent, then estimate
the two unknown factors in
\begin{align}\label{eq:counting-factorization}
 Z_\cap&=\frac{Z_{t,\bm{z}}}{\bm{z}^{\cU}} \times
          \pi_{t,\bm{z}}(T=\cU\setminus S).
\end{align}
Let $B,t$ be as in \cref{cor:balanced-fields}, and put
$\beta\coloneqq \eps/4$ and $\delta_0\coloneqq 1/64$.
Apply \cref{cor:balanced-fields}; if it reports failure, return zero.
Otherwise fix the returned field $\bm{z}$. When $Z_\cap>0$, with probability
at least $1-2\delta_0$, this field satisfies
$\pi_{t,\bm{z}}(T=\cU\setminus S)\ge1/2$.

\paragraph{Estimating the normalizer.}
For any returned field $\bm{z}$, we apply
\cref{lem:sampling-to-counting} to the $2n$ membership bits of $(S,T)$,
with weight
\begin{align*}
 W(S,T)&\coloneqq \mu(S)\nu(T)t^{|S\cap T|}\bm{z}^S\bm{z}^T.
\end{align*}
The supplied pair $(S_0,T_0)$ is a valid starting state.

Each conditional law is obtained by restricting $\mu,\nu$ to the
prescribed inclusions and exclusions, with the same $t,\bm{z}$.
We keep the original ground set and the forced elements in each set,
so the degrees remain $a,b$. Inclusion replaces a generating polynomial
by its coordinate derivative times that variable; exclusion sets the
variable to zero. These operations preserve log-concavity, and the
coefficient ranges do not increase. For every nonempty
conditional law, both measures are nonzero. Thus \cref{cor:bases-pair-sampling} applies
to every conditional law.

\Cref{lem:sampling-to-counting}, with relative error
$\beta$ and failure probability $\delta_0$, therefore gives an estimate
$\widehat Z$ of $Z_{t,\bm{z}}$ using
$O(n^3\beta^{-2}\log(2n/\delta_0))$ conditional samples, each within
total variation distance $\beta/(32n)$ of its law.
The sample counts and running times are bounded on every execution.

\paragraph{Combining the estimates.}
Independently draw
$m=\lceil8\beta^{-2}\log(2/\delta_0)\rceil$ pairs from $\pi_{t,\bm{z}}$,
each to total variation error $\beta/4$, and let $\widehat p$ be the
fraction that are complementary. Conditional on the returned fields,
Hoeffding's inequality gives
$|\widehat p-\pi_{t,\bm{z}}(T=\cU\setminus S)|\le\beta/2$ except with probability $\delta_0$:
the sampling bias is at most $\beta/4$, as is the fluctuation of
the empirical average. Return
\begin{align*}
 \widehat Z_\cap&\coloneqq \widehat p\,\widehat Z/\bm{z}^{\cU}.
\end{align*}
{When the returned field satisfies
$\pi_{t,\bm{z}}(T=\cU\setminus S)\ge1/2$ and both preceding
estimation bounds hold, this is a product of two estimates, each with
relative error at most $\beta$. These conditions hold jointly with
probability at least $1-4\delta_0>3/4$, so}
\begin{align*}
 (1-\beta)^2&\le\widehat Z_\cap/Z_\cap\le(1+\beta)^2,
\end{align*}
which implies relative error $\eps$. If $Z_\cap=0$, every supported
pair fails the complementary-pair test. Thus every execution returns
zero, either at the field-finding step or because $\widehat p=0$.

\paragraph{Running time.}
Finding the field costs $O(n^{10}B^2\log(2B))$ by
\cref{cor:balanced-fields}. Throughout the self-reduction, the
coefficient ranges do not increase, and
$n\log(z_{\max}/z_{\min})=O(n^2B)$ by
\cref{eq:balancing-bounds}. Each restricted oracle uses one original
query and $O(n)$ operations to check the prescribed memberships.
Thus \cref{cor:bases-pair-sampling}, with $t^{-1}= 256 n^2$, bounds the
cost of a conditional sample by
\begin{align*}
 O\left(n^5\left(n^2B+\log\frac{n}{\eps}\right)\right).
\end{align*}
Self-reduction uses $O(n^3\eps^{-2}\log(2n))$ such samples.
Estimating the complementary-pair probability takes
$O(\eps^{-2})$ additional samples and is covered by the same bound.
Since $B=\Theta(n+\log(\cR_\mu\cR_\nu))$, the total cost is
\begin{align*}
 O\left(n^{10}B^2\eps^{-2}\log^2\frac{2B}{\eps}\right),
\end{align*}
as required by \cref{eq:main-cost}.
\end{proof}

{\sloppy
\printbibliography
}

\clearpage
\appendix
\crefalias{section}{appendix}
\crefalias{subsection}{subappendix}
\crefalias{subsubsection}{subsubappendix}

\section{Deferred Proofs}\label{sec:deferred-proof}

\subsection{Proofs of the Convolution Theorems}\label{app:convolution-proofs}
We prove \cref{thm:log-concave-convolution,thm:curvature-convolution}
by a common reduction to the coefficient $[\bm{x}^\cU](g_\mu g_\nu)$,
followed by separate arguments for the two hypotheses.
We first record three lemmas about supports and homogeneous components;
proofs of the first two are deferred to \cref{app:support-lemmas}.

\begin{lemma}\label{lem:positive-curvature-support}
Let $\omega$ be a nonzero nonnegative measure on $2^{\cU}$ whose support
$\Omega$ is connected under single-element additions and deletions, and put
$V\coloneqq\bigcup_{S\in\Omega}S$.
If $\omega$ has upward spectral curvature \mbox{$\kappa>0$}, then $\Omega$ is upward closed in $2^V$.
In particular, if $\Omega$ is also downward closed, then $\Omega=2^V$,
where \mbox{$V=\{u:\omega(\{u\})>0\}$}.
\end{lemma}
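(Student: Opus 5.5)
The plan is to show that whenever $S\in\Omega$ and $u\in V\setminus S$, we have $S+u\in\Omega$. Upward closure in $2^V$ then follows by iterating: from any $S\in\Omega$ we add the elements of $V\setminus S$ one at a time.

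The key local observation comes straight from \cref{def:spectral-curvature}. For $S\in\supp(\omega)$ and $u\notin S$, the diagonal entry is
\begin{align*}
A_S^{\omega,\kappa}(u,u)=r\,(r-\kappa),\qquad r\coloneqq\frac{\omega(S+u)}{\omega(S)}.
\end{align*}
Positive semidefiniteness forces this to be nonnegative, so $r=0$ or $r\ge\kappa>0$. This alone does not force $r>0$. To rule out $r=0$, we look at an off-diagonal entry. If $\omega(S+u)=0$ but $\omega(S+u+v)>0$ for some $v\notin S$, $v\neq u$, then
\begin{align*}
A_S^{\omega,\kappa}(u,u)=0,\qquad A_S^{\omega,\kappa}(u,v)=-\frac{\omega(S+u+v)}{\omega(S)}\neq0,
\end{align*}
which contradicts positive semidefiniteness (this is the same argument used in the proof of \cref{prop:bochner-identity}). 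The consequence is the exchange-type statement: if $S\in\Omega$ and $S+u+v\in\Omega$, then $S+u\in\Omega$ and $S+v\in\Omega$.

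To get a genuine upward step, we use connectivity of $\Omega$ together with $u\in V$. Some $S'\in\Omega$ contains $u$, and there is a path $S=S_0,S_1,\dots,S_k=S'$ in $\Omega$ of single-element additions and deletions. I would argue by induction on the path length, taking a shortest path from $S$ to a set containing $u$. We must show $S+u\in\Omega$.

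\begin{itemize}
\item If the first step adds $u$, we are done immediately.
\item If the first step is $S\to S+v$ with $v\ne u$, then by induction applied to $S+v$ we get $S+v+u\in\Omega$, since the remaining path is shorter. The exchange statement at $S$ then gives $S+u\in\Omega$.
\item If the first step is $S\to S-v$, induction gives $S-v+u\in\Omega$. We then know $S-v+u\in\Omega$ and $S\in\Omega$, and we want $S+u=(S-v)+u+v$.
\end{itemize}

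The third case is the main obstacle, because the exchange statement only moves downward from a set two levels up. It does not produce the top corner of the square from its bottom and two side corners. My first attempt would be to strengthen the induction hypothesis to ``every set of $\Omega$ lying below some set of $\Omega$ that contains $u$ can be extended by $u$,'' and to use the diagonal bound $r\ge\kappa$ to propagate positivity. If that fails, I would instead pick, among all sets in $\Omega$ not containing $u$, one from which a shortest path to a $u$-containing set first descends. I would then apply the curvature matrix at the lower set $S-v$, using the $2\times2$ minor on $\{u,v\}$. Its diagonal entries are at least $r_u(r_u-\kappa)$ and $r_v(r_v-\kappa)$, and its off-diagonal entry is $r_ur_v-\omega(S-v+u+v)/\omega(S-v)$. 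I would try to show that PSD-ness together with $\kappa>0$ forbids $\omega(S+u)=0$ when both $\omega(S-v+u)$ and $\omega(S)$ are positive.

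For the final sentence of the statement: if $\Omega$ is additionally downward closed, then $\emptyset\in\Omega$ and upward closure gives $\Omega=2^V$. In particular, each $u\in V$ has $\{u\}\in\Omega$, so $V=\{u:\omega(\{u\})>0\}$.
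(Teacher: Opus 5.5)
\textbf{Gap: the deletion step is not proved.} Your skeleton matches the paper's. You walk back along a path in $\Omega$ from $S$ to the first set containing $u$, and at addition steps you use the exchange fact $S,S+u+v\in\Omega\Rightarrow S+u\in\Omega$, which holds for every $\kappa$. The deletion step, which you correctly isolate, is left as something you ``would try to show.'' That step is the only place where $\kappa>0$ can enter, and without it the lemma is false. For example, the indicator of the independent sets of the rank-one uniform matroid on $\{1,2\}$ has $0$-spectral curvature and downward-closed support. Yet $\{1\}\to\varnothing\to\{2\}$ is exactly your third case, and $\{1,2\}\notin\Omega$. So as written, the proposal does not yet prove the statement.

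\textbf{How to close it.} The missing computation uses exactly the minor you wrote down. Put $S'\coloneqq S-v$ and
\[
 a\coloneqq\frac{\omega(S'+u)}{\omega(S')}>0,\qquad
 b\coloneqq\frac{\omega(S)}{\omega(S')}>0,\qquad
 c\coloneqq\frac{\omega(S+u)}{\omega(S')}.
\]
The principal submatrix of $A_{S'}^{\omega,\kappa}$ on $\{u,v\}$ is
\[
 M=\begin{pmatrix}a^2-\kappa a&ab-c\\ab-c&b^2-\kappa b\end{pmatrix}\succeq0.
\]
Testing it against $(b,-a)$ gives
\[
 0\le(b,-a)\,M\,(b,-a)^{\mathsf T}=ab\bigl(2c-\kappa(a+b)\bigr).
\]
Hence $c\ge\kappa(a+b)/2>0$, i.e.\ $S+u\in\Omega$. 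Equivalently, if $c=0$, the diagonal forces $a,b\ge\kappa$, while $\det M=\kappa ab(\kappa-a-b)<0$.

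The paper combines this with your exchange fact into one equivalence. For $S,S+v\in\Omega$ with $u\notin S+v$, $u$ can be added to $S$ if and only if it can be added to $S+v$. This lets it propagate backwards along the path without splitting into addition and deletion cases. With this step inserted, your induction on the distance to a $u$-containing set is correct, and so is your derivation of $\Omega=2^V$ in the downward-closed case.
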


\begin{samepage}
\begin{lemma}
\label{lem:lc-support-access}
Let $g_\omega(\bm{x})=\sum_{A\subseteq\cU}\omega(A)\bm{x}^A$ be nonzero and
log-concave, with nonnegative coefficients, and put $n\coloneqq |\cU|$.
Given a value oracle for $\omega$, a set $A_0\in\supp\omega$, and disjoint
$C,O\subseteq\cU$, an algorithm using $O(n^3)$ queries can:
\begin{itemize}
\item certify that no $A\in\supp\omega$ contains $C$ and avoids $O$; or
\item find, for every size at which such a set exists, a set
$A\in\supp\omega$ containing $C$ and avoiding $O$.
\end{itemize}
\end{lemma}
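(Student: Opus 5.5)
The support of a log-concave multiaffine polynomial with nonnegative coefficients is well structured. The plan uses two facts, both of which I take from the log-concave/Lorentzian theory of \cite{AnariOveisGharanVinzant2018,BrandenHuh2020} cited earlier. First, each homogeneous piece of the support is M-convex, i.e.\ the bases of a matroid. Second, for a multiaffine log-concave $g_\omega$ the support is closed under the relevant exchange between sizes; more precisely, the collection of sizes $k$ with some supported $A$ of size $k$ is an interval, and adjacent sizes are linked by single additions or deletions. Rather than rely on global structure, I would run the algorithm on the restriction $\omega_{C,O}(A)\coloneqq\omega(A)\allone[C\subseteq A,\,A\cap O=\emptyset]$. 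As noted in the proof of \cref{thm:homogeneous-convolution}, this corresponds to differentiating in the variables of $C$ and setting those of $O$ to zero, and both operations preserve log-concavity. The task then reduces to: given one supported set of the unrestricted $\omega$, find supported sets of $\omega_{C,O}$ of every achievable size, or certify that $\omega_{C,O}$ is zero.

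\textbf{Step 1: move $A_0$ into the restricted region.} Starting from $A_0$, I would perform a local search that decreases the potential $|C\setminus A|+|A\cap O|$. At each step I query all $O(n^2)$ sets $A\pm u$ and $A-u+v$, and accept any supported one that lowers the potential. The key structural claim is that if $\omega_{C,O}\neq0$ and the potential is positive, such an improving neighbour exists. For the same-size case this is the matroid basis exchange property on the M-convex homogeneous piece, applied toward a target set $A^\ast\in\supp\omega_{C,O}$. Across sizes, I would use that the support of a multiaffine log-concave polynomial is a jump system (the support of a Lorentzian polynomial after homogenization), so the two-step axiom supplies an improving move toward $A^\ast$. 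Since the potential is at most $n$, this step uses $O(n)$ rounds of $O(n^2)$ queries, i.e.\ $O(n^3)$ queries in total. If a round finds no improvement while the potential is still positive, that certifies $\omega_{C,O}=0$; this is exactly the contrapositive of the exchange claim.

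\textbf{Step 2: sweep sizes within the restricted support.} Once I hold some $A_1\in\supp\omega_{C,O}$, I would greedily add supported elements from $\cU\setminus(O\cup A_1)$ to climb to the maximum size, and greedily delete elements from $A_1\setminus C$ to descend to the minimum size. I record one set at each size along the way. This costs $O(n^2)$ queries. For correctness I need that $\supp\omega_{C,O}$ has no size gaps and that a supported set not of maximum size always has a supported single addition (and symmetrically for deletion). I would derive this from the jump-system property of the support of the log-concave polynomial $g_{\omega_{C,O}}$, with sizes forming an interval because a constant-parity-free multiaffine jump system is connected under single steps.

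\textbf{Main obstacle.} The hard part is justifying these exchange properties in the \emph{non-homogeneous} multiaffine log-concave setting. The cited references state them for homogeneous polynomials. I would handle this by homogenizing with an extra variable, $x_0^{n}g_\omega(\bm{x}/x_0)$, which is Lorentzian when $g_\omega$ is log-concave on the positive orthant; this follows from \cite[Theorem~2.30]{BrandenHuh2020} together with the characterization of polynomials log-concave on the orthant. I would then read off M-convexity of the support of the homogenized polynomial. M-convexity in $\Z^{n+1}$, projected to $\cU$, gives precisely the exchange moves used above, including add-only and delete-only moves coming from exchanges with the coordinate $x_0$.
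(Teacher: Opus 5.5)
Your algorithm is the same as the paper's: the same potential $|C\setminus A|+|A\cap O|$, the same add/delete/exchange moves with $O(n^2)$ queries per improvement, and the same greedy sweeps inside the restricted family. Your exchange-axiom arguments for both steps are also correct \emph{provided} $\supp\omega$ is M$^\natural$-convex, i.e.\ $\{(\mathbf 1_A,n-|A|):A\in\supp\omega\}$ is M-convex.

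The gap is in how you justify that structural fact, which is the substantive part of the lemma. Your claim that $x_0^n g_\omega(\bm x/x_0)$ is Lorentzian whenever $g_\omega$ is log-concave on the positive orthant is false. Take $g=1+x_1+x_2+2x_1x_2$. It satisfies
\[
 g^2\,\nabla^2\log g=\begin{pmatrix}-(1+2x_2)^2&1\\1&-(1+2x_1)^2\end{pmatrix}\preceq0
 \qquad\text{on }\R_{>0}^2 .
\]
Yet $x_0^2+x_0x_1+x_0x_2+2x_1x_2$ has Hessian $\begin{pmatrix}2&1&1\\1&0&2\\1&2&0\end{pmatrix}$, whose eigenvalues include $2\pm\sqrt2$, both positive, so it is not Lorentzian. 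The same computation applied to $\partial_{x_0}^{N-2}$ shows $x_0^N g(\bm x/x_0)$ fails for every $N\ge2$. Only the reverse implication holds: a Lorentzian homogenization forces log-concavity of the dehomogenization.

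Your jump-system fallback is also too weak. The two-step axiom permits moves $A+i+j$ and $A-i-j$, which your search never queries. For instance, $\{\varnothing,\{1,2\}\}$ is a jump system, and with $C=\{1\}$ your search stalls at $\varnothing$ even though a feasible set exists.

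The missing ingredient is a statement about the support alone. The paper uses it via \cite[Proposition~10.4]{OveisGharan2020Lecture10}: every edge of the Newton polytope $\operatorname{conv}\{\mathbf 1_A:A\in\supp\omega\}$ is parallel to some $\bm e_i$ or $\bm e_i-\bm e_j$. This can be proved without homogenizing:
\begin{enumerate}
\item Let $\bm w$ expose the edge $[\mathbf 1_A,\mathbf 1_B]$ and set $c=\langle\bm w,\mathbf 1_A\rangle$. Then $t^{-c}g_\omega(t^{w_1}x_1,\ldots,t^{w_n}x_n)\to\omega(A)\bm x^A+\omega(B)\bm x^B$ pointwise as $t\to\infty$, so this binomial is log-concave.
\item Split off the factor $\bm x^{A\cap B}$, which lives in disjoint variables, then fix and rescale variables. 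If $|A\setminus B|\ge2$ or $|B\setminus A|\ge2$, this would make $1+x_ix_j$ log-concave.
\item But the log-Hessian of $1+x_ix_j$ has determinant $(x_i^2x_j^2-1)/(1+x_ix_j)^4<0$ whenever $x_ix_j<1$, a contradiction.
\end{enumerate}

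Given this lemma, the paper finishes polyhedrally: a linear objective with no improving edge at a vertex is globally optimal, and the feasible sets form a face that inherits the edge directions. Your exchange arguments are the equivalent combinatorial version, and either one completes the proof once the edge lemma replaces your Lorentzian-homogenization step.
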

\end{samepage}

\begin{lemma}[{\cite[Corollary~28]{RobinsonSraJegelka2019}}]
\label{lem:lc-homogeneous-layers}
Let $g$ be a multiaffine log-concave polynomial with nonnegative
coefficients. For each degree $k$, the sum of the degree-$k$ terms of $g$
is either zero or log-concave.
\end{lemma}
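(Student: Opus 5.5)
The plan is to pass from $g$ to a homogeneous Lorentzian polynomial in one extra variable, extract each layer by differentiating in that variable, and come back to log-concavity. The key input is the characterisation recalled after \cref{prop:flc-curvature}, namely \cite[Theorems~2.30 and~3.10]{BrandenHuh2020}. Write $g=\sum_{k=0}^{n}g_k$ with $g_k$ homogeneous of degree $k$, and let $d\le n$ be the degree of $g$.

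Before the general argument, there is a cheap case that tests the route and settles the extreme degrees. Fix $\bm x\in\R_{>0}^{\cU}$ and $\lambda>0$. The function $\bm x\mapsto\log g(\lambda\bm x)-d\log\lambda$ is concave, since it is $\log g$ composed with a linear map, shifted by a constant. As $\lambda\to\infty$ it converges pointwise on $\R_{>0}^{\cU}$ to $\log g_d(\bm x)$. A pointwise limit of concave functions is concave, so $g_d$ is log-concave. Letting $\lambda\to0$ and normalising by the lowest nonzero degree handles the bottom layer the same way. This scaling trick cannot isolate a middle layer, which is why the general step needs more structure.

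For the general case I would use the homogenisation
\[
 G(x_0,\bm x)\coloneqq\sum_{k=0}^{d}\frac{x_0^{\,d-k}}{(d-k)!}\,g_k(\bm x).
\]
Because $g$ is multiaffine, every coefficient of $g_k$ is a coefficient of a squarefree monomial in $\bm x$. The Brändén--Huh characterisation states that a multiaffine (more generally, $\mathrm M$-convexly supported) polynomial with nonnegative coefficients is log-concave on $\R_{\ge0}^{\cU}$ exactly when this normalised homogenisation is Lorentzian. I take that from \cite{BrandenHuh2020} rather than reprove it. Lorentzian polynomials are closed under partial derivatives and under setting a variable to zero. For a fixed $k$ with $g_k\ne0$, this gives
\[
 \partial_{x_0}^{\,d-k}G\big|_{x_0=0}=g_k .
\]
Hence $g_k$ is Lorentzian, and every nonzero Lorentzian polynomial is log-concave on the positive orthant \cite{BrandenHuh2020}, which proves the claim.

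The main obstacle is the first step: matching the paper's notion of log-concavity on $\R_{\ge0}^{\cU}$ with the hypotheses of the Brändén--Huh equivalence. That equivalence needs the support of $g$ to be $\mathrm M$-convex in the appropriate sense, and it needs the factorial normalisation above. So I would first check that multiaffine log-concavity forces the support to be a jump system whose homogenisation is $\mathrm M$-convex, as established in \cite{BrandenHuh2020}. Only then would I invoke the equivalence. If that identification fails, the fallback is \cite[Corollary~28]{RobinsonSraJegelka2019}, which is cited precisely for this statement.
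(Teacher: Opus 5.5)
The steps after your first one are correct. We have $\partial_{x_0}^{d-k}G|_{x_0=0}=g_k$. Lorentzian polynomials are closed under partial derivatives and under setting a variable to zero, and nonzero Lorentzian polynomials are log-concave on the positive orthant. Your scaling argument for the top and bottom layers is also right.

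\textbf{The gap.} The proof fails at the step you treat as a black box: that log-concavity of a non-homogeneous multiaffine $g$ makes $G=\sum_k x_0^{d-k}g_k/(d-k)!$ Lorentzian. The results you cite, \cite[Theorems~2.30 and~3.10]{BrandenHuh2020}, are about homogeneous polynomials. They give no such statement for non-homogeneous $g$ with this factorial homogenization.

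This step carries the whole content of the lemma, and more. Indeed, $\partial_{x_0}^{j}G$ is the normalized homogenization of the truncation $g_{\le d-j}$, and $\int_0^\infty e^{-t}\,\partial_{x_0}^{j}G(t,\bm x)\,dt=g_{\le d-j}(\bm x)$. So by Pr\'ekopa's theorem, your claim would also make every truncation of $g$ log-concave.

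Your proposed check also targets the wrong property. The family $\{\varnothing,\{1,2\}\}$ is a jump system (even a delta-matroid), yet $1+x_1x_2$ is not log-concave, and $\tfrac12x_0^2+x_1x_2$ does not have M-convex support. Two things are actually needed:
\begin{itemize}
\item $\supp g$ must be $\mathrm M^\natural$-convex (a generalized matroid);
\item the quadratic derivatives of $G$, which involve these truncations, must have at most one positive eigenvalue.
\end{itemize}
The paper itself gives no argument beyond citing \cite[Corollary~28]{RobinsonSraJegelka2019}, which is your fallback.

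\textbf{How to close it.} The step is true, and you can prove it along your own lines by checking Br\"and\'en--Huh's definition directly: M-convex support plus Lorentzian signature of all $(d-2)$-nd derivatives.

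\emph{Support.} Apply your scaling trick with weights, $\bm x\mapsto(\lambda^{w_u}x_u)_u$, to see that every initial form $\mathrm{in}_{\bm w}g$ is log-concave. For an edge of the Newton polytope with endpoints $\mathbf 1_S,\mathbf 1_T$, this initial form is $g_S\bm x^S+g_T\bm x^T$. Restricting it to suitable lines forces $|S\setminus T|,|T\setminus S|\le1$. So every edge is parallel to some $\bm e_u$ or $\bm e_u-\bm e_v$. Since every $0/1$ point of the polytope is a vertex, $\supp g$ is $\mathrm M^\natural$-convex and $\supp G$ is M-convex.

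\emph{Quadratics.} For $j+|S|=d-2$,
\[
\partial_{x_0}^{j}\partial_SG=\tfrac12g_Sx_0^2+x_0\sum_ug_{S+u}x_u+\sum_{u<v}g_{S+u+v}x_ux_v .
\]
Let $\omega(T)\coloneqq[\bm x^T]g$. There are three cases.
\begin{itemize}
\item If $g_S>0$: the Schur complement of the $x_0x_0$ entry of the Hessian is $-g_S\,A_S^{\omega,0}$. Moreover $A_S^{\omega,0}=-\nabla^2\log(\partial_Sg)(\mathbf 0)\succeq0$ by continuity, because $\partial_Sg$ is log-concave by the limiting argument in the proof of \cref{prop:flc-curvature}.
\item If $g_S=0$ but some $g_{S+u}>0$: expand the log-concavity inequality for $\partial_Sg$ at $\epsilon\bm y$ to first order in $\epsilon$. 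This shows the form is nonpositive on the hyperplane $\sum_ug_{S+u}x_u=0$.
\item If all of these vanish: the quadratic is the lowest layer of $\partial_Sg$, so your bottom-layer argument applies.
\end{itemize}
In every case there is at most one positive eigenvalue. Hence $G$ is Lorentzian, and the rest of your proof goes through.
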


\paragraph{Reduce to the coefficient $[\bm{x}^\cU](g_\mu g_\nu)$.}
Put $C\coloneqq \{u:\alpha_u=2\}$, $W\coloneqq \{u:\alpha_u=1\}$, and
$O\coloneqq \{u:\alpha_u=0\}$. In a pair contributing to $Z_{\bm{\alpha}}$, both sets
contain $C$ and avoid $O$, while each element of $W$ belongs to exactly
one of them. Define measures $\mu',\nu'$ on $2^W$ by
\begin{align*}
 \mu'(A)&\coloneqq \mu(C\cup A),&
 \nu'(A)&\coloneqq \nu(C\cup A),\qquad A\subseteq W.
\end{align*}
Then
\begin{align*}
 Z_{\bm{\alpha}}&=\sum_{A\subseteq W}\mu'(A)\nu'(W\setminus A).
\end{align*}
If $W=\varnothing$, return $Z_{\bm{\alpha}}=\mu(C)\nu(C)$ directly.
Henceforth assume $W\ne\varnothing$.
We check that these measures satisfy the assumptions needed for the reduction.

Under \cref{thm:curvature-convolution}, if $\mu(C)\nu(C)=0$,
then $Z_{\bm{\alpha}}=0$ by downward closure. Otherwise, both new measures
are nonzero, have downward-closed supports, and satisfy
\[
 A_S^{\mu',\eta}
 =\bigl(A_{C\cup S}^{\mu,\eta}\bigr)_{W\times W}\succeq0,
 \qquad S\in\supp\mu'.
\]
The same identity holds for $\nu'$. Thus $\mu'$ and $\nu'$ have upward
spectral curvatures $\eta$ and $\kappa$, respectively.

When $\gamma>0$, for $S\in\supp\mu'$ and $u\in W\setminus S$,
\begin{align*}
 \mu'(S+u)&=\mu(C\cup S+u)
 \le\chi\gamma\,\mu(C\cup S)=\chi\gamma\,\mu'(S).
\end{align*}
The same calculation applies to $\nu'$, so the bound in \cref{thm:curvature-convolution}
holds for both new measures.

Under \cref{thm:log-concave-convolution}, obtain $g_{\mu'}$ from $g_\mu$
by differentiating once with respect to each variable indexed by $C$,
then setting the variables indexed by $O$ to zero. The first operation
keeps the terms containing $C$ and removes their $C$ factors; the second
removes the terms containing any element of $O$.
For multiaffine polynomials, both operations preserve log-concavity
whenever the resulting polynomial is not zero. Hence $g_{\mu'}$ is
either zero or log-concave, and the same holds for $g_{\nu'}$.

To obtain starting sets, apply \cref{lem:lc-support-access} using the
supplied sets $S_0,T_0$. This finds sets in $\supp\mu$ and $\supp\nu$
that contain $C$ and avoid $O$, or reports that one does not exist.
In the latter case, $Z_{\bm{\alpha}}=0$. Otherwise, removing $C$ from the
returned sets gives starting sets in $\supp\mu'$ and $\supp\nu'$.

In both cases, evaluating $\mu'(A)$ requires only the oracle query
$\mu(C\cup A)$, and similarly for $\nu'$.
Also, $\cR_{\mu'}\le\cR_\mu$ and $\cR_{\nu'}\le\cR_\nu$,
since each new measure uses only values of its original measure.
We now write $\mu,\nu$ for $\mu',\nu'$ and $\cU,n$ for $W,|W|$.
It therefore suffices to estimate
\begin{align*}
 Z_\cap&=\sum_{A\subseteq\cU}\mu(A)\nu(\cU\setminus A).
\end{align*}

\begin{proof}[Proof of \cref{thm:log-concave-convolution}]
By the reduction above, it suffices to estimate $Z_\cap$,
using \cref{thm:homogeneous-convolution}, proved in \cref{sec:fpras}.
Write $g_{\mu,k}$ for the sum of the degree-$k$ terms of $g_\mu$,
and similarly for $\nu$. By \cref{lem:lc-homogeneous-layers}, each of
these polynomials is either zero or log-concave. Then
\begin{align*}
 Z_\cap&=\sum_{k=0}^n
     [\bm{x}^\cU]\bigl(g_{\mu,k}(\bm{x})g_{\nu,n-k}(\bm{x})\bigr).
\end{align*}
Each layer inherits its value oracle and coefficient-range bound;
\cref{lem:lc-support-access} supplies {sets of positive weight} in all nonempty
layers using $O(n^3)$ queries. Apply \cref{thm:homogeneous-convolution}
to each pair of nonempty layers, with relative error $\eps$ and failure
probability $1/(4(n+1))$, using $O(\log(2n))$ repetitions.
Summing the estimates and taking a union bound proves the guarantee.
{The at most $n+1$ applications of \cref{eq:main-cost}, including
the repetitions needed to boost the success probability, cost}
\begin{align*}
 O\left(n^{11}(n+\log(\cR_\mu\cR_\nu))^2\eps^{-2}
       \log(2n)\log^2\frac{2(n+\log(\cR_\mu\cR_\nu))}{\eps}\right).
\end{align*}
This is the bound in \cref{thm:log-concave-convolution}.
\end{proof}

\begin{proof}[Proof of \cref{thm:curvature-convolution}]
By the reduction above, it suffices to estimate $Z_\cap$.
We use the supplied curvature parameters $\eta,\kappa$ to select the case below and to compute $\gamma$.
If $\max\{\kappa,\eta\}=0$, then $\eta=\kappa=0$ and
\cref{prop:zero-curvature-log-concavity} makes both generating polynomials
log-concave; the empty sets supply {starting sets of positive weight}. Thus
\cref{thm:log-concave-convolution} applies, and its cost is the first term in the stated bound.
It remains to treat the case $\max\{\eta,\kappa\}>0$, whose cost is the
second term.

\paragraph{The case $\max\{\eta,\kappa\}>0$.}
Interchange $\mu$ and $\nu$ if necessary so that $\nu$ has upward
spectral curvature $\kappa=\gamma>0$.
Since $\supp\nu$ is downward closed and hence connected, the final conclusion of \cref{lem:positive-curvature-support} gives
$\supp\nu=2^V$, where $V\coloneqq \{u:\nu(\{u\})>0\}$.
As in the reduction above, we may assume without loss of generality
that $V=\cU$, by pinning $\cU\setminus V$ into $\mu$
and restricting to $V$ (returning zero if the pinning is infeasible,
and evaluating the case $V=\varnothing$ directly).
The measure $\mu$ still has downward-closed support,
and $\supp\nu=2^{\cU}$.

To obtain a positive gap even when $\eta+\kappa=0$, decrease the field
of $\mu$ by setting
\begin{align*}
 t&\coloneqq 1-\frac{\eps}{16n},&
 \widetilde\mu(A)&\coloneqq t^{|A|}\mu(A).
\end{align*}
This multiplies each ratio $\mu(S+u)/\mu(S)$ by $t$ and each ratio
$\mu(S+u+v)/\mu(S)$, for distinct $u,v\notin S$, by $t^2$.
Substituting into \cref{def:spectral-curvature} with parameter
$t\eta$ gives
\[
 A_S^{\widetilde\mu,t\eta}=t^2A_S^{\mu,\eta}\succeq0.
\]
Thus $\widetilde\mu$ has upward spectral curvature $t\eta$, and
\[
 \delta\coloneqq t\eta+\kappa
 =t(\eta+\kappa)+(1-t)\kappa
 \ge\frac{\eps\gamma}{16n}.
\]
Let $\widetilde Z_\cap$ be the convolution value with $\mu$ replaced by
$\widetilde\mu$. Since $t^n\ge1-\eps/16$,
\[
 (1-\eps/16)Z_\cap\le t^n Z_\cap\le\widetilde Z_\cap\le Z_\cap.
\]
{Thus an estimate of $\widetilde Z_\cap$ with relative error
at most $\eps/4$ also estimates $Z_\cap$ with \mbox{relative error at most $\eps$}.}

Apply \cref{thm:poincare} with first factor $\widetilde\mu$ and second
factor $\nu^*(A)=\nu(\cU\setminus A)$, restricted to
$\Omega\coloneqq \supp\mu$. The second factor needs downward curvature
$\kappa$, which follows from
\[
 D_A^{\nu^*,\kappa}=A_{\cU\setminus A}^{\nu,\kappa}\succeq0.
\]
Restricting $\nu^*$ to $\Omega$ leaves these matrices unchanged,
since every deletion of a set in $\Omega$ remains in $\Omega$.
The resulting chain samples
$\rho(A)\propto\widetilde\mu(A)\nu^*(A)$, has gap at least $\delta$,
and has rates
\begin{align*}
 q(A,A+u)&=t\frac{\mu(A+u)}{\mu(A)},&
 q(A,A-u)&=\frac{\nu((\cU\setminus A)+u)}
                         {\nu(\cU\setminus A)}.
\end{align*}
Only the addition rates are multiplied by $t<1$.
Both rates are at most $B\coloneqq \chi\gamma$. Normalizing the rates by $2nB$
gives a lazy chain with gap at least
\[
 \frac{\delta}{2nB}
 \ge\frac{\eps\gamma/(16n)}{2n\chi\gamma}
 =\frac{\eps}{32\chi n^2}.
\]
The factor $\gamma$ cancels because the rate bound also contains $\gamma$.
The minimum stationary mass is at least $t^n/(2^n\cR_\mu\cR_\nu)$,
and each step costs $O(1)$.
It starts from the supported empty set. The mixing bound
\cref{eq:poincare-mixing} gives a sample with total-variation
error $\tau$ in
\begin{align*}
 O\left(\frac{n^2\chi}{\eps}
       \left[n+\log(\cR_\mu\cR_\nu)+\log(1/\tau)\right]\right)
\end{align*}
steps. {The same bounds hold for the conditional distributions
obtained by fixing coordinates.}

\Cref{lem:sampling-to-counting}, applied to the $n$
membership bits with weight $\widetilde\mu(A)\nu(\cU\setminus A)$
and starting state $\varnothing$, estimates $\widetilde Z_\cap$ to
relative error $\eps/4$, with success probability at least $3/4$, using
$O(n^3\eps^{-2}\log(2n))$ conditional samples of accuracy
$\tau=\eps/(64n)$. With fixed walk lengths, this gives total cost
\begin{align*}
 O\left(n^5\chi\eps^{-3}\log(2n)
       \left[n+\log(\cR_\mu\cR_\nu)+\log\frac{2n}{\eps}\right]\right)
\end{align*}
arithmetic operations and value-oracle calls on every execution.
The tilt bound gives the required relative estimate of $Z_\cap$.
\end{proof}

\subsection{Proofs of the Support Lemmas}\label{app:support-lemmas}
We now prove \cref{lem:positive-curvature-support,lem:lc-support-access}.

\begin{proof}[Proof of \cref{lem:positive-curvature-support}]
Fix $u\in V$. We show that $u$ can be added to every supported set
that does not already contain it. The key observation is that, along
an edge between two such sets, $u$ can be added at both endpoints or
at neither.

To see this, let $S,S+v\in\Omega$ with distinct $u,v\notin S$, and put
\[
 a=\frac{\omega(S+u)}{\omega(S)},\qquad
 b=\frac{\omega(S+v)}{\omega(S)}>0,\qquad
 c=\frac{\omega(S+u+v)}{\omega(S)}.
\]
The principal submatrix of $A_S^{\omega,\kappa}$ on $u,v$ is
\[
 M=\begin{pmatrix}a^2-\kappa a&ab-c\\ab-c&b^2-\kappa b\end{pmatrix}\succeq0.
\]
If $a=0$, the first diagonal entry is zero, so positive semidefiniteness
forces the off-diagonal entry to vanish, giving $c=0$.
If $a>0$, positive semidefiniteness also gives
\[
 0\le (b,-a)M(b,-a)^{\mathsf T}
 =ab\bigl(2c-\kappa(a+b)\bigr),
\]
so $c\ge\kappa(a+b)/2>0$.
Thus $a>0$ if and only if $c>0$, proving the observation.

Now take any $S\in\Omega$ with $u\notin S$.
Since $u\in V$ and $\Omega$ is connected, there is a path in $\Omega$
from $S$ to a set containing $u$. Stop at the first such set.
The last step adds $u$, so $u$ can be added to the preceding set.
Every earlier edge has both endpoints omitting $u$; by the observation,
working backwards along the path shows that $u$ can also be added to $S$.
Since $u\in V$ was arbitrary, $\Omega$ is upward closed in $2^V$.

If $\Omega$ is also downward closed, then $\varnothing\in\Omega$.
Upward closure now gives $\Omega=2^V$, and {the description of $V$
as the set of elements $u$ with $\omega(\{u\})>0$ follows}.
\end{proof}

\begin{proof}[Proof of \cref{lem:lc-support-access}]
Let $P\coloneqq \operatorname{conv}\{\mathbf1_A:A\in\supp\omega\}$, where
$\mathbf1_A$ is the incidence vector of $A$. This convex hull is the
\emph{Newton polytope} of $g_\omega$. Its vertices are precisely the
incidence vectors of sets in $\supp\omega$, and an \emph{edge} is a
one-dimensional face. By \cite[Proposition~10.4]{OveisGharan2020Lecture10},
every edge is parallel to a coordinate vector $\bm{e}_i$ or a difference
$\bm{e}_i-\bm{e}_j$. Consequently, a linear objective with no improving addition,
deletion, or exchange within $\supp\omega$ is globally optimal: these
moves include every incident edge. Starting at $A_0$, minimize
$|C\setminus A|+|A\cap O|$ by such moves. At most $n$ improvements,
each found with $O(n^2)$ queries, either produce a feasible set or
certify infeasibility. The convex hull of the feasible incidence vectors
is a face of $P$ and inherits its edge directions. Greedy additions and deletions
within this face therefore reach the maximum and minimum
cardinalities, visiting every intervening size in $O(n^2)$ further queries.
\end{proof}

\subsection{Optimality of the Curvature Threshold}\label{app:curvature-threshold}
\begin{proof}[Proof of \cref{rmk:curvature-threshold}]
We reduce counting proper two-colourings of a $3$-uniform hypergraph
to subset convolution. Deciding whether such a colouring exists is
$\mathsf{NP}$-complete~\cite{DinurRegevSmyth2005}.
Fix $\delta>0$ and a rational constant
$0<c<\min\{1,\delta/2\}$. For a $3$-uniform hypergraph
$\mathcal H=(\cU,\mathcal F)$ with $n\coloneqq|\cU|\ge1$, put
$a\coloneqq c/n$ and define
\begin{align*}
 \mathcal I&\coloneqq
 \{S\subseteq\cU:F\nsubseteq S\text{ for every }F\in\mathcal F\},\\
 \mu(S)=\nu(S)&\coloneqq a^{|S|}\allone[S\in\mathcal I].
\end{align*}
The two colour classes form a proper colouring exactly when both
belong to $\mathcal I$. Consequently,
\begin{align*}
 [\bm{x}^{\cU}]\bigl(g_\mu(\bm{x})g_\nu(\bm{x})\bigr)
 &=a^n\,\#\{\text{proper two-colourings of }\mathcal H\}.
\end{align*}
Since $a^n$ is known, it remains to verify the promised curvature and
oracle bounds.

Fix $S\in\mathcal I$, and let
$J_S\coloneqq\{u\in\cU\setminus S:S+u\in\mathcal I\}$ be its
feasible additions. On $J_S$, define the symmetric matrix
\begin{align*}
 B_S(u,v)&\coloneqq
 \allone[u\ne v,\ S+u+v\notin\mathcal I].
\end{align*}
Each row of $B_S$ has at most $n-1$ nonzero entries, all equal to one,
so $B_S\succeq-(n-1)I$. Substituting the weight ratios into
\cref{def:spectral-curvature} gives
\begin{align*}
 A_S^{\mu,-c}\big|_{J_S\times J_S}
 &=a^2(I+B_S)+caI
 =a^2\bigl((n+1)I+B_S\bigr)
 \succeq2a^2I.
\end{align*}
All other rows and columns vanish by downward closure. Thus both
measures have upward curvature $-c$, and we may take
$\eta=\kappa=-c$, giving $\eta+\kappa=-2c>-\delta$.
Their common support is downward closed, their rational weights
are computable in polynomial time, and
\begin{align*}
 \log(\cR_\mu\cR_\nu)&\le2n\log(1/a)
 =2n\log(n/c)=O(n\log n).
\end{align*}

An $\mathsf{FPRAS}$ on these instances would therefore give a
bounded-error polynomial-time test for hypergraph two-colourability.
By reducing the decision queries in the usual $\mathsf{SAT}$
self-reduction to this problem, amplifying their success probabilities,
and verifying the resulting assignment, we obtain an $\mathsf{RP}$
algorithm for $\mathsf{SAT}$. Hence $\mathsf{NP}=\mathsf{RP}$.
\end{proof}

\subsection{The Coordinate-Descent Bound}\label{app:coordinate-descent}
\begin{proof}[Proof of \cref{thm:stochastic-coordinate-descent}]
We average the noisy gradients and take a fixed step in any coordinate
whose empirical derivative is large. Choose
\begin{align}
 \frac{\eta}{16L}&\le s\le\frac{\eta}{8L},\label{eq:scd-step}\\
 K&\coloneqq \lceil128LH_0\eta^{-2}\rceil+1,\qquad
 m_b\coloneqq \left\lceil32\sigma^2\eta^{-2}
 \log\bigl(2d(K+1)/\delta\bigr)\right\rceil.\label{eq:balance-params}
\end{align}
Start at $\bm{x}\coloneqq \bm{x}_0$. At each iteration, average $m_b$ fresh oracle
vectors to obtain $\widehat{\bm{g}}\coloneqq m_b^{-1}\sum_{j=1}^{m_b}\bm{G}^{(j)}(\bm{x})$.
Return $\bm{x}$ if $\|\widehat{\bm{g}}\|_\infty\le\eta/2$; otherwise declare
failure if $K$ updates have already occurred, and otherwise choose
$i$ with $|\widehat g_i|>\eta/2$ and update
\begin{align*}
 \bm{x}&\leftarrow \bm{x}-s\sgn(\widehat g_i)\bm{e}_i,
 \qquad \sgn(u)\coloneqq u/|u|\quad(u\ne0).
\end{align*}
There are at most $K+1$ batches and $K$ updates, so every queried or
returned point is $\bm{x}=\bm{x}_0+s\bm{k}$ for an integer vector $\bm{k}$ satisfying
\begin{align}\label{eq:scd-field-bound}
 \|\bm{k}\|_1&\le K,\qquad
 \|\bm{x}-\bm{x}_0\|_1\le Ks\le16H_0/\eta+\eta/(4L).
\end{align}
These deterministic bounds give \cref{eq:scd-iterates}. Moreover,
$(K+1)m_b=O(Q)$ for $Q$ in \cref{eq:scd-budget}; averaging and
selecting a coordinate cost $O(dm_b)$ per batch, so $T\ge d$ gives
the total cost $O(TQ)$ on every history.

For correctness, conditional sub-Gaussian concentration gives
\begin{align*}
 \Pr{|\widehat g_i-\partial_i f(\bm{x})|>\eta/4\mid\text{past}}
 &\le2\mathrm{e}^{-m_b\eta^2/(32\sigma^2)}
 \le\frac{\delta}{d(K+1)}.
\end{align*}
Thus, with probability at least $1-\delta$, every coordinate estimate
in every batch is accurate within $\eta/4$. On this event, any returned
point satisfies $\|\nabla f(\bm{x})\|_\infty\le3\eta/4<\eta$.
Each updated coordinate has $|\partial_i f(\bm{x})|>\eta/4$ and the same
sign $\tau\coloneqq \sgn(\widehat g_i)$ as its estimate. By
\cref{eq:coordinate-smoothness} and \cref{eq:scd-step},
\begin{align}
 f(\bm{x}-s\tau\bm{e}_i)
 &\le f(\bm{x})-s|\partial_i f(\bm{x})|+Ls^2/2\notag\\
 &\le f(\bm{x})-\frac{\eta^2}{64L}+\frac{\eta^2}{128L}
 =f(\bm{x})-\frac{\eta^2}{128L}.\label{eq:descent}
\end{align}
After $K$ updates this would give
$f(\bm{x})\le f(\bm{x}_0)-K\eta^2/(128L)<f(\bm{x}_0)-H_0$, contradicting
the lower bound. Hence the algorithm returns before making $K$ updates.
\end{proof}

\subsection{Arithmetic for the Counting Algorithm}\label{app:counting-arithmetic}
The field-finding and counting algorithms use only arithmetic
operations. Ceilings of binary logarithms are computed by repeated
doubling; other logarithms in sample counts can be replaced by these
upper bounds, changing only absolute constants.
In the coordinate-descent application, $L=\sigma=1$.
Take $s=\log(1+\eta/8)$, which lies in
$[\eta/16,\eta/8]$, and maintain $z_u$ directly: an update multiplies
or divides one field coordinate $z_u$ by $1+\eta/8$. Each update costs $O(1)$
operations, and neither $s$ nor $\bm{y}$ needs to be evaluated.
The self-reduction uses membership tests, empirical frequencies, and
products and quotients of oracle values, fields, and integer powers
of $t$, so it requires no further logarithms or exponentials.

\begin{remark}[Bit complexity]\label{rem:bit-complexity}
For rational input parameters and polynomial-time value oracles returning
rationals of polynomial encoding length, the algorithm in
\cref{thm:log-concave-convolution} has an implementation whose running time
is polynomial in the input length and $\eps^{-1}$ on every execution.
Indeed, the update rule above gives $z_u=(1+1/(32n))^{k_u}$ for integers
$k_u$, and the field bound \cref{eq:balancing-bounds} implies
$\sum_u|k_u|=O(n^2B)$. Each field coordinate can therefore be stored using
polynomially many bits. The probabilities for choosing which element
to add back are ratios of sums of at most $n$ weighted oracle values,
and empirical frequencies are
ratios of sample counts. The final estimate uses products and ratios
of these numbers, fields, and powers of $t$. All these numbers have
polynomially many bits.

{We simulate the required rational probabilities using fair coin flips;
see \cite[Proposition~13]{NacuPeres2005}.}
Each binary choice can be simulated with probability at most $2^{-r}$
of requiring more than $r$ bits. If $N$ is a polynomial bound on the
total number of binary choices in the homogeneous algorithm, aborting
and returning zero when any choice needs more than
$\lceil\log_2(64N)\rceil$ bits adds at most $1/64$ to its failure
probability. The success probability proved in \cref{sec:fpras} is at
least $15/16$, so it remains above $3/4$. Conditioning, amplification,
and summation over homogeneous layers preserve polynomial bit complexity.
In particular, the rational
six-vertex instances in \cref{app:quantum} satisfy these oracle assumptions.
\end{remark}

\subsection{Fractional Log-Concavity and Spectral Curvature}\label{app:flc-curvature}
\begin{proof}[Proof of \cref{prop:flc-curvature}]
Expanding $p(\mathbf1+\bm{x})$ using its nonnegative coefficients shows
that $\supp\widehat\mu$ is downward closed.
For $S\in\supp\widehat\mu$, set $q\coloneqq \partial_Sp$, so
$q(\mathbf1)=\widehat\mu(S)>0$.
As verified below, $q$ is $\alpha$-fractionally log-concave.
The coefficient ratios in \cref{def:spectral-curvature} and the
chain rule therefore give
\begin{align*}
 A_S^{\widehat\mu,\,1-\alpha^{-1}}
 &=-\nabla^2\log q(\mathbf1)
   -(1-\alpha^{-1})\operatorname{diag}(\nabla\log q(\mathbf1))\\
 &=-\alpha^{-2}
   \left.\nabla_{\bm{x}}^2\log q(x_1^\alpha,\ldots,x_n^\alpha)
   \right|_{\bm{x}=\mathbf1}\succeq0,
\end{align*}
which proves the asserted curvature bound.

It remains to verify that nonzero coordinate derivatives preserve
fractional log-concavity. We use the limiting argument of
\cite[Section~1.5]{AlimohammadiAnariShiragurVuong2021}.
Writing $\bm{x}^\alpha\coloneqq (x_1^\alpha,\ldots,x_n^\alpha)$, multiaffinity gives
\begin{align*}
 (\partial_i p)(\bm{x}^\alpha)
 &=\lim_{s\to\infty}s^{-\alpha}p\bigl((\bm{x}+s\bm{e}_i)^\alpha\bigr).
\end{align*}
Each function on the right is a positive multiple of a translate
of $\bm{x}\mapsto p(\bm{x}^\alpha)$, hence is log-concave. The log-concavity
inequality is preserved under the limit. Iterating proves the assertion
for every nonzero $\partial_Sp$.
\end{proof}

\begin{samepage}
\subsection{Examples of Spectral Curvature}\label{app:curvature-examples}
We show that every $\kappa\in\R$ is the largest admissible curvature of
some distribution.
For a positive value, take the product measure
$\mu(S)\propto\lambda^{|S|}$ with $\lambda>0$: the curvature matrix
on coordinates outside $S$ is $\lambda(\lambda-\kappa)I$, so the
largest admissible $\kappa$ is $\lambda$. For a nonpositive value,
take $\cU=\{1,2\}$ and
\[
 \bigl(\mu(\varnothing),\mu(\{1\}),\mu(\{2\}),\mu(\{1,2\})\bigr)
 =\frac{(1,1,1,M)}{3+M},\qquad M\ge2.
\]
The smallest eigenvalue of $A_{\varnothing}^{\mu,\kappa}$ is
$2-M-\kappa$, and the singleton matrices are positive semidefinite
exactly when $\kappa\le M$. Thus the largest admissible $\kappa$ is
$2-M$, giving every nonpositive value.
\end{samepage}

\subsection{Poincar\'e Inequality for Induced Chains}\label{app:induced-poincare}
\begin{proof}[Proof of \cref{prop:induced-poincare}]
Reversing paths whose intermediate states lie outside $B$ and applying
detailed balance gives $\pi(x)r^B(x,y)=\pi(y)r^B(y,x)$, proving
reversibility with respect to $\pi_B$.
To transfer the Poincar\'e inequality, extend $f:B\to\mathbb R$
to $\Omega$ by
\[
 h(x)\coloneqq\E{f(X_{\tau_B})\mid X_0=x}.
\]
Then $h=f$ on $B$. We claim that
\[
 (\cL_rh)(x)=
 \begin{cases}
  0,&x\notin B,\\
  (\cL_{r^B}f)(x),&x\in B.
 \end{cases}
\]
Assuming these identities, we have
$\cE_r(h,h)=\pi(B)\cE_{r^B}(f,f)$, while the law of total variance
gives $\Var[\pi]{h}\ge\pi(B)\Var[\pi_B]{f}$. Applying the original
Poincar\'e inequality yields
\begin{align*}
 \pi(B)\cE_{r^B}(f,f)
 &=\cE_r(h,h)\\
 &\ge\gamma\Var[\pi]{h}
 \ge\gamma\pi(B)\Var[\pi_B]{f}.
\end{align*}
Canceling $\pi(B)>0$ proves the desired inequality.

It remains to verify the two identities. For $x\notin B$, expanding
the definition gives
\begin{align*}
 (\cL_rh)(x)
 &=\sum_{y\ne x}r(x,y)(h(x)-h(y))\\
 &=\left(\sum_{y\ne x}r(x,y)\right)h(x)
   -\sum_{y\ne x}r(x,y)h(y)=0,
\end{align*}
where the last equality is justified as follows. Put
$q_x\coloneqq\sum_{z\ne x}r(x,z)>0$, and let
$\tau\coloneqq\inf\{s>0:X_s\ne X_0\}$ be the first jump time.
Starting from $x$, the first jump goes to $y\ne x$ with probability
$r(x,y)/q_x$. Since $x\notin B$, the chain has not hit $B$ before
$\tau$. If $X_\tau=y\in B$, then $\tau_B=\tau$ and
$f(X_{\tau_B})=f(y)=h(y)$. If $y\notin B$, the strong Markov property
at $\tau$ says that the remaining trajectory has the law of the
chain started at $y$, so the conditional expectation of
$f(X_{\tau_B})$ is again $h(y)$. Consequently,
\[
 h(x)=\E{h(X_\tau)\mid X_0=x}
 =\sum_{y\ne x}\frac{r(x,y)}{q_x}h(y).
\]
Multiplying by $q_x$ gives the cancellation above.

For $x\in B$, substituting
$h(u)=\sum_{y\in B}\Pr{X_{\tau_B}=y\mid X_0=u}f(y)$ for $u\notin B$
gives
\begin{align*}
 (\cL_rh)(x)
 &=\sum_{y\in B\setminus\{x\}}r(x,y)(f(x)-f(y))
   +\sum_{u\notin B}r(x,u)(f(x)-h(u))\\
 &=\sum_{y\in B\setminus\{x\}}r^B(x,y)(f(x)-f(y))
  =(\cL_{r^B}f)(x). \qedhere
\end{align*}
\end{proof}

\section{Background for Applications}\label{app:application-background}\label{app:matroid-101}

{
We recall the background used in \cref{sec:additional-application};
see \cite{Oxley2011} for standard matroid facts.

\subsection{\texorpdfstring{{Further Matroid Terminology and Examples}}{Further Matroid Terminology and Examples}}\label{app:matroid-basics}

{We use the definitions and notation from \cref{subsec:matroids}.}
A set is \emph{spanning} if it
contains a basis. An element $e$ is a \emph{loop} if
${\rk_{\cM}}(\{e\})=0$; distinct nonloops $e,f$ are \emph{parallel} if
${\rk_{\cM}}(\{e,f\})=1$. Together with equality, this partitions the
nonloops into parallel classes.

In a \emph{partition matroid}, $\cU=\bigsqcup_i E_i$ and independence
means $|I\cap E_i|\le b_i$ for prescribed nonnegative integer bounds.
In a \emph{graphic matroid}, independent edge sets are forests and
bases are spanning forests, one tree per connected component.
In a \emph{column matroid}, independent sets index linearly
independent columns of a matrix over a field.

\subsection{Operations and Oracles}\label{app:matroid-operations}

The \emph{restriction} $\cM|_A$ keeps the independent sets contained
in $A$; deletion of $A$ is restriction to $\cU\setminus A$.
The \emph{contraction} $\cM/A$ has ground set
$\cU\setminus A$. {The ranks of the dual and contraction satisfy}
\begin{align}
 {\rk_{\cM^*}}(T)&=|T|+{\rk_{\cM}}(\cU\setminus T)-{\rk_{\cM}}(\cU),
                 \label{eq:matroid-dual-rank}\\
 {\rk_{\cM/A}}(S)&={\rk_{\cM}}(S\cup A)-{\rk_{\cM}}(A)
                 \qquad(S\subseteq\cU\setminus A).\notag
\end{align}
Thus $T$ is independent in $\cM^*$ exactly when its complement is spanning.
If $B_A$ is a basis of $\cM|_A$, contraction independence is tested by
$B_A\cup S\in\cI(\cM)$, even when $A$ is dependent.

The \emph{direct sum} of matroids on disjoint ground sets has
independent sets whose restriction to each component is independent.
The \emph{union} of matroids on a common ground set has independent
sets admitting a partition
$I=I_1\sqcup\cdots\sqcup I_k$, with $I_i\in\cI(\cM_i)$.

{The rank formulas above, together with the greedy rank
computation in \cref{subsec:matroids}, give independence oracles for
duals and contractions.}
With separate independence oracles, matroid intersection finds a
maximum common independent set in polynomial time
\cite{Edmonds1979MatroidIntersection}, and matroid union has a
polynomial-time independence oracle
\cite[Section~4.1]{BercziFrank2010Packing}.

\subsection{Submodularity and Integral Polymatroids}\label{app:polymatroid-background}

A set function $f:2^{[m]}\to\mathbb R$ is \emph{submodular} if
\[
 f(A)+f(B)\ge f(A\cup B)+f(A\cap B)
 \qquad(A,B\subseteq[m]).
\]
It is \emph{monotone} if $A\subseteq B$ implies $f(A)\le f(B)$,
and \emph{normalized} if $f(\varnothing)=0$.
{Matroid rank functions are a special case of normalized
monotone submodular functions.} For such an integer-valued $f$, the
\emph{polymatroid base polytope} is
\[
 B(f)\coloneqq\left\{\bm{x}\in\mathbb R_{\ge0}^m:
 {\sum_{i=1}^m x_i}=f([m]),\quad
 {\sum_{i\in A}x_i}\le f(A)\ \forall A\subseteq[m]\right\}.
\]
Its vertices are integral~\cite[Section~2]{FrankMurota2019BasePolyhedra}.
{For $f=\rk_{\cM}$, its integer points are exactly the
basis indicators.}

\subsection{Log-Concave and Lorentzian Polynomials}\label{app:weight-background}

{Positive element weights preserve the log-concavity of the
basis polynomial recalled in \cref{subsec:matroids}, by variable rescaling:}
\[
 \sum_{B\in\cB(\cM)}\left(\prod_{e\in B}w_e\right)\bm{x}^B
 =g_{\cM}((w_ex_e)_{e\in\cU}).
\]
Zero weights are handled by deletion; if the original rank drops,
the weighted polynomial is zero. Products on disjoint variable sets
also preserve log-concavity and multiaffinity.

A nonzero homogeneous polynomial with nonnegative coefficients is
\emph{Lorentzian} if every iterated coordinate partial derivative,
including order zero, is either zero or log-concave on the positive
orthant~\cite[Theorem~2.30]{BrandenHuh2020}.
{
For a polynomial $p(z_1,\ldots,z_m)$, choose positive integers $c_i$
with $\deg_{z_i}p\le c_i$ and new variables $x_{i,1},\ldots,x_{i,c_i}$.
Its \emph{polarization} applies the rule
\[
 z_i^k\ \longmapsto\
 \frac{1}{\binom{c_i}{k}}
 \sum_{\substack{J\subseteq[c_i]\\|J|=k}}\prod_{j\in J}x_{i,j},
 \qquad 0\le k\le c_i,
\]
to each variable in each monomial and extends linearly.
This operation preserves the Lorentzian property
\cite[Proposition~3.1]{BrandenHuh2020}.

A nonzero, finitely supported weight
$\omega:\mathbb Z_{\ge0}^m\to\mathbb R_{\ge0}$ is
\emph{log-$\mathrm M$-concave} if, for some integer $r$,
$\sum_i k_i=r$ for every $\bm{k}\in\supp\omega$, and, for every
$\bm{x},\bm{y}\in\supp\omega$ and every $i$ with
$x_i>y_i$, there is a $j$ with $x_j<y_j$ such that
\begin{align}\label{eq:integer-weight-exchange}
 \omega(\bm{x})\omega(\bm{y})
 &\le
 \omega(\bm{x}-\bm{e}_i+\bm{e}_j)
 \omega(\bm{y}+\bm{e}_i-\bm{e}_j).
\end{align}
Here $\bm{e}_i$ is the $i$th coordinate unit vector.
For such a weight $\omega$,
the polynomial $\sum_{\bm{k}}\omega(\bm{k})\bm{z}^{\bm{k}}/\bm{k}!$
is Lorentzian, where $\bm{k}!=\prod_i k_i!$
\cite[Corollary~3.16]{BrandenHuh2020}.
}
}

\clearpage
\section{Applications}\label{sec:additional-application}

We apply the subset-convolution framework to counting and approximation
problems involving graphs, determinants, and partition functions. The common step is to encode the relevant weights and
constraints in two measures that satisfy the log-concavity or
spectral-curvature hypotheses of our convolution theorems.
{The shared matroid and log-concavity preliminaries are
collected in \cref{app:application-background}.}

\Cref{fig:application-reductions} summarizes the reductions to
\cref{thm:log-concave-convolution,thm:curvature-convolution}.
The corollaries below give the precise assumptions and guarantees.
We group the applications by the reductions they share.

\begin{center}
\begin{minipage}{\textwidth}
\centering
\begin{tikzpicture}[
 >=stealth',
 yscale=1.02,
 problem/.style={draw=black!45, rounded corners=2pt, fill=black!2,
                 text width=3.3cm, inner sep=1mm,
                 align=center, font=\fontsize{9}{10.5}\selectfont},
 wide/.style={problem, text width=3.5cm},
 result/.style={problem, text width=2.35cm, fill=NavyBlue!6,
                draw=NavyBlue!65, line width=.7pt},
 group/.style={draw=black!45, dashed, rounded corners=3pt,
               line width=.5pt},
 branch/.style={draw=black!60, line width=.55pt},
 flow/.style={branch, ->}
]
 \node[problem] (orientations) at (1.75,0)
   {Strong and $k$-arc-\\connected orientations\\
    {\footnotesize\cref{cor:strong-orientations}}};
 \node[problem] (euler) at (1.75,-1.5)
   {Undirected Euler tours\\via degree-four\\$A$-trails\\
    {\footnotesize\cref{cor:euler-tours}}};
 \node[problem] (linkage) at (1.75,-3)
   {$S$--$T$ linkage reliability\\
    {\footnotesize\cref{cor:network-reliability}}};
 \node[problem] (rooted) at (1.75,-4.5)
   {Directed root-to-all\\reliability\\
    {\footnotesize\cref{cor:rooted-reliability}}};
 \node[wide] (bases) at (5.85,-.75)
   {Weighted common\\bases\\
    {\footnotesize\cref{cor:weighted-common-bases}}};
 \node[wide] (availability) at (5.85,-3.75)
   {Random sets containing\\a common basis\\
    {\footnotesize\cref{lem:common-basis-availability}}};

 \draw[group] (-.08,-9.35) rectangle (7.8,-12.8);
 \node[fill=white, inner sep=2pt, font=\fontsize{9}{10.5}\selectfont]
   at (3.85,-9.35)
   {Determinantal applications (\cref{cor:determinantal-applications})};
 \node[problem] (mixed) at (1.75,-10.35)
   {PSD mixed discriminants};
 \node[wide] (constrained) at (5.85,-10.35)
   {Determinant sums over\\matroid bases, with fields};

 \node[problem] (quantum) at (1.75,-6.35)
   {Quantum Max-Cut\\{\footnotesize(bipartite graphs)}\\
    EPR {\footnotesize(arbitrary graphs)}\\optimal values\\
    {\footnotesize\cref{cor:quantum-values}}};
 \node[wide] (sixvertex) at (5.85,-6.35)
   {Six-vertex\\partition function\\
    {for weights satisfying\\the triangle inequalities}\\
    {\footnotesize\cref{cor:six-vertex}}};

 \node[problem] (gibbs) at (1.75,-8.35)
   {Matroid congestion\\Gibbs distributions\\
    {\footnotesize\cref{cor:congestion-gibbs}}};

%
 \node[problem] (vertices) at (1.75,-13.8)
   {Vertices of network-matrix\\$0/1$ polytopes\\
    {\footnotesize\cref{cor:network-polytope-vertices}}};
 \node[wide] (flows) at (5.85,-13.8)
   {Bounded integer flows\\with unary capacities\\
    {\footnotesize\cref{cor:bounded-integer-flows}}};
 \node[problem, text width=2.65cm] (contingency) at (9.65,-13.8)
   {Contingency tables\\with log-concave\\entry weights\\
    {\footnotesize\cref{cor:weighted-contingency-tables}}};
 \node[problem, text width=2.65cm] (integerintersection) at (13.3,-13.8)
   {Inner products of\\log-$\mathrm M$-concave\\weights\\
    {\footnotesize\cref{prop:integer-weight-intersection}}};

 \node[problem] (dpps) at (1.75,-12.15)
   {Products of two DPPs};
 \node[problem, text width=2.35cm] (constraint) at (9.65,-6.35)
   {Matroid-basis\\constraint\\
    {\footnotesize\cref{lem:matroid-constraint}}};
 \node[result] (convolution) at (13.3,-6.35)
   {\textbf{Subset\\convolution}\\
    \cref{thm:log-concave-convolution}};

 \node[problem] (rainbow) at (1.75,-15.3)
   {Rainbow matchings\\
    {\footnotesize\cref{cor:rainbow-matchings}}};
 \node[problem] (forests) at (1.75,-16.8)
   {Degree-constrained\\forests\\
    {\footnotesize\cref{cor:degree-constrained-forests}}};
 \node[wide] (matroidmatchings) at (5.85,-16.05)
   {$b$-matchings independent\\in a matroid\\
    {\footnotesize all sizes; \cref{app:curvature-applications}}};
 \node[result] (curvature) at (13.3,-16.05)
   {\textbf{Subset\\convolution}\\
    \cref{thm:curvature-convolution}};

 \foreach \source in {orientations,euler}
   \draw[branch] (\source.east) -- (3.75,0 |- \source.east);
 \draw[branch] (3.75,0) -- (3.75,-1.5);
 \draw[flow] (3.75,-.75) -- (bases.west);
 \foreach \source in {linkage,rooted}
   \draw[branch] (\source.east) -- (3.75,0 |- \source.east);
 \draw[branch] (3.75,-3) -- (3.75,-4.5);
 \draw[flow] (3.75,-3.75) -- (availability.west);
 \draw[flow] (availability.north) -- (bases.south);
 \foreach \source/\destination in
   {quantum/sixvertex,mixed/constrained,vertices/flows}
   \draw[flow] (\source.east) -- (\destination.west);
 \foreach \source in {bases,constrained,sixvertex,gibbs}
   \draw[branch] (\source.east) -- (8.05,0 |- \source.east);
 \draw[branch] (8.05,-.75) -- (8.05,-10.35);
 \draw[flow] (8.05,-6.35) -- (constraint.west);
 \draw[branch] (constraint.east) -- (11.5,-6.35);
 \draw[branch] (dpps.east) -- (11.5,0 |- dpps.east);
 \draw[branch] (11.5,-6.35) -- (11.5,-12.15);
 \draw[flow] (11.5,-6.35) -- (convolution.west);
 \draw[flow] (flows.east) -- (contingency.west);
 \draw[flow] (contingency.east) -- (integerintersection.west);
 \draw[flow] (integerintersection.north) -- (convolution.south);
 \foreach \source in {rainbow,forests}
   \draw[branch] (\source.east) -- (3.75,0 |- \source.east);
 \draw[branch] (3.75,-15.3) -- (3.75,-16.8);
 \draw[flow] (3.75,-16.05) -- (matroidmatchings.west);
 \draw[flow] (matroidmatchings.east) -- (curvature.west);
\end{tikzpicture}
\captionsetup{hypcap=false}
\captionof{figure}{Reduction diagram for the applications of
\cref{thm:log-concave-convolution,thm:curvature-convolution}.
Arrows point toward the problem or result used by each reduction.
Blue boxes highlight the two convolution theorems.
Some intermediate steps are omitted.}
\label{fig:application-reductions}
\end{minipage}
\end{center}
\clearpage

Unless stated otherwise, running-time statements below use the arithmetic model of
\cref{thm:log-concave-convolution,thm:curvature-convolution};
we omit bit-complexity analysis for simplicity.
The EPR-in-$\mathsf{BPP}$ conclusion still uses the
bit-complexity bound in \cref{rem:bit-complexity}.
We first record two reductions used throughout the appendix.
Products of polynomially many rational input weights, determinants
of rational input matrices, and factorials of polynomially bounded
integers have polynomial bit length. We use this observation below
to bound the logarithmic ranges of positive coefficients.

\begin{lemma}\label{lem:matroid-constraint}
Let $\omega:2^{\cU}\to\mathbb R_{\ge0}$ have a log-concave generating
polynomial, an evaluation oracle, a given set in $\supp\omega$, and
a supplied bound on $\cR_\omega$, as in
\cref{thm:log-concave-convolution}.
For any matroid $\cM$ on $\cU$ given by an independence oracle, there
is an $\mathsf{FPRAS}$ for
\[
 \sum_{S\in\cB(\cM)}\omega(S),
\]
with running time polynomial in $|\cU|$, $\log\cR_\omega$, and the
inverse accuracy, counting oracle calls as single operations.
\end{lemma}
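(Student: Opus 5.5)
The plan is to write the target sum as a subset convolution and apply \cref{thm:log-concave-convolution} with $\bm{\alpha}=\mathbf1$. Define $\mu\coloneqq\omega$ and $\nu(T)\coloneqq\allone[\cU\setminus T\in\cB(\cM)]$, which is the indicator of the bases of the dual matroid $\cM^*$. Then
\[
 Z_{\mathbf1}=[\bm{x}^{\cU}]\bigl(g_\mu(\bm{x})g_\nu(\bm{x})\bigr)=\sum_{S\subseteq\cU}\omega(S)\,\nu(\cU\setminus S)=\sum_{S\in\cB(\cM)}\omega(S),
\]
which is exactly the quantity in the statement. This is the same encoding used in the proof of \cref{cor:common-bases}, with the first indicator replaced by the general weight $\omega$.

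The next step is to check each hypothesis of \cref{thm:log-concave-convolution}.
\begin{itemize}
 \item $g_\mu=g_\omega$ is log-concave by assumption.
 \item $g_\nu=g_{\cM^*}$ is the basis generating polynomial of the matroid $\cM^*$, so it is log-concave by \cite{AnariOveisGharanVinzant2018,BrandenHuh2020}.
 \item The evaluation oracle for $\nu$ is a single independence query: $\cU\setminus T$ is a basis of $\cM$ exactly when it is independent and has size $\rk_{\cM}(\cU)$. The rank $\rk_{\cM}(\cU)$ is computed once by the greedy algorithm with $n$ queries.
 \item $\cR_\nu=1$, and $\cR_\mu=\cR_\omega$ is supplied.
 \item The starting set $S_0\in\supp\mu$ is given. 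A starting set $T_0\in\supp\nu$ is $\cU\setminus B_0$, where $B_0$ is a basis of $\cM$ found greedily.
\end{itemize}

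With all hypotheses in place, \cref{thm:log-concave-convolution} gives an $\mathsf{FPRAS}$ using
\[
 \widetilde O\bigl(n^{11}(n+\log\cR_\omega)^2\eps^{-2}\bigr)
\]
operations and oracle calls. Each $\nu$-call costs one independence query plus $O(n)$ arithmetic, so the total running time is polynomial in $|\cU|$, $\log\cR_\omega$, and $\eps^{-1}$.

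There is no serious obstacle; the argument is bookkeeping. The one point needing care is that the $\nu$ oracle must return exactly $0$ or $1$, which requires comparing against the correct rank $\rk_{\cM}(\cU)$, fixed in advance. If $\cM$ has no bases the sum is empty, but a matroid always has a basis, so this case cannot occur. If instead $\supp\omega$ contains no basis of $\cM$, the target sum is $0$. The $\mathsf{FPRAS}$ of \cref{thm:log-concave-convolution} already handles a zero value, since the underlying homogeneous algorithm returns zero whenever $Z_\cap=0$, so no separate case is needed.
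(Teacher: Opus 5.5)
Your proposal is correct and is essentially the paper's proof. The paper also sets $\nu(T)=\allone[\cU\setminus T\in\cB(\cM)]$, notes that $g_\nu$ is the log-concave basis polynomial of $\cM^*$ with $\cR_\nu=1$, initializes $\nu$ with the complement of a greedy basis, and applies \cref{thm:log-concave-convolution}. Your additional bookkeeping on the $\nu$-oracle and on the zero case is accurate.
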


\begin{proof}
Set $\nu(T)\coloneqq\allone[\cU\setminus T\in\cB(\cM)]$.
Its generating polynomial is the basis polynomial of $\cM^*$,
hence is log-concave, and $\cR_\nu=1$.
The complement of a greedy basis initializes $\nu$, and
$[\bm{x}^{\cU}](g_\omega g_\nu)$ is the desired sum.
Apply \cref{thm:log-concave-convolution}.
\end{proof}

\begin{corollary}\label{cor:weighted-common-bases}
Let $\cM,\cN$ be matroids of the same rank $r$ on $\cU$, given by
independence oracles, and let $w_e$ be nonnegative rational element
weights. There is an $\mathsf{FPRAS}$ for
\[
 \sum_{S\in\cB(\cM)\cap\cB(\cN)}\prod_{e\in S}w_e.
\]
\end{corollary}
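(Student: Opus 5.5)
The plan is to reduce directly to \cref{lem:matroid-constraint}, taking $\omega$ to be the weighted basis measure of $\cM$. First handle zero weights. Let $Z\coloneqq\{e:w_e=0\}$ and delete $Z$ from both matroids, using restriction to $\cU\setminus Z$. If deleting $Z$ lowers the rank of either matroid, the weighted sum is $0$, since every common basis would then meet $Z$. We detect this with the greedy rank computation of \cref{subsec:matroids}. Otherwise, the common bases avoiding $Z$ are exactly the common bases of the restrictions, so we may assume every $w_e>0$.

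Next define $\omega(S)\coloneqq\allone[S\in\cB(\cM)]\prod_{e\in S}w_e$. Then $g_\omega(\bm{x})=g_{\cM}((w_ex_e)_{e\in\cU})$. This is log-concave by the basis-polynomial result of \cite{AnariOveisGharanVinzant2018,BrandenHuh2020} together with the variable-rescaling remark in \cref{app:weight-background}. The remaining hypotheses of \cref{lem:matroid-constraint} are easy to check:
\begin{itemize}
\item an evaluation oracle for $\omega$ takes one independence query (a set is a basis iff it is independent of size $r$) plus $O(r)$ multiplications;
\item a greedy basis of $\cM$ gives a set in $\supp\omega$;
\item we can take $\cR_\omega\le(w_{\max}/w_{\min})^r$, so $\log\cR_\omega\le r\log(w_{\max}/w_{\min})$, which is polynomial in the bit length of the rational input weights.
\end{itemize}
Applying \cref{lem:matroid-constraint} with the matroid $\cN$ then estimates $\sum_{S\in\cB(\cN)}\omega(S)$, which is exactly the desired weighted count of common bases.

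The only delicate point is bookkeeping. The running time of \cref{lem:matroid-constraint} is polynomial in $\log\cR_\omega$, so this quantity must be polynomial in the input size. Positive rational weights of polynomial encoding length give $\log(w_{\max}/w_{\min})$ polynomial, so the bound above suffices. The deletion step is what ensures $\omega$ has positive weights on its support, which keeps $\cR_\omega$ finite. Without it, zero weights would either violate finiteness of $\cR_\omega$ or change the support. Restricting instead keeps the basis polynomial log-concave: if the rank is unchanged, the restriction is again a matroid basis polynomial.
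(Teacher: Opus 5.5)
Your proposal is correct and follows the paper's proof step for step: delete zero-weight elements (returning zero if either rank drops), take $\omega(S)=\allone[S\in\cB(\cM)]\prod_{e\in S}w_e$, which is log-concave by positive rescaling, initialize with a greedy basis, bound $\log\cR_\omega$ by the weight encoding length, and apply \cref{lem:matroid-constraint} with $\cN$. One minor quibble: $\cR_\omega$ is taken over $\supp\omega$, so zero weights would not make it infinite; the deletion matters because it guarantees that a greedy basis lies in $\supp\omega$.
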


\begin{proof}
Delete zero-weight elements; if either rank drops below $r$, return
zero. The weight
$\omega(S)\coloneqq\allone[S\in\cB(\cM)]\prod_{e\in S}w_e$
has a log-concave generating polynomial by positive variable rescaling.
A greedy basis initializes it, and its logarithmic coefficient range
is polynomial in the weight encoding length.
Apply \cref{lem:matroid-constraint} with the matroid $\cN$.
\end{proof}

To count random subsets that contain a common basis, we use
the following consequence of Rado's theorem~\cite{Rado1942}.

\begin{theorem}[Rado matroids {\cite[Theorem~11.2.12]{Oxley2011}}]\label{thm:rado-matroid}
Let $G$ be a finite bipartite graph with left vertex set $L$ and right
vertex set $R$, and let $\mathcal R$ be a matroid on $R$.
There is a matroid $\mathcal L$ on $L$ such that $A\subseteq L$ is an independent set of $\mathcal L$
exactly when $G$ has a matching whose left endpoints are exactly $A$
and whose right endpoints form an independent set of $\mathcal R$.
\end{theorem}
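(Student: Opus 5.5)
We plan to prove the theorem through a rank function. We will give a candidate rank function $\rho$ on $2^L$ in closed form, check that it satisfies the rank axioms, and check that its independent sets are the matchable sets in the statement. To avoid clashing with the matroid $\mathcal L$ in the statement, write $N(X)\subseteq R$ for the neighbourhood of $X\subseteq L$ in $G$. For $A\subseteq L$, call $A$ \emph{$\mathcal R$-matchable} if some matching of $G$ has left endpoint set exactly $A$ and right endpoint set independent in $\mathcal R$. Define
\[
 \rho(A)\coloneqq\max\{|A'|:A'\subseteq A\text{ is }\mathcal R\text{-matchable}\}.
\]
Taking $A'=\varnothing$ shows the maximum is over a nonempty family.

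The first step is to recast $\rho(A)$ as a matroid intersection value on the edge set $E$ of $G$. Let $P$ be the partition matroid on $E$ in which each left vertex is incident to at most one chosen edge. Let $Q$ be the matroid on $E$ pulled back from $\mathcal R$ along the right-endpoint map $e\mapsto r(e)$. A set $F$ is independent in $Q$ exactly when $r$ is injective on $F$ and $r(F)$ is independent in $\mathcal R$. Its rank function is $\rk_Q(F)=\rk_{\mathcal R}(r(F))$, so $Q$ is a matroid: it is $\mathcal R$ with every element replaced by a parallel class.

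A set $F\subseteq E$ that is independent in both $P$ and $Q$ is a matching whose right endpoint set is $\mathcal R$-independent. Therefore $\rho(A)$ equals the largest common independent set of $P$ and $Q$ restricted to $E_A$, the edges incident to $A$. Edmonds' matroid intersection min--max theorem then gives Rado's formula
\[
 \rho(A)=\min_{X\subseteq A}\bigl(|A\setminus X|+\rk_{\mathcal R}(N(X))\bigr).
\]
To derive it from Edmonds' formula $\min_{F\subseteq E_A}\bigl(\rk_P(F)+\rk_Q(E_A\setminus F)\bigr)$, we will show that an optimal $F$ can be taken to be the set of all edges at $A\setminus X$ for some $X\subseteq A$. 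The argument is that enlarging $F$ to the full edge set at each left vertex it touches does not increase $\rk_P(F)$ and can only decrease $\rk_Q(E_A\setminus F)$.

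With the formula in hand we verify the rank axioms. The bounds $0\le\rho(A)\le|A|$ are immediate, and monotonicity holds because every matchable subset of $A$ is also a matchable subset of any superset of $A$. The main obstacle is submodularity. We plan to show it from the min formula as follows. Take minimizers $X\subseteq A$ and $Y\subseteq B$, and use $X\cup Y$ as a candidate for $A\cup B$ and $X\cap Y$ as a candidate for $A\cap B$. The $\mathcal R$-terms are handled by submodularity of $\rk_{\mathcal R}$ together with
\[
 N(X\cup Y)=N(X)\cup N(Y),\qquad N(X\cap Y)\subseteq N(X)\cap N(Y),
\]
and monotonicity of $\rk_{\mathcal R}$. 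The cardinality terms are handled by a pointwise count: for each $u\in A\cup B$, the number of sets among $(A\cup B)\setminus(X\cup Y)$ and $(A\cap B)\setminus(X\cap Y)$ containing $u$ is at most the number among $A\setminus X$ and $B\setminus Y$ containing $u$. We will check this count case by case according to which of $A,B,X,Y$ contain $u$.

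Finally, a function satisfying these axioms is the rank function of a matroid on $L$, whose independent sets are the sets $A$ with $\rho(A)=|A|$. By the definition of $\rho$, these are exactly the $\mathcal R$-matchable sets, since a subset of $A$ of size $|A|$ is $A$ itself. This matroid is the required $\mathcal L$. If the reduction from Edmonds' formula to Rado's formula turns out to be messier than expected, the fallback is to prove augmentation directly. Given $\mathcal R$-matchable $I,J$ with $|I|<|J|$, apply matroid intersection augmentation for $P$ and $Q$ on the edges incident to $I\cup J$, starting from the matching that witnesses $I$. Augmenting paths in the exchange graph keep all current left vertices covered and add exactly one new left vertex, which necessarily lies in $J\setminus I$.
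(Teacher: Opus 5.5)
Your argument is correct. It is also a different route from the paper's, which gives no proof of this statement and simply cites Oxley.

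\textbf{The standard route.} One takes the increasing submodular function $f(X)=\rk_{\mathcal R}(N(X))$ on $2^L$. One then forms the matroid whose independent sets are the $A$ with $|A'|\le f(A')$ for every nonempty $A'\subseteq A$. Finally, Rado's independent-transversal theorem identifies these sets with the $\mathcal R$-matchable ones.

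\textbf{Your route.} You define $\rho$ directly as a maximum. You obtain Rado's rank formula from Edmonds' min--max theorem applied to $P$ and the parallel extension $Q$ on the edge set. You then verify submodularity by hand, which in effect reproves the submodular-function construction for this particular $f$. This is short and fits the paper's toolkit, since the paper already uses matroid intersection. The cost is that it rests on the heavier intersection theorem.

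\textbf{A dependency to watch.} Some textbook developments deduce the intersection theorem from matroid union, and the standard proof of matroid union uses exactly this induced-matroid statement. To avoid circularity, you should cite a proof of Edmonds' theorem that does not pass through it. The augmenting-path argument, which your fallback already uses, is one such proof.

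\textbf{Two small simplifications.}
\begin{enumerate}
\item Your cardinality terms satisfy equality, not just inequality. Since $X\cup Y\subseteq A\cup B$ and $X\cap Y\subseteq A\cap B$, we get $|(A\cup B)\setminus(X\cup Y)|+|(A\cap B)\setminus(X\cap Y)|=|A|+|B|-|X|-|Y|=|A\setminus X|+|B\setminus Y|$. So no case analysis is needed.
\item For $\min_F=\min_X$ you should also record the easy reverse bound: the edges at $A\setminus X$ have $P$-rank at most $|A\setminus X|$. Equivalently, note directly that $\rho(A)\le|A\setminus X|+\rk_{\mathcal R}(N(X))$ for every $X\subseteq A$.
\end{enumerate}
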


\begin{lemma}\label{lem:common-basis-availability}
Let $\cM,\cN$ be matroids of the same rank on $\cU$, given by
independence oracles. Retain each element $u$ independently with
rational probability $p_u\in[0,1]$. There is an $\mathsf{FPRAS}$ for
the probability that the retained set contains a common basis.
\end{lemma}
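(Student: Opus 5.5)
We express the target probability as a weighted common-basis count on an auxiliary pair of matroids, and then apply \cref{cor:weighted-common-bases}. Let $R$ denote the random retained set, so that $\Pr{R\in A}=\prod_{u\in A}p_u\prod_{u\notin A}(1-p_u)$. The event ``$R$ contains a common basis'' is monotone but is not itself a sum over common bases, so a direct sum would overcount. To avoid this, we encode the retained set through a matroid union. Introduce a disjoint copy $\cU'$ of $\cU$; the copy $u'$ will mean ``$u$ is discarded''. Writing $r$ for the common rank, we want to count each outcome $R$ once, provided $R$ spans a common basis. The natural formulation is that $R\supseteq B$ with $B\in\cB(\cM)\cap\cB(\cN)$. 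Equivalently, $\cU\setminus R$ lies in the complement of some common basis.

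Rather than attempt an inclusion--exclusion, I would build two matroids $\cM_1,\cN_1$ on the doubled ground set $\cU\sqcup\cU'$ such that common bases correspond bijectively to outcomes $R$ in which a common basis is available. Take $X=R\sqcup\{u':u\notin R\}$, which is a transversal of the pairs $\{u,u'\}$. Let $\cM_1$ be the partition matroid whose bases are exactly these transversals, with weights $p_u$ on $u$ and $1-p_u$ on $u'$. Let $\cN_1$ be a matroid whose bases, among sets of size $n$, are exactly the transversals $X$ with $X\cap\cU$ containing a common basis of $\cM,\cN$. Then the weighted common-basis sum over $\cM_1,\cN_1$ equals the desired probability, and \cref{cor:weighted-common-bases} applies once an independence oracle for $\cN_1$ is available.

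The main obstacle is constructing $\cN_1$, because ``containing a common basis'' is not a matroid condition in general. Here \cref{thm:rado-matroid} is the tool. The family of sets $A$ such that $A$ contains a common basis corresponds, by matroid intersection duality, to a matching condition. I would build a bipartite graph with left side $\cU\sqcup\cU'$ and right side two copies of $\cU$ carrying $\cM$ and $\cN^*$-type matroids, connected by the direct sum. A common basis $B$ together with its complement gives a matching certificate exactly when $B\subseteq R$. Concretely, match $u\in B$ to its copy in an $\cM$-side, and match every other element, whether it is $u'$ or $u\in R\setminus B$, into an $\cN^*$-side copy so that the $\cN$-basis condition on $B$ is enforced. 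This yields the Rado-induced matroid $\mathcal L=\cN_1$ on the left. Its independence oracle is computable via matroid intersection and union (\cref{app:matroid-operations}), so it is polynomial-time.

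The delicate step is checking that the bases of $\mathcal L$ restricted to transversals are exactly the desired ones, with neither extra nor missing sets. If the bijection fails, I would instead fall back on intersecting $\cM_1$ with the Rado matroid and counting via \cref{lem:matroid-constraint}, using for $\omega$ the log-concave product weight of $\cM_1$. Zero probabilities are handled by deleting the corresponding copies. The rational weights have polynomial encoding length, so the coefficient range is bounded.
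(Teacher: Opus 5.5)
Your construction is the paper's: you intersect the partition matroid on the doubled ground set (at most one of the retained and discarded copies of each $u$) with the Rado matroid induced from $\cM\oplus\cN^*$, in which the retained copy of $u$ is adjacent to both right-hand copies of $u$ and the discarded copy only to the $\cN^*$-side copy, and then you apply \cref{cor:weighted-common-bases}. The step you flag as delicate goes through directly, so no fallback is needed. Each left vertex sees only copies of its own element, so a matching covering a transversal $X_R$ uses exactly one right copy of each $u$. Its right endpoints are therefore the $\cM$-side copies of some $B\subseteq R$ together with the $\cN^*$-side copies of $\cU\setminus B$. Since $|X_R|=|\cU|=\rk(\cM\oplus\cN^*)$, this set must be a basis of the direct sum, i.e.\ $B\in\cB(\cM)\cap\cB(\cN)$. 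The same matching idea shows the Rado matroid has rank $|\cU|$, as the corollary requires. The paper implements the Rado oracle by a single matroid intersection on restrictions and contractions of $\cM,\cN$.
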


\begin{proof}
We encode the event by weighted common bases of two new matroids.
Put $m\coloneqq|\cU|$, and let
$\cU^+=\{u^+:u\in\cU\}$ and $\cU^-=\{u^-:u\in\cU\}$ be two
disjoint copies of $\cU$. Let $\mathcal H$ be the direct sum of copies
of $\cM$ and $\cN^*$ on $\cU^+$ and $\cU^-$, respectively; it has rank $m$.
Form a bipartite graph with left vertex set
$\mathcal X\coloneqq\{x_u^0,x_u^1:u\in\cU\}$ and right vertex set
$\cU^+\cup\cU^-$. The new elements $x_u^0,x_u^1$ represent $u$ being
absent or retained, respectively. Join $x_u^0$ only to $u^-$, and join
$x_u^1$ to both $u^-$ and $u^+$.

Apply \cref{thm:rado-matroid} with $L=\mathcal X$,
$R=\cU^+\cup\cU^-$, and $\mathcal R=\mathcal H$, and denote the
resulting matroid $\mathcal L$ by $\mathcal Q$.
A subset is independent in $\mathcal Q$
exactly when it can be matched to an independent set of $\mathcal H$.
Let $\mathcal P$ be the partition
matroid on $\mathcal X$ allowing at most one of $x_u^0,x_u^1$ for each $u$.
Both matroids have rank $m$: any basis of $\mathcal H$ can be matched
to distinct left vertices by matching $u^+$ to $x_u^1$ and $u^-$ to $x_u^0$.

The retained set $F\subseteq\cU$ has the unique representation
$X_F\coloneqq\{x_u^1:u\in F\}\cup\{x_u^0:u\notin F\}$.
Since $|X_F|=m={\rk}(\mathcal H)$, $X_F$ is independent
in $\mathcal Q$ exactly when it can be matched to a basis of $\mathcal H$.
A matching covering $X_F$ uses {exactly one of $u^+$ and $u^-$ for each
element $u$}, so its matched right vertices form a set
$B^+\cup(\cU\setminus B)^-$ with
$B\subseteq F$. This set is a basis of $\mathcal H$ exactly when
$B$ is a common basis of $\cM,\cN$. Consequently,
\[
 X_F\in\cB(\mathcal P)\cap\cB(\mathcal Q)
 \quad\Longleftrightarrow\quad
 F\text{ contains a common basis of }\cM,\cN.
\]
Giving $x_u^1,x_u^0$ weights $p_u,1-p_u$ makes the weighted
common-bases sum the required probability. Each $F$ is counted once,
regardless of the number of common bases it contains.

It remains to implement an independence oracle for $\mathcal Q$.
We do so by matroid intersection on restrictions and contractions
of $\cM$ and $\cN$. For a queried set $Y\subseteq\mathcal X$, let $S_2$
contain the elements $u$ for which both $x_u^0,x_u^1$ belong to $Y$,
let $S_0$ contain those for which neither belongs to $Y$, and let
$S_1\coloneqq\{u:x_u^1\in Y,\ x_u^0\notin Y\}$.
The possible sets of right endpoints of a matching with left endpoint set $Y$ are exactly
\[
 (S_2\cup B)^+\cup\bigl(\cU\setminus(S_0\cup B)\bigr)^-,
 \qquad B\subseteq S_1.
\]
Here $B$ consists of the elements $u\in S_1$ for which $x_u^1$ is matched to $u^+\in\cU^+$.
By duality, the displayed set is independent in $\mathcal H$ exactly when
$S_2\cup B$ is independent in $\cM$ and $S_0\cup B$ contains a basis of $\cN$.

Reject if $S_2$ is dependent in $\cM$. Otherwise, apply matroid intersection
to $(\cM/S_2)|_{S_1}$ and $(\cN/S_0)|_{S_1}$, and accept exactly when they have a
common independent set of size $k\coloneqq {\rk_{\cN}}(\cU)-{\rk_{\cN}}(S_0)$,
where ${\rk_{\cN}}$ is the rank function of $\cN$.
The required oracles use only the original independence oracles:
test $I\subseteq S_1$ in $\cM/S_2$ by testing $S_2\cup I$ in $\cM$, and
test it in $\cN/S_0$ by testing $B_0\cup I$ in $\cN$, where $B_0$ is a
basis of $\cN|_{S_0}$ found greedily. This gives a polynomial-time oracle for
$\mathcal Q$, so
\cref{cor:weighted-common-bases} applies.
\end{proof}

\subsection{Strong Orientations}\label{app:strong-orientations}

Let $G=(V,E)$ be a loopless undirected multigraph, with
$|V|\ge2$, and let $k\ge1$ be an integer. An orientation {of the edges of $G$} is
\emph{$k$-arc-connected} if {for every nonempty proper subset $X\subsetneq V$,
there are at least $k$ arcs directed from $V\setminus X$ into $X$}.
Give each direction of each labeled edge a
nonnegative rational weight $w_{e,v}$, where $v$ is its head. Define
\[
 Z_k(G;\bm{w})\coloneqq
 \sum_{D\text{ a }k\text{-arc-connected orientation}}
       \prod_{e\in E}w_{e,\operatorname{head}_D(e)}.
\]
The case $k=1$ counts strong orientations, and zero weights allow fixed
directions.

\begin{corollary}\label{cor:strong-orientations}
There is an $\mathsf{FPRAS}$ for $Z_k(G;\bm{w})$, even when $k$
is part of the input in binary.
\end{corollary}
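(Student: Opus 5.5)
We reduce to \cref{cor:weighted-common-bases}, following the classical Frank/Edmonds encoding of $k$-arc-connected orientations as common bases of two matroids. The ground set is the set of arc choices $\cU\coloneqq\{(e,v): e\in E,\ v\text{ an endpoint of }e\}$; for a parallel class, each labeled edge contributes its own two elements. An orientation $D$ is the set $X_D$ consisting of one element $(e,\operatorname{head}_D(e))$ per edge. The first matroid is the partition matroid $\mathcal P$ allowing at most one of $(e,u),(e,v)$ per edge $e=uv$. Its bases are exactly the orientations, and its rank is $|E|$.

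For the second matroid we use Edmonds' theorem on disjoint arborescences. Fix a root $r\in V$. A digraph $D$ is $k$-arc-connected if and only if it contains $k$ arc-disjoint spanning arborescences rooted at $r$ and also $k$ arc-disjoint spanning in-arborescences into $r$. This is equivalent to $\varrho_D(X)\ge k$ and $\delta_D(X)\ge k$ for every $r$-avoiding nonempty set $X$; since $\delta_D(X)=\varrho_D(V\setminus X)$, this covers all cuts.

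A cleaner route, and the one I would try first, is Frank's characterization via the ``head-count'' condition. Write $i_G(X)$ for the number of edges with both ends in $X$, and $\operatorname{in}(X)$ for the number of arcs of $D$ entering $X$. An orientation is $k$-arc-connected iff, for every nonempty proper $X$, the number of arcs with head in $X$ is at least $i_G(X)+k$. This is because that head count equals $i_G(X)+\operatorname{in}(X)$. Requirements of the form $|X_D\cap H(X)|\ge p(X)$ are handled by the standard complementary formulation. The set of arcs whose heads lie outside each set is constrained above by a crossing-submodular function, and the family of in-degree vectors of $k$-arc-connected orientations forms the integer points of a base polyhedron of a submodular function $b$. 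I would use this base polyhedron (Frank).

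This does not immediately yield a matroid on $\cU$. The remedy is to let $\cN$ be the matroid on $\cU$ whose bases are the arc sets whose head-multiplicity vector $(|X_D\cap\delta(v)\text{-heads}|)_v$ lies in $B(b)$. This is the induced (transversal-type) matroid of the polymatroid $b$ through the map $(e,v)\mapsto v$. Since each vertex receives its copies from a parallel ``expansion'' of the polymatroid, $\cN$ is a matroid. It is obtained from a polymatroid by replacing each $v$ with the set of elements $(e,v)$, which gives the matroid with rank function $\rk_\cN(A)=\min_{Y\subseteq A}\bigl(b(\mathrm{heads}(Y))+|A\setminus Y|\bigr)$. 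The common bases of $\mathcal P$ and $\cN$ are then exactly the $k$-arc-connected orientations. Weighting each element $(e,v)$ by $w_{e,v}$ turns the common-basis sum into exactly $Z_k(G;\bm w)$, so \cref{cor:weighted-common-bases} gives the $\mathsf{FPRAS}$. The logarithmic coefficient range is polynomial in the encoding length of the weights. If $k$ exceeds the maximum degree, we return $0$ directly; otherwise $k\le|E|$, so binary $k$ costs nothing extra.

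The main obstacle is implementing an independence oracle for $\cN$ in polynomial time. This requires evaluating $b$, defined via $\min$ over cuts of $i_G$-type expressions with $k$, or equivalently testing whether a partial in-degree assignment extends to a $k$-arc-connected orientation. I would implement the test by checking $|A\cap\mathrm{heads}^{-1}(Z)|\le |E|-i_G(V\setminus Z)-k$ for all proper $Z$, via submodular function minimization or a max-flow computation in the orientation-feasibility network (Frank's orientation theorem). The remaining correctness checks amount to verifying that this family of upper bounds defines a matroid and matches the base condition exactly.
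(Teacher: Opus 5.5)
Your two matroids are the paper's: the same incidence ground set and partition matroid, and $\cN$ induced from the polymatroid whose bases are the in-degree vectors of $k$-arc-connected orientations. Your rank formula agrees with the paper's $\rk_{\cN}(Y)=\min_{X\subseteq V}\bigl(f(X)+|Y\setminus\cU_X|\bigr)$, where $\cU_X=\mathrm{heads}^{-1}(X)$. This holds provided $b$ is the monotone, fully submodular function $b(X)=f(X)=\max_D\sum_{v\in X}d^-_D(v)$.

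The genuine gap is at the step you flag yourself. You propose to decide independence by checking $|A\cap\mathrm{heads}^{-1}(Z)|\le\bar p(Z)\coloneqq|E|-i_G(V\setminus Z)-k$ for all nonempty proper $Z$, and you treat this as equivalent to evaluating $b$. It is not equivalent. The function $\bar p$ is only crossing submodular. Together with $x(V)=|E|$, these inequalities do cut out the in-degree vectors of $k$-arc-connected orientations. But they do not describe the down-closure $P(b)$, which is what independence in $\cN$ means, because $b$ can be strictly smaller than $\bar p$.

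Take the bowtie (two triangles sharing a vertex $c$) with $k=1$. Every strong orientation has $d^-(c)=2$, so $b(\{c\})=2$, whereas $\bar p(\{c\})=6-2-1=3$. Three incidences with head $c$ pass all your inequalities but are dependent in $\cN$. Worse, your inequalities need not define a matroid at all. In $K_5$ with $k=1$, $\bar p$ equals $3,6,8,9$ on sets of size $1,2,3,4$. The head-count vectors $(3,2,2,2,2)$ and $(3,3,2,1,1)$ both satisfy every inequality, and adding any further incidence violates one. Yet these maximal sets have sizes $11$ and $10$. So \cref{cor:weighted-common-bases} cannot be run with your oracle, even though on complete orientations your inequalities do characterize $k$-arc-connectivity.

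To close the gap you must work with $b$ itself. The paper evaluates $f(X)$ as $|E|$ minus the minimum cost of a $k$-arc-connected orientation in which arcs with head in $X$ cost $0$ and all others cost $1$. This is polynomial-time solvable; the paper cites Iwata--Kobayashi. It then computes $\rk_{\cN}(Y)$ by submodular function minimization, since adding $|Y\setminus\cU_X|=|Y|-\sum_{v\in X}|Y\cap\cU_{\{v\}}|$ to $f(X)$ preserves submodularity. Equivalently, $A$ is independent exactly when some $k$-arc-connected orientation has $d^-_D(v)\ge|A\cap\cU_{\{v\}}|$ for every $v$. That is a submodular-flow feasibility question, not an ordinary max-flow.

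You also need a nonemptiness check before $b$, and hence $\cN$, is even defined. Test Nash-Williams' condition that every nontrivial cut has at least $2k$ edges with one global minimum-cut computation, and return $0$ if it fails. Your maximum-degree check bounds $k\le|E|$ for binary input, but it does not guarantee that a $k$-arc-connected orientation exists.
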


{
The following form of the classical common-bases representation
\cite[Lecture~16]{Goemans2012Orientations} follows from the indegree
characterization in \cite{FrankMurota2019BasePolyhedra} and the matroid
construction of McDiarmid~\cite{McDiarmid1973Independence}; see also
\cite[Theorem~2.1]{BoninChunFife2023NaturalMatroid}.

\begin{lemma}[Common bases for strong orientations]\label{lem:orientation-common-bases}
Let $G=(V,E)$ be a loopless undirected multigraph with $|V|\ge2$, and
let $k\ge1$ be an integer. Suppose every nontrivial undirected cut
has size at least $2k$. Put $m\coloneqq|E|$, and define
\[
 \cU\coloneqq\{(e,v):e\in E,\ v\in e\},
 \qquad \cU_X\coloneqq\{(e,v)\in\cU:v\in X\}.
\]
For an orientation $D$, let $d^-_D(v)$ be the number of edges directed
into $v$. The function
\[
 f(X)\coloneqq\max_{D\text{ a }k\text{-arc-connected orientation}}
                    \sum_{v\in X}d^-_D(v)
 \qquad(X\subseteq V)
\]
is integer-valued, monotone, and submodular, with $f(\emptyset)=0$
and $f(V)=m$. Let $\cM$ be the partition matroid on $\cU$ allowing
at most one incidence per edge. There is a matroid $\cN$ on $\cU$
with rank function
\[
 {\rk_{\cN}}(Y)=\min_{X\subseteq V}\bigl(f(X)+|Y\setminus\cU_X|\bigr)
 \qquad(Y\subseteq\cU).
\]
Both matroids have rank $m$, and the map
\[
 D\longmapsto A_D\coloneqq\{(e,\operatorname{head}_D(e)):e\in E\}
\]
is a bijection from the $k$-arc-connected orientations of $G$ to
$\cB(\cM)\cap\cB(\cN)$.
\end{lemma}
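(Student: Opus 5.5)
The plan has four steps. First we analyse $f$ through Frank's characterization of in-degree vectors. Then we apply the standard polymatroid-to-matroid construction (McDiarmid) to obtain $\cN$. We verify the ranks. Finally we identify common bases with orientations.

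\textbf{Step 1: the in-degree polytope.} An orientation $D$ of $G$ is $k$-arc-connected exactly when its in-degree vector $\bm{d}=d^-_D$ satisfies two conditions:
\begin{itemize}
\item $\sum_{v\in X}d(v)\ge |E(X)|+k$ for every nonempty proper $X\subsetneq V$, where $E(X)$ is the set of edges with both ends in $X$;
\item $\sum_v d(v)=m$.
\end{itemize}
Conversely, by Hakimi's theorem, every integer vector meeting these conditions is realized by some orientation. Frank's theorem (cited via \cite{FrankMurota2019BasePolyhedra}) says that these vectors are exactly the integer points of a base polyhedron $B(f)$. Equivalently, the lower bounds $|E(X)|+k$ form an intersecting supermodular function. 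Nonemptiness holds by Nash-Williams, since every cut has size at least $2k$. The upper function of the base polyhedron is $f(X)=\max\sum_{v\in X}d(v)$ over its integer points. This $f$ is integer-valued, monotone (in-degrees are nonnegative), submodular (it is the support function of a base polyhedron), and satisfies $f(\emptyset)=0$ and $f(V)=m$. I would invoke the cited results rather than reprove them. I expect the main obstacle to be checking that the set of achievable in-degree vectors is exactly the integer points of $B(f)$, so that $f$ is genuinely the submodular function defining it; this is precisely the content of the cited indegree characterization.

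\textbf{Step 2: constructing $\cN$.} Each vertex $v$ has its set of incidences $\cU_{\{v\}}$. Given the integer polymatroid $f$ on $V$ and this assignment of ground sets to the blocks $\cU_X=\bigcup_{v\in X}\cU_{\{v\}}$, McDiarmid's construction yields a matroid with rank function
\[
\rk_{\cN}(Y)=\min_{X\subseteq V}\bigl(f(X)+|Y\setminus\cU_X|\bigr).
\]
Its independent sets are those $Y$ with $|Y\cap\cU_X|\le f(X)$ for all $X$. This is the ``induced matroid'' from a polymatroid.

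\textbf{Step 3: ranks.} We have $\rk_{\cN}(\cU)=\min_X\bigl(f(X)+|\cU\setminus\cU_X|\bigr)$. Taking $X=V$ gives the value $m$. For other $X$, note that $|\cU\setminus\cU_X|$ counts the incidences at vertices of $V\setminus X$, which is at least the number of edges meeting $V\setminus X$. Combining $f(X)\ge f(V)-f$-type bounds with monotonicity, and using that the in-degrees of any fixed orientation are bounded by incidences, shows the minimum is at least $m$. The partition matroid $\cM$ has rank $m$ trivially.

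\textbf{Step 4: the bijection.} A set $A\in\cB(\cM)$ picks exactly one incidence per edge, so it is $A_D$ for a unique orientation $D$. Moreover $|A_D\cap\cU_X|=\sum_{v\in X}d^-_D(v)$. Since $|A_D|=m=\rk\cN$, $A_D$ is a basis of $\cN$ iff it is independent, i.e.\ iff $d^-_D(X)\le f(X)$ for all $X$. Together with $d^-_D(V)=m=f(V)$, this is equivalent to $d^-_D\in B(f)$. By Step 1, that holds iff $D$ is $k$-arc-connected. Injectivity of $D\mapsto A_D$ is clear, which completes the bijection.
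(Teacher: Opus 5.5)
Your proposal is correct and follows the route the paper indicates. The paper does not prove the lemma itself but cites Frank's in-degree characterization (via Frank--Murota) and McDiarmid's induced-matroid construction, and your Step 4 is exactly the intended bijection argument. There are two small slips. First, the lower bounds $|E(X)|+k$ together with $x(V)=m$ are \emph{crossing} supermodular, not intersecting supermodular: for intersecting $X,Y$ with $X\cup Y=V$, the inequality would need at least $k$ edges between $X\setminus Y$ and $Y\setminus X$. This is harmless, since a nonempty polyhedron defined by integer crossing supermodular bounds is still an integral base polyhedron. Second, Step 3 as written is garbled; it should simply say that for a fixed $k$-arc-connected orientation $D_0$, $f(X)+|\cU\setminus\cU_X|\ge d^-_{D_0}(X)+d^-_{D_0}(V\setminus X)=m$, with equality at $X=V$.
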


\begin{proof}[Proof of \cref{cor:strong-orientations}]
We apply \cref{cor:weighted-common-bases} to the matroids in
\cref{lem:orientation-common-bases}. First, by the orientation theorem
of Nash-Williams~\cite[Theorem~1]{NashWilliams1960Orientations},
a $k$-arc-connected orientation exists exactly when every nontrivial
undirected cut has size at least $2k$. Check this by a minimum cut,
returning zero if it fails, and put $m\coloneqq|E|$.

Give incidence $(e,v)$ weight $w_{e,v}$. The bijection in
\cref{lem:orientation-common-bases} preserves the product of weights,
so the weighted common-bases sum is exactly $Z_k(G;\bm{w})$.
It remains to implement the two independence oracles.
For $\cM$, check that the queried set contains at most one incidence
per edge.

For $\cN$, given any $Y\subseteq\cU$, compute ${\rk_{\cN}}(Y)$ using the
lemma's rank formula and accept exactly when ${\rk_{\cN}}(Y)=|Y|$.
For fixed $Y$, the term
\[
 |Y\setminus\cU_X|
   =|Y|-\sum_{v\in X}|Y\cap\cU_{\{v\}}|
\]
is a constant minus a sum of fixed vertex weights. Adding it to $f(X)$
therefore preserves submodularity, so the rank formula can be evaluated
by submodular minimization~\cite{Schrijver2000SubmodularMinimization}.
To supply the required values $f(X)$, assign cost zero to directions
whose heads lie in $X$ and cost one to the others. The minimum cost
of a $k$-arc-connected orientation is $m-f(X)$, and it can be found
in polynomial time~\cite{IwataKobayashi2005Orientations}.
Thus every value query, and hence each independence query, takes
polynomial time.

With these oracles, \cref{cor:weighted-common-bases} gives the claimed
$\mathsf{FPRAS}$. If $k$ is part of the input in binary, the successful
minimum-cut test bounds $k$ by $m$, so the oracle implementations
still take polynomial time in the input length.
\end{proof}
}

\subsection{Undirected Euler Tours}\label{app:euler-tours}

Let $G$ be a connected $4$-regular multigraph.
{
Each edge has two ends, called \emph{half-edges}; a loop contributes
two distinct half-edges at the same vertex. A \emph{transition} at a
vertex pairs its four incident half-edges into two pairs; there are
three choices. Entering along either half-edge of a pair is followed
by leaving along the other. A \emph{closed trail} is a closed walk that does not
repeat edges. Choosing a transition at every vertex partitions the
edges into closed trails. We count the choices that give a single
closed trail, or equivalently Euler tours up to cyclic shift and
reversal.}
{
For each vertex $v$ and each of its three transitions $\tau$, give
the whole pairing $\tau$ a nonnegative rational weight $w_v(\tau)$.
These three weights may be chosen arbitrarily at each vertex.
The total weight of Euler tours is
\[
 \sum_P\prod_{v\in V(G)}w_v(P_v),
\]
where the sum ranges over all Euler tours $P$, and $P_v$ is the
transition used by $P$ at $v$.
Thus the weights are attached to local pairings, not to individual
edge directions.
\par
}

{
\begin{corollary}[Weighted Euler tours]\label{cor:weighted-euler-tours}
Let $G$ be a connected $4$-regular multigraph with nonnegative rational
weights on its local transitions. There is an $\mathsf{FPRAS}$ for the
total weight of Euler tours in $G$.
\end{corollary}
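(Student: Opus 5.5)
The plan is to reduce weighted Euler tours of a connected $4$-regular multigraph to weighted common bases, \cref{cor:weighted-common-bases}, via the classical encoding of Euler tours as common bases of a graphic-type matroid and a partition matroid on transitions. The ground set will be $\cU\coloneqq\{(v,\tau)\}$, pairing each vertex $v$ with each of its three transitions. Let $\cM$ be the partition matroid allowing exactly one transition per vertex, so its bases are the transition systems, and give $(v,\tau)$ weight $w_v(\tau)$. Then for every Euler tour $P$ the product of the chosen weights is $\prod_v w_v(P_v)$. It remains to find a second matroid $\cN$ of rank $|V|$ whose bases among transition systems are exactly those producing a single closed trail.

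For $\cN$, I would use the standard fact (Kotzig / the interlace/medial-graph theory, as in the $A$-trail literature the figure cites: ``via degree-four $A$-trails'') that, for $4$-regular $G$, the number of closed trails produced by a transition system equals $1+{}$(corank-type quantity) of a binary matroid. Concretely, I would fix one Euler tour $C_0$ (found greedily by Fleury/Hierholzer). At each vertex one transition is the one $C_0$ uses. Of the other two, one is orientation-consistent and one is not, relative to $C_0$. The interlace graph $H$ of $C_0$ over $\mathrm{GF}(2)$ then describes the trail count. The number of components of the transition system that switches the vertices in a set $X$ equals $1+\operatorname{nullity}$ of a principal submatrix of $A(H)+D$ over $\mathrm{GF}(2)$, where $D$ is a diagonal $0/1$ matrix selecting which non-$C_0$ transition is used (Cohn--Lempel / Traldi). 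Accordingly, I would define $\cN$ as the column matroid over $\mathrm{GF}(2)$ of the $|V|\times3|V|$ matrix $[I \mid A(H) \mid A(H)+I]$, with column $(v,\tau)$ chosen by the type of $\tau$. A transition system $\{(v,\tau_v)\}$ is then a basis of $\cN$ exactly when the corresponding square matrix is nonsingular, which by the cited formula happens exactly when it yields a single closed trail. This is the isotropic-system/Bouchet ``Eulerian matroid'' picture, and I would verify the nullity identity by induction on switching one vertex at a time.

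Given this, the rest of the reduction is routine. Independence in $\cN$ is a rank computation over $\mathrm{GF}(2)$, so it runs in polynomial time. Zero weights are handled by deletion inside \cref{cor:weighted-common-bases}. The bijection is weight preserving: common bases are exactly single-trail transition systems, and these correspond one-to-one with Euler tours up to shift and reversal. Applying \cref{cor:weighted-common-bases} then gives the $\mathsf{FPRAS}$. Loops are handled by the half-edge convention, since they only affect which transitions give which trails, not the matrix construction, provided $H$ is computed from $C_0$ with half-edges.

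The main obstacle is proving the trail-count/nullity identity that makes $\cN$ correct, including the case distinctions for loops and parallel edges. The cleanest route is to cite Traldi's or Bouchet's theorem. If that is not usable, I would prove it directly: switching at an interlaced vertex changes the component count by exactly the rank change of the corresponding principal pivot.
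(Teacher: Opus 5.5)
Your proposal is correct and follows essentially the paper's proof: you fix an Euler tour, take the partition matroid on the $3|V|$ transitions together with the binary column matroid of $[\,I\mid A\mid I+A\,]$ built from its interlace matrix, identify common bases with Euler tours via Traldi's extended Cohn--Lempel equality (number of closed trails $=1+$ nullity), and apply \cref{cor:weighted-common-bases}. The one detail your ``chosen by the type of $\tau$'' leaves implicit, and the paper states, is the column assignment: the tour's own transition $\phi_v$ goes to $\bm{e}_v$, the other orientation-consistent transition $\chi_v$ to $\bm{A}_v$ (diagonal $0$), and the orientation-inconsistent transition $\psi_v$ to $\bm{e}_v+\bm{A}_v$ (diagonal $1$); swapping the last two already gives the wrong trail count when only one vertex is switched.
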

}

\begin{proof}
{
We encode the weighted Euler tours in $G$ as weighted common bases
of two matroids and apply \cref{cor:weighted-common-bases}.
}
{
Fix an Euler tour $C$, viewed as a cyclic sequence of vertex visits.
Each vertex occurs twice because $G$ is $4$-regular. Two distinct
vertices $u,v$ \emph{alternate along $C$} if, after ignoring visits to
all other vertices, their cyclic order is $u,v,u,v$.
The \emph{interlace matrix} $A\in\mathbb F_2^{V(G)\times V(G)}$ is
defined by $A_{uv}=1$ when $u$ and $v$ alternate, and $A_{uv}=0$
otherwise; in particular, $A_{uu}=0$~\cite[p.~2]{Traldi2009BinaryNullity}.
}
{
Choose a direction for traversing $C$ and orient every edge accordingly.
At a vertex $v$, label the two incoming half-edges $a_1,a_2$ and the
two outgoing half-edges $b_1,b_2$ so that $C$ enters through $a_i$ and
leaves through $b_i$ on its $i$th visit, for $i=1,2$. Define
\[
\begin{aligned}
 \phi_v&\coloneqq\bigl\{\{a_1,b_1\},\{a_2,b_2\}\bigr\},\\
 \chi_v&\coloneqq\bigl\{\{a_1,b_2\},\{a_2,b_1\}\bigr\},\\
 \psi_v&\coloneqq\bigl\{\{a_1,a_2\},\{b_1,b_2\}\bigr\}.
\end{aligned}
\]
Only $\phi_v$ is used by $C$. The alternative $\chi_v$ also pairs each
incoming half-edge with an outgoing half-edge, but switches which
outgoing half-edge follows each incoming one. The transition $\psi_v$
pairs the two incoming half-edges together and the two outgoing
half-edges together.
}
Associate these choices with the labeled columns
\[
 \phi_v\longleftrightarrow \bm{e}_v,\qquad
 \chi_v\longleftrightarrow \bm{A}_v,\qquad
 \psi_v\longleftrightarrow \bm{e}_v+\bm{A}_v
 \quad\text{of }[\,I\mid A\mid I+A\,]\text{ over }\mathbb F_2.
\]
For a transition choice $P$, let $K_P$ contain its selected columns.
The extended Cohn--Lempel equality
\cite[Theorem~4]{Traldi2009BinaryNullity} gives
\[
 {\#\{\text{closed trails of }P\}}=1+\operatorname{nullity}K_P.
\]
Indeed, deleting the identity choices leaves the principal interlace
matrix with a diagonal one at each $\psi$ choice, which is the matrix
in that equality. Hence $P$ is an Euler tour precisely when it is a
common basis of the column matroid and the partition matroid choosing
one column from each vertex triple. Giving each column its transition
weight preserves the weight and counts each tour once. {Gaussian
elimination and checking that at most one column is selected from each
vertex triple give the independence oracles}, so
\cref{cor:weighted-common-bases} applies.
\end{proof}

{
Ge and \v{S}tefankovi\v{c}
\cite[Theorem~4]{GeStefankovic2010EulerTours} give an
approximation-preserving reduction from counting Euler tours in
general undirected graphs to counting $A$-trails in $4$-regular graphs
with prescribed cyclic orders.

In a $4$-regular multigraph, suppose that a cyclic order of the four
half-edges is specified separately at each vertex. An \emph{$A$-trail} is
an Euler tour such that, at every visit to a vertex, the entering and
leaving half-edges are adjacent in that vertex's cyclic order.
Either direction through an adjacent pair is allowed. For cyclic
order $h_1,h_2,h_3,h_4$, the allowed arrival--departure pairs are the
unordered pairs
\[
 \{h_1,h_2\},\qquad \{h_2,h_3\},\qquad
 \{h_3,h_4\},\qquad \{h_4,h_1\}.
\]
For example, entering through $h_1$ and leaving through $h_2$ is
allowed, as is entering through $h_2$ and leaving through $h_1$.
The transition at that vertex must therefore be one of
\[
 \bigl\{\{h_1,h_2\},\{h_3,h_4\}\bigr\}
 \qquad\text{or}\qquad
 \bigl\{\{h_1,h_4\},\{h_2,h_3\}\bigr\}.
\]
The remaining pairing $\{\{h_1,h_3\},\{h_2,h_4\}\}$ is forbidden.
}

{
Counting $A$-trails is a special case of weighted Euler-tour counting:
assign weight one to each allowed transition and zero to each forbidden
transition. Every $A$-trail then has weight one, and every other Euler
tour has weight zero. By \cref{cor:weighted-euler-tours}, this count admits an
$\mathsf{FPRAS}$. Composing that algorithm with the reduction gives
the following corollary.
\par
}

\begin{corollary}[Euler tours in general graphs]\label{cor:euler-tours}
There is an $\mathsf{FPRAS}$ for the unweighted number of Euler tours
in general undirected graphs.
\end{corollary}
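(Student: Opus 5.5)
The proof will be a composition of \cref{cor:weighted-euler-tours} with the approximation-preserving reduction of Ge and \v{S}tefankovi\v{c}~\cite[Theorem~4]{GeStefankovic2010EulerTours}. I will first dispose of trivial instances. If $G$ is disconnected (apart from isolated vertices, which we delete), or has a vertex of odd degree, there is no Euler tour and we output zero. Both conditions are checked in linear time. Otherwise the reduction produces, in polynomial time, a $4$-regular multigraph $H$ with a prescribed cyclic order at each vertex, together with data relating the two counts. As I recall it, the number of Euler tours of $G$ equals an explicitly computable factor times the number of $A$-trails of $H$. The factor comes from products of factorials $(d(v)/2-1)!$ or similar local terms, and it has polynomial bit length. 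If the reduction is only approximation-preserving rather than exact, I will use it in the form it is stated: an $\mathsf{FPRAS}$ for $A$-trail counting yields an $\mathsf{FPRAS}$ for Euler tours with polynomial overhead.

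Next, I will invoke the $A$-trail observation made just before the statement. At each vertex of $H$ with cyclic order $h_1,h_2,h_3,h_4$, assign weight one to the two allowed transitions and weight zero to the crossing pairing $\{\{h_1,h_3\},\{h_2,h_4\}\}$. Every Euler tour of $H$ then has weight $1$ if it is an $A$-trail and weight $0$ otherwise. So the weighted Euler tour sum of \cref{cor:weighted-euler-tours} is exactly the number of $A$-trails of $H$. \Cref{cor:weighted-euler-tours} requires $H$ to be connected and $4$-regular. The reduction outputs a $4$-regular graph, and connectivity of $H$ follows from that of $G$ because the gadget replacing each vertex is connected. If $H$ were disconnected anyway, the count would be zero, and we would detect this directly.

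Finally, I will handle the accuracy bookkeeping. Run the $\mathsf{FPRAS}$ of \cref{cor:weighted-euler-tours} with accuracy $\eps$ (or $\eps/2$ if the reduction itself loses a $(1\pm\eps/2)$ factor) and success probability $3/4$. Multiply the output by the known conversion factor, which is exact, so the relative error is unchanged. The running time is polynomial, since $|V(H)|$ and $|E(H)|$ are polynomial in $|E(G)|$ and the independence oracles used inside \cref{cor:weighted-euler-tours} (Gaussian elimination over $\mathbb F_2$ and a partition check) run in polynomial time.

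The main obstacle is matching the precise form of the Ge--\v{S}tefankovi\v{c} reduction: whether it is parsimonious up to a known factor, or merely approximation-preserving with some error and overhead. It also has to be confirmed that its output lies in the class covered here, namely loops and multi-edges allowed, arbitrary cyclic orders, and connectivity. Everything else is routine composition.
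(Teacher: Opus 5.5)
Your proposal is correct and matches the paper's argument: the paper likewise composes the Ge--\v{S}tefankovi\v{c} reduction to counting $A$-trails in $4$-regular maps with \cref{cor:weighted-euler-tours}, assigning weight one to the two allowed transitions and weight zero to the crossing pairing at each vertex. One correction: that reduction is a randomized approximation-preserving reduction built on a card-shuffling mixing bound, not an exact identity with a factorial conversion factor, so the argument actually runs through your hedged branch (composing an $\mathsf{FPRAS}$ with an AP-reduction), and that branch suffices.
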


\subsection{Network Reliability and Disjoint Linkages}\label{app:reliability-proof}

\begin{definition}[$S$--$T$ linkage]\label{def:st-linkage}
Let $G = (V,E)$ be a directed or undirected graph, and
let $S,T\subseteq V$ be disjoint terminal sets with $|S|=|T|=k\ge1$.
An \emph{$S$--$T$ linkage} consists of $k$ paths that share no vertices.
Each path joins a vertex of $S$ to a vertex of $T$, and every vertex of
$S$ and $T$ is used as an endpoint exactly once. We do not specify which
vertex of $S$ is joined to which vertex of $T$. If $G$ is directed,
all paths must follow edge directions from $S$ to $T$.
\end{definition}

\begin{corollary}[$S$--$T$ Network reliability]\label{cor:network-reliability}
Let $G=(V,E)$ be a directed or undirected graph, and let $S,T\subseteq V$
be disjoint terminal sets with $|S|=|T|=k\ge1$. Retain each edge $e$
independently with given rational probability $p_e\in[0,1]$.
There is an $\mathsf{FPRAS}$ for
\begin{align}\label{eq:linkage-reliability}
 \operatorname{Rel}_{S,T}(G;\bm{p})
 &\coloneqq \sum_{\substack{F\subseteq E:\\ (V,F)\text{ admits an }S\text{--}T\text{ linkage}}}
       \prod_{e\in F}p_e\prod_{e\in E\setminus F}(1-p_e).
\end{align}
\end{corollary}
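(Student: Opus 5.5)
The plan is to reduce \cref{cor:network-reliability} to \cref{lem:common-basis-availability}, exactly as \cref{fig:application-reductions} indicates. We want two matroids $\cM,\cN$ of equal rank on the edge set $E$ (or on a slightly enlarged ground set whose extra elements are always retained, $p=1$). The requirement is that a retained edge set $F$ contains a common basis exactly when $(V,F)$ admits an $S$--$T$ linkage. With that in hand, the reliability \cref{eq:linkage-reliability} is precisely the availability probability, and \cref{lem:common-basis-availability} gives the $\mathsf{FPRAS}$.

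First I will handle the directed case and construct the matroids with the standard vertex-splitting linkage construction. Replace each vertex $v\notin S\cup T$ by an arc $v_{\mathrm{in}}\to v_{\mathrm{out}}$. An $S$--$T$ linkage then corresponds to a set $A$ of arcs with the following properties:
\begin{itemize}
\item every vertex has in-degree and out-degree at most one in $A$;
\item vertices of $S$ have in-degree $0$ and out-degree $1$;
\item vertices of $T$ have in-degree $1$ and out-degree $0$;
\item every vertex $v_{\mathrm{in}},v_{\mathrm{out}}$ that is used is used on both sides.
\end{itemize}

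Cycles are harmless: a cycle disjoint from the paths can simply be discarded, since we only ask whether $F$ contains some linkage. To make "contains" correspond to a basis condition, I will add, for every internal vertex $v$, an always-retained loop-like element $\ell_v$ meaning "$v$ unused", which pairs $v_{\mathrm{in}}$ with $v_{\mathrm{out}}$. Then every vertex is covered exactly once on the head side and once on the tail side. This gives two partition matroids on the split arcs together with the elements $\ell_v$:
\begin{itemize}
\item $\cM$ is indexed by heads and allows one element per head in $V_{\mathrm{in}}\cup T$;
\item $\cN$ is indexed by tails and allows one element per tail in $V_{\mathrm{out}}\cup S$.
\end{itemize}
Both have rank equal to the number of internal vertices plus $k$. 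A common basis is then a perfect matching in the bipartite graph of tails versus heads, i.e.\ a cycle cover plus $k$ paths from $S$ to $T$. Dropping the cycles shows that a common basis inside $F$ exists if and only if an $S$--$T$ linkage exists in $(V,F)$.

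The split arcs $v_{\mathrm{in}}\to v_{\mathrm{out}}$ and the elements $\ell_v$ receive retention probability $1$, and each original arc keeps probability $p_e$. Then $F$ ranges exactly over the subsets of original arcs with the correct weight, and each $F$ is counted once by \cref{lem:common-basis-availability}. Since I add an explicit $\ell_v$ per internal vertex, the split arc itself is not needed.

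For undirected $G$, I will replace each edge $e=\{x,y\}$ by the two arcs $x\to y$ and $y\to x$. Since the two arcs of one edge must survive or fail together, I cannot simply give them independent probabilities $p_e$. Instead I will introduce a single ground element per edge and let it serve as either orientation. Concretely, I will take the Rado/matroid-union trick behind \cref{lem:common-basis-availability} and put a partition constraint "at most one orientation per edge" on the pair of arc copies. I then need to check that in a vertex-disjoint linkage no edge is used in both directions, which is immediate since the paths are vertex-disjoint and simple.

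The main obstacle is exactly this undirected coupling. \cref{lem:common-basis-availability} retains ground elements independently, but the two orientations of one edge are correlated. I would first try to reformulate the problem by intersecting $\cN$ with a third partition constraint, but that creates a three-matroid intersection, which is forbidden. I would therefore instead fold the "one orientation per edge" constraint into the head matroid $\cM$ via the Rado matroid of \cref{thm:rado-matroid}. Left vertices are the edges together with the $\ell_v$; an edge $e$ may be matched to the head of either of its orientations, and the right side carries the head partition matroid. The tail side must be handled consistently with the chosen orientation, so I may need to carry an orientation-labeled element per edge instead. Verifying that exactly this bookkeeping yields two genuine matroids with polynomial-time oracles is the step I would check most carefully.
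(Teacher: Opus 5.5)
Your directed construction is essentially the paper's: the tails-versus-heads bipartite graph, an always-present element $\ell_v$ for each internal vertex, the two partition matroids, and discarding the cycles of a perfect matching. There are two small omissions. Arcs entering $S$ or leaving $T$ must be deleted so that every element lies in a block. Also, the ranks equal $|V|-k$ only after you check that the full bipartite graph has a perfect matching, returning zero otherwise.

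The genuine gap is the undirected case, which your proposal leaves open. The fallback you mention, carrying an orientation-labeled element per edge, does not fit \cref{lem:common-basis-availability}. The two copies of $e$ would have to survive or fail together, whereas the lemma retains ground elements independently. Imposing ``one orientation per edge'' through a further matroid would make it a three-matroid intersection.

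The missing idea in the paper is a gadget that reduces the undirected case to the directed one while keeping a single random element per edge. For $e=\{u,v\}$, add new vertices $a_e,b_e$ and arcs $u\to a_e$, $v\to a_e$, $a_e\to b_e$, $b_e\to u$, $b_e\to v$. Only $a_e\to b_e$ is random, with probability $p_e$; the other four arcs are always present. Every path touching the gadget must traverse $a_e\to b_e$. A simple directed path cannot exit the gadget at the vertex it entered from. Hence vertex-disjoint directed linkages in the new graph correspond exactly to undirected linkages in $G$, and the directed case finishes the proof.

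Alternatively, your own Rado idea closes without any consistency bookkeeping. Use the same slots as in your directed construction: internal vertices have an in-slot and an out-slot, vertices of $T$ only an in-slot, and vertices of $S$ only an out-slot. Let both matroids be transversal:
\begin{itemize}
\item in the first, $e=\{x,y\}$ is matchable to the in-slot of either endpoint;
\item in the second, $e$ is matchable to the out-slot of either endpoint;
\item in both, $\ell_v$ is matchable only to $v$'s slots.
\end{itemize}
Suppose the two matchings witnessing a common basis send $e$ to the in-slot and the out-slot of the same endpoint $v$. Then $e$ merely plays the role of $\ell_v$. The remaining, properly oriented edges give each internal vertex equal in- and out-degree in $\{0,1\}$, out-degree one at each vertex of $S$, and in-degree one at each vertex of $T$. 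This yields $k$ vertex-disjoint $S$--$T$ paths plus cycles. Conversely, a linkage oriented from $S$ to $T$, together with the $\ell_v$ of unused internal vertices, is a common basis.
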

\begin{remark}
Taking $S=\{s\}$ and $T=\{t\}$ in an undirected graph recovers
the $\mathsf{FPRAS}$ for $s$--$t$ reliability in \cite{FengFuGuo2026} with worse running time.
\end{remark}

\begin{proof}[Proof of \cref{cor:network-reliability}]
We first encode directed linkage as the existence of a perfect matching
in a random bipartite graph, then apply
\cref{lem:common-basis-availability}. An undirected edge gadget gives
the undirected case.

For a directed graph, discard loops, arcs entering $S$, and arcs leaving
$T$. Form a bipartite graph with vertex classes
\begin{align*}
 V_L&\coloneqq \{\ell_v:v\in V\setminus T\},&
 V_R&\coloneqq \{r_v:v\in V\setminus S\}.
\end{align*}
{Replace each remaining arc $u\to v$ by the edge $\ell_ur_v$.
Keep this edge with the same probability as the original arc.}
Add an always-present edge $\ell_vr_v$ for
each $v\notin S\cup T$. A linkage gives a perfect matching by selecting
its path edges and the diagonal edges at unused vertices. Conversely,
following a perfect matching from each source reaches a distinct sink:
entering a cycle or another source's path would give a vertex two selected
incoming edges. Thus, for every {retained edge set},
\begin{align*}
 \text{the directed graph admits an $S$--$T$ linkage}
 &\quad\Longleftrightarrow\quad
 \text{the bipartite graph has a perfect matching}.
\end{align*}

Let $\cM,\cN$ be the partition matroids on the bipartite edge set,
allowing at most one edge at each left and right endpoint, respectively.
If the full bipartite graph has no perfect matching, return zero.
Otherwise both matroids have rank $|V|-k$, and a retained edge set
contains a perfect matching exactly when it contains a common basis.
Thus \cref{lem:common-basis-availability} approximates the desired
probability, including the always-present diagonal edges.

Finally, for an undirected edge $e=\{u,v\}$, introduce new vertices
$a_e,b_e$ and arcs
\begin{align*}
 u\to a_e,\quad v\to a_e,\quad a_e\to b_e,\quad
 b_e\to u,\quad b_e\to v.
\end{align*}
Only $a_e\to b_e$ is random, with retention probability $p_e$; the
other four arcs are always present. Vertex-disjoint undirected paths
lift to vertex-disjoint directed paths. Conversely, simple directed
paths project to vertex-disjoint undirected paths, since {returning
through the gadget to the vertex where the path entered would repeat
that vertex}. This preserves
the linkage event for every {choice of retained original edges} and uses one independent
random variable per original edge. The directed result therefore also
proves the undirected case.
\end{proof}

\subsection{Directed Root-to-All Reliability}\label{app:rooted-reliability}

Let $D=(V,E)$ be a directed multigraph with root $r$, and retain each
arc $e$ independently with
rational probability $p_e\in[0,1]$.
{
For each $F\subseteq E$, let $\Lambda(V,F)\in\mathbb R^{V\times V}$
be the matrix whose entry $\Lambda(V,F)_{u,v}$, for $u\ne v$, is the
maximum number of arc-disjoint directed paths from $u$ to $v$ in
$(V,F)$. Set its diagonal entries to zero.
}
For an integer $k\ge1$, define
\[
 R_k(D,r;\bm{p})\coloneqq
 \sum_{\substack{F\subseteq E:\\
        {\Lambda(V,F)_{r,v}}\ge k\\ \forall v\ne r}}
       \prod_{e\in F}p_e\prod_{e\notin F}(1-p_e).
\]
The paths to different target vertices need
not be mutually disjoint. In particular, $k=1$ asks that the root
reach every vertex. Reversing the arcs gives the all-to-root version.

\begin{corollary}\label{cor:rooted-reliability}
There is an $\mathsf{FPRAS}$ for $R_k(D,r;\bm{p})$, even when $k$
is part of the input in binary.
\end{corollary}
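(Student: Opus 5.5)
The plan is to reduce $R_k(D,r;\bm{p})$ to the probability that the retained arc set contains a common basis of two matroids, and then invoke \cref{lem:common-basis-availability}. The natural structure is Edmonds' branching theorem: $\Lambda(V,F)_{r,v}\ge k$ for all $v\ne r$ holds exactly when $(V,F)$ contains $k$ arc-disjoint spanning arborescences rooted at $r$. Each such arborescence is a common basis of the graphic matroid of the underlying undirected graph and the partition matroid allowing at most one incoming arc at each $v\ne r$ (with arcs into $r$ discarded). The event therefore becomes: the retained set $F$ contains a set that is simultaneously a union of $k$ spanning trees and has in-degree exactly $k$ at every non-root vertex. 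Edmonds' theorem gives the converse as well, since such a set decomposes into $k$ arc-disjoint $r$-arborescences.

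I would realize this as a common-basis condition directly on $E$, avoiding copies of arcs. Let $\cM$ be the $k$-fold union of the graphic matroid, restricted to arcs not entering $r$. Its independent sets are the unions of $k$ forests, it has an independence oracle by matroid union \cite{BercziFrank2010Packing}, and its rank is $k(|V|-1)$ whenever $k$ disjoint spanning trees exist. Let $\cN$ be the partition matroid on the same arcs with capacity $k$ on the in-arcs of each $v\ne r$; it has rank $k(|V|-1)$ when every in-degree is at least $k$. A common basis is then exactly a set of $k(|V|-1)$ arcs with in-degree $k$ at each non-root vertex that decomposes into $k$ spanning trees. By Edmonds' disjoint branchings theorem in its matroid-intersection form (the $k$-fold union of graphic matroids intersected with the in-degree partition matroid), such a set is precisely a union of $k$ arc-disjoint $r$-arborescences. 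Hence $F$ satisfies the reachability condition iff $F$ contains a common basis of $\cM,\cN$.

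Before this, I would check feasibility on the full graph by a max-flow computation, returning zero if $\Lambda(V,E)_{r,v}<k$ for some $v$. This check also makes the ranks of $\cM$ and $\cN$ agree. If $k$ is given in binary, feasibility forces $k\le|E|$, so all oracle computations (matroid union, partition checks) remain polynomial. Arcs entering $r$ and loops never matter for $\Lambda_{r,v}$ and can be marginalized out, contributing factor $1$. \cref{lem:common-basis-availability} then gives the $\mathsf{FPRAS}$.

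The main obstacle is the exact characterization step. I need the right form of Edmonds' theorem: a set with in-degree exactly $k$ at each non-root vertex and no in-arcs at $r$ is a union of $k$ arc-disjoint $r$-arborescences iff its underlying graph is a union of $k$ spanning trees (equivalently, $k$-rooted-connectivity). I would cite Edmonds' branching theorem, which says that $k$ disjoint arborescences exist iff every set $X\not\ni r$ has in-degree at least $k$, and verify the equivalence by a counting argument on induced subgraphs. This also requires care that ``contains a common basis'' matches ``contains $k$ disjoint arborescences'' in both directions for arbitrary supersets $F$.
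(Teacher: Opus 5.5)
Your proposal is correct and follows essentially the same route as the paper: you intersect the $k$-fold union of the graphic matroid with the partition matroid allowing in-degree at most $k$, use Edmonds' arborescence-packing theorem to identify the event with containment of a common basis, and finish with \cref{lem:common-basis-availability}, with $k\le|E|$ keeping the matroid-union oracle polynomial. The differences are minor. The paper detects infeasibility by comparing both ranks with $k(|V|-1)$ rather than by max-flow. It also cites the standard matroid representation of arborescence packings instead of rederiving it. Your counting argument for that representation is valid: within $X\not\ni r$ there are at most $k(|X|-1)$ arcs but $k|X|$ heads, so at least $k$ arcs enter $X$.
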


\begin{proof}
We express the event as containment of a common basis and apply
\cref{lem:common-basis-availability}.
Delete loops and arcs entering $r$, which do not affect this event.
The one-vertex case has probability one. Otherwise put
$q\coloneqq k(|V|-1)$, and return zero if $q>|E|$.
Let $\cM$ be the union of $k$ copies of the graphic matroid of the
underlying undirected multigraph.
{A set is independent in $\cM$ exactly when it can be
partitioned into $k$ forests after ignoring the arc directions.}
Let $\cN$ be the partition
matroid allowing at most $k$ arcs entering each nonroot vertex.
{The polynomial-time matroid-union algorithm
\cite[Section~4.1]{BercziFrank2010Packing} gives an independence oracle
for $\cM$, using cycle detection to test independence in each graphic
matroid. For $\cN$, we count the queried arcs entering each nonroot
vertex and check that every count is at most $k$.}
If either rank is less than $q$, the event is impossible; otherwise
both ranks equal $q$.

{
For each $F\subseteq E$, we have
\[
 \Lambda(V,F)_{r,v}\ge k\ \text{for all }v\ne r
 \quad\Longleftrightarrow\quad
 F\text{ contains a common basis of }\cM,\cN.
\]
Edmonds's arborescence-packing theorem
\cite{Edmonds1973EdgeDisjointBranchings}
(see also \cite[Theorem~1.1]{BercziFrank2010Packing}) equates the
left-hand condition with the existence in $F$ of $k$ arc-disjoint
spanning out-arborescences rooted at $r$, that is, directed spanning
trees in which $r$ reaches every vertex. The standard matroid
representation \cite[Section~4.1]{BercziFrank2010Packing} identifies
the unions of $k$ such arborescences with the common bases of
$\cM,\cN$.
\par
}

Applying \cref{lem:common-basis-availability} now gives the claimed
approximation. The preliminary bound $k(|V|-1)\le|E|$ ensures
polynomial size even when $k$ is given in binary.
\end{proof}

\subsection{Mixed Discriminants and Determinantal Point Processes}\label{app:determinantal-applications}

{The matrices defining the counting problems below are
rational and positive semidefinite.}
For a matrix $L$ indexed by $\cU$, write $L[S]$ for its principal
submatrix on $S$, with {the convention} $\det L[\varnothing]=1$.
The determinantal point process with matrix $L$ assigns probability
$\det L[S]/\det(I+L)$ to $S\subseteq\cU$.

\begin{corollary}\label{cor:determinantal-applications}
There is an $\mathsf{FPRAS}$ for each of the following quantities:
\begin{enumerate}
 \item For a rational positive semidefinite matrix $L$ indexed by
 $\cU$ and a matroid $\cM$ on $\cU$ given by an independence oracle,
 \[
  Z_{\cM}(L)\coloneqq\sum_{S\in\cB(\cM)}\det L[S].
 \]
 \item For rational positive semidefinite matrices $A,B$ indexed by
 the same set $\cU$,
 \[
  Z(A,B)\coloneqq\sum_{S\subseteq\cU}\det A[S]\det B[S].
 \]
 \item For rational positive semidefinite $d\times d$ matrices
 $A_1,\ldots,A_d$, the mixed discriminant
 \[
  \Delta(A_1,\ldots,A_d)
  \coloneqq[x_1\cdots x_d]\det\left(\sum_{i=1}^d x_iA_i\right).
 \]
\end{enumerate}
\end{corollary}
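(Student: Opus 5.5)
All three quantities will be reduced to \cref{thm:log-concave-convolution}, either directly or through \cref{lem:matroid-constraint}. The needed input is the standard fact that for positive semidefinite $L$ the generating polynomial $g(\bm{x})=\sum_S\det L[S]\,\bm{x}^S=\det(I+\operatorname{diag}(\bm{x})L)$ is real stable. Since it is also multiaffine with nonnegative coefficients, it is log-concave on the positive orthant. Its homogeneous layers are log-concave as well, by real stability or by \cref{lem:lc-homogeneous-layers}.

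\textbf{Item 1.} Take $\omega(S)\coloneqq\det L[S]$ and apply \cref{lem:matroid-constraint}. Evaluations are single rational determinants. To bound $\cR_\omega$, note that entries have polynomial bit length, so a nonzero $\det L[S]$ is at least $1/\mathrm{denominator}^{|S|}$ and at most $\exp(\mathsf{poly})$ by Hadamard's inequality. A starting set can be read off from the column matroid of $L$ intersected with $\cM$, since $\det L[S]>0$ exactly when the columns of a Gram factor $V$ with $L=V^\top V$ indexed by $S$ are independent. The support of the rank-$r$ layer is therefore the independent sets of the column matroid of $V$, intersected with $\cB(\cM)$. If this is empty, the answer is zero. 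Otherwise a common basis is found by Edmonds' matroid intersection, using Gaussian elimination as the oracle for the column matroid.

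\textbf{Item 2.} Set $\mu(S)\coloneqq\det A[S]$ and $\nu(T)\coloneqq\det B[\cU\setminus T]$. Then $Z(A,B)=\sum_S\mu(S)\nu(\cU\setminus S)=[\bm{x}^{\cU}](g_\mu g_\nu)$. Here $g_\nu(\bm{x})=\bm{x}^{\cU}g_B(1/\bm{x})$ is the reversed polynomial. It is still real stable: inversion $x_u\mapsto -1/x_u$ preserves upper half-planes, and the sign flip is harmless for homogeneous degree-counting, though I would check this carefully. It is therefore log-concave. Initial sets are $S_0=\varnothing$ and $T_0=\cU$, and the coefficient ranges are bounded as above. \cref{thm:log-concave-convolution} then applies.

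\textbf{Item 3.} Write each $A_i=V_i V_i^\top$ with $V_i\in\mathbb{Q}^{d\times k_i}$. Rational factorizations may need square roots, so I would instead use an $LDL^\top$ decomposition, which stays rational, and absorb $D$ into weights. The ground set $\cU$ consists of all columns $(i,j)$. By Cauchy--Binet,
\[
\det\Bigl(\sum_i x_iA_i\Bigr)=\sum_{S\in\binom{\cU}{d}}\det(V_S)^2\prod_{(i,j)\in S}x_i.
\]
The coefficient of $x_1\cdots x_d$ sums $\det(V_S)^2=\det(V^\top V)[S]$ over sets $S$ that take exactly one column from each block $i$. These are the bases of a partition matroid, so the mixed discriminant equals $Z_{\cM}(V^\top V)$ restricted to size-$d$ sets. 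Item 1 then applies with the positive semidefinite Gram matrix $L=W^\top W$, where $W$ incorporates the $D$ weights.

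\textbf{Main obstacle.} I expect rationality and bit complexity to be the most delicate part: producing an exact rational factorization, and bounding $\log\cR$ polynomially. For the factorization I would use $LDL^\top$, which is rational, and put $\sqrt{D}$ into $\det(\cdot)^2$ as $\det D$ factors. The second point to verify is the log-concavity of the complemented measure in item 2. As an alternative, I would note that $\det B[\cU\setminus T]$ with $T$ ranging over complements is itself the basis-type weight of the dual determinantal process, for which log-concavity is standard.
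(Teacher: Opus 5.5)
Your proposal is correct and matches the paper's proof in structure. Item 1 applies \cref{lem:matroid-constraint} to $\omega(S)=\det L[S]$. Item 2 applies \cref{thm:log-concave-convolution} to $\mu(S)=\det A[S]$ and $\nu(T)=\det B[\cU\setminus T]$, with initial sets $\varnothing$ and $\cU$. Item 3 uses a rational $LDL^{\mathsf T}$ decomposition, Cauchy--Binet, and the partition matroid taking one rank-one term per block.

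The one genuine difference is how you certify log-concavity. You use three facts:
\begin{itemize}
 \item $\det(I+D_{\bm{x}}L)$ is real stable;
 \item its reversal $\bm{x}^{\cU}g_B(1/\bm{x})$ is also real stable;
 \item a stable polynomial with nonnegative coefficients is log-concave on the positive orthant.
\end{itemize}
This all holds. In particular, your reversal step is fine: $x_u\mapsto1/x_u$ sends the upper half-plane to the lower one, and a stable polynomial with real coefficients has no zeros there either.

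The paper avoids stability entirely. It writes $g_\mu(\bm{x})=\det(I+A^{1/2}D_{\bm{x}}A^{1/2})$ and $g_\nu(\bm{x})=\det(B+D_{\bm{x}})$. Both are affine in $\bm{x}$ and positive definite for $\bm{x}>\mathbf0$, so concavity of $\log\det$ gives log-concavity directly. This is more elementary and removes the reversal step you flagged as needing a careful check.

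Two simplifications:
\begin{itemize}
 \item In Item 1, the matroid-intersection search for a starting basis is unnecessary. Since $\det L[\varnothing]=1$, the set $\varnothing$ already lies in $\supp\omega$, and the layer machinery behind \cref{thm:log-concave-convolution} finds supported sets of each needed size.
 \item In Item 3, apply \cref{lem:matroid-constraint} directly to the rational weight $\det\bigl((V^{\mathsf T}V)[S]\bigr)\prod_{e\in S}c_e$, treating the $c_e$ as external fields. Do not route through Item 1 with $L=W^{\mathsf T}W$: its entries involve $\sqrt{c_ec_f}$ and need not be rational. Your closing remark already points to this fix.
\end{itemize}
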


{
Before proving \cref{cor:determinantal-applications}, we recall the
following standard determinant identity.
\begin{proposition}[Cauchy--Binet formula {\cite[Theorem~9.4]{Stanley2013Topics}}]\label{prop:cauchy-binet}
Let $A\in\mathbb R^{d\times n}$ and $B\in\mathbb R^{n\times d}$,
where $n\ge d$. For $S\subseteq[n]$ with $|S|=d$, let $A_{:,S}$
be the $d\times d$ matrix formed by the columns of $A$ indexed by
$S$, and let $B_{S,:}$ be the $d\times d$ matrix formed by the
rows of $B$ indexed by $S$, both in increasing order. Then
\[
 \det(AB)=\sum_{\substack{S\subseteq[n]\\|S|=d}}
                 \det(A_{:,S})\det(B_{S,:}).
\]
\end{proposition}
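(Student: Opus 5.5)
The plan is to prove the formula directly from multilinearity and alternation of the determinant in the columns, with no induction. Write $a_k\coloneqq A_{:,k}\in\mathbb R^d$ for the $k$th column of $A$, for $k\in[n]$. The $j$th column of $AB$ is
\[
 (AB)_{:,j}=\sum_{k=1}^n B_{k,j}\,a_k,\qquad 1\le j\le d.
\]
Expanding the determinant multilinearly in each of its $d$ columns gives
\[
 \det(AB)=\sum_{\sigma:[d]\to[n]}\Bigl(\prod_{j=1}^d B_{\sigma(j),j}\Bigr)\det\bigl(a_{\sigma(1)},\ldots,a_{\sigma(d)}\bigr),
\]
where $\sigma$ ranges over all maps $[d]\to[n]$.

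Next, I would discard the maps that are not injective: if $\sigma$ repeats a value, the matrix $(a_{\sigma(1)},\ldots,a_{\sigma(d)})$ has two equal columns, so its determinant vanishes. Each injective $\sigma$ has an image $S\subseteq[n]$ with $|S|=d$. Let $s_1<\cdots<s_d$ be the elements of $S$. Then $\sigma$ factors uniquely as $\sigma(j)=s_{\tau(j)}$ for a permutation $\tau$ of $[d]$. This regroups the sum by $S$ and then by $\tau$.

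Permuting columns multiplies the determinant by the sign of the permutation, so
\[
 \det\bigl(a_{s_{\tau(1)}},\ldots,a_{s_{\tau(d)}}\bigr)=\operatorname{sgn}(\tau)\det(A_{:,S}).
\]
Hence the contribution of a fixed $S$ is
\[
 \det(A_{:,S})\sum_{\tau}\operatorname{sgn}(\tau)\prod_{j=1}^d B_{s_{\tau(j)},j}.
\]
The inner sum is the Leibniz expansion of the $d\times d$ matrix whose $(i,j)$ entry is $B_{s_i,j}$, which is exactly $B_{S,:}$ with rows in increasing order. So the inner sum equals $\det(B_{S,:})$, and summing over $S$ gives the claimed identity.

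The only step that needs care is the sign bookkeeping in the last two displays. One has to check that the sign produced by reordering the columns of $A$ is the same $\operatorname{sgn}(\tau)$ that appears in the Leibniz formula for $B_{S,:}$. Because both use the same $\tau$ acting on the increasing labels $s_1<\cdots<s_d$, this works provided the rows and columns are consistently indexed in increasing order, as the statement specifies. Everything else is routine.
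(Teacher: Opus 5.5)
Your proof is correct. The multilinear expansion over the columns of $AB$, the vanishing of non-injective terms, the factorization $\sigma(j)=s_{\tau(j)}$, and the identification of the inner signed sum with the Leibniz expansion of $B_{S,:}$ are all sound. The sign bookkeeping works exactly as you say: the reordered matrix is $A_{:,S}P_\tau$, where $P_\tau$ is the permutation matrix with $(P_\tau)_{\tau(j),j}=1$, and $\det P_\tau=\operatorname{sgn}(\tau)$ is the same sign that appears in the Leibniz formula for $B_{S,:}$.

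The paper does not prove this proposition. It quotes it from Stanley and uses it only as a black box in the mixed-discriminant reduction, so there is no in-paper argument to compare with. Your argument is the standard self-contained proof, and it suits that application well. It uses only multilinearity and alternation of the determinant, so it holds verbatim over any commutative ring. In particular, it applies directly to $A=VD(\bm{x})$ with polynomial entries, which is how the paper extracts the coefficient of $x_1\cdots x_d$.
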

}

\begin{proof}[Proof of \cref{cor:determinantal-applications}]
The first two reductions use the principal-minor identities
{(see \cite[Section~3.2]{BorceaBrandenLiggett2009})}
\begin{align*}
 p_L(\bm{x})\coloneqq\sum_S\det L[S]\bm{x}^S
   &={\det(I+LD_{\bm{x}})=}\det(I+L^{1/2}D_{\bm{x}}L^{1/2}),\\
 q_L(\bm{x})\coloneqq\sum_T\det L[\cU\setminus T]\bm{x}^T
   &=\det(L+D_{\bm{x}}),
\end{align*}
where $D_{\bm{x}}\coloneqq\operatorname{diag}(\bm{x})$.
{
To obtain the first expansion, use multilinearity of the determinant
in its columns: choosing columns from $LD_{\bm{x}}$ at indices in $S$ and
from $I$ elsewhere contributes $\det L[S]\bm{x}^S$.
Similarly, choosing columns from $D_{\bm{x}}$ at indices in $T$ and from
$L$ elsewhere contributes $\det L[\cU\setminus T]\bm{x}^T$, giving the
second expansion.
The last equality for $p_L$ follows from Sylvester's identity
$\det(I+AB)=\det(I+BA)$.
}

{
Both polynomials are log-concave. Indeed, the matrices
$I+L^{1/2}D_{\bm{x}}L^{1/2}$ and $L+D_{\bm{x}}$ are affine functions of $\bm{x}$
and positive definite when $\bm{x}>\mathbf0$, even if $L$ is singular.
The function $X\mapsto\log\det X$ is concave on positive definite
matrices \cite[Section~3.1.5, p.~74]{BoydVandenberghe2004}.
Composing a concave function with an affine map preserves concavity,
so $\log p_L(\bm{x})$ and $\log q_L(\bm{x})$ are concave.
\par
}

For Item~1, apply \cref{lem:matroid-constraint} with
$\omega(S)\coloneqq\det L[S]$, initialized at $\varnothing$.

For Item~2, use
\[
 Z(A,B)=[\bm{x}^{\cU}]\bigl(p_A(\bm{x})q_B(\bm{x})\bigr)
\]
and apply \cref{thm:log-concave-convolution}, with initial sets
$\varnothing$ and $\cU$. This is the normalizing constant for the
product of two DPPs \cite{OhsakaMatsuoka2021}; dividing by
$\det(I+A)\det(I+B)$ gives their agreement probability.

For Item~3, use the partition-matroid reduction of
\cite[Section~4.1]{CelisEtAl2017}. Rational $LDL^{\mathsf T}$
decompositions give
$A_i=\sum_{e\in E_i}c_e\bm{v}_e\bm{v}_e^{\mathsf T}$ with $c_e>0$
and $|E_i|\le d$. {Here $E_i$ indexes the rank-one terms
of $A_i$; we take these index sets to be pairwise disjoint.
If $E_i=\varnothing$ for some $i$, then $A_i=0$ and $\Delta=0$.
Otherwise put $E\coloneqq\bigcup_{i=1}^d E_i$, let $V$ have columns
$\bm{v}_e$, and let $\cM$ be the partition matroid on $E$ whose bases contain exactly one
element from each $E_i$. Thus a basis selects one rank-one term
from each matrix $A_i$.}

{
To compute the coefficient defining $\Delta$, let $D(\bm{x})$ be the
diagonal matrix indexed by $E$ with $D(\bm{x})_{e,e}=c_ex_i$ for
$e\in E_i$. Then $\sum_{i=1}^d x_iA_i=VD(\bm{x})V^{\mathsf T}$.
For $S\subseteq E$ with $|S|=d$, write $V_S$ for the $d\times d$
matrix formed by the columns of $V$ indexed by $S$.
Applying \cref{prop:cauchy-binet} with $A=VD(\bm{x})$ and
$B=V^{\mathsf T}$ gives
\[
 \det\left(\sum_{i=1}^d x_iA_i\right)
 =\sum_{\substack{S\subseteq E\\|S|=d}}
   \det(V_S)^2\left(\prod_{e\in S}c_e\right)
   \prod_{i=1}^d x_i^{|S\cap E_i|}.
\]
Here $\det(V_S)^2=\det(V_S^{\mathsf T}V_S)
=\det((V^{\mathsf T}V)[S])$.
The monomial $\prod_{i=1}^d x_i^{|S\cap E_i|}$ equals
$x_1\cdots x_d$ exactly when $S$ contains one element of each
$E_i$, that is, when $S$ is a basis of $\cM$.
Extracting this coefficient therefore gives
}
\[
 \Delta(A_1,\ldots,A_d)
 =\sum_{S\in\cB(\cM)}\det((V^{\mathsf T}V)[S])\prod_{e\in S}c_e.
\]
Apply \cref{lem:matroid-constraint} to this determinant weight:
the factors $c_e$ are external fields, so log-concavity persists,
and $\varnothing$ is again an initial set.
All coefficient queries and decompositions use rational arithmetic
of polynomial bit length. The square roots in $p_L$ serve only to
prove log-concavity.
\end{proof}

\subsection{The Six-Vertex Model}\label{app:six-vertex}

Let $G=(V,E)$
be a loopless $4$-regular multigraph, with the four incident edges at each
vertex labeled $1,2,3,4$. {We consider Eulerian
orientations $D$ of the edges, meaning that exactly two edges are
directed into each vertex. Let $I_v(D)\subseteq\{1,2,3,4\}$ be the
labels of the two edges entering $v$. The weight $w_v(I)$ assigned
to an incoming pair $I$ is $a_v$ for $I=\{1,2\}$ or $\{3,4\}$,
$b_v$ for $I=\{1,3\}$ or $\{2,4\}$, and $c_v$ for $I=\{1,4\}$
or $\{2,3\}$. The partition function is
\[
 \sum_{\substack{D\text{ an Eulerian}\\\text{orientation of }G}}
       \prod_{v\in V(G)}w_v(I_v(D)).
\]
These are joint weights for the incoming pair; no factorization
into individual edge-direction weights is assumed
\cite{CaiLiuLu2019}.
\par
}

{
\begin{remark}
An Eulerian orientation specifies the direction of each edge, with
equal indegree and outdegree at every vertex. If $G$ is connected,
this guarantees the existence of a directed Euler tour; otherwise,
each connected component admits one. The six-vertex partition
function includes one term per orientation, regardless of how many
directed Euler tours it admits.

The Euler-tour weights in \cref{app:euler-tours} depend on the
whole pairing of the four incident half-edges, whereas the
six-vertex weights depend only on the incoming pair. For example,
if edges $1,2$ enter a vertex and $3,4$ leave it, the compatible
local pairings
\[
 \bigl\{\{1,3\},\{2,4\}\bigr\}
 \qquad\text{and}\qquad
 \bigl\{\{1,4\},\{2,3\}\bigr\}
\]
may receive different transition weights in \cref{app:euler-tours}.
Both have incoming pair $\{1,2\}$ under this orientation, so the
six-vertex weight is $a_v$ in either case.
\end{remark}
}

\begin{corollary}\label{cor:six-vertex}
For every loopless $4$-regular multigraph $G$, there is an
$\mathsf{FPRAS}$ for its six-vertex partition function when the local
{pair weights $a_v,b_v,c_v$} are nonnegative rationals satisfying
\[
 a_v\le b_v+c_v,\qquad b_v\le a_v+c_v,\qquad c_v\le a_v+b_v
 \quad\text{for every }v\in V(G).
\]
\end{corollary}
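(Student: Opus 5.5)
The plan is to write the six-vertex partition function as a sum of a log-concave weight over the bases of a partition matroid, and then apply \cref{lem:matroid-constraint}, following the reduction diagram. Take the ground set of incidences
\[
 \cU\coloneqq\{(e,v):e\in E,\ v\in e\}.
\]
Because $G$ is loopless, each edge contributes two distinct incidences. The four incidences at $v$ are identified with the labels $1,2,3,4$.

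An orientation $D$ is recorded by the set $A_D$ of its head incidences. The sets arising this way are exactly the bases of the partition matroid $\cM$ that allows one incidence per edge, which has rank $|E|$. The orientation is Eulerian exactly when $A_D$ contains two incidences at every vertex. So I would define the weight
\[
 \omega(A)\coloneqq\prod_{v\in V(G)}\omega_v(A\cap\cU_v),
\]
where $\cU_v$ is the set of incidences at $v$. Here $\omega_v(I)=w_v(I)$ when $|I|=2$, and $\omega_v(I)=0$ otherwise. Then $\sum_{A\in\cB(\cM)}\omega(A)$ is exactly the six-vertex partition function, since $D\mapsto A_D$ is a weight-preserving bijection.

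The generating polynomial factors over disjoint variable sets as $g_\omega=\prod_v q_v$, where
\[
 q_v(\bm{x})=a_v(x_1x_2+x_3x_4)+b_v(x_1x_3+x_2x_4)+c_v(x_1x_4+x_2x_3).
\]
The key step, and the only nontrivial one, is to show that each $q_v$ is log-concave on the positive orthant (or identically zero). I would argue this as follows.

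\begin{itemize}
\item A nonzero homogeneous quadratic with nonnegative coefficients is Lorentzian iff its Hessian has at most one positive eigenvalue.
\item The Hessian of $q_v$ is the symmetric matrix with rows $(0,a,b,c)$, $(a,0,c,b)$, $(b,c,0,a)$, $(c,b,a,0)$.
\item This matrix is diagonalized by the $\pm1$ Hadamard vectors $(1,1,1,1)$, $(1,1,-1,-1)$, $(1,-1,1,-1)$ and $(1,-1,-1,1)$.
\item Its eigenvalues are $a+b+c$, $a-b-c$, $b-a-c$ and $c-a-b$. The triangle inequalities say precisely that the last three are nonpositive.
\end{itemize}

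Hence each nonzero $q_v$ is Lorentzian and therefore log-concave. Products over disjoint variables preserve log-concavity, as recalled in \cref{app:weight-background}. If some $q_v\equiv0$, the partition function is zero and the algorithm returns $0$.

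It remains to supply the other inputs of \cref{lem:matroid-constraint}.

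\begin{itemize}
\item \emph{Evaluation oracle.} The oracle for $\omega$ is a product of local table lookups.
\item \emph{Initial set.} At each vertex, pick any incoming pair with positive weight. This gives a set in $\supp\omega$; note that it need not itself be an orientation.
\item \emph{Coefficient range.} $\cR_\omega$ is at most the ratio of the largest to the smallest positive products of $|V|$ rational weights. Its logarithm is therefore polynomial in the input length.
\item \emph{Independence oracle for $\cM$.} Check that the queried set contains at most one incidence per edge.
\end{itemize}

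\Cref{lem:matroid-constraint} then yields the $\mathsf{FPRAS}$. For the bit-complexity claim used later, all quantities are rational and \cref{rem:bit-complexity} applies. I expect the eigenvalue computation to be the main point to verify, and it is exactly where the triangle inequalities are used. The remaining steps are routine bookkeeping.
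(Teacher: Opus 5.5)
Your proposal is correct and matches the paper's proof: it uses the same incidence ground set, the same one-incidence-per-edge partition matroid, the same product weight $g_\omega=\prod_v q_v$, the same Hessian-eigenvalue check via the triangle inequalities, the same initial set, and the same final appeal to \cref{lem:matroid-constraint}. The only addition is that you list the Hadamard eigenvectors explicitly, which spells out the eigenvalue computation the paper states without derivation.
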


\begin{proof}
Use the incidences $\cU\coloneqq\{(v,e):v\in e\}$ as the ground set,
and let $\cM$ be the partition matroid choosing one incidence of each
edge. Assign to $S\subseteq\cU$ the product of its local incoming-pair
weights, with weight zero unless $S$ has two incidences at every vertex.
Then bases of $\cM$ with positive weight are exactly the six-vertex
orientations, with their prescribed weights. We apply
\cref{lem:matroid-constraint} to this weight.

Its generating polynomial is $\prod_vq_v$, where
\[
 q_v(\bm{x})=a_v(x_1x_2+x_3x_4)+b_v(x_1x_3+x_2x_4)
                  +c_v(x_1x_4+x_2x_3).
\]
An all-zero local triple gives partition function zero.
Otherwise the Hessian of $q_v$ has eigenvalues
$a_v+b_v+c_v$, $a_v-b_v-c_v$, $-a_v+b_v-c_v$, and
$-a_v-b_v+c_v$. The triangle inequalities leave at most one positive
eigenvalue, which is the log-concavity criterion for a nonzero
homogeneous quadratic with nonnegative coefficients
\cite[Section~2]{BrandenHuh2020}. Hence the product is log-concave.
Choosing any positive incoming pair at each vertex gives an initial
set, and coefficient queries are products of the given local weights.
Thus \cref{lem:matroid-constraint} applies, {including equality
in the triangle inequalities}.
\end{proof}

The earlier $\mathsf{FPRAS}$ of \cite{CaiLiuLu2019} requires the stronger
inequalities $a_v^2\le b_v^2+c_v^2$ and their permutations.

\subsection{Bipartite Quantum Max-Cut and the EPR Model}\label{app:quantum}

{
The Pauli matrices are
\[
 X\coloneqq\begin{pmatrix}0&1\\1&0\end{pmatrix},\qquad
 Y\coloneqq\begin{pmatrix}0&-\mathrm{i}\\\mathrm{i}&0\end{pmatrix},\qquad
 Z\coloneqq\begin{pmatrix}1&0\\0&-1\end{pmatrix},
\]
with $\mathrm{i}^2=-1$.}
{
A ket $|\bm{\psi}\rangle$ denotes a column vector, and the bra
$\langle\bm{\psi}|$ is its conjugate transpose. For $a,b\in\{0,1\}$, set
\[
 |0\rangle\coloneqq\begin{pmatrix}1\\0\end{pmatrix},\qquad
 |1\rangle\coloneqq\begin{pmatrix}0\\1\end{pmatrix},\qquad
 |ab\rangle\coloneqq|a\rangle\otimes|b\rangle,
\]
where $\otimes$ denotes the tensor product. Thus
\[
 |\bm{\Phi}^{+}\rangle\coloneqq\frac{|00\rangle+|11\rangle}{\sqrt2}
 =\frac1{\sqrt2}\begin{pmatrix}1\\0\\0\\1\end{pmatrix},
\]
and the local matrix
\[
 P\coloneqq|\bm{\Phi}^{+}\rangle\langle\bm{\Phi}^{+}|\in\mathbb C^{4\times4}
\]
is the orthogonal projector onto its span.

For a vertex set $V=\{1,\ldots,n\}$, the operators act on
$\bigotimes_{v=1}^n\mathbb C^2$, with one factor for each vertex.
The subscript $u$ means that a matrix acts on the factor at vertex
$u$ and as the identity on every other factor; explicitly,
\[
 X_u\coloneqq I_2^{\otimes(u-1)}\otimes X\otimes I_2^{\otimes(n-u)},
\]
where $I_2$ is the $2\times2$ identity matrix and zero tensor powers
are omitted. Define $Y_u,Z_u$ similarly, and let $I=I_2^{\otimes n}$.
Thus $X_uX_v$ applies $X$ at both vertices $u,v$.
For example, when $n=3$, $X_1X_3=X\otimes I_2\otimes X$.

For $a\in\{0,1\}$, the ket $|a\rangle_u$ denotes $|a\rangle$ in
the copy of $\mathbb C^2$ at vertex $u$. For $u<v$, the notation
\[
 |\bm{\Phi}^{+}\rangle_{uv}\coloneqq
 \frac{|0\rangle_u\otimes|0\rangle_v
       +|1\rangle_u\otimes|1\rangle_v}{\sqrt2}
\]
denotes a vector on these two vertex factors. Let
$P_{uv}\in\mathbb C^{2^n\times2^n}$ act as $P$ on the factors at
$u,v$ and as the identity on all remaining factors.
For example, when $n=3$,
\[
 P_{12}=P\otimes I_2,\qquad P_{23}=I_2\otimes P.
\]
\par
}

For a loopless graph $G=(V,E)$ with nonnegative rational edge weights $w_e$,
let
\[
 Q_{\mathrm{cut}}\coloneqq\sum_{e=uv}w_e
       \frac{I-X_uX_v-Y_uY_v-Z_uZ_v}{4},\qquad
 Q_{\mathrm{EPR}}\coloneqq\sum_{e=uv}w_e
       {P_{uv}}.
\]
{
For a Hermitian matrix $M$ and $\beta>0$, define its
\emph{partition function} by
\[
 Z_\beta(M)\coloneqq\operatorname{Tr}\e^{\beta M}.
\]
We consider approximating this quantity and the largest eigenvalue of $M$.
} For background on the two problems, see
\cite{TakahashiSlezakCrosson2024,MarwahaSud2026}.
{Write $W\coloneqq\sum_e w_e$.}

\begin{corollary}\label{cor:quantum-values}
{
Given a loopless graph $G$ with nonnegative rational edge weights,
there is an $\mathsf{FPRAS}$ for $\lambda_{\max}(Q_{\mathrm{EPR}})$.
For rational $\beta>0$, there is also an $\mathsf{FPRAS}$ for
$Z_\beta(Q_{\mathrm{EPR}})$ when $\beta W$ is polynomially bounded
in the input length. {More precisely, a randomized algorithm
approximates this partition function with relative error at most $\epsilon$
in time polynomial in the input length, $\beta W$, and $\epsilon^{-1}$.}

If $G$ is bipartite, the same claims hold for
$\lambda_{\max}(Q_{\mathrm{cut}})$ and $Z_\beta(Q_{\mathrm{cut}})$.
}
\end{corollary}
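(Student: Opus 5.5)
The plan is to reduce both quantities to six-vertex partition functions whose weights satisfy the triangle inequalities of \cref{cor:six-vertex}, and to obtain $\lambda_{\max}$ from traces of powers.

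\textbf{Reducing Quantum Max-Cut to EPR.} First I will remove $Q_{\mathrm{cut}}$ on bipartite graphs. Let $V=V_1\sqcup V_2$ be the bipartition and $U\coloneqq\prod_{v\in V_2}(XZ)_v$. Then $U$ maps the singlet projector $(I-X_uX_v-Y_uY_v-Z_uZ_v)/4$ on an edge $uv$ with $v\in V_2$ to $P_{uv}$, since $I\otimes XZ$ sends the singlet to $\pm|\bm\Phi^+\rangle$. Hence $UQ_{\mathrm{cut}}U^\dagger=Q_{\mathrm{EPR}}$, and both claims for $Q_{\mathrm{cut}}$ follow from those for $Q_{\mathrm{EPR}}$. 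It therefore suffices to treat $Q\coloneqq Q_{\mathrm{EPR}}$ on an arbitrary loopless graph.

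\textbf{Loop expansion of traces.} Write $P=\tfrac12 E$, where $E=2|\bm\Phi^+\rangle\langle\bm\Phi^+|$ is the Temperley--Lieb/Brauer ``cup--cap'' generator. Then
\[
 \operatorname{Tr}Q^k=\sum_{(e_1,\ldots,e_k)\in E^k}\prod_{j}\frac{w_{e_j}}{2}\operatorname{Tr}(E_{e_1}\cdots E_{e_k}).
\]
Each trace equals $2^{\#\mathrm{loops}}$ of the closed diagram obtained by stacking cups and caps and closing the $n$ strands periodically. Contracting the identity segments turns this diagram into a $4$-regular multigraph $H$ with one vertex per factor $E_{e_j}$ and a fixed transition $\{\{1,2\},\{3,4\}\}$ at each vertex. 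Here $1,2$ are the incoming strands at $u,v$, and $3,4$ are the outgoing ones. Factors acting on a vertex with no other factor produce loops; these are peeled off as explicit powers of $2$ and dealt with separately.

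Writing $2^{\#\mathrm{loops}}$ as a sum over orientations of each loop gives exactly the Eulerian orientations of $H$ compatible with the transitions. Locally, the compatible incoming pairs are $\{1,3\},\{1,4\},\{2,3\},\{2,4\}$, each with weight $1$, while $\{1,2\},\{3,4\}$ are excluded. Up to relabeling this is the six-vertex weight $(a,b,c)=(0,1,1)$, which satisfies $a\le b+c$, $b\le a+c$ and $c\le a+b$ with equality. Hence every word contributes $\prod_j(w_{e_j}/2)\cdot Z_{6v}(H)$, and \cref{cor:six-vertex} applies to each word.

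\textbf{Combining over words.} This is the step I expect to be the main obstacle: summing over exponentially many words $(e_1,\ldots,e_k)$. My first attempt is to encode the whole sum as a single instance of \cref{lem:matroid-constraint}. The ground set consists of incidences of the layered time-expanded graph, whose vertices are (time slot $j$, edge $e$) pairs. A partition-matroid constraint selects exactly one edge per slot. The weight is a product of local factors: the six-vertex factors at selected slot vertices, and an identity-transition factor $x_{\mathrm{in}}x_{\mathrm{out}}$ at unselected ones, which passes strands straight through. Each such local factor is a homogeneous quadratic with at most one positive Hessian eigenvalue, so the product is log-concave.

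What remains is to check that the selection variables combine log-concavely with the local factors, using polarization and the rescaling facts in \cref{app:weight-background}; this is where I am least certain. As a fallback, I would sample words from the law proportional to $\prod_j w_{e_j}Z_{6v}(H)$ with a Markov chain on pairs (word, orientation), and estimate the ratio by telescoping over $k$.

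\textbf{Recovering the target quantities.} Given an $\mathsf{FPRAS}$ for $\operatorname{Tr}Q^k$ with $k$ polynomial, I would proceed as follows.
\begin{itemize}
 \item For $Z_\beta$: truncate $\sum_k\beta^k\operatorname{Tr}Q^k/k!$ at $k=O(\beta W+\log(2^n/\epsilon))$. Every term is nonnegative, so relative errors add, and the tail is controlled because $\|Q\|\le W$.
 \item For $\lambda_{\max}$: use $\lambda_{\max}^k\le\operatorname{Tr}Q^k\le2^n\lambda_{\max}^k$ and take $k=\Theta(n/\epsilon)$, so that $(\operatorname{Tr}Q^k)^{1/k}$ is a $(1\pm\epsilon)$ approximation.
\end{itemize}
Polynomial bit complexity for the EPR-in-$\mathsf{BPP}$ claim then follows from \cref{rem:bit-complexity}.
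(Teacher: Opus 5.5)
Your reduction of $Q_{\mathrm{cut}}$ to $Q_{\mathrm{EPR}}$ is sound; your $XZ=-\mathrm{i}Y$ is the paper's $Y$ up to a phase. The loop--orientation rewriting of a single word as a $(0,1,1)$ six-vertex instance is also sound, and so is the recovery of $Z_\beta$ and $\lambda_{\max}$ from the traces $\operatorname{Tr}Q^k$.

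The gap is the step you flag yourself. Nothing in the proposal gives an $\mathsf{FPRAS}$ for $\operatorname{Tr}Q^k$, and the single-instance encoding cannot work. Summing over words imposes the global constraint ``exactly one edge per slot'', so some ground-set element $s$ must record whether a slot node is a cup--cap or an identity. In your labeling, a cup--cap node accepts incoming pairs with one leg in each of $\{1,2\},\{3,4\}$. An identity node accepts those with one leg in each of $\{1,3\},\{2,4\}$. So each node contributes, on its own variables,
\[
 q(s,\bm{x})=\alpha s(x_1+x_2)(x_3+x_4)+(x_1+x_3)(x_2+x_4),\qquad \alpha=w_e/2.
\]
The nodes use disjoint variables, so the weight you would feed to \cref{lem:matroid-constraint} is log-concave only if every such $q$ is. But $q$ is multiaffine, so log-concavity forces the $2\times2$ minor condition $q\,\partial_s\partial_{x_1}q\le2\,\partial_sq\,\partial_{x_1}q$. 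At $\alpha=1$, $s=x_1=x_2=0.01$, $x_3=x_4=1$, the left side is about $2.04$ and the right side about $0.08$. Your fallback Markov chain on (word, orientation) pairs comes with no mixing bound. So neither $Z_\beta$ nor $\lambda_{\max}$ is actually reduced to a result in the paper.

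The missing idea is to replace the power $Q^k$, whose expansion needs the selection constraint, by a product of \emph{local} factors, as the paper does. In your own picture, $I+\delta w_eP_e=I+\tfrac{\delta w_e}{2}E_e$ is an identity plus a cup--cap at a single node. Its orientation weights are $(1,\delta w_e/2,1+\delta w_e/2)$ up to relabeling, and these satisfy the triangle inequalities with equality. Hence $\operatorname{Tr}\bigl(\prod_e(I+\delta w_eP_e)\bigr)^m$ is one six-vertex partition function, and \cref{cor:six-vertex} applies directly once every edge occurs at least twice, which keeps the multigraph loopless.

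The paper proceeds as follows:
\begin{itemize}
 \item It first conjugates by $K^{\otimes n}$ to $R=|\bm{\Psi}^{+}\rangle\langle\bm{\Psi}^{+}|$, which has local weights $(1,1+s_e/2,s_e/2)$.
 \item It uses the symmetric product $B=FF^*$ so that $B\succ0$, and proves $\|\log B-\delta A\|\le5\delta^2$.
 \item It takes $m=\lceil\max\{1,4\beta,20\beta^2/\eta\}\rceil$, so that $\operatorname{Tr}(B^m)$ is within a factor $\e^{\pm\eta/4}$ of $Z_\beta$.
 \item It obtains $\lambda_{\max}$ from $\log Z_\beta/\beta$ for the normalized matrix $Q_{\mathrm{EPR}}/W$, whose top eigenvalue is at least $1/2$, with $\beta=8(n+1)/\epsilon$.
\end{itemize}
Adding such a product-formula step and its error bound to your loop picture would give a complete proof, without needing the basis change $K$.
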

\begin{proof}
{We first prove the EPR claims by reducing them to the six-vertex
partition function in \cref{app:six-vertex}. We then deduce the Quantum
Max-Cut claims by an explicit change of basis.}

{
\paragraph{The EPR matrix: eigenvalue estimation.}
Fix $0<\eps\le1$. If $W=0$, the largest eigenvalue is zero and
$Z_\beta(Q_{\mathrm{EPR}})=2^n$. Otherwise delete zero-weight edges,
put $p_e\coloneqq w_e/W$, and set
$Q\coloneqq Q_{\mathrm{EPR}}/W=\sum_{e=uv}p_eP_{uv}$.
Each $P_{uv}$ is an orthogonal projector, and
$\sum_e p_e=1$, so $0\preceq Q\preceq I$.
For the unit vector $|\bm{\xi}\rangle=|0\rangle^{\otimes n}$,
we have $\langle\bm{\xi}|P_{uv}|\bm{\xi}\rangle=1/2$ for every edge $uv$.
Consequently,
\[
 \frac12=\frac12\sum_e p_e
 =\langle\bm{\xi}|Q|\bm{\xi}\rangle
 \le\lambda_{\max}(Q)\le1.
\]
\par
}

{
For $\beta>0$, let $\lambda_1=\lambda_{\max}(Q)\ge\cdots\ge\lambda_{2^n}$ be the
eigenvalues of $Q$. The eigenvalues of $\e^{\beta Q}$ are
$\e^{\beta\lambda_i}$, so
\[
 \e^{\beta\lambda_1}
 \le Z_\beta(Q)=\sum_{i=1}^{2^n}\e^{\beta\lambda_i}
 \le 2^n\e^{\beta\lambda_1}.
\]
The lower bound keeps only the $i=1$ term; the upper bound replaces
each term by $\e^{\beta\lambda_1}$.
Taking logarithms and dividing by $\beta$ gives
}
\[
 \lambda_{\max}(Q)\le\frac{\log {Z_\beta(Q)}}{\beta}
 \le\lambda_{\max}(Q)+\frac{n\log2}{\beta}.
\]
Thus $\beta\coloneqq8(n+1)/\eps$ and {an estimate of
$Z_\beta(Q)$ with relative error at most $1/4$} suffice for additive error $\eps/4$ in
{$\lambda_{\max}(Q)$}, including the error from evaluating the logarithm
to sufficiently many bits.
Since {$\lambda_{\max}(Q)\ge1/2$}, this gives relative error at
most $\eps$ after multiplying by $W$.

{
\paragraph{The EPR matrix: a six-vertex instance.}
We next transform $Q$ into a matrix $A$ with the same eigenvalues,
so that $Z_\beta(Q)=\operatorname{Tr}\e^{\beta A}$, and approximate this
trace using \cref{cor:six-vertex}. Put
\[
 K\coloneqq\frac{I_2-\mathrm{i}X}{\sqrt2},\qquad
 |\bm{\Psi}^{+}\rangle\coloneqq\frac{|01\rangle+|10\rangle}{\sqrt2},\qquad
 R\coloneqq|\bm{\Psi}^{+}\rangle\langle\bm{\Psi}^{+}|.
\]
The matrix $K$ is unitary and
$(K\otimes K)|\bm{\Phi}^{+}\rangle=-\mathrm{i}|\bm{\Psi}^{+}\rangle$.
For $e=uv$, let $R_e\coloneqq R_{uv}$ be the extension of $R$ to the
full space by the same convention as $P_{uv}$. Thus
\[
 A\coloneqq K^{\otimes n}Q(K^*)^{\otimes n}=\sum_e p_eR_e,
 \qquad \operatorname{Tr}\e^{\beta A}=Z_\beta(Q).
\]
Here $K^*$ is the conjugate transpose of $K$.
\par
}
{Fix rational $\beta>0$ and a target relative error
$0<\eta\le1$. Set
$m\coloneqq\lceil\max\{1,4\beta,20\beta^2/\eta\}\rceil$ and
$\delta\coloneqq\beta/m$. Label the edges $1,\ldots,|E|$ in any order,
and multiply their matrices in this order and then in reverse order:}
\[
 B\coloneqq
 \left(\prod_{e=1}^{|E|}\left(I+\frac{\delta p_e}{2}{R_e}\right)\right)
 \left(\prod_{e=|E|}^{1}\left(I+\frac{\delta p_e}{2}{R_e}\right)\right).
\]
{
We claim the following bound:
\begin{equation}\label{eq:quantum-product-bound}
 B\succ0,\qquad \|\log B-\delta A\|\le5\delta^2,
\end{equation}
where $\|\cdot\|$ denotes the operator norm. The matrix logarithm
$\log B$ is obtained by applying the scalar logarithm to the positive
eigenvalues of $B$ in an orthonormal eigenbasis.
The intuition comes from the scalar approximation
$\log(1+\delta x)-\delta x=O(\delta^2x^2)$ as $\delta x\to0$:
the linear terms sum to $\delta A$, and we expect an error quadratic
in $\delta$. We will verify \cref{eq:quantum-product-bound} at the
end of the proof, accounting for the order of the matrix factors.
Assuming this estimate, our choice of $m$ gives
\[
 \|m\log B-\beta A\|
 \le5m\delta^2=\frac{5\beta^2}{m}\le\frac\eta4.
\]
Let $\mu_1\ge\cdots\ge\mu_{2^n}$ and
$\nu_1\ge\cdots\ge\nu_{2^n}$ be the eigenvalues of $m\log B$ and
$\beta A$, respectively. These matrices are Hermitian, so the norm
bound implies
\[
 \beta A-\frac\eta4 I\preceq m\log B
 \preceq\beta A+\frac\eta4 I.
\]
The min--max principle for eigenvalues therefore gives
\[
 \nu_i-\frac\eta4\le\mu_i\le\nu_i+\frac\eta4
 \qquad(1\le i\le2^n).
\]
Exponentiating these scalar inequalities and summing over $i$ gives
\[
 \e^{-\eta/4}\sum_i\e^{\nu_i}
 \le\sum_i\e^{\mu_i}
 \le\e^{\eta/4}\sum_i\e^{\nu_i}.
\]
Since
\[
 \sum_i\e^{\mu_i}=\operatorname{Tr}\e^{m\log B}
                  =\operatorname{Tr}(B^m),\qquad
 \sum_i\e^{\nu_i}=\operatorname{Tr}\e^{\beta A}=Z_\beta(Q),
\]
we obtain
\[
 \e^{-\eta/4}Z_\beta(Q)\le\operatorname{Tr}(B^m)
                       \le\e^{\eta/4}Z_\beta(Q).
\]
}

{For each edge $e$, write $s_e\coloneqq\delta p_e/2$ and
$F_e\coloneqq I+s_eR_e$ for its matrix in $B$.
Writing $I_4$ for the $4\times4$ identity matrix, the local table of
$F_e$ in row and column order $00,01,10,11$ is}
\[
 {I_4+s_eR}=
 \begin{pmatrix}
 1&0&0&0\\
 0&1+s_e/2&s_e/2&0\\
 0&s_e/2&1+s_e/2&0\\
 0&0&0&1
 \end{pmatrix}.
\]
{
The product $B^m$ uses each edge $2m$ times. Each use gives
one \emph{node} carrying the displayed $4\times4$ table. At a node for
$e=uv$, two edge ends specify the first and second bits of the row
label, corresponding to $u,v$; another two specify the first and
second bits of the column label. For each original vertex $u$,
connect its column end at each node to its row end at the next node
involving $u$, following the order in $B^m$ and connecting the last
such node back to the first. This defines the tensor network.

A configuration $\sigma$ assigns a bit $0$ or $1$ to every network
edge. The four incident bits select one entry at each node $t$;
call this entry its weight $w_t(\sigma)$. The \emph{partition function}
sums the product of node weights over all configurations:
\[
 Z_{\mathrm{TN}}\coloneqq
 \sum_{\sigma}
 \prod_{\text{nodes }t} w_t(\sigma).
\]
Expanding the trace sums the same products of node weights. Matrices
for edges not incident to $u$ leave its bit unchanged, while each
isolated vertex of $G$ contributes two choices for its bit.
Hence, if $G$ has $h$ isolated vertices,
\[
 \operatorname{Tr}(B^m)=2^h Z_{\mathrm{TN}}.
\]

This tensor network has the same partition function as a six-vertex
model; see \cref{fig:quantum-six-vertex} for an illustrated example.
\par
}

\begin{figure}[!t]
\centering

\begin{tikzpicture}[
  x=1cm,y=1cm,>=stealth,
  every node/.style={font=\footnotesize,text=black},
  tensor/.style={draw=black,fill=black!4,rounded corners=1pt,
    minimum width=10mm,minimum height=7mm,inner sep=2pt,align=center},
  networknode/.style={circle,draw=black,fill=black!4,minimum size=7mm,inner sep=1pt},
  wire/.style={draw=black,line width=.7pt},
  bit/.style={fill=white,inner sep=1.5pt}
]
 \newcommand{\pathtrace}[2]{%
  \begin{scope}[shift={(#1)}]
   \ifnum#2=0
    \node[tensor] (a) at (0,1) {$T_1$\\$vw$};
    \node[tensor] (b) at (4.8,1) {$T_2$\\$uw$};
    \node[tensor] (c) at (4.8,-1) {$T_3$\\$uw$};
    \node[tensor] (d) at (0,-1) {$T_4$\\$vw$};
   \else
    \node[networknode] (a) at (0,1) {$T_1$};
    \node[networknode] (b) at (4.8,1) {$T_2$};
    \node[networknode] (c) at (4.8,-1) {$T_3$};
    \node[networknode] (d) at (0,-1) {$T_4$};
   \fi
   \ifnum#2=2
    \tikzset{zero/.style={wire,<-},one/.style={wire,->}}
   \else
    \tikzset{zero/.style={wire},one/.style={wire}}
   \fi
   \draw[one] (a.east)--node[bit]{$w\ifnum#2=1=1\fi$}(b.west);
   \draw[one] (b.south)--node[bit]{$w\ifnum#2=1=1\fi$}(c.north);
   \draw[one] (c.west)--node[bit]{$w\ifnum#2=1=1\fi$}(d.east);
   \draw[zero] (d.north)--node[bit]{$w\ifnum#2=1=0\fi$}(a.south);
   \draw[zero,rounded corners=3pt] (a.west)--(-.85,1)
     --node[bit]{$v\ifnum#2=1=0\fi$}(-.85,-1)--(d.west);
   \draw[one,rounded corners=3pt] (d.45)--(.85,-.75)
     --node[bit]{$v\ifnum#2=1=1\fi$}(.85,.75)--(a.315);
   \draw[zero,rounded corners=3pt] (b.east)--(5.65,1)
     --node[bit]{$u\ifnum#2=1=0\fi$}(5.65,-1)--(c.east);
   \draw[zero,rounded corners=3pt] (c.135)--(3.95,-.75)
     --node[bit]{$u\ifnum#2=1=0\fi$}(3.95,.75)--(b.225);
  \end{scope}%
 }

 \node[font=\small\bfseries] at (2.65,2.25) {(a) Original path $G$};
 \coordinate (u) at (1.3,-.1);
 \coordinate (v) at (4,-.1);
 \coordinate (w) at (2.65,1.6);
 \draw[wire] (u)--node[above left]{$F_{uw}$}(w);
 \draw[wire] (v)--node[above right]{$F_{vw}$}(w);
 \foreach \p in {u,v,w} {\fill[black] (\p) circle (1.7pt);}
 \node[below left] at (u) {$u$};
 \node[below right] at (v) {$v$};
 \node[above] at (w) {$w$};
 \node[align=center] at (2.65,-1.15)
   {Each edge $e$ specifies a matrix $F_e$.\\Each edge is used twice in $B$.};
 \draw[wire,->] (4.6,.7)--node[above,font=\scriptsize]{expand $B$}(5.85,.7);

 \node[font=\small\bfseries] at (9.65,2.25)
   {(b) Tensor network for $\operatorname{Tr}(B^m)$, $m=1$};
 \pathtrace{7.25,0}{0}

 \node[font=\small\bfseries] at (3.3,-2.25)
   {(c) One assignment of the shared bits};
 \pathtrace{.9,-4.15}{1}
 \node[font=\small\bfseries] at (10.4,-2.25)
   {(d) Its orientation: two in, two out};
 \pathtrace{8,-4.15}{2}

 \node[align=center] at (6.8,-6.3) {$
  \begin{array}{c|cccc}
   \text{node}&T_1&T_2&T_3&T_4\\[2pt]
   (\text{row},\text{column})&(10,01)&(01,01)&(01,01)&(01,10)\\[2pt]
   \text{local entry}&s_{vw}/2&1+s_{uw}/2&1+s_{uw}/2&s_{vw}/2
  \end{array}$};
\end{tikzpicture}
\caption{From a two-edge path to a six-vertex instance.
(a)--(b) The edge sequence $vw,uw,uw,vw$ in $B$ gives nodes $T_1,\ldots,T_4$,
each carrying a $4\times4$ table. Each network edge carries the bit of the
original vertex named by its label. (c)--(d) Keep row bits, flip column
bits, and point each arrow toward its end marked $1$. Every nonzero
table entry has equally many $1$'s in its row and column labels, so
two arrows enter each node. This correspondence is reversible and
preserves the node weights. Label the row ends $1,2$ and the column
ends $4,3$, respectively. The local six-vertex weights are
$(a,b,c)=(1,1+s_e/2,s_e/2)$, satisfying
$b=a+c$ and hence the triangle inequalities. Each node has degree four;
there are no loops because each original edge is used at least twice.
The table lists the four selected local entries;
their product is this assignment's contribution to $\operatorname{Tr}(B)$.
For $\operatorname{Tr}(B^m)$, repeat this edge sequence $m$ times before
closing the connections.}
\label{fig:quantum-six-vertex}
\end{figure}

{
There are $2m|E|$ nodes in the network, with rational weights of polynomial encoding
length. {An estimate of $\operatorname{Tr}(B^m)$ with relative
error at most $\eta/4$, obtained from \cref{cor:six-vertex}, together with
the displayed trace bound, gives an estimate of $Z_\beta(Q)$ with relative
error at most $\eta$}: indeed,
$(1+\eta/4)\e^{\eta/4}\le1+\eta$ and
$(1-\eta/4)\e^{-\eta/4}\ge1-\eta$.
The running time is polynomial in the input length, $\beta$, and
$\eta^{-1}$. Since
\[
 Z_\beta(Q_{\mathrm{EPR}})=Z_{\beta W}(Q),
\]
this proves the partition-function claim with running time polynomial
in the input length, $\beta W$, and $\eta^{-1}$.
Taking $\eta=1/4$ and $\beta=8(n+1)/\epsilon$ for the normalized matrix
$Q$ also completes the largest-eigenvalue approximation described above.
\par
}

{
\par\goodbreak
\begin{samepage}
\paragraph{The Quantum Max-Cut matrix.}
Now suppose $G$ is bipartite, with $V=V_0\sqcup V_1$.
We show that $Q_{\mathrm{cut}}$ has the same eigenvalues as
$Q_{\mathrm{EPR}}$ on the same weighted graph. Define the unitary matrix
\[
 U\coloneqq\bigotimes_{v\in V}U_v,\qquad
 U_v\coloneqq
 \begin{cases}
 I_2,&v\in V_0,\\
 Y,&v\in V_1.
 \end{cases}
\]
Since $YXY=-X$, $YYY=Y$, and $YZY=-Z$, conjugation by $U$ gives
\[
 UQ_{\mathrm{cut}}U^*
 =\sum_{e=uv}w_e\frac{I+X_uX_v-Y_uY_v+Z_uZ_v}{4}
 =\sum_{e=uv}w_eP_{uv}
 =Q_{\mathrm{EPR}}.
\]
The second equality follows by multiplying the displayed Pauli matrices.
Therefore
\[
 \lambda_{\max}(Q_{\mathrm{cut}})=\lambda_{\max}(Q_{\mathrm{EPR}}),
 \qquad Z_\beta(Q_{\mathrm{cut}})=Z_\beta(Q_{\mathrm{EPR}}).
\]
The EPR algorithms prove both remaining claims.
\par
\end{samepage}
}

{
\paragraph{Verification of \cref{eq:quantum-product-bound}.}
We return to the EPR product $B$ on an arbitrary graph.
Let $F$ be the first parenthesized product
in the definition of $B$. Each factor is Hermitian and positive
definite, with eigenvalues $1$ and $1+\delta p_e/2$. Thus $F$ is
invertible, the reverse product is $F^*$, and $B=FF^*\succ0$.

To bound $B-I-\delta A$, list the matrices $\delta p_eR_e/2$
in their order of occurrence in $B$ as $H_1,\ldots,H_N$, where
$N=2|E|$, and put $h_j\coloneqq\|H_j\|$. Each edge occurs twice, so
\[
 B=\prod_{j=1}^N(I+H_j),\qquad
 \sum_{j=1}^N H_j=\delta A,\qquad
 \sum_{j=1}^N h_j=\delta\sum_e p_e=\delta.
\]
Expand the product by choosing either $I$ or $H_j$ at each position,
keeping the chosen matrices in their original order. The term with no
chosen $H_j$ is $I$, while the terms with exactly one chosen $H_j$
sum to $\sum_jH_j=\delta A$.
Subtracting these terms therefore leaves
\[
 B-I-\delta A
 =\sum_{k=2}^N\;\sum_{1\le i_1<\cdots<i_k\le N}
 H_{i_1}\cdots H_{i_k}.
\]
The norm of a matrix product is at most the product of the norms of its matrices:
\[
 \|H_{i_1}\cdots H_{i_k}\|
 \le\|H_{i_1}\|\cdots\|H_{i_k}\|
 =h_{i_1}\cdots h_{i_k}.
\]
Applying the triangle inequality to the displayed remainder now yields
\begin{align*}
 \|B-I-\delta A\|
 &\le\sum_{k=2}^N\;\sum_{1\le i_1<\cdots<i_k\le N}
       h_{i_1}\cdots h_{i_k}\\
 &=\prod_{j=1}^N(1+h_j)-1-\sum_{j=1}^N h_j\\
 &\le\e^\delta-1-\delta.
\end{align*}
The equality uses the scalar expansion of $\prod_j(1+h_j)$, with
its constant term $1$ and linear terms $\sum_jh_j$ removed. The last
inequality uses $1+h_j\le\e^{h_j}$ and $\sum_jh_j=\delta$.
Similarly, subtracting only $I$ keeps all terms with $k\ge1$, so
\begin{align*}
 \|B-I\|
 &\le\sum_{k=1}^N\;\sum_{1\le i_1<\cdots<i_k\le N}
       h_{i_1}\cdots h_{i_k}\\
 &=\prod_{j=1}^N(1+h_j)-1\le\e^\delta-1.
\end{align*}

Set $D\coloneqq B-I$. Since $\delta\le1/4$, the scalar exponential
series gives
\[
 \e^\delta-1\le\frac{\delta}{1-\delta}\le2\delta,
 \qquad
 \e^\delta-1-\delta\le\frac{\delta^2}{2(1-\delta)}\le\delta^2.
\]
Hence $\|D\|\le2\delta\le1/2$ and $\|D-\delta A\|\le\delta^2$.
Applying the scalar series for $\log(1+x)$ in an orthonormal
eigenbasis of $D$ gives the absolutely convergent matrix series
\begin{align*}
 \log B&=D+\sum_{k=2}^{\infty}\frac{(-1)^{k+1}}{k}D^k,\\
 \|\log B-D\|
 &\le\sum_{k=2}^{\infty}\frac{\|D\|^k}{k}
 \le\frac{\|D\|^2}{2(1-\|D\|)}
 \le4\delta^2.
\end{align*}
Combining the two remainder bounds proves
\[
 \|\log B-\delta A\|
 \le\|\log B-D\|+\|D-\delta A\|
 \le5\delta^2.
\]
This proves \cref{eq:quantum-product-bound} and completes the proof.
\par
}
\end{proof}

\subsection{Matroid-Congestion Gibbs Distributions}\label{app:congestion}

Let $p,q\ge1$. There are $p$ labeled players, each choosing a basis $B_i$
of a matroid $\cM_i$ on the same ground set $E$ of size $q$, given by an
independence oracle. A \emph{profile} $B=(B_1,\ldots,B_p)$ records
all players' choices. The load of an element $e\in E$ is the number of
bases containing it, $\ell_e(B)\coloneqq|\{i:e\in B_i\}|$.
For each element, supply a table of nonnegative rational weights
$a_e(0),\ldots,a_e(p)$ that are positive exactly for
$l_e\le k\le u_e$, where $0\le l_e\le u_e\le p$ are integers.
Assume these weights are log-concave:
$a_e(k)^2\ge a_e(k-1)a_e(k+1)$ for $1\le k<p$.
The partition function is
\begin{align}\label{eq:congestion-partition}
 Z(a)\coloneqq\sum_{B_i\in\cB(\cM_i),\ 1\le i\le p}
                         \prod_{e\in E}a_e(\ell_e(B)).
\end{align}
Thus a profile has positive weight exactly when
$l_e\le\ell_e(B)\le u_e$ for every element $e$.

\begin{corollary}\label{cor:congestion-tables}
For the matroids and rational log-concave tables above, there is an
$\mathsf{FPRAS}$ for $Z(a)$ with running time polynomial in
$pq$, the table encoding length, and $\epsilon^{-1}$.
\end{corollary}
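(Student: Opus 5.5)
We reduce $Z(a)$ to a single subset convolution of two log-concave weights and invoke \cref{thm:log-concave-convolution}, in the same spirit as \cref{lem:matroid-constraint}.

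\textbf{The two factors.} Let the ground set be $\cU\coloneqq E\times[p]$ and write a profile as the set $X=\bigsqcup_i B_i\times\{i\}$. Take the first weight to be
\[
 \mu(X)\coloneqq\prod_i\allone[B_i\in\cB(\cM_i)],
\]
the basis indicator of the direct sum $\cM_1\oplus\cdots\oplus\cM_p$. Its generating polynomial is a product of matroid basis polynomials in disjoint variables, so it is log-concave, with $\cR_\mu=1$. The element tables must then be encoded by the second factor $\nu$. The obstacle is that $\prod_e a_e(\ell_e)$ depends only on the counts $|X\cap(\{e\}\times[p])|$, whereas convolution pairs $X$ with its complement $\cU\setminus X$. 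So set
\[
 \nu(Y)\coloneqq\prod_{e}\frac{a_e\bigl(p-|Y\cap(\{e\}\times[p])|\bigr)}{\binom{p}{p-|Y\cap(\{e\}\times[p])|}}.
\]
Then $Z(a)=[\bm{x}^{\cU}](g_\mu g_\nu)$, provided we can absorb the binomial factors.

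\textbf{Log-concavity of $g_\nu$.} Write $g_\nu=\prod_e h_e$ over disjoint blocks of variables. Each $h_e$ is the polarization of the univariate polynomial
\[
 \sum_k a_e(p-k)\,z^k.
\]
Its coefficient sequence is log-concave and has no internal zeros, since $a_e$ is positive exactly on $[l_e,u_e]$. However, multiaffine log-concavity of a polarization needs a Lorentzian input. A univariate polynomial with log-concave, contiguously supported coefficients is Lorentzian only after the normalization $c_k/k!$, and homogenization requires ultra-log-concavity with respect to degree $p$.

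The plan is therefore to avoid requiring $\nu$ alone to be symmetric in the copies. Instead, divide by $\binom pk$ differently: define $\nu$ to select an \emph{ordered} block structure, via a partition matroid that forces the complement's copies of $e$ to be an initial segment $\{1,\ldots,p-k\}$, or, equivalently, use \cref{prop:integer-weight-intersection}. The latter is the cleanest route. A table $a_e$ that is log-concave on a contiguous support is log-M-concave as a weight on $\mathbb Z_{\ge0}$. The load profile is then an inner product of the log-M-concave weight
\[
 \bm\ell\mapsto\#\{\text{profiles with loads }\bm\ell\}\ \text{(times a } \bm\ell! \text{ correction)}
\]
with $\prod_e a_e$, so \cref{prop:integer-weight-intersection} handles the factorial bookkeeping through the Lorentzian normalization $\omega(\bm k)/\bm k!$ recalled in \cref{app:weight-background}.

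\textbf{Main obstacle.} The main obstacle is justifying that the multiset-of-bases (load) measure of $p$ matroids is log-M-concave after the right factorial normalization. I would argue this via matroid union with labeled copies, as in the transfer above. If that fails, I would fall back on the copy-expansion to $E\times[p]$ with the labeled direct sum for $\mu$. In that fallback, take $\nu$ to be a product over $e$ of elementary symmetric polynomials $e_k$ in the $p$ copies, weighted by $a_e(p-k)/\binom pk$, and check the quadratic log-concavity criterion block by block using ultra-log-concavity of $a_e(k)/\binom pk$.

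\textbf{Final check.} Finally, verify that the starting sets, coefficient ranges (polynomial bit length), and oracle costs are polynomial in $pq$ and the table encoding length.
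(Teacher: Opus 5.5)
You have a genuine gap. Your first factor, the labeled direct-sum basis indicator, is the right one; it is what the paper feeds into \cref{lem:matroid-constraint}. But none of your three choices for the table side works.

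\textbf{The binomially normalized $\nu$.} Your explicit $\nu$ with the factors $1/\binom{p}{k}$ computes the wrong quantity. Because $\mu$ distinguishes players, each profile is exactly one set $X$, so
\[
 [\bm{x}^{\cU}](g_\mu g_\nu)=\sum_{B}\prod_e\frac{a_e(\ell_e(B))}{\binom{p}{\ell_e(B)}} .
\]
There is no multiplicity for the binomials to cancel. This normalization also destroys log-concavity. For $p=2$ and $a_e\equiv1$, the block of $g_\nu$ at $e$ is $1+\tfrac12(x_1+x_2)+x_1x_2$. Near the origin its log-Hessian has a positive eigenvalue, since $c_\varnothing c_{12}=1>2c_1c_2=\tfrac12$. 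Your fallback reuses this same $\nu$, and the ultra-log-concavity of $a_e(k)/\binom pk$ that it needs fails already for constant tables. The ``ordered block'' idea cannot work either: the copies of $e$ are player labels determined by the profile, so forcing them to be an initial segment discards profiles.

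\textbf{The route through \cref{prop:integer-weight-intersection}.} This rests on a false claim. The load-count weight $N(\bm\ell)$, the number of profiles with loads $\bm\ell$, is not log-$\mathrm M$-concave under any coordinatewise normalization; such factors cancel on both sides of the exchange inequality at $0/1$ load vectors. Take $p=2$ and two rank-one matroids on $\{1,2,3,4\}$ whose non-loops are $\{1,2,3\}$ and $\{1,2,4\}$. Then
\[
 N(\bm{e}_1+\bm{e}_2)\,N(\bm{e}_3+\bm{e}_4)=2,
 \qquad
 N(\bm{e}_1+\bm{e}_3)\,N(\bm{e}_2+\bm{e}_4)=N(\bm{e}_1+\bm{e}_4)\,N(\bm{e}_2+\bm{e}_3)=1 .
\]
So \cref{eq:integer-weight-exchange} fails at $i=1$. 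The product $\prod_i g_{\cM_i}$ is Lorentzian, but that does not make its coefficients log-$\mathrm M$-concave. That proposition also needs weights symmetric within each block, which the labeled direct sum is not.

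\textbf{The missing idea.} Drop the binomials and keep the table weight on the labeled ground set. The paper takes
\[
 \omega(X)=\allone[|X|=R]\prod_e a_e(|X\cap V_e|)
\]
on $[p]\times E$ and checks the Boolean exchange inequality directly. If $a\in(X\setminus Y)\cap V_e$ and $|X\cap V_e|\le|Y\cap V_e|$, swap $a$ with some $b\in(Y\setminus X)\cap V_e$; all loads are unchanged. Otherwise swap with $b\in(Y\setminus X)\cap V_f$ for an element $f$ where $X$ has smaller load; log-concavity of $a_e$ and $a_f$ gives the inequality. Brändén--Huh then makes $g_\omega$ log-concave, and \cref{lem:matroid-constraint} with the direct-sum matroid finishes.

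Equivalently, in your own framework, $\nu(Y)=\prod_e a_e(p-|Y\cap V_e|)$ works and gives exactly $Z(a)$. Write $\sigma_k$ for the elementary symmetric polynomial of degree $k$. The block polynomial $\sum_k a_e(p-k)\,\sigma_k(\bm{x}_e)$ is the restriction to $\bm{y}=\mathbf 1$ of the polarization of $\sum_k a_e(p-k)\binom pk z^k w^{p-k}$. That bivariate form is Lorentzian precisely because $k\mapsto a_e(p-k)$ is log-concave without internal zeros. You polarized $\sum_k a_e(p-k)z^k$ instead of $\sum_k a_e(p-k)\binom pk z^k$.

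A starting set of positive weight comes from choosing integers $k_e\in[l_e,u_e]$ with $\sum_e k_e=R$.
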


\begin{proof}
We apply \cref{lem:matroid-constraint} on the set
$\cU\coloneqq[p]\times E$ of ordered pairs $(i,e)$: the first entry
$i\in[p]$ is a player index, and the second entry $e\in E$ is an element.
Including $(i,e)$ records that player $i$ uses element $e$.
Let $\cM$ be the matroid whose bases are precisely
$\{(i,e):1\le i\le p,\ e\in B_i\}$ as each $B_i$ ranges over the bases of $\cM_i$.
Thus its rank is $R\coloneqq\sum_i{\rk}(\cM_i)$.
Put $V_e\coloneqq[p]\times\{e\}$ and define
\[
 \omega(X)\coloneqq\allone[|X|=R]\prod_e a_e(|X\cap V_e|).
\]
Each basis of $\cM$ specifies one profile, so
$Z(a)=\sum_{X\in\cB(\cM)}\omega(X)$.

To show that the generating polynomial of $\omega$ is log-concave,
we use the following sufficient condition
\cite[Corollary~3.16 and Theorem~2.30]{BrandenHuh2020}:
for every $X,Y$ with $\omega(X),\omega(Y)>0$ and every
$a\in X\setminus Y$, some $b\in Y\setminus X$ satisfies
\begin{equation}\label{eq:congestion-exchange}
 \omega(X)\omega(Y)\le\omega(X-a+b)\omega(Y+a-b).
\end{equation}
Here $X-a+b$ means $(X\setminus\{a\})\cup\{b\}$, and similarly for $Y+a-b$.
The sets $X,Y$ in this condition need only have positive $\omega$-weight;
they are not required to be bases of $\cM$.
For each element $g\in E$, write $x_g=|X\cap V_g|$ and $y_g=|Y\cap V_g|$.
Fix $a \in V_e$ for some element $e$.
If $x_e\le y_e$, the fact that $a\in(X\setminus Y)\cap V_e$ gives
\[
 |(Y\setminus X)\cap V_e|
 =y_e-x_e+|(X\setminus Y)\cap V_e|\ge1.
\]
Thus we can choose $b\in(Y\setminus X)\cap V_e$ simply by counting;
no matroid property is needed. Exchanging $a$ and $b$ leaves every
load unchanged, so $\omega(X-a+b)=\omega(X)$ and
$\omega(Y+a-b)=\omega(Y)$.
Otherwise $x_e>y_e$. Since
$\sum_g x_g=|X|=|Y|=\sum_g y_g$, there must be a different element
$f\ne e$ with $x_f<y_f$. Choose $b\in(Y\setminus X)\cap V_f$.
Since $V_f=[p]\times\{f\}$, the first entry of $b$ is a player index
and its second entry is the fixed element $f$. This choice is possible
because $Y$ contains more pairs in $V_f$ than $X$.
Exchanging $a$ and $b$ changes the loads at $e$ from $(x_e,y_e)$ to
$(x_e-1,y_e+1)$, and at $f$ from $(x_f,y_f)$ to $(x_f+1,y_f-1)$.
These loads stay within their allowed intervals, and the log-concavity
of $a_e,a_f$ gives \cref{eq:congestion-exchange}.

To find a set of positive weight, choose integers $k_e\in[l_e,u_e]$ with
$\sum_e k_e=R$ and select $k_e$ elements from each $V_e$. Such integers exist
exactly when $\sum_e l_e\le R\le\sum_e u_e$; otherwise $Z(a)=0$.
We evaluate $\omega(X)$ by checking $|X|=R$, counting $|X\cap V_e|$
for each element, and multiplying the corresponding weights.
The ratio of its largest
to smallest positive weight satisfies
\[
 \cR_\omega\le\prod_e
     \frac{\max_{l_e\le k\le u_e}a_e(k)}
          {\min_{l_e\le k\le u_e}a_e(k)}.
\]
Thus \cref{lem:matroid-constraint} applies.
\end{proof}

{
\paragraph{Weights specified by costs.}
For the same matroids and integers $l_e,u_e$ as in \cref{cor:congestion-tables},
suppose we are given a rational $\beta\ge0$ and \emph{rational costs}
$c_e(1)\le\cdots\le c_e(p)$, where $c_e(k)$ is the cost to each player
using element $e$ at load $k$. Define the tables by
\[
 a_e(k)\coloneqq
 \begin{cases}
  \exp\bigl(-\beta\sum_{j=1}^k c_e(j)\bigr),& l_e\le k\le u_e,\\
  0,&\text{otherwise},
 \end{cases}
\]
with the empty sum equal to zero. We use the same partition function
$Z(a)$ as in \cref{eq:congestion-partition}. When $Z(a)>0$, the
\emph{Gibbs distribution} assigns probability
$\prod_e a_e(\ell_e(B))/Z(a)$ to profile $B$. When $l_e=0$ and $u_e=p$
for every $e$, sampling from this distribution is the question posed in
\cite[Section~1.3]{Kleer2023}.

If each real weight can be evaluated exactly in one operation, this is
already an application of the proof of \cref{cor:congestion-tables}:
the ratios $a_e(k)/a_e(k-1)=e^{-\beta c_e(k)}$, for $l_e<k\le u_e$,
are nonincreasing in $k$, so the tables are log-concave.
Sampling follows by using approximate counts to make the players'
choices one element at a time. The next corollary handles the fact
that these weights need not be rational: we approximate them by
rational log-concave tables and control the resulting error.
To state the running time, put
\[
 \Lambda\coloneqq\sum_e\log\frac{\max_{l_e\le k\le u_e}a_e(k)}
                                    {\min_{l_e\le k\le u_e}a_e(k)}.
\]
This value is rational and can be computed exactly from the costs.
The weight-ratio bound in the preceding proof gives
$\log\cR_\omega\le\Lambda$, so even with exact evaluation the
running-time bound depends polynomially on the numerical value of
$\Lambda$. Maxima below range over $l_e\le k\le u_e$.

\begin{corollary}\label{cor:congestion-gibbs}
For the matroids and rational numbers
$c_e(1)\le\cdots\le c_e(p)$ and $\beta\ge0$ specified above,
restrict the profiles $B$ to those satisfying
$l_e\le\ell_e(B)\le u_e$ for every $e\in E$.
{A randomized algorithm returns an estimate of $Z(a)$ with
relative error at most $\epsilon$}, with probability at least $3/4$, in the form
$\bigl(\prod_e\max_k a_e(k)\bigr)\widehat Z$, where $\widehat Z$ is
rational and the exponential factor is specified by its exact rational
exponent. If $Z(a)>0$, there is also a sampler
within total variation distance $\epsilon$ of the Gibbs distribution.
The running times are polynomial in $pq$, the input encoding length,
$\epsilon^{-1}$, and the \emph{numerical} value of $\Lambda$.
If the costs are nonnegative and every $l_e=0$, then
$\max_k a_e(k)=a_e(0)=1$ for every $e$, so the estimate is rational.
\end{corollary}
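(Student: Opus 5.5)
The plan is to reduce to \cref{cor:congestion-tables} by replacing the real tables $a_e(k)$ with rational log-concave tables that are multiplicatively close to them. First normalize: for each $e$, set $a'_e(k)\coloneqq a_e(k)/\max_j a_e(j)$. Then $Z(a)=\bigl(\prod_e\max_k a_e(k)\bigr)Z(a')$ and each normalized table lies in $[\e^{-\Lambda},1]$ on $[l_e,u_e]$. Write $a'_e(k)=\exp(-\beta(C_e(k)-M_e))$, where $C_e(k)=\sum_{j\le k}c_e(j)$ and $M_e$ is the rational minimum of $C_e$ over $[l_e,u_e]$. The exponents $\theta_e(k)\coloneqq\beta(C_e(k)-M_e)\in[0,\Lambda]$ are rational and convex in $k$, since their increments $\beta c_e(k)$ are nondecreasing. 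We approximate $\exp(-\theta)$ from its exponent rather than entrywise, so that log-concavity is preserved.

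Concretely, fix a precision $\xi\coloneqq\epsilon/(8q)$ and choose $\widetilde a_e(k)\coloneqq\exp(-\widetilde\theta_e(k))$. Here $\widetilde\theta_e$ is a convex piecewise-linear function whose increments $\widetilde\theta_e(k)-\widetilde\theta_e(k-1)$ are rationals of the form $\log r$ with $r$ rational. The simplest realization is to round each ratio directly. Compute a rational $\rho_e(k)$ with $|\log\rho_e(k)+\beta c_e(k)|\le\xi/p$, using truncated Taylor series on $\e^{-\beta c_e(k)}$ with polynomially many bits in $\log(1/\xi)$ and $\beta|c_e(k)|$. Enforce monotonicity $\rho_e(k)\ge\rho_e(k+1)$ by taking running minima; this changes each ratio by at most the same tolerance, because the exact ratios are already monotone. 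Define $\widetilde a_e(k)$ as the product of the ratios from the minimizer $k^\ast$ outward, with $\widetilde a_e(k^\ast)=1$.

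The resulting tables are rational, positive exactly on $[l_e,u_e]$, and log-concave because their consecutive ratios are nonincreasing. Each entry satisfies $|\log\widetilde a_e(k)-\log a'_e(k)|\le\xi$, since at most $p$ ratios are multiplied. Summing over the $q$ elements gives $\e^{-q\xi}Z(a')\le Z(\widetilde a)\le\e^{q\xi}Z(a')$. The same pointwise bound shows that the Gibbs distributions for $a'$ and $\widetilde a$ have total variation distance at most $\e^{2q\xi}-1\le\epsilon/2$.

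Apply \cref{cor:congestion-tables} to $\widetilde a$ with accuracy $\epsilon/4$ to get $\widehat Z$, and return $\bigl(\prod_e\max_k a_e(k)\bigr)\widehat Z$ with the exact exponent $-\beta\sum_e M_e$. For sampling, use self-reducibility \cite{JVV1986}: decide the players' bases one pair $(i,e)$ at a time, pinning inclusion or exclusion. Pinning corresponds to contraction or deletion in the matroid of \cref{lem:matroid-constraint} and keeps $\omega$ log-concave. Taking $\epsilon/(\mathrm{poly}(pq))$-accurate counts in the standard way yields a sampler for the $\widetilde a$-Gibbs law within $\epsilon/2$, and hence within $\epsilon$ of the true law.

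For the running time, $\log\cR_{\widetilde\omega}\le\Lambda+q\xi$. The bit lengths of the $\widetilde a_e(k)$ are polynomial in $p$, $\log(1/\xi)$, and the numerical value of $\Lambda$, because products of up to $p$ rationals with controlled denominators stay polynomially sized once each ratio is truncated to $O(\log(1/\xi)+\Lambda)$ bits. This is the step I expect to be the main obstacle: the rounding must keep the ratios monotone, keep the encoding length controlled by $\Lambda$ rather than by $\beta\max|c|$ alone, and keep the stated normalization exact. If the truncated ratios grow too long, I will instead round each $\widetilde\theta_e(k)$ to a grid and use the dyadic exponential $2^{-\lceil\theta/(\xi\ln2)\rceil\xi}$, after restoring convexity by a lower convex envelope.

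When costs are nonnegative and $l_e=0$, we have $M_e=0$ and $\max_k a_e(k)=1$, so the estimate is rational.
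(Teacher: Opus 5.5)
Your proof is correct, but it restores log-concavity after rational rounding differently from the paper. The paper approximates each normalized entry $\bar a_e(k)=\e^{-x}$ directly by a rational $\widetilde a_e(k)$, accurate to $\tau/16$ in logarithm. It then repairs the possible loss of log-concavity with a uniform quadratic tilt, $b_e(k)=\widetilde a_e(k)(1+\tau)^{-k^2}$: the tilt contributes $2\log(1+\tau)$ to each second log-difference, which dominates the rounding defect $\tau/4$. The price is an extra distortion of up to $p^2\log(1+\tau)$ per element, which is absorbed by taking $\tau=\epsilon/[64q(p+1)^2]$.

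You instead round the consecutive ratios $\e^{-\beta c_e(k)}$, take running minima, and rebuild each table from its maximizer $k^\ast$. This works because the exact ratios are nonincreasing, so a running minimum of $\delta$-accurate approximations stays $\delta$-accurate. The result is an exactly log-concave rational table with no auxiliary tilt, normalized exactly at $k^\ast$. The cost is per-ratio precision $\xi/p$, with errors accumulating along a chain of up to $p$ ratios rather than entrywise. Your route exploits the monotone-ratio structure of cost-defined tables, whereas the paper's repair uses only that $\bar a_e$ is log-concave. Both give polynomial bit length and the same downstream use of \cref{cor:congestion-tables} and self-reduction.

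The step you flag as the main obstacle is not one. For $l_e<k\le u_e$, you have $\beta c_e(k)=\theta_e(k)-\theta_e(k-1)$ with both values in $[0,\Lambda]$, so $\beta|c_e(k)|\le\Lambda$. Each rounded ratio therefore needs only $O(\Lambda+\log(p/\xi))$ bits, and ratios outside the support are never used. You can drop the dyadic fallback, which would fail as stated anyway: for a non-integer rational $\xi$, $2^{-m\xi}$ is irrational unless $m\xi$ is an integer. A minor correction: the range bound should read $\log\cR_{\widetilde\omega}\le\Lambda+2q\xi$, since both the largest and smallest entries can move by $\xi$.
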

\par}

\begin{proof}
{
It suffices to approximate the partition function and sample after
dividing each table by its largest entry. Indeed, put
\[
 \bar a_e(k)\coloneqq\frac{a_e(k)}{\max_j a_e(j)}.
\]
This leaves the Gibbs distribution unchanged, and
$Z(a)=\bigl(\prod_e\max_k a_e(k)\bigr)Z(\bar a)$.
The logarithm of this factor is rational and can be computed exactly
from the given costs. We replace $\bar a_e$ by rational log-concave
tables with nearly the same weights. For $0<\epsilon\le1$, set
$\tau\coloneqq\epsilon/[64q(p+1)^2]$ and $r\coloneqq1+\tau$.
For each $l_e\le k\le u_e$, write $\bar a_e(k)=e^{-x}$, where
$x=-\log\bar a_e(k)\in[0,\Lambda]$ is rational and computable exactly
from the costs. Standard algorithms for evaluating the exponential
function \cite[Sections~4.3--4.4]{BrentZimmermann2010} compute a
positive rational approximation $\widetilde a_e(k)$ satisfying
\[
 \left|\log\frac{\widetilde a_e(k)}{\bar a_e(k)}\right|\le\tau/16
\]
in time polynomial in the input encoding length, $\Lambda$, and
$\log(\tau^{-1})$.
\par}
{The rational approximation may lose log-concavity.
To ensure log-concavity, we make a further small adjustment: define}
$b_e(k)\coloneqq\widetilde a_e(k)r^{-k^2}$ on the allowed interval
and zero outside it. For every $l_e<k<u_e$,
\[
 \log\frac{b_e(k)^2}{b_e(k-1)b_e(k+1)}
 \ge2\log r-\tau/4>0.
\]
Thus $b_e$ is log-concave, and every profile $B$
{with $l_e\le\ell_e(B)\le u_e$ for every $e\in E$} obeys
\[
 {\left|\log\frac{\prod_e b_e(\ell_e(B))}{\prod_e \bar a_e(\ell_e(B))}\right|}
 \le q\tau/16+qp^2\log r\le\epsilon/64.
\]
Consequently the sums $Z(b)$ and {$Z(\bar a)$} differ by a factor at most
$e^{\epsilon/64}$ in either direction. When these sums are positive,
dividing the profile weights by their respective sums gives two
probability distributions whose total variation distance is at most
$e^{\epsilon/32}-1<\epsilon/16$.
Applying \cref{cor:congestion-tables} with accuracy $\epsilon/4$
{{gives an estimate of $Z(\bar a)$ with relative error at most $\epsilon$.}
Multiplying by $\prod_e\max_k a_e(k)$ gives the required estimate of $Z(a)$.}
For sampling, decide for each pair $(i,e)$ whether $e\in B_i$, using
approximate counts to estimate the total weight of each choice
\cite{JVV1986}. Fixing these choices deletes or contracts elements of
the player matroids. For each element, it also shifts the load index
by the number of players already required to use it and removes
infeasible entries. The resulting tables remain log-concave, so
\cref{cor:congestion-tables} applies at every step.
Sampling from the distribution defined by the rational
tables $b_e$ within total variation distance $\epsilon/4$ then gives
the desired approximate Gibbs sample.
The rational tables have encoding length polynomial in $\Lambda$, the
input encoding length, and $\epsilon^{-1}$, which gives the stated
running time.
\end{proof}

\subsection{Integer Polymatroid Intersections, Contingency Tables, and Bounded Flows}
\label{app:integer-polymatroids}

We first reduce weighted intersections of integer families to subset
convolution. Each integer coordinate is encoded by a subset of a block
of labeled elements; factorial weights compensate for the multiple
encodings of the same integer vector. This gives an approximate
counting algorithm for intersections of integral polymatroid bases.
We next obtain an $\mathsf{FPRAS}$ for contingency tables with
log-concave entry weights by separating their row-sum and column-sum
constraints. A weight-preserving bijection then gives the result for
bounded integer flows.
In \cref{app:network-polytopes}, the flow result yields an
$\mathsf{FPRAS}$ for vertices of network-matrix polytopes.

{We use log-$\mathrm M$-concave weights and polarization as defined in \cref{app:weight-background}.}
A nonempty finite family $\mathcal A\subseteq\mathbb Z_{\ge0}^m$
is \emph{$\mathrm M$-convex} if its indicator weight is
log-$\mathrm M$-concave: all vectors have the same coordinate sum,
and both exchanged vectors in \cref{eq:integer-weight-exchange}
belong to $\mathcal A$.
In particular, the integer points of an integral polymatroid base
polytope form such a family \cite{BrandenHuh2020}.

\begin{proposition}[Counting weighted intersections of integer families]
\label{prop:integer-weight-intersection}
Let $\bm{c}\in\mathbb Z_{\ge1}^m$ be given in \emph{unary}, meaning
that each $c_i$ is represented by $c_i$ copies of the symbol $1$.
Let $\mu,\nu:[\bm{0},\bm{c}]\cap\mathbb Z^m\to\mathbb R_{\ge0}$
be log-$\mathrm M$-concave weights, extended by zero outside the box,
given by evaluation oracles. Suppose we are given initial vectors
$\bm{x}_0,\bm{y}_0$ with $\mu(\bm{x}_0)>0$ and $\nu(\bm{y}_0)>0$,
and upper bounds on their coefficient ranges whose logarithms are
polynomial in the input length. Then there is an $\mathsf{FPRAS}$ for
\begin{align*}
 Z\coloneqq\sum_{\bm{k}\in[\bm{0},\bm{c}]\cap\mathbb Z^m}
                   \mu(\bm{k})\nu(\bm{k})
\end{align*}
in the arithmetic model of \cref{thm:log-concave-convolution}.
In particular, membership
oracles and one point in each of two $\mathrm M$-convex families
$\mathcal A,\mathcal D$ in the box suffice to approximate
$|\mathcal A\cap\mathcal D|$.
\end{proposition}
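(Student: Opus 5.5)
Blow up each integer coordinate into a block of labeled copies and reduce $Z$ to a coefficient of a product of two multiaffine log-concave polynomials, then invoke \cref{thm:log-concave-convolution}. Let $\cU\coloneqq\bigsqcup_{i=1}^m\{(i,1),\ldots,(i,c_i)\}$, so $|\cU|=\sum_ic_i$ is polynomial because $\bm{c}$ is in unary. For $A\subseteq\cU$ write $\bm{k}(A)\in\mathbb Z_{\ge0}^m$ with $k_i(A)=|A\cap(\{i\}\times[c_i])|$. Define
\[
 \widehat\mu(A)\coloneqq\frac{\mu(\bm{k}(A))}{\prod_i\binom{c_i}{k_i(A)}},\qquad
 \widehat\nu(B)\coloneqq\frac{\nu(\bm{c}-\bm{k}(B))}{\prod_i\binom{c_i}{k_i(B)}}.
\]
Each $\bm{k}$ is represented by exactly $\prod_i\binom{c_i}{k_i}$ sets $A$, and for $B=\cU\setminus A$ we have $\bm{c}-\bm{k}(B)=\bm{k}(A)$. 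Hence
\[
 \sum_{A\subseteq\cU}\widehat\mu(A)\widehat\nu(\cU\setminus A)=\sum_{\bm{k}}\prod_i\binom{c_i}{k_i}\cdot\frac{\mu(\bm{k})\nu(\bm{k})}{\prod_i\binom{c_i}{k_i}^2}=\sum_{\bm{k}}\frac{\mu(\bm{k})\nu(\bm{k})}{\prod_i\binom{c_i}{k_i}}.
\]
This picks up an unwanted factor $1/\prod_i\binom{c_i}{k_i}$. To remove it, I will absorb the binomial into one factor only. Take $\widehat\mu(A)\coloneqq\mu(\bm{k}(A))/\prod_i\binom{c_i}{k_i(A)}$, which is exactly the polarization of $\sum_{\bm{k}}\mu(\bm{k})\bm{z}^{\bm{k}}$. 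Take $\widehat\nu(B)\coloneqq\nu(\bm{c}-\bm{k}(B))$, with no division; it is the polarization of $\sum_{\bm{k}}\nu(\bm{c}-\bm{k})\prod_i\binom{c_i}{k_i}\bm{z}^{\bm{k}}$. The identity above then yields exactly $Z=[\bm{x}^{\cU}](g_{\widehat\mu}g_{\widehat\nu})$.

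The main step is proving log-concavity of $g_{\widehat\mu}$ and $g_{\widehat\nu}$. By \cite[Corollary~3.16]{BrandenHuh2020}, $\sum_{\bm{k}}\mu(\bm{k})\bm{z}^{\bm{k}}/\bm{k}!$ is Lorentzian. Its polarization (\cite[Proposition~3.1]{BrandenHuh2020}) replaces $z_i^{k}$ by the normalized elementary symmetric polynomial, so it equals the generating polynomial of $A\mapsto\mu(\bm{k}(A))/(\bm{k}!\prod_i\binom{c_i}{k_i})$. This differs from $\widehat\mu$ by the factor $1/\bm{k}!$. I will repair it with the factor $\prod_i k_i!\binom{c_i}{k_i}=\prod_i c_i!/(c_i-k_i)!$, which is what the polarization normalization naturally produces. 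Concretely, I will check that the weight $\bm{k}\mapsto\mu(\bm{k})\prod_i\frac{(c_i-k_i)!}{c_i!}$ is still log-$\mathrm M$-concave: multiplying by a product of univariate log-concave functions of $k_i$ preserves the exchange inequality \cref{eq:integer-weight-exchange}, because the exchange moves $k_i\downarrow,k_j\uparrow$ on $\bm{x}$ and the reverse on $\bm{y}$ with $x_i>y_i$, $x_j<y_j$. Then $\widehat\mu$ is the polarization of the Lorentzian polynomial obtained from this modified weight, and $\mu(\bm{k})/\prod\binom{c_i}{k_i}$ arises exactly after dividing by $\bm{k}!$. The same treatment applies to $\widehat\nu$: $\bm{k}\mapsto\nu(\bm{c}-\bm{k})$ is log-$\mathrm M$-concave by the coordinate reflection, and multiplying by $\bm{k}!\binom{\bm{c}}{\bm{k}}$, a product of univariate log-concave factors, again preserves the property. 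Lorentzian homogeneous polynomials are log-concave (\cref{app:weight-background}), so the hypotheses of \cref{thm:log-concave-convolution} hold. I expect this bookkeeping of factorial normalizations to be the main obstacle: getting the exact weights so that polarization yields precisely $\widehat\mu,\widehat\nu$ while both stay log-$\mathrm M$-concave. A fallback is to allot the factorials symmetrically and verify the univariate factors are log-concave in each case.

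The remaining hypotheses are routine. Evaluation oracles for $\widehat\mu,\widehat\nu$ use one query plus computing binomials of polynomial size. Initial sets are any $A_0$ with $\bm{k}(A_0)=\bm{x}_0$ and any $B_0$ with $\bm{k}(B_0)=\bm{c}-\bm{y}_0$. The coefficient ranges grow by at most $\prod_i 2^{c_i}\cdot c_i!$, which adds $O(|\cU|\log|\cU|)$ to $\log\cR$. The running time is therefore polynomial by \cref{thm:log-concave-convolution}. For the final claim, take $\mu=\allone_{\mathcal A}$ and $\nu=\allone_{\mathcal D}$. These are log-$\mathrm M$-concave by the definition of $\mathrm M$-convexity, with coefficient range $1$, and then $Z=|\mathcal A\cap\mathcal D|$.
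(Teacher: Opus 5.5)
There is a genuine gap, and it is in the log-concavity of $g_{\widehat\mu}$. Your convolution identity is correct. Your $\nu$ side is also fine: $\bm{k}\mapsto\nu(\bm{c}-\bm{k})\,\bm{c}!/(\bm{c}-\bm{k})!$ is the reflected weight times univariate log-concave factors. Its exponential generating polynomial is therefore Lorentzian, and it polarizes to your $g_{\widehat\nu}$.

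The $\mu$ side fails. For $\widehat\mu(A)=\mu(\bm{k}(A))/\binom{\bm{c}}{\bm{k}(A)}$, the polynomial $g_{\widehat\mu}$ is the polarization of the \emph{ordinary} generating polynomial $\sum_{\bm{k}}\mu(\bm{k})\bm{z}^{\bm{k}}$. So you would need log-$\mathrm M$-concavity of $\bm{k}\mapsto\bm{k}!\,\mu(\bm{k})$, but $k!$ is log-convex.

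Your repair does not address this, for two reasons.
\begin{itemize}
\item The map $k\mapsto(c-k)!/c!$ is also log-convex: with $f(k)=(c-k)!/c!$, the ratio $f(k)^2/(f(k-1)f(k+1))$ equals $(c-k)/(c-k+1)<1$.
\item Even if $\mu(\bm{k})(\bm{c}-\bm{k})!/\bm{c}!$ were log-$\mathrm M$-concave, polarizing its exponential generating polynomial gives coefficients $\mu(\bm{k})\bigl((\bm{c}-\bm{k})!/\bm{c}!\bigr)^2$, not $\mu(\bm{k})/\binom{\bm{c}}{\bm{k}}$.
\end{itemize}

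The failure is real. Take $m=2$, $\bm{c}=(2,2)$, and $\mu$ the indicator of the $\mathrm M$-convex set $\{(2,0),(1,1),(0,2)\}$. Then
\[
 g_{\widehat\mu}=x_{11}x_{12}+x_{21}x_{22}+\tfrac14(x_{11}+x_{12})(x_{21}+x_{22}).
\]
Its Hessian $H$ has eigenvalues $\tfrac32,\tfrac12,-1,-1$. At $\bm{x}=\mathbf 1$ we have $\nabla^2\log g_{\widehat\mu}=\tfrac13H-\tfrac14J$, with $J$ the all-ones matrix. The vector $v=(1,1,-1,-1)$ gives $v^{\top}(\tfrac13H-\tfrac14J)v=\tfrac23>0$. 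So $g_{\widehat\mu}$ is not log-concave, and \cref{thm:log-concave-convolution} does not apply to your pair.

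The fix is to move the factor $\bm{k}!$ from the $\mu$ side to the $\nu$ side; your $\widehat\nu$ carries exactly the removable log-concave factor $\bm{c}!/\bm{k}!$ (with $\bm{k}$ the count of the complement). This is what the paper does:
\[
 \widehat\mu(S)=\mu(\bm{k}(S))\frac{(\bm{c}-\bm{k}(S))!}{\bm{c}!},\qquad \widehat\nu(T)=\nu(\bm{c}-\bm{k}(T))\frac{(\bm{c}-\bm{k}(T))!}{\bm{c}!}.
\]
Since $1/(k!\binom{c}{k})=(c-k)!/c!$, these are exactly the polarizations of $\sum_{\bm{k}}\mu(\bm{k})\bm{z}^{\bm{k}}/\bm{k}!$ and $\sum_{\bm{k}}\nu(\bm{c}-\bm{k})\bm{z}^{\bm{k}}/\bm{k}!$. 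Both are Lorentzian by \cite[Corollary~3.16]{BrandenHuh2020} together with your reflection argument, with no extra univariate factors. The convolution then equals $Z/\bm{c}!$ instead of $Z$, so you multiply the estimate by $\bm{c}!$.

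Note that $\mu(\bm{k})(\bm{c}-\bm{k})!/\bm{c}!$, which you wrote down, is itself the correct multiaffine weight; it is not a weight to feed into Corollary~3.16. Your ``symmetric fallback'' points at this but is not carried out. The rest of your argument goes through unchanged: oracles, starting sets, coefficient ranges, and the $\mathrm M$-convex special case.
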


\begin{proof}
We construct two subset weights whose convolution is $Z/\bm{c}!$,
where $\bm{k}!\coloneqq\prod_i k_i!$. We then show that their
generating polynomials are log-concave and apply
\cref{thm:log-concave-convolution}.
If the coordinate sums of the two initial vectors differ, the
supports are disjoint and $Z=0$, so assume these sums agree.

\emph{Encoding the integer vectors.}
Take disjoint sets $\cU_i$ with $|\cU_i|=c_i$ and put
$\cU\coloneqq\bigsqcup_i\cU_i$. For $S\subseteq\cU$, write
$k_i(S)\coloneqq|S\cap\cU_i|$ and define
\begin{align}
 \widehat\mu(S)
 &\coloneqq\mu(\bm{k}(S))\frac{(\bm{c}-\bm{k}(S))!}{\bm{c}!},
 \label{eq:integer-clone-weights}\\
 \widehat\nu(S)
 &\coloneqq\nu(\bm{c}-\bm{k}(S))\frac{(\bm{c}-\bm{k}(S))!}{\bm{c}!}.
 \notag
\end{align}
In the convolution, $\widehat\mu(S)$ and
$\widehat\nu(\cU\setminus S)$ use the same integer vector
$\bm{k}(S)$. Exactly
$\binom{\bm{c}}{\bm{k}}\coloneqq\prod_i\binom{c_i}{k_i}$
subsets encode $\bm{k}$, and the factorial weights cancel this
multiplicity:
\begin{align}\label{eq:integer-convolution-identity}
 (\widehat\mu*\widehat\nu)(\cU)
 &=\sum_{\bm{0}\le\bm{k}\le\bm{c}}
   \binom{\bm{c}}{\bm{k}}\mu(\bm{k})\nu(\bm{k})
   \frac{(\bm{c}-\bm{k})!\,\bm{k}!}{(\bm{c}!)^2}
   =\frac{Z}{\bm{c}!}.
\end{align}

\emph{Log-concavity of the generating polynomials.}
We use the weighted exchange condition in \cref{eq:integer-weight-exchange}
for $\nu$; log-concavity of its generating polynomial alone would not
provide this condition. Put
$\bar\nu(\bm{k})\coloneqq\nu(\bm{c}-\bm{k})$ for integer vectors
$\bm{k}$ satisfying $0\le k_i\le c_i$ for every $i$, and set
$\bar\nu(\bm{k})=0$ otherwise. We first verify that $\bar\nu$ is
also log-$\mathrm M$-concave.
Take $\bm{x},\bm{y}\in\supp\bar\nu$ with $x_i>y_i$.
Then $\bm{c}-\bm{y},\bm{c}-\bm{x}\in\supp\nu$ and
$(\bm{c}-\bm{y})_i>(\bm{c}-\bm{x})_i$. Apply
\cref{eq:integer-weight-exchange} to these two vectors in this order.
It gives an index $j$ with
$(\bm{c}-\bm{y})_j<(\bm{c}-\bm{x})_j$, equivalently $x_j<y_j$, and
\begin{align*}
 \bar\nu(\bm{x})\bar\nu(\bm{y})
 &=\nu(\bm{c}-\bm{y})\nu(\bm{c}-\bm{x})\\
 &\le\nu(\bm{c}-\bm{y}-\bm{e}_i+\bm{e}_j)
       \nu(\bm{c}-\bm{x}+\bm{e}_i-\bm{e}_j)\\
 &=\bar\nu(\bm{x}-\bm{e}_i+\bm{e}_j)
   \bar\nu(\bm{y}+\bm{e}_i-\bm{e}_j).
\end{align*}
Moreover, if the vectors in $\supp\nu$ have coordinate sum $r$,
then those in $\supp\bar\nu$ have coordinate sum $\sum_i c_i-r$.
Thus $\bar\nu$ satisfies both parts of the definition.

By \cite[Corollary~3.16]{BrandenHuh2020}, the exponential generating
polynomial of a log-$\mathrm M$-concave weight is Lorentzian.
Applying this separately to $\mu$ and $\bar\nu$ gives two Lorentzian
polynomials:
\begin{align*}
 g_\mu^{\mathrm{exp}}(\bm{x})
 &\coloneqq\sum_{\bm{0}\le\bm{k}\le\bm{c}}
       \frac{\mu(\bm{k})}{\bm{k}!}\bm{x}^{\bm{k}},\\
 g_{\bar\nu}^{\mathrm{exp}}(\bm{x})
 &\coloneqq\sum_{\bm{0}\le\bm{k}\le\bm{c}}
       \frac{\bar\nu(\bm{k})}{\bm{k}!}\bm{x}^{\bm{k}}
 =\sum_{\bm{0}\le\bm{k}\le\bm{c}}
       \frac{\nu(\bm{k})}{(\bm{c}-\bm{k})!}\bm{x}^{\bm{c}-\bm{k}}.
\end{align*}
{Apply polarization from
\cref{app:weight-background}, with the new variables for coordinate $i$
indexed by $\cU_i$. Since
$1/(k!\binom{c_i}{k})=(c_i-k)!/c_i!$, the resulting polynomials are
exactly $g_{\widehat\mu}$ and $g_{\widehat\nu}$ from
\cref{eq:integer-clone-weights}.} These are multiaffine log-concave
polynomials because polarization preserves the Lorentzian property
\cite[Proposition~3.1 and Theorem~2.30]{BrandenHuh2020}.

\emph{Oracle access and running time.}
Choose $S_0,T_0\subseteq\cU$ with
$|S_0\cap\cU_i|=(x_0)_i$ and
$|T_0\cap\cU_i|=c_i-(y_0)_i$ for every $i$. These satisfy
$\widehat\mu(S_0)>0$ and $\widehat\nu(T_0)>0$.
The input length is at least $\sum_i c_i=|\cU|$ under the stated
encoding. Factorials can be precomputed and
{the weights $\widehat\mu$ and $\widehat\nu$} evaluated in polynomial time. Their coefficient
ranges increase by at most a factor of $\bm{c}!$, whose logarithm
is also polynomial in the input length.
Apply \cref{thm:log-concave-convolution} and multiply the estimate by
$\bm{c}!$, using \cref{eq:integer-convolution-identity}, to obtain the
claimed $\mathsf{FPRAS}$.
\end{proof}

The proposition allows a separate log-concave weight on each coordinate.
Indeed, if $\mathcal A$ is $\mathrm M$-convex and each $a_i$ is
log-concave with interval support, then, whenever nonzero,
\begin{align*}
 \omega(\bm{k})\coloneqq
 \mathbf{1}_{\{\bm{k}\in\mathcal A\}}\prod_i a_i(k_i)
\end{align*}
satisfies \cref{eq:integer-weight-exchange}. For the exchange indices
$i,j$, log-concavity gives
$a_i(x_i)a_i(y_i)\le a_i(x_i-1)a_i(y_i+1)$ and the analogous
inequality with $j$ increased in $\bm{x}$ and decreased in $\bm{y}$.
Thus the proposition counts weighted integer points in the
intersection of two integral polymatroid base polytopes.

A \emph{contingency table} is a nonnegative integer matrix with
specified row and column sums. The next corollary allows both
entrywise upper bounds and a separate log-concave weight on each entry.

\begin{corollary}[Contingency tables with log-concave entry weights]
\label{cor:weighted-contingency-tables}
The input consists of row sums $\bm{r}\in\mathbb Z_{\ge0}^m$,
column sums $\bm{c}\in\mathbb Z_{\ge0}^n$, entrywise upper bounds
$u_{ij}\in\mathbb Z_{\ge0}$, and nonzero nonnegative log-concave
tables $w_{ij}(0),\ldots,w_{ij}(u_{ij})$ with interval support.
There is an $\mathsf{FPRAS}$ for
\begin{align*}
 Z_{\bm{r},\bm{c}}\coloneqq
 \sum_{\substack{X\in\mathbb Z_{\ge0}^{m\times n},\ X_{ij}\le u_{ij}\\
                  \sum_jX_{ij}=r_i\ (i\in[m])\\
                  \sum_iX_{ij}=c_j\ (j\in[n])}}
       \prod_{i=1}^m\prod_{j=1}^n w_{ij}(X_{ij})
\end{align*}
in the arithmetic model of \cref{thm:log-concave-convolution},
provided that $\sum_{i,j}u_{ij}$ and $\log R$ are polynomially
bounded in the input length, where
\begin{align*}
 R\coloneqq\prod_{i=1}^m\prod_{j=1}^n
       \frac{\max_{k:w_{ij}(k)>0}w_{ij}(k)}
            {\min_{k:w_{ij}(k)>0}w_{ij}(k)}.
\end{align*}
Taking $w_{ij}\equiv1$ counts contingency tables with the stated
row sums, column sums, and entrywise upper bounds.
\end{corollary}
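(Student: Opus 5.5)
The plan is to reduce to \cref{prop:integer-weight-intersection}. Index coordinates by the $mn$ entries $(i,j)$, with box $\bm{c}'\coloneqq(u_{ij})$. Entries with $u_{ij}=0$ are forced to $0$; if the corresponding $w_{ij}(0)=0$, return zero, and otherwise drop those entries. Row sums and column sums are two separate constraints, which I will split between the two weights:
\[
 \mu(X)\coloneqq\prod_{i=1}^m\mathbf 1\Bigl[\sum_jX_{ij}=r_i\Bigr]\prod_{i,j}w_{ij}(X_{ij}),\qquad
 \nu(X)\coloneqq\prod_{j=1}^n\mathbf 1\Bigl[\sum_iX_{ij}=c_j\Bigr],
\]
both restricted to the box $0\le X\le u$. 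Then $\sum_X\mu(X)\nu(X)=Z_{\bm r,\bm c}$. If $\sum_ir_i\ne\sum_jc_j$, I return zero. Otherwise every vector in either support has coordinate sum $N\coloneqq\sum_ir_i$, as the definition of log-$\mathrm M$-concavity requires.

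Next I verify log-$\mathrm M$-concavity. The family $\{X\le u:\ \text{row sums }\bm r\}$ is a direct sum, over rows, of the integer points of the slices $\{\bm x\in[0,u_{i\cdot}]:\sum_j x_j=r_i\}$. Each slice is $\mathrm M$-convex: given $\bm x,\bm y$ in the slice with $x_k>y_k$, equal sums force some $l$ with $x_l<y_l$, and both exchanged vectors remain in the box. For a direct sum, exchanges take place inside one block. Concretely, if $\bm x,\bm y$ both satisfy the row sums and $X_{ik}>Y_{ik}$, then row $i$ contains an index $l$ with $X_{il}<Y_{il}$, and exchanging within row $i$ keeps both tables feasible. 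The same argument applies to $\nu$ using columns. The entrywise weights are log-concave with interval support, so the remark after \cref{prop:integer-weight-intersection} shows that $\mu$, the indicator times $\prod w_{ij}$, is still log-$\mathrm M$-concave. The only point to check is that the exchange indices $(i,k),(i,l)$ lie in the same row. The remark handles a general pair of indices, so this is automatic. I must also keep the exchanged values inside the supports of $w_{ik}$ and $w_{il}$; interval support and the strict inequalities $X_{ik}>Y_{ik}\ge$ lower endpoint ensure this.

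Then I check the remaining hypotheses. The unary size of the box is $\sum u_{ij}$, which is polynomial by assumption. The coefficient range satisfies $\cR_\mu\le R$ and $\cR_\nu=1$. Initial points are needed. For $\nu$, I find any bounded table with column sums $\bm c$ and total $N$ greedily, or certify that none exists. For $\mu$, I need a table with row sums $\bm r$ whose entries lie in the supports $[l_{ij},h_{ij}]$ of $w_{ij}$; for each row this is feasible exactly when $\sum_jl_{ij}\le r_i\le\sum_jh_{ij}$, and a greedy fill then succeeds. If no initial point exists, I return zero. Feasibility of the full table is not needed up front, because the FPRAS correctly returns approximately zero when $Z=0$.

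The main obstacle is bookkeeping rather than substance: confirming that the interval-support weights compose with the block-structured $\mathrm M$-convex indicator while staying inside the exchange condition \cref{eq:integer-weight-exchange}. I would write that exchange check out explicitly, using log-concavity in the form $w(x)w(y)\le w(x-1)w(y+1)$ for $x>y$. The special case $w_{ij}\equiv1$ is immediate.
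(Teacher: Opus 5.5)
Your proposal is correct and follows essentially the same route as the paper: put the row-sum constraints and all entry weights into $\mu$ and the column-sum constraints into $\nu$, verify $\mathrm M$-convexity by exchanging within a single row or column, invoke the remark on products of log-concave coordinate weights, and apply \cref{prop:integer-weight-intersection} with $\cR_\mu\le R$ and $\cR_\nu=1$. The only difference is initialization: you find separate starting points for $\mu$ and $\nu$ by greedy per-row and per-column filling, and rely on the downstream algorithm returning exactly zero (not merely approximately zero) when $Z_{\bm r,\bm c}=0$. The paper instead first finds one common positive-weight table by integral-flow feasibility; both work, since the proposition only needs one supported point for each weight.
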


\begin{proof}
We apply \cref{prop:integer-weight-intersection} to the intersection
of the row-sum family and the column-sum family, assigning all entry
weights to the first family.

First, find a table with positive weight, or certify that
$Z_{\bm{r},\bm{c}}=0$. This is a standard integral-flow feasibility
problem on the bipartite graph of rows and columns: the edge $(i,j)$
carries $X_{ij}$, its lower and upper bounds are the endpoints of the
positive interval of $w_{ij}$, and the required totals at the row and
column vertices are $\bm{r}$ and $\bm{c}$. It can be solved in
polynomial time. Assume henceforth that such a table $X_0$ exists.
If all $u_{ij}=0$, evaluate the single possible table directly.
Otherwise, regard matrices as vectors of the entries with $u_{ij}>0$;
the omitted entries remain fixed at zero.

Let $\mathcal A$ and $\mathcal D$ be the families satisfying
$0\le X_{ij}\le u_{ij}$ and, respectively, the row sums and the
column sums. Both are $\mathrm M$-convex. Indeed, if $X_{ij}>Y_{ij}$,
the equal sum in that row (respectively, column) gives another entry
where $X$ is smaller than $Y$. Exchanging one unit between these two
entries in opposite directions in $X$ and $Y$ preserves the required
sums and entry bounds. Set
\begin{align*}
 \mu(X)&\coloneqq\mathbf{1}_{\{X\in\mathcal A\}}
                  \prod_{i=1}^m\prod_{j=1}^n w_{ij}(X_{ij}),
 &\nu(X)&\coloneqq\mathbf{1}_{\{X\in\mathcal D\}}.
\end{align*}
{The preceding observation about products of coordinate weights shows}
that these weights
are log-$\mathrm M$-concave, with coefficient ranges at most $R$ and
$1$. Their evaluations take polynomial time, and $X_0$ has positive
weight under both. Their inner product is $Z_{\bm{r},\bm{c}}$.
The coordinate upper bounds have polynomial sum, so they can be
represented in unary. Thus \cref{prop:integer-weight-intersection}
gives the claimed $\mathsf{FPRAS}$.
\end{proof}

\begin{corollary}[Bounded integer flows]\label{cor:bounded-integer-flows}
Let $B$ be the incidence matrix of a directed graph $(V,E)$, with
outgoing entries $+1$ and incoming entries $-1$. The input consists
of a vector $\bm{h}\in\mathbb Z^V$, capacities $d_a\in\mathbb Z_{\ge0}$,
and nonzero nonnegative log-concave tables $w_a(0),\ldots,w_a(d_a)$
with interval support. Each $h_v$ specifies the required outgoing
flow minus incoming flow at vertex $v$.
There is an $\mathsf{FPRAS}$ for
\begin{align*}
 Z_{\bm{h}}\coloneqq\sum_{\substack{\bm{q}\in\mathbb Z^E,\ B\bm{q}=\bm{h}\\
                           0\le q_a\le d_a}}
                   \prod_{a\in E}w_a(q_a)
\end{align*}
in the arithmetic model of \cref{thm:log-concave-convolution},
provided that $\sum_a d_a$ and $\log R$ are polynomially bounded
in the input length, where
\begin{align*}
 R\coloneqq\prod_{a\in E}
       \frac{\max_{k:w_a(k)>0}w_a(k)}{\min_{k:w_a(k)>0}w_a(k)}.
\end{align*}
Taking $w_a\equiv1$ counts all feasible integer flows.
\end{corollary}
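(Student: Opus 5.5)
We reduce \cref{cor:bounded-integer-flows} to \cref{cor:weighted-contingency-tables} by a weight-preserving bijection. Each arc $a=(u\to v)$ becomes a ``table entry'' that appears in one row (indexed by the arc's tail) and one column (indexed by its head). More precisely, we build a bipartite table with rows indexed by $V$ and columns indexed by $V$. Arc $a=u\to v$ contributes an entry $X_{u,v}$ carrying flow $q_a$. Parallel arcs are handled by refining the row and column index sets, or by splitting arcs through a middle vertex (see below). The vertex constraints are encoded by adding a slack entry on the diagonal. A clean choice is the following. Put $D_v\coloneqq\sum_{a\ni v}d_a$. Row $v$ collects the outgoing flows of $v$ plus a slack entry $X_{v,v}\coloneqq D_v-\text{(incoming flow at }v)$. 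Column $v$ collects the incoming flows of $v$ plus the same diagonal slack. Then row sum of $v$ equals out$(v)+D_v-$in$(v)=h_v+D_v$, and column sum equals $D_v$. Conversely, any table with these margins yields a flow satisfying $B\bm q=\bm h$: the column-$v$ constraint determines the diagonal as $D_v-$in$(v)$, and the row constraint then forces out$(v)-$in$(v)=h_v$.

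The weights go as follows. Arc entries get $w_a$ with bound $d_a$. The diagonal entry gets the all-ones table on $[0,D_v]$, which is log-concave with interval support; since in$(v)\le D_v$ automatically, this bound is never restrictive. All unused entries get bound $0$. Loops $a=v\to v$ contribute nothing to $B\bm q$ and can be factored out as $\sum_k w_a(k)$. To handle parallel arcs, subdivide each arc $a=u\to v$ into $u\to x_a\to v$ with a new vertex $x_a$ of $h_{x_a}=0$, and put the weight $w_a$ on one half and ones on the other. This yields simple entries and preserves $Z_{\bm h}$, because the two halves are forced equal. Alternatively, the rows can simply be indexed by arcs' tails with separate entries per arc. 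The row-sum/column-sum families remain $\mathrm M$-convex as verified in \cref{cor:weighted-contingency-tables}, provided the columns are indexed so that each arc is its own entry. I would adopt the subdivision to stay literally within matrix form.

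Finally, we check the hypotheses. $\sum u_{ij}\le 2\sum_a d_a+\sum_v D_v=O(\sum_a d_a)$ is polynomial, and the range $R$ is unchanged because the added tables are constant. Margins are nonnegative after the $D_v$ shift; if any $h_v+D_v<0$, return zero. \Cref{cor:weighted-contingency-tables} then gives the $\mathsf{FPRAS}$. The main obstacle is checking that the bijection is exact, with no double counting from the slack entries. Since each slack is determined by the flow, this reduces to the short margin calculation above, which I would write out carefully, including the case of vertices with no incident arcs. Such a vertex requires $h_v=0$, and otherwise $Z_{\bm h}=0$.
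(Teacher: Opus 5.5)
Your proposal is correct and is essentially the paper's argument. The paper indexes rows and columns by $V\sqcup E$, sets $X_{u,a}=X_{a,v}=q_a$ and $X_{a,a}=d_a-q_a$, and adds a diagonal vertex slack; up to the constants chosen for the slack entries (and putting the vertex slack on the outgoing rather than the incoming side), this is the matrix your subdivision $u\to x_a\to v$ produces, with $x_a$ in the role of $a$. The only other difference is that the paper's edge-indexed construction absorbs loops directly, whereas you factor them out as $\sum_{k}w_a(k)$, which is equally valid.
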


\begin{proof}
We give a weight-preserving bijection between flows and contingency
tables, then apply \cref{cor:weighted-contingency-tables}.
Write $D\coloneqq\sum_a d_a$ for the total flow capacity.
If $\sum_v h_v\ne0$, there is no feasible flow, so return zero.

\emph{From flows to square matrices.}
We construct an $N\times N$ nonnegative integer matrix $X$, where
$N=|V|+|E|$ and the rows and columns are indexed by the disjoint union
$V\sqcup E$. Giving each edge its own index keeps parallel edges and
loops distinct. The required row sums $R_i$ and column sums $C_i$ are
\begin{align*}
 R_v&\coloneqq\sum_{a\text{ out of }v}d_a,
 &C_v&\coloneqq R_v-h_v &&(v\in V),\\
 R_a&\coloneqq d_a,
 &C_a&\coloneqq d_a &&(a\in E).
\end{align*}
If any $C_v<0$, return zero: a feasible flow would satisfy
$h_v\le\sum_{a\text{ out of }v}q_a\le R_v$.
The row and column sums are now nonnegative and satisfy
$\sum_iR_i=\sum_iC_i=2D$.
Given a flow $\bm{q}$, start with $X=0$ and set
\begin{align*}
 X_{u,a}=X_{a,v}&\coloneqq q_a,
 &X_{a,a}&\coloneqq d_a-q_a &&(a=u\to v),\\
 X_{v,v}&\coloneqq R_v-\sum_{a\text{ out of }v}q_a
 && &&(v\in V).
\end{align*}
All entries not assigned above remain zero. In particular,
\begin{align*}
 X_{u,v}&=0 &&(u,v\in V,\ u\ne v),\\
 X_{a,b}&=0 &&(a,b\in E,\ a\ne b).
\end{align*}
For an edge $a=u\to v$, the mixed entries also satisfy
$X_{w,a}=0$ for $w\in V\setminus\{u\}$ and
$X_{a,w}=0$ for $w\in V\setminus\{v\}$.
Allow each of
$X_{u,a},X_{a,v},X_{a,a}$ to lie in $[0,d_a]$, and each $X_{v,v}$
to lie in $[0,R_v]$. These bounds hold for the matrix just constructed,
and flow conservation gives
\begin{align*}
 \sum_{j\in V\sqcup E}X_{i,j}=R_i,\qquad
 \sum_{j\in V\sqcup E}X_{j,i}=C_i
 \qquad(i\in V\sqcup E).
\end{align*}
Conversely, consider a matrix with these entry bounds, fixed zero
entries, and row and column sums. For each edge $a=u\to v$, its row
and column equations give
\begin{align*}
 X_{a,v}+X_{a,a}=d_a=X_{u,a}+X_{a,a},
\end{align*}
so $X_{u,a}=X_{a,v}$ defines a unique flow $q_a$.
Subtracting the column equation at vertex $v$ from its row equation
gives outgoing flow minus incoming flow equal to $R_v-C_v=h_v$.
All diagonal entries are then uniquely determined, proving the
bijection.

\emph{Applying the contingency-table result.}
Assign weight $w_a(X_{u,a})$ to the entry $X_{u,a}$ and weight one
to the other entries. The bijection preserves weights, and every
entry-weight table is log-concave with interval support. The product
of their coefficient ranges is $R$, and their upper bounds sum to
\begin{align*}
 3\sum_{a\in E}d_a+\sum_{v\in V}R_v=4D.
\end{align*}
Thus \cref{cor:weighted-contingency-tables} gives an
$\mathsf{FPRAS}$ for $Z_{\bm{h}}$.
\end{proof}

\subsection{Vertices of Network-Matrix Polytopes}\label{app:network-polytopes}

We now apply \cref{cor:bounded-integer-flows} to count vertices of
network-matrix polytopes. The vertices correspond to integer flows;
the key point is that their binary coordinates let us bound the total
flow capacity by a polynomial in the matrix dimensions.

A \emph{network matrix} records signed paths in a directed tree. More
precisely, let $T$ be a directed tree with $m$ edges, and let $E$ consist
of $n$ additional labeled directed edges on the same vertex set.
Write $V$ for this vertex set, so $|V|=m+1$. For each directed edge
$e=(u\to v)$, its incidence entry in row $w$ is
$b_{w,e}=\mathbf{1}_{\{w=u\}}-\mathbf{1}_{\{w=v\}}$.
Choose any vertex $r\in V$ and define
\[
\begin{aligned}
 B_T&=(b_{w,e})_{w\in V\setminus\{r\},\ e\in E(T)}
       \in\mathbb R^{m\times m},\\
 B_E&=(b_{w,e})_{w\in V\setminus\{r\},\ e\in E}
       \in\mathbb R^{m\times n}.
\end{aligned}
\]
Thus we omit the row for vertex $r$ from both incidence matrices and
keep all other vertex rows in the same order. To see that the square
matrix $B_T$ is invertible, let $\bm{z}\in\mathbb R^{V\setminus\{r\}}$
satisfy $B_T^{\mathsf T}\bm{z}=\bm{0}$, and put $z_r=0$.
For every tree edge $e=(u\to v)$,
\[
 0=(B_T^{\mathsf T}\bm{z})_e=z_u-z_v.
\]
Every vertex is joined to $r$ by a path when edge directions are ignored,
so these equalities give $z_w=z_r=0$ for every $w\in V$. Thus
$B_T^{\mathsf T}\bm{z}=\bm{0}$ has only the zero solution, proving
that $B_T$ is invertible. The matrix $A=B_T^{-1}B_E$ is a network matrix.
Given only a network matrix $A$, one can find in polynomial time a
directed tree $T$ and additional directed edges $E$ such that
$A=B_T^{-1}B_E$ \cite{NetworkBixbyCunningham1980}.
Network matrices are totally unimodular, so in particular
their entries lie in $\{0,\pm1\}$. Consequently, for integral
bounds $\bm{\ell},\bm{u}$, the vertices of
\[
 P=\{\bm{x}\in[0,1]^n:\bm{\ell}\leq A\bm{x}\leq\bm{u}\}
\]
are exactly its binary points: integrality gives one direction, and
every binary point of a polytope contained in the cube is a vertex.

The following result answers the network-matrix case posed in
\cite[Problem~4.3]{NetworkGuoJerrum2023}. It follows by combining
their reduction to integer flows \cite[Section~4]{NetworkGuoJerrum2023}
with \cref{cor:bounded-integer-flows}.

\begin{corollary}\label{cor:network-polytope-vertices}
There is an $\mathsf{FPRAS}$ for the number of vertices of $P$ above,
with running time polynomial in the input length and $\epsilon^{-1}$
in the arithmetic model of \cref{thm:log-concave-convolution}.
This includes a polytope $\{\bm{x}:A\bm{x}\leq\bm{b}\}$ promised to have only binary
vertices, for a network matrix $A$ and integral $\bm{b}$.
\end{corollary}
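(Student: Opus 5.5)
The plan is to follow the reduction of \cite[Section~4]{NetworkGuoJerrum2023}: encode binary points of $P$ as bounded integer flows on the graph $T\cup E$, then invoke \cref{cor:bounded-integer-flows}. First, use \cite{NetworkBixbyCunningham1980} to write $A=B_T^{-1}B_E$ with a directed tree $T$ on $V$ ($|V|=m+1$) and extra edges $E$. For $\bm{x}\in\{0,1\}^n$, put $\bm{y}\coloneqq A\bm{x}$. Then $B_T\bm{y}=B_E\bm{x}$, which says that the vector $(-\bm{y},\bm{x})$ on the edges $E(T)\sqcup E$ is a circulation with respect to the rows $V\setminus\{r\}$. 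The omitted row $r$ is automatic, since column sums of a full incidence matrix vanish. Hence binary points $\bm{x}$ with $\bm\ell\le A\bm{x}\le\bm{u}$ are in bijection with circulations $\bm{q}$ on $T\cup E$ satisfying $q_e\in\{0,1\}$ for $e\in E$ and $-q_f\in[\ell_f,u_f]$ for each tree edge $f$. The tree values are determined by $\bm{x}$ because $B_T$ is invertible.

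To match the sign convention of \cref{cor:bounded-integer-flows}, where flows are nonnegative and bounded by a capacity, reverse each tree edge, so its flow becomes $y_f\in[\ell_f,u_f]$. Then shift to nonnegative values. The key observation is that since $A$ has entries in $\{0,\pm1\}$ and $\bm{x}$ is binary, every realized $y_f$ lies in $[-n,n]$. We may therefore intersect each interval $[\ell_f,u_f]$ with $[-n,n]$ without changing the count, and return zero if some intersection is empty.

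Write $L_f\coloneqq\max\{\ell_f,-n\}$ and substitute $q'_f\coloneqq y_f-L_f\in[0,d_f]$ with $d_f\le 2n$. This shift changes the conservation equations only by a known integer demand vector $\bm{h}$, namely $\bm{h}=-\sum_f L_f\,(\text{column of }f)$. Every edge of $E$ gets capacity $1$. All weights are identically $1$, so $R=1$, and the total capacity is $\sum_a d_a\le n+2nm$, which is polynomial.

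Applying \cref{cor:bounded-integer-flows} then counts exactly the binary points, which are exactly the vertices by the total unimodularity remark preceding the corollary. For the final claim, a polytope $\{\bm{x}:A\bm{x}\le\bm{b}\}$ promised to have only binary vertices must be handled by showing that its vertex set equals the binary points of the polytope with the cube constraints added and $\bm\ell=-\infty$. This is where I expect the most care. I would argue that a binary vertex $\bm{v}$ of $Q=\{A\bm{x}\le\bm{b}\}$ is still a vertex of $Q\cap[0,1]^n$, because adding constraints preserves extremality of points that remain in the set. Conversely, a binary point of $Q\cap[0,1]^n$ lies in $Q$. Showing it is a vertex of $Q$ needs the promise: if $Q$ is pointed with all vertices binary and bounded, then $Q\subseteq\operatorname{conv}$ of binary vertices, which lies inside $[0,1]^n$. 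So $Q=Q\cap[0,1]^n$, and every binary point of this set is a vertex. If $Q$ is unbounded or not pointed, I would need to handle or exclude it via the promise; this is the main obstacle I anticipate.

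Runtime is polynomial since the graph has $m+n+1$ vertices and all parameters are polynomial.
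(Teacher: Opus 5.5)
Your proposal is correct and follows the paper's proof. Both use the Guo--Jerrum flow encoding via $A=B_T^{-1}B_E$, truncate the tree-edge bounds using the $\{0,\pm1\}$ entries, shift to nonnegative flows with a known demand vector, and apply \cref{cor:bounded-integer-flows} with unit weights. The paper uses the slightly sharper interval $[-a_t^-,a_t^+]$ instead of $[-n,n]$ and reverses the edges of $E$ rather than the tree edges; both differences are cosmetic.

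The obstacle you anticipate in the last claim does not arise. The statement assumes $\{\bm x:A\bm x\le\bm b\}$ is a polytope, so it is bounded, equals the convex hull of its binary vertices, and therefore lies in $[0,1]^n$. This is exactly the paper's argument. One small adjustment: use a finite integral lower bound such as $-n$ or $-a_t^-$ in place of $-\infty$, since $P$ is defined with integral bounds.
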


\begin{proof}
We recall the reduction of Guo and Jerrum
\cite[Section~4]{NetworkGuoJerrum2023} and apply
\cref{cor:bounded-integer-flows} with unit weights. We verify the
bijection from vertices to flows and bound the total capacity in
terms of the dimensions of $A$.

The vertices of $P$ are exactly its binary points. For each such point
$\bm{x}$, set $\bm{y}=A\bm{x}$. The identity $A=B_T^{-1}B_E$ gives
\begin{align*}
 \bm{y}=A\bm{x}
 \quad\Longleftrightarrow\quad
 B_T\bm{y}-B_E\bm{x}=\bm{0}.
\end{align*}
For each labeled edge $e=(u\to v)\in E$, let
$\overleftarrow e=(v\to u)$ denote the same edge with its direction
reversed, and set $\overleftarrow E\coloneqq\{\overleftarrow e:e\in E\}$.
On the graph with edge set
$E(T)\sqcup\overleftarrow E$, write $f_a$ for the flow on edge $a$:
set $f_t=y_t$ on each tree edge and $f_{\overleftarrow e}=x_e$
on the reversal of each edge $e\in E$. The equation on the right
expresses \emph{flow conservation}: total outgoing flow equals total
incoming flow at each vertex whose row appears in the matrices,
\begin{align*}
 \sum_{a\text{ out of }v}f_a
 =\sum_{a\text{ into }v}f_a
 \qquad(v\in V\setminus\{r\}).
\end{align*}
Here $r$ is the vertex whose row was deleted when defining $B_T$ and
$B_E$ above. The same equality also holds at $r$: every edge contributes
once to an outgoing sum and once to an incoming sum, so the sum of
outgoing minus incoming flow over all vertices is zero. Since this
difference vanishes at every $v\ne r$, it must vanish at $r$ as well.
The tree flows are uniquely determined by $\bm{x}$, so this is a
bijection between vertices of $P$ and integer flows satisfying
$\ell_t\le y_t\le u_t$ and $0\le x_e\le1$.

We next shift the tree flows to make all flows nonnegative and bound
their capacities. Let $a_t^+,a_t^-$ count the $+1,-1$ entries in row $t$
of $A$. Since $y_t=(A\bm{x})_t$, every binary $\bm{x}$ satisfies
$-a_t^-\le y_t\le a_t^+$. We may therefore replace the original
constraint $\ell_t\le y_t\le u_t$ on the flow through tree edge $t$
by $L_t\le y_t\le U_t$, without changing the feasible flows, where
\begin{align*}
 L_t\coloneqq\max\{\ell_t,-a_t^-\},\qquad
 U_t\coloneqq\min\{u_t,a_t^+\}.
\end{align*}
If any $L_t>U_t$, the count is zero. Otherwise put
$q_t=y_t-L_t$ on tree edges and $q_e=x_e$ on reversed edges.
Let $B$ be the full vertex-edge incidence matrix of this graph, with outgoing
entries $+1$ and incoming entries $-1$. The equivalent constraints are
\begin{align*}
 B\bm{q}=\bm{h},\qquad 0\le q_a\le d_a,\qquad
 \bm{h}=-B(\bm{L},\bm{0}),
\end{align*}
where $d_t=U_t-L_t\le a_t^++a_t^-\le n$ and $d_e=1$.
In particular, $D\coloneqq\sum_a d_a\le mn+n$.
Thus \cref{cor:bounded-integer-flows}, with $w_a\equiv1$ and $R=1$,
gives the required approximation using polynomially many arithmetic
operations in $m+n$ and $\epsilon^{-1}$.

Finally, if $Q=\{\bm{x}:A\bm{x}\le\bm{b}\}$ is a polytope with
only binary vertices, then $Q\subseteq[0,1]^n$, since a polytope is
the convex hull of its vertices. Adding the cube bounds therefore
does not change $Q$. The lower bounds
$\ell_t=-a_t^-$ hold throughout the cube, so taking these bounds
and $\bm{u}=\bm{b}$ puts $Q$ in the form of $P$.
\end{proof}

\subsection{Rainbow Matchings and Degree-Constrained Forests}\label{app:curvature-applications}

Let $G=(V,E)$ be a simple graph with rational edge weights
$0\le w_e\le1$, and write $\bm{w}^S\coloneqq\prod_{e\in S}w_e$.
For $S\subseteq E$, let
$\deg_S(v)\coloneqq|\{e\in S:v\in e\}|$ be the degree of $v$
in the subgraph $(V,S)$.
A matching is \emph{rainbow} if its edges have distinct colors;
the coloring need not be proper.

\begin{corollary}\label{cor:rainbow-matchings}
Given any edge coloring of $G$, there is an $\mathsf{FPRAS}$ for
\[
 \sum_{\substack{M\subseteq E:\\M\text{ a rainbow matching}}}\bm{w}^M.
\]
In particular, taking every $w_e=1$ counts all rainbow matchings.
\end{corollary}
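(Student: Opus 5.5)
We will express the rainbow-matching sum as a subset convolution and apply \cref{thm:curvature-convolution}. The figure of reductions routes this through "$b$-matchings independent in a matroid": a rainbow matching is a matching, a degree-constraint family, that is independent in the partition matroid with one block per color class and capacity one. The plan is to take $\cU\coloneqq E$ and define
\[
 \mu(S)\coloneqq \bm{w}^S\,\allone[S\text{ is a matching}],\qquad
 \nu(T)\coloneqq \allone[T\text{ contains at most one edge of each colour}].
\]
With these, the coefficient $Z_{\bm{\alpha}}$ with $\alpha_e=2$ nowhere, $\alpha_e=1$ for all $e$ would be the wrong quantity. We instead want the sum over $S$ of $\mu(S)\nu(S)$, which is a Hadamard product, not a convolution. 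To turn it into a convolution we double the ground set: take two copies $E^{(1)},E^{(2)}$ and use exponent $\bm{\alpha}=2\cdot\allone$ on a pinned copy. More directly, set $\bm{\alpha}_e=2$ to force $S=T$. Then
\[
 Z_{\bm{\alpha}}=\sum_{S\subseteq E}\mu(S)\nu(S)\quad\text{for }\bm{\alpha}=(2,\dots,2)\text{ restricted appropriately}.
\]
This does not quite work, since $\alpha_u=2$ forces $u$ into both sets. The cleaner route is the one the reduction in \cref{app:convolution-proofs} suggests. We introduce for each edge $e$ a pair of labelled elements $e,\bar e$ and let $\mu$ live on the $e$'s and $\nu$ on all elements, with $\nu$ containing $\bar e$ exactly when $e\notin T$. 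The resulting sum $\sum_{S}\mu(S)\nu'(\cU\setminus S)$ then counts rainbow matchings. In the end I will choose whichever encoding keeps both supports downward closed, since that is required by \cref{thm:curvature-convolution}.

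Next I will verify the curvature hypotheses. The matching measure $\mu$ has downward-closed support. At a matching $S$ its curvature matrix on feasible additions $J_S$ has diagonal entries $w_u^2-\kappa w_u$. Its off-diagonal entries are $w_uw_v$ when $S+u+v$ is not a matching, meaning $u,v$ share a vertex, and $0$ otherwise. Writing $A=D(I+B_S-\kappa D^{-1})D$ with $B_S$ the adjacency matrix of the line graph restricted to $J_S$, we note that edges sharing a vertex form two cliques. Hence $B_S+I$ equals $N^\top N$ minus a correction, where $N$ is the vertex–edge incidence matrix, so $I+B_S\succeq -I$ roughly, and we get curvature about $-1$ at worst. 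For the colour measure $\nu$, the same computation gives $I+B_S$ with $B_S$ a disjoint union of all-ones blocks off-diagonal, that is, $I+B_S=\bigoplus J\succeq 0$. So the colour measure has curvature $\kappa=0$ directly, and the scaling by $w_e\le1$ of the other factor gives $\eta\ge -\max w$.

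This sign issue is the main obstacle: \cref{thm:curvature-convolution} requires $\eta+\kappa\ge 0$, and the line-graph bound alone only gives $\lambda_{\min}(I+B_S)\ge -1$. My first attempt will use the exact factorization $I+B_S = N^\top N - I$ on $J_S$. The edges $u,v$ of $J_S$ avoid the vertices of $S$, and each edge has two endpoints, so $N^\top N = 2I+B_S$. This gives $A_S^{\mu,\kappa}=D(N^\top N-I-\kappa D^{-1})D$, which is $\succeq 0$ exactly when $\kappa\le -1$ for unit weights. To compensate, I will put the positive curvature on the colour side. For that I plan to split each edge's colour block so that $\nu$ becomes a product of "at most one per block" measures with external field $\lambda$. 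Each single-block measure $1+\lambda\sum x_i$ has upward curvature $\lambda$ at $\varnothing$ after checking $A=\lambda^2 J-\kappa\lambda I\succeq0$, which fails unless blocks are singletons. The fallback is therefore to move one endpoint constraint into the $\nu$ side. We let $\mu$ be "at most one edge per left-orientation endpoint" (a partition-type measure) and let $\nu$ be the intersection with colours. Both are then of the "partition" type $I+B_S=\bigoplus J$, giving $\eta=\kappa=0$. For non-bipartite $G$ this needs the degree-constrained-subgraph encoding from \cref{app:curvature-applications}. Once $\eta+\kappa\ge0$ holds, with $\chi=1$ if $\gamma=0$, the weights $w_e\le1$ give $\log(\cR_\mu\cR_\nu)$ polynomial in the input length. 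The empty set initializes both measures, and \cref{thm:curvature-convolution} yields the $\mathsf{FPRAS}$.
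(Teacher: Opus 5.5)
There is a genuine gap: you never produce an admissible encoding, and the curvature claim in your fallback is false. \cref{thm:curvature-convolution} counts complementary pairs $(S,E\setminus S)$ with both measures downward closed. So to impose ``$S$ is rainbow'' you must evaluate the second measure at $E\setminus S$, i.e.\ use $\nu(T)=\mathbf{1}[E\setminus T\text{ rainbow}]$, whose support is upward closed. Neither the $\alpha_e=2$ idea nor the unspecified doubling $e,\bar e$ fixes this. Saying you will ``choose whichever encoding keeps both supports downward closed'' is precisely the step that needs an idea.

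The fallback also fails on its own terms, because the intersection of the right-endpoint constraint with the colour constraint is not of partition type. At $T=\varnothing$, take feasible edges $u,v,w$ where $u,v$ share a right endpoint, $v,w$ share a colour, and $u,w$ are unrelated. The corresponding principal block of the curvature matrix is $\begin{pmatrix}1-\kappa&1&0\\1&1-\kappa&1\\0&1&1-\kappa\end{pmatrix}$, which is PSD only for $\kappa\le1-\sqrt2$. Long alternating paths push this towards $-1$; this measure is itself a bipartite-matching measure. So $\eta+\kappa<0$ there. In addition, this fallback only makes sense for bipartite $G$, and for general $G$ you defer to the very argument being proved.

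The missing idea is to \emph{soften} the colour constraint rather than keep it hard. Let $\cN$ be the partition matroid allowing one edge per colour, and put $q=\epsilon/(8n)$. Replace $Z_0=\sum_{S\text{ rainbow matching}}\bm{w}^S$ by $Z_q=\sum_{S\text{ matching}}\bm{w}^S q^{|S|-\rk_{\cN}(S)}$. Since $w_e\le1$, the injection $S\mapsto(I(S),S\setminus I(S))$, with $I(S)$ a greedy basis of $\cN|_S$, gives $Z_0\le Z_q\le(1+q)^nZ_0$. Then $Z_q=q^{-\rk_{\cN}(E)}\sum_S\mu_q(S)\nu_q(E\setminus S)$, where $\mu_q(S)=q^{|S|}\mu(S)$ and $\nu_q(T)=q^{|T|-\rk_{\cN^*}(T)}$. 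This $\nu_q$ has \emph{full}, hence downward-closed, support, which removes the upward-closure obstruction.

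Its curvature matrix with $\kappa=q$ is a direct sum of blocks $(1-q)J$ on the parallel classes of nonloops of $\cN^*/T$, with loops contributing $q^2-q\cdot q=0$. So $\nu_q$ has curvature $+q$, a positive curvature that, as you observed, no hard partition constraint can supply. Your computation $N^\top N=2I+B_S$ correctly shows that the matching measure has curvature $-1$, so $\mu_q$ has curvature $-q$. Hence $\eta+\kappa=0$, and \cref{thm:curvature-convolution} applies with $\chi=1/q=8n/\epsilon$.
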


For nonnegative integer bounds $b_v$, a \emph{degree-constrained forest}
is an acyclic edge set $F\subseteq E$ with $\deg_F(v)\le b_v$ for
every vertex $v$.

\begin{corollary}\label{cor:degree-constrained-forests}
There is an $\mathsf{FPRAS}$ for
\[
 \sum_{\substack{F\subseteq E\text{ a forest}:\\
                   \deg_F(v)\le b_v\ \forall v}}
       \bm{w}^F.
\]
In particular, taking $b_v=2$ and $w_e=1$ counts all linear forests,
that is, all unions of vertex-disjoint paths, with isolated vertices
allowed.
\end{corollary}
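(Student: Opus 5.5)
Both corollaries (\cref{cor:rainbow-matchings} and \cref{cor:degree-constrained-forests}) follow from \cref{thm:curvature-convolution} with $\bm\alpha=\mathbf1$. We take $\mu,\nu$ on the ground set $\cU=E$ with downward-closed supports, chosen so that every term of the target sum appears exactly once in $\sum_S\mu(S)\nu(E\setminus S)$. The complement makes this awkward, so instead we use the coefficient formulation with exponents in $\{0,1,2\}$: the target is a Hadamard-type sum $\sum_S\omega_1(S)\omega_2(S)$. This equals $[\bm x^{\bm\alpha}](g_{\omega_1}g_{\omega_2})$ only if we double coordinates, so it is cleaner to work on two copies. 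Concretely, we put $\cU\coloneqq E\times\{1,2\}$ and pin nothing. For $S\subseteq E$ we set $\mu(S\times\{1\})=\omega_1(S)$ and $\nu$ on the second copy to encode $\omega_2$ of the complementary edges. We then choose $\bm\alpha$ so that both factors must select the same edge set: exponent $1$ on $(e,1)$ and $(e,2)$ together with a coupling. Since that coupling is not available in this form, we fall back on the simplest route. Take $\mu(S)=\bm w^S\,\allone[S\text{ satisfies constraint 1}]$ and $\nu(S)=\allone[S\text{ satisfies constraint 2}]$ on $E$, with $\bm\alpha=2\cdot\mathbf1_{S}$ summed over all $S$. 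Equivalently, we apply \cref{thm:curvature-convolution} to the coefficient of $\bm x^{2\mathbf1}$ after appending, for each edge, a private ``absent'' element: $\cU=E\sqcup E'$, where $e'\in E'$ absorbs the exponent when $e$ is unused. The factors are then products of a constraint measure on $E$ and a free product measure on $E'$ with weight $1$ per element.

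For rainbow matchings, constraint~1 is ``matching'', meaning $\deg_S(v)\le1$, and constraint~2 is ``rainbow'', meaning at most one edge per color class. Both are intersections of partition-type constraints, namely the union over vertices of stars and the color classes. The curvature check is then local. For $\mu(S)=\bm w^S\allone[S\in\mathcal I]$ with $\mathcal I$ downward closed, \cref{def:spectral-curvature} gives a curvature matrix on the feasible additions $J_S$ equal to
\[
A_S^{\mu,\kappa}=D\bigl(I+B_S\bigr)D-\kappa D,\qquad D=\diag(w_u),\ B_S(u,v)=\allone[u\ne v,\ S+u+v\notin\mathcal I].
\]
For the rainbow constraint, $B_S$ is block diagonal with all-ones off-diagonal blocks on the color classes, so $I+B_S$ is a direct sum of all-ones matrices $J$, which are PSD. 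Hence $\kappa=0$ works; indeed $\kappa=\min w$ works after a small extra diagonal argument, and we use $\kappa\ge0$. For matchings, conflicts come from edges sharing a vertex. For feasible additions at a matching $S$, the free vertices are uncovered, and $I+B_S$ is $I$ plus the line-graph adjacency of the available edges. This equals $N^\top N-I$, where $N$ is the vertex-edge incidence matrix, which is $\succeq -I$, so the best bound in general is $\eta=-1$ scaled by weights. This is the main obstacle: we need $\eta+\kappa\ge0$, so a curvature of $-1$ on one side must be compensated by $+1$ on the other. We handle it by putting both degree and color constraints into a \emph{single} matroid-like measure with $0$-curvature, using the fact from \cref{prop:flc-curvature} that the translated basis polynomial of a partition-matroid intersection is not available. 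We then instead write $\mu$ as the matching measure with weights $\bm w$ and $\nu$ as the rainbow measure. Using $w_e\le1$, $A_S^{\mu,-1}=D(N^\top N)D+D(I-D)\succeq0$, giving $\eta=-1$. For the rainbow measure with unit weight, $A=J$-blocks$-\kappa I$ on the block structure, so we need more: we attach the coordinate weight $1$ and note $A_S^{\nu,\kappa}=\bigoplus(J-\kappa I)$, which fails for $\kappa=1$ on blocks of size one. We therefore first try the opposite split: matching has curvature $-1$ but only via $N^\top N-I$, and the rainbow side with fields $\lambda\ge1$ has curvature $\lambda$ by the product-measure example in \cref{app:curvature-examples}. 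We rescale the fields between the two sides accordingly and recover $\bm w$ in the product. This choice of fields, together with supplying $\chi=O(1)$, is the step we expect to be delicate.

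Degree-constrained forests are handled identically. The forest measure $\bm w^F\allone[F\text{ acyclic}]$ is a graphic-matroid independent-set measure, whose translated polynomial is Lorentzian, so by \cref{prop:flc-curvature} it has $0$ curvature. The degree constraint $\deg_F(v)\le b_v$ is a $b$-matching constraint whose conflict matrix is again controlled by an incidence Gram matrix, giving nonnegative curvature once unit fields are used. \cref{thm:curvature-convolution} then applies with $\eta+\kappa\ge0$, $\cR=\exp(\mathsf{poly})$ from rational weights, and empty starting sets.
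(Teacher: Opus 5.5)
There is a genuine gap: the proposal never arrives at an instance that \cref{thm:curvature-convolution} accepts.

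\textbf{The encoding does not work.} That theorem computes $\sum_{S}\mu(S)\nu(E\setminus S)$ for two measures with \emph{downward-closed} supports; exponents $0$ and $2$ only pin coordinates. Your target is instead a Hadamard sum $\sum_S\omega_{\mathrm{deg}}(S)\,\omega_{\mathrm{for}}(S)$ of the degree-constraint measure and the forest measure. None of your encodings produces it:
\begin{itemize}
\item Since the generating polynomials are multiaffine, $[\bm x^{2\cdot\mathbf 1}](g_\mu g_\nu)=\mu(\cU)\nu(\cU)$. So the ``private absent element'' construction only sees the full ground set.
\item Summing the coefficients for $\bm\alpha=2\cdot\mathbf 1_S$ over all $S$ takes exponentially many calls.
\item The natural choice $\nu(T)\coloneqq\omega_{\mathrm{for}}(E\setminus T)$ has upward-closed support.
\end{itemize}

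\textbf{The curvature claims are also false.}
\begin{itemize}
\item The degree constraint does not have nonnegative curvature, with any fields. Take $b_v=1$, unit weights, $S=\varnothing$, and the path with edges $ab,bc,cd$. Then $A_S^{\omega_{\mathrm{deg}},\kappa}$ is the $3\times3$ matrix with ones on its three central diagonals, minus $\kappa I$. Its smallest eigenvalue is $1-\sqrt2-\kappa<0$ for every $\kappa\ge0$. The correct general bound (simple graph, $w_e\le1$) is $\eta=-1$.
\item A hard matroid constraint cannot compensate. If $u,v$ are parallel in $\cN/S$ (two sides of a triangle after contracting the third), the corresponding block of $A_S^{\omega_{\mathrm{for}},\kappa}$ is $DJD-\kappa D$ with $D=\operatorname{diag}(w_u,w_v)$. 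Testing against $D^{-1}(1,-1)^{\mathsf T}$ shows it is not positive semidefinite for any $\kappa>0$.
\item External fields do not change these signs; a uniform field $\lambda$ just multiplies admissible curvatures by $\lambda$. So a negative curvature sum can be pushed toward $0$ but never reach it, and \cref{rmk:curvature-threshold} shows that sums slightly below $0$ do not suffice in general.
\item Separately, \cref{prop:flc-curvature} does not apply to the forest measure, which is not the coefficient measure of $p(\mathbf1+\bm x)$ for a nonnegative $p$. Its $0$-curvature follows directly from the parallel-class blocks.
\end{itemize}

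\textbf{The missing idea is to \emph{soften} the matroid constraint, as the paper does.} Let $\cN$ be the graphic matroid, $\mu(S)\coloneqq\allone[\deg_S(v)\le b_v\ \forall v]\,\bm w^S$, and $q\coloneqq\epsilon/(8n)$. Replace the target $Z_0$ by $Z_q\coloneqq\sum_{S\in\supp\mu}\mu(S)\,q^{|S|-\rk_{\cN}(S)}$.
\begin{itemize}
\item \emph{Error control.} Choose a greedy basis $I(S)$ of $\cN|S$. The map $S\mapsto(I(S),S\setminus I(S))$ is injective, and $\mu(S)\le\mu(I(S))$ because $w_e\le1$. Hence $Z_0\le Z_q\le(1+q)^nZ_0\le(1+\epsilon/4)Z_0$.
\item \emph{Convolution form.} The dual-rank identity gives $q^{\rk_{\cN}(E)}Z_q=\sum_S\mu_q(S)\nu_q(E\setminus S)$, where $\mu_q(S)\coloneqq q^{|S|}\mu(S)$ and $\nu_q(T)\coloneqq q^{|T|-\rk_{\cN^*}(T)}$.
\item \emph{Supports.} Both supports are downward closed; $\nu_q$ in fact has full support.
\item \emph{Curvatures.} $\mu_q$ has curvature $-q$, since $A_S^{\mu_q,-q}=q^2A_S^{\mu,-1}$. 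The matrix $A_T^{\nu_q,q}$ consists of blocks $(1-q)J$ on the parallel classes of nonloops of $\cN^*/T$; loops have addition ratio $q$ and give zero rows. So $\nu_q$ has curvature $+q$.
\end{itemize}
Thus $\eta+\kappa=0$ with $\gamma=q$ and $\chi=1/q$, and \cref{thm:curvature-convolution} applies. The soft penalty is exactly what restores downward closure of the dual factor and also supplies the strictly positive curvature that no hard constraint can provide.
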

Both sums range over all sizes, including the empty edge set.

\begin{proof}[Proof of \cref{cor:rainbow-matchings,cor:degree-constrained-forests}]
For the fixed graph $G$ and any matroid $\cN$ on its edge
set $E$, given by an independence oracle, we approximate
\[
 Z_0\coloneqq
 \sum_{\substack{S\in\cI(\cN):\\\deg_S(v)\le b_v\ \forall v}}\bm{w}^S.
\]
For rainbow matchings, take $b_v=1$ and let $\cN$ be the partition
matroid allowing at most one edge of each color. For forests, take
the graphic matroid.
We first replace $Z_0$ by a weighted sum $Z_q$ that gives
a small weight to dependent sets, and bound the resulting error.
We then express $Z_q$ through a convolution coefficient and apply
\cref{thm:curvature-convolution}.

Delete zero-weight edges and edges incident to a vertex
$v$ with $b_v=0$, restricting $\cN$ accordingly. Write $n\coloneqq|E|$; the case
$n=0$ has $Z_0=1$. Define the measure
\[
 \mu(S)\coloneqq
       \allone[\deg_S(v)\le b_v\ \forall v]\bm{w}^S.
\]
Its support is downward closed.

For $0<\epsilon\le1$, put $q\coloneqq\epsilon/(8n)$ and define
\[
 Z_q\coloneqq\sum_{S\in\supp\mu}
                    \mu(S)q^{|S|-{\rk_{\cN}}(S)}.
\]
Independent sets contribute exactly $Z_0$, since their exponent is
zero. To bound the additional contribution, for each $S\in\supp\mu$
choose a basis $I(S)$ of $\cN|S$ by fixed-order greedy.
Then $I(S)\subseteq S$ satisfies the degree bounds,
$|S\setminus I(S)|=|S|-{\rk_{\cN}}(S)$, and
$\mu(S)\le\mu(I(S))$ because $w_e\le1$.
The map $S\mapsto(I(S),S\setminus I(S))$ is injective. Therefore
\[
 Z_0\le Z_q
 \le\sum_{\substack{I\in\supp\mu\cap\cI(\cN)}}
       \mu(I)\sum_{T\subseteq E\setminus I}q^{|T|}
 \le(1+q)^nZ_0\le(1+\epsilon/4)Z_0,
\]
where $(1+q)^n\le e^{\epsilon/8}\le1+\epsilon/4$.
{Thus an estimate of $Z_q$ with relative error at most
$\epsilon/2$ also estimates $Z_0$ with relative error at most $\epsilon$.}

To approximate $Z_q$, define
\[
 \mu_q(S)\coloneqq q^{|S|}\mu(S),\qquad
 \nu_q(T)\coloneqq q^{|T|-{\rk_{\cN^*}}(T)}.
\]
For every $T\subseteq E$, the matroid dual-rank identity is
\[
 {\rk_{\cN^*}}(T)=|T|+{\rk_{\cN}}(E\setminus T)-{\rk_{\cN}}(E).
\]
Taking $T=E\setminus S$ gives
$\nu_q(E\setminus S)=q^{{\rk_{\cN}}(E)-{\rk_{\cN}}(S)}$. Hence
\[
 [\bm{x}^E](g_{\mu_q}g_{\nu_q})
 =\sum_{S\subseteq E}\mu_q(S)\nu_q(E\setminus S)
 =q^{{\rk_{\cN}}(E)}Z_q.
\]

We now verify the hypotheses of \cref{thm:curvature-convolution}.
First, $\mu$ has upward spectral curvature $-1$.
Fix $S\in\supp\mu$ and put
$W_S\coloneqq\{e\in E\setminus S:S+e\in\supp\mu\}$.
Let $B$ be the unsigned incidence matrix of $W_S$ on the
vertices $v$ with $b_v-\deg_S(v)=1$, where at most one further incident
edge can be added. Let $d_e$ be its column sums, and put
$D\coloneqq\operatorname{diag}(w_e)$ on these edges.
Two available edges cannot be added together precisely when they
share such a vertex. Since $G$ is simple, they
share at most one vertex. The principal submatrix indexed by $W_S$ is
\[
 (A_S^{\mu,-1})_{W_S\times W_S}
 =DB^{\mathsf T}BD+
   \operatorname{diag}_e\bigl(w_e-(d_e-1)w_e^2\bigr)
 \succeq0.
\]
Indeed, $d_e\le2$ and $w_e\le1$. Unavailable edges contribute zero
rows and columns. Also, every addition ratio of $\mu$ is at most one.

Consequently, $\mu_q$ has curvature $-q$, since
$A_S^{\mu_q,-q}=q^2A_S^{\mu,-1}$.
The measure $\nu_q$ has full support and curvature $q$. To see this, consider
the contraction $\cN^*/T$: its loops have addition ratio $q$, and
its nonloops have addition ratio one. In $A_T^{\nu_q,q}$ the loop
rows vanish, each nonloop parallel class gives a block $(1-q)J$,
where $J$ is the all-ones matrix, and all other entries vanish.
Thus both supports are downward closed and their curvatures sum to
zero. Their addition ratios are at most $q$ and one, respectively,
so we may supply $\eta=-q$, $\kappa=\gamma=q$, and $\chi=q^{-1}=8n/\epsilon$ to \cref{thm:curvature-convolution}.

Degree checks and greedy rank queries evaluate both
measures using the dual-rank identity above.
Both are positive at the empty set, and
\[
 \log(\cR_{\mu_q}\cR_{\nu_q})
 \le 2n\log(1/q)+\sum_{e\in E}\log(1/w_e).
\]
This bound and $\chi=8n/\epsilon$ are polynomial in the input length
and $\epsilon^{-1}$. Apply \cref{thm:curvature-convolution} with
relative error $\epsilon/2$ and divide its coefficient estimate by
$q^{{\rk_{\cN}}(E)}$. {This gives an estimate of $Z_q$ with
relative error at most $\epsilon/2$, and hence an estimate of $Z_0$
with relative error at most $\epsilon$.}
\end{proof}

\section{Miscellaneous Aspects of \cref{thm:poincare}}

\subsection{A Matroid Intersection Analog of the Monomer-Dimer Model}\label{app:matroid-monomer-dimer}
In this section, we consider a generalization of the classical monomer-dimer model over matchings of a graph to matroid intersection. We use \cref{thm:poincare} to prove a spectral gap for the standard heat-bath Glauber dynamics that decays smoothly as the parameter $\lambda$ grows. In particular, the model does not exhibit a dynamical phase transition in $\lambda$. The motivation for studying this problem was discussed in \cref{subsec:ack}.

Let $\cM,\cN$ be two matroids on a common ground set, and fix a parameter $\lambda \geq 0$. We consider the following probability measure over common independent sets of all sizes:
\begin{align*}
    \mu_{\lambda}(S) \propto \lambda^{\abs{S}}, \qquad \forall S \in \cI(\cM) \cap \cI(\cN).
\end{align*}
The discrete-time heat-bath Glauber dynamics is a Markov chain whose evolution can be described as follows: If the state at time step $t$ is $S_{t}$, then we transition to $S_{t+1}$ by
\begin{enumerate}
    \item sampling a uniformly random element $u \in \cU$, and
    \item resampling the in/out status of $u$ conditioned on all other elements. More specifically, if $u \in S_{t}$, then set $S_{t+1} = S_{t} - u$ with probability $\frac{1}{1 + \lambda}$. Otherwise, if $u \notin S_{t}$ and $S_{t} + u \in \cI(\cM) \cap \cI(\cN)$, then set $S_{t+1} = S_{t} + u$ with probability $\frac{\lambda}{1 + \lambda}$. In all other cases, set $S_{t+1} = S_{t}$.
\end{enumerate}
Let $P_{\lambda}$ denote the corresponding transition probability matrix, and let $\cL_{\lambda} \coloneqq I - P_{\lambda}$. It is straightforward to verify that $P_{\lambda},\cL_{\lambda}$ are reversible with respect to $\mu_{\lambda}$. For $\lambda>0$, the chain is irreducible since every state can reach $\emptyset$ by deletions, and each such path can be reversed.
\begin{theorem}\label{thm:softintersect-heatbath}
For every $\cM,\cN$ with $n = \abs{\cU}\geq1$ and every $\lambda \geq 0$, we have the spectral gap lower bound
\begin{align*}
    \gamma(\cL_{\lambda}) \geq \frac{1}{n \cdot (1 + \lambda) \cdot (1 + \lambda \cdot (n-1))}.
\end{align*}
\end{theorem}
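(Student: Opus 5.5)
The plan is to deduce the bound from \cref{thm:poincare} by writing $\mu_\lambda$ as a Hadamard product on the common support $\Omega=\cI(\cM)\cap\cI(\cN)$, and then comparing the resulting continuous-time generator with $\cL_\lambda$. For $\lambda=0$ the chain is trivial, so assume $\lambda>0$. $\Omega$ is downward closed, hence connected under single-element moves. As in \cref{thm:poincare}, both factors must be restricted to $\Omega$. For scales $c>0$, set
\[
\mu^{(c)}(S)=c^{|S|}\allone[S\in\Omega],\qquad \nu^{(c)}(S)=(\lambda/c)^{|S|}\allone[S\in\Omega].
\]
Then $\mu^{(c)}\odot\nu^{(c)}=\mu_\lambda$. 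The Laplacian $\sL$ of \cref{eq:full-laplacian} has addition rate $c$ and deletion rate $c/\lambda$. The heat-bath chain has addition probability $\lambda/(n(1+\lambda))$ and deletion probability $1/(n(1+\lambda))$. Hence $\cL_\lambda=\frac{\lambda}{c\,n(1+\lambda)}\sL$, and $\gamma(\cL_\lambda)\ge \frac{\lambda(\eta+\kappa)}{c\,n(1+\lambda)}$ whenever the curvature hypotheses hold with $\eta+\kappa>0$.

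The downward curvature of $\nu^{(c)}$, i.e.\ the upward curvature of its dual, is easy to compute. Deletions from an independent set are always allowed, so the off-diagonal entries of $D^{\nu^{(c)},\kappa}_S$ are $(c/\lambda)^2-(c/\lambda)^2=0$ and the diagonal entries are $(c/\lambda)^2-\kappa c/\lambda$. This gives $\kappa=c/\lambda$. For $\mu^{(c)}$, let $W_S$ be the feasible additions at $S$. The curvature matrix on $W_S$ is $c^2(I+B_S)-\eta c I$, where $B_S(u,v)=\allone[u\ne v,\ S+u+v\notin\Omega]$. Two elements $u,v$ conflict exactly when they are parallel in $\cM/S$ or in $\cN/S$, so $B_S$ is the adjacency matrix of the union of two cluster graphs on $W_S$. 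The admissible curvature is therefore $\eta=c\,\lambda_{\min}(I+B_S)$ (minimized over $S$), and it remains to lower-bound $\lambda_{\min}(I+B_S)$.

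The main obstacle is this step. The crude bound $I+B_S\succeq-(n-2)I$ yields $\eta+\kappa=c\bigl(1/\lambda-(n-2)\bigr)$, which degenerates for $\lambda\ge 1/(n-2)$. The target rate $1/(1+\lambda(n-1))$, which never vanishes, indicates that the negative part must be absorbed differently. My first attempt is to exploit the structure $I+B_S=(J_{\cM})+(J_{\cN})-I-C$. Here $J_{\cM},J_{\cN}$ are the positive semidefinite block all-ones matrices of the two parallel-class partitions, and $C$ records pairs parallel in both. Every negative contribution is thus confined to ``doubly parallel'' pairs and the $-I$ term.

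If a uniform bound of this kind is still not enough, the fallback is to move mass between the two factors in a state-dependent way: put the $\cN$-constraint into the second factor (a limit of soft weights off $\cI(\cN)$, as in the $\lambda\to\infty$ argument of \cref{lem:bases-pair-poincare}). Then the first factor only carries $\cM$, and its curvature matrix $c^2J_{\cM}$ is positive semidefinite with $\eta=0$. One would then pass to the induced chain on $\Omega$ via \cref{prop:induced-poincare}. The factor $1+\lambda(n-1)$ should arise when the deletion rates $c/\lambda$ are balanced against the at most $n-1$ competing additions along an excursion outside $\Omega$. This fixes the choice $c\asymp \lambda/(1+\lambda(n-1))$ relative to the comparison of the induced rates with those of $\cL_\lambda$.
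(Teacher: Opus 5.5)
Your identification $\cL_\lambda=\frac{\lambda}{c\,n(1+\lambda)}\sL$ and your curvature computations are correct: $\kappa=c/\lambda$ and $\eta=c\min_S\lambda_{\min}(I+B_S)$. But they show that your first route cannot be rescued by a sharper eigenvalue bound. Here is an example.
\begin{itemize}
\item Take $n=4k$ elements split into groups $G_{ax},G_{ay},G_{bx},G_{by}$ of size $k$.
\item Let $\cM$ allow at most one element of $G_{ax}\cup G_{ay}$ and one of $G_{bx}\cup G_{by}$.
\item Let $\cN$ allow at most one element of $G_{ax}\cup G_{bx}$ and one of $G_{ay}\cup G_{by}$. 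These are matchings in a $k$-fold $K_{2,2}$.
\end{itemize}
At $S=\varnothing$ the conflict graph is the clique blow-up of a $4$-cycle. So $I+B_\varnothing=(I_4+A_{C_4})\otimes J_k$, whose smallest eigenvalue is exactly $-k$. Hence every split $c^{|S|}\cdot(\lambda/c)^{|S|}$ gives $\eta+\kappa\le c(1/\lambda-k)$. This is negative as soon as $\lambda>1/k=4/n$, and already for $\lambda>1$ when $k=1$. Rewriting $I+B_S$ as $J_{\cM}+J_{\cN}-I-C$ cannot help, since the eigenvalue really is $-k$.

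The scale $c$ is also irrelevant throughout. Changing it rescales the rates of $\sL$ and the admissible $\eta+\kappa$ by the same factor. So it cannot be ``chosen'' to produce $1+\lambda(n-1)$, contrary to your last sentence.

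Your fallback is the paper's strategy, but the two steps that carry the proof are missing.

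\textbf{Choice of penalty.} The penalty must be chosen so that the dual of the second factor keeps positive curvature; a generic soft weight off $\cI(\cN)$ need not do this. The paper takes $\pi(S)=\lambda^{|S|}\mathbf 1[S\in\cI(\cM)]$, so $\eta=0$, and $\nu(S)=t^{|S|-\rk_{\cN}(S)}\mathbf 1[S\in\cI(\cM)]$. With $\cK=(\cN|_S)^*$ one has $\nu(S\setminus T)/\nu(S)=t^{-\rk_{\cK}(T)}$. So $A^{\nu^*,1}_{\cU\setminus S}$ is a direct sum of blocks $\frac{1-t}{t^2}J$ over parallel classes of $\cK$, and $\kappa=1$ uniformly in $t$ (\cref{lem:softintersect-poincare}).

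\textbf{Origin of $1+\lambda(n-1)$.} The factor comes from a path comparison after $t\to0$. You can go through your induced chain or the paper's extension $f^{\uparrow}$. In the limit, the harmonic extension of \cref{prop:induced-poincare} is exactly the average of $f(S-u)$ over the unique $\cN$-circuit $D_S$ of $S\in\cI(\cM)\setminus\Omega$. Either way, the limiting Dirichlet form is the Glauber form $\cE_\lambda(f,f)\coloneqq n(1+\lambda)\langle f,\cL_\lambda f\rangle_{\mu_\lambda}$ plus swap terms $\frac{\lambda}{d_S}\mu_\lambda(S-u)\bigl(f(S-u)-f(S-v)\bigr)^2$ for $u,v\in D_S$. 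To bound the swap terms:
\begin{itemize}
\item route each swap through $S-u-v$ and apply $(a-b)^2\le2(a-c)^2+2(b-c)^2$;
\item use $\mu_\lambda(S-u)=\lambda\mu_\lambda(S-u-v)$ and $d_S\ge2$;
\item note that each Glauber edge $(T,T+v)$ carries at most $n-|T|-1$ swaps.
\end{itemize}
Together these bound the swap terms by $\lambda(n-1)\cE_\lambda(f,f)$, which is where the constant comes from. Neither ingredient appears in your proposal, so as written it does not establish the bound.
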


Following the strategy used above for common bases, we first establish a Poincar\'{e} inequality for a softened version of $\mu_{\lambda}$ using \cref{thm:poincare}. We then deduce \cref{thm:softintersect-heatbath} by a direct comparison of Dirichlet forms.
\begin{lemma}\label{lem:softintersect-poincare}
For $0 \leq t \leq 1$, define the distribution
\begin{align*}
    \mu_{\lambda}^{(t)}(S) \propto \lambda^{\abs{S}} \cdot t^{\abs{S} - \rk_{\cN}(S)}, \qquad \forall S \in \cI(\cM).
\end{align*}
Then for every $f : \cI(\cM) \to \R$, we have the Poincar\'{e} Inequality
\begin{align*}
    \Var[\mu_{\lambda}^{(t)}]{f} \leq \cE_{\lambda}^{(t)}(f,f) \coloneqq \lambda \cdot \sum_{S \in \cI(\cM)} \sum_{\substack{u \notin S \\ S + u \in \cI(\cM)}} \mu_{\lambda}^{(t)}(S) \cdot \wrapp{f(S + u) - f(S)}^{2}.
\end{align*}
\end{lemma}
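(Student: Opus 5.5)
The plan is to obtain the inequality as a direct instance of \cref{thm:poincare}, with $\Omega\coloneqq\cI(\cM)$. First I treat $\lambda>0$ and $0<t\le1$; the endpoint cases come at the end. Define the two factors on $2^{\cU}$ by
\begin{align*}
 \mu(S)\coloneqq\lambda^{|S|}\allone[S\in\cI(\cM)],\qquad
 \nu(S)\coloneqq t^{|S|-\rk_{\cN}(S)}\allone[S\in\cI(\cM)].
\end{align*}
Their common support is $\cI(\cM)$. It is downward closed, so it is connected to $\varnothing$ under single-element deletions, and $\mu\odot\nu=\mu_\lambda^{(t)}$. The Laplacian $\sL$ of \cref{eq:full-laplacian} then adds a feasible $u$ at rate $\mu(S+u)/\mu(S)=\lambda$. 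By \cref{eq:dirichlet-form} and detailed balance, which pairs each deletion with its reverse addition, $\langle f,\sL f\rangle_{\mu_\lambda^{(t)}}$ is exactly the sum over additions displayed as $\cE^{(t)}_\lambda(f,f)$. So it suffices to verify the curvature hypotheses with $\eta=0$ and $\kappa=1$, which give gap $\eta+\kappa=1$.

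\emph{Upward $0$-curvature of $\mu$.} Fix $S\in\cI(\cM)$. For $u\notin S$ with $S+u\in\cI(\cM)$, i.e.\ $u$ a nonloop of $\cM/S$, the diagonal entry of $A_S^{\mu,0}$ is $\lambda^2$. For two such distinct $u,v$, the off-diagonal entry is $\lambda^2-\lambda^2\allone[S+u+v\in\cI(\cM)]$. This equals $\lambda^2$ exactly when $u,v$ are parallel in $\cM/S$, and $0$ otherwise. All other rows and columns vanish. Hence $A_S^{\mu,0}=\lambda^2\bigoplus_P J_P\succeq0$, where $P$ ranges over the parallel classes of nonloops of $\cM/S$ and $J_P$ is the all-ones block on $P$.

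\emph{Downward $1$-curvature of $\nu$, i.e.\ upward curvature of $\nu^*$.} Using the dual-rank identity \cref{eq:matroid-dual-rank}, on the relevant sets $\nu^*(T)\propto t^{-\rk_{\cN^*}(T)}$. The addition ratio at $T$ is $1$ for loops of $\cN^*/T$ and $1/t$ for nonloops, and the pair ratio is $1/t$ for parallel nonloops and $1/t^2$ for nonparallel nonloops. This is the same computation as the one for $\nu_q$ in \cref{app:curvature-applications}. With $\kappa=1$, loop rows vanish: the diagonal is $1-1=0$ and the off-diagonals are $r(v)-r(v)=0$. Each parallel class of nonloops contributes a block $(t^{-2}-t^{-1})J\succeq0$, since $t\le1$, and all other entries vanish.

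The step I expect to be the main obstacle is a restriction issue: $\nu$ is restricted to $\cI(\cM)$, which is not upward closed, so $\nu^*$ is restricted to complements of $\cM$-independent sets. I must check that this restriction does not break positive semidefiniteness. Downward curvature of $\nu$ only involves deletions $S-u$ and $S-u-v$ of $S\in\cI(\cM)$, and these stay in $\cI(\cM)$. Therefore the matrices $D_S^{\nu,1}$ coincide with those of the unrestricted measure. This is the same observation used in the proof of \cref{thm:curvature-convolution}, and it settles the obstacle. \cref{thm:poincare} then gives $\Var{f}\le\cE_\lambda^{(t)}(f,f)$.

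Finally, the endpoint cases. If $\lambda=0$, the law is a point mass at $\varnothing$ and the variance is $0$. For $t=0$, with the convention $0^0=1$, the law is supported on $\cI(\cM)\cap\cI(\cN)$. The inequality follows by letting $t\downarrow0$: both sides are finite sums of weights continuous in $t$ after normalization. The normalizing constant tends to a positive limit, because it contains the term for $\varnothing$. The limiting Dirichlet form is precisely the displayed one, with $f$ evaluated only on supported states.
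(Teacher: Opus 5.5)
Your proposal is correct and follows the paper's proof essentially step for step. You use the same two factors, apply \cref{thm:poincare} with $(\eta,\kappa)=(0,1)$, obtain the same all-ones blocks over parallel classes (your $\cN^*/T$ with $T=\cU\setminus S$ is the paper's $\cK=(\cN|_S)^*$), and close with the same continuity argument at $t=0$. One aside is inaccurate: at $t=0$ the Dirichlet form still evaluates $f$ at additions $S+u\in\cI(\cM)\setminus\cI(\cN)$, which lie outside the support of $\mu_\lambda^{(0)}$ (the paper exploits exactly this in \cref{thm:softintersect-heatbath}). This does not affect your limiting argument.
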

\begin{remark}
As discussed in \cref{rmk:partition-penalty}, the ``softening'' of $\mu_{\lambda}$ into $\mu_{\lambda}^{(t)}$ can be viewed as a generalization of the ``softening'' used in the proof of \cref{thm:homogeneous-convolution-informal}.
\end{remark}
\begin{proof}
For $\lambda>0$ and $0<t\le1$, define the weight functions
\[
 \pi(S)\coloneqq\lambda^{|S|}\allone[S\in\cI(\cM)],
 \qquad
 \nu(S)\coloneqq t^{|S|-\rk_{\cN}(S)}\allone[S\in\cI(\cM)].
\]
Their common support $\cI(\cM)$ is downward closed and connected,
and $\pi\odot\nu=\mu_{\lambda}^{(t)}$. We claim that $\pi$ has
upward $0$-spectral curvature and $\nu^*$ has upward
$1$-spectral curvature. Assuming these claims, \cref{thm:poincare}
with $\eta=0$ and $\kappa=1$ gives the desired inequality:
the dynamics has rate $\pi(S+u)/\pi(S)=\lambda$ for every feasible
addition, so its Dirichlet form is exactly
$\cE_{\lambda}^{(t)}(f,f)$ by reversibility. The case $t=0$
follows by continuity of the finite sums, and $\lambda=0$
gives a point mass at $\emptyset$, for which both sides vanish.
It remains to verify the two claims.

First, for $S\in\cI(\cM)$, the feasible additions are the nonloops of
$\cM/S$. Two distinct feasible additions $u,v$ are called
\emph{parallel} in $\cM/S$ if $S+u+v\notin\cI(\cM)$.
Together with equality, this is an equivalence relation on the
nonloops of $\cM/S$, partitioning them into equivalence classes.
Thus $A_S^{\pi,0}$ is a direct
sum of blocks $\lambda^2J$ on these classes and zero
rows for infeasible additions, where $J$ denotes an all-ones
matrix. Thus $A_S^{\pi,0}\succeq0$, proving the first claim.

Second, fix $S\in\cI(\cM)$, so
$\cU\setminus S\in\supp\nu^*$, and put
$\cK\coloneqq(\cN|_S)^*$.
For every $T\subseteq S$, applying \cref{eq:matroid-dual-rank}
to the matroid $\cN|_S$ on ground set $S$ gives
\[
 \rk_{\cK}(T)
 =|T|+\rk_{\cN|_S}(S\setminus T)-\rk_{\cN|_S}(S)
 =|T|+\rk_{\cN}(S\setminus T)-\rk_{\cN}(S).
\]
Since $S\setminus T\in\cI(\cM)$, the definitions yield
\begin{align*}
 \frac{\nu^*((\cU\setminus S)\cup T)}{\nu^*(\cU\setminus S)}
 &=\frac{\nu(S\setminus T)}{\nu(S)}
 =\frac{t^{|S\setminus T|-\rk_{\cN}(S\setminus T)}}
        {t^{|S|-\rk_{\cN}(S)}}
 =t^{-|T|+\rk_{\cN}(S)-\rk_{\cN}(S\setminus T)}
 =t^{-\rk_{\cK}(T)}.
\end{align*}
Consequently, $A_{\cU\setminus S}^{\nu^*,1}$ is a direct sum of blocks
$\frac{1-t}{t^2}J$ on the equivalence classes under parallelism
among the nonloops of $\cK$,
with zero rows at its loops and outside $S$. Since $0 < t\leq 1$, these blocks are
positive semidefinite.
\end{proof}

\begin{proof}[Proof of \cref{thm:softintersect-heatbath}]
The case $\lambda=0$ is trivial so assume $\lambda>0$.
Let $\Omega \coloneqq \cI(\cM)\cap\cI(\cN)$.
For $S\in\cI(\cM)\setminus\Omega$, write
$D_S\coloneqq\{u\in S:S-u\in\Omega\}$ and $d_S\coloneqq|D_S|$.
Fix a function $f:\Omega \to \R$.
We extend it to the function $f^{\uparrow}:\cI(\cM) \to \R$ as follows:
\[
 f^{\uparrow}(S)\coloneqq
 \begin{cases}
  f(S), & S\in\Omega,\\[2pt]
  \displaystyle\frac{1}{d_S}\sum_{u\in D_S}f(S-u),
    & S\in\cI(\cM)\setminus\Omega,\ d_S>0,\\[6pt]
  0, & S\in\cI(\cM)\setminus\Omega,\ d_S=0.
 \end{cases}
\]
By \cref{lem:softintersect-poincare} at $t=0$,
\[
 \Var[\mu_\lambda]{f} = \Var[\mu_\lambda^{(0)}]{f^{\uparrow}}\le\cE_\lambda^{(0)}\wrapp{f^{\uparrow},f^{\uparrow}}.
\]
We will finish the proof by showing that
\begin{align*}
	\cE_\lambda^{(0)}\wrapp{f^{\uparrow},f^{\uparrow}}
	&\leq n(1+\lambda) (1 + \lambda(n-1)) \cdot \langle f,\cL_\lambda f\rangle_{\mu_\lambda}.
\end{align*}
For convenience, put
\[
 \cE_{\lambda}(f,f)
 \coloneqq
 n(1+\lambda)\langle f,\cL_\lambda f\rangle_{\mu_\lambda}
 =
 \lambda\sum_{\substack{S\in\Omega,\ u\notin S\\S+u\in\Omega}}
 \mu_\lambda(S)\wrapp{f(S+u)-f(S)}^2.
\]
By the definition of $f^{\uparrow}$,
\begin{align*}
 \cE_\lambda^{(0)}\wrapp{f^{\uparrow},f^{\uparrow}}
 &=\cE_\lambda(f,f)+
 \lambda\sum_{S\in\cI(\cM)\setminus\Omega}
 \sum_{u\in D_S}
 \mu_\lambda(S-u)\wrapp{f^{\uparrow}(S)-f^{\uparrow}(S-u)}^2\\
 &=\cE_\lambda(f,f)+
 \underset{(\ast)}{\underbrace{\sum_{\substack{S\in\cI(\cM)\setminus\Omega\\d_S\ge2}}
 \frac{\lambda}{d_S}
 \sum_{\{u,v\}\subseteq D_S}
 \mu_\lambda(S-u)\wrapp{f(S-u)-f(S-v)}^2}}.
\end{align*}
In the last equality, we used that for any probability measure $\zeta$ on a domain $\Omega$ and any test function $\varphi : \Omega \to \R$, we have $\Var[\omega \sim \zeta]{\varphi(\omega)} = \frac{1}{2} \cdot \E[\omega,\omega' \sim \zeta]{\wrapp{\varphi(\omega) - \varphi(\omega')}^{2}}$.
The inner sum is over unordered pairs of distinct elements;
its summand is symmetric because
$\mu_\lambda(S-u)=\mu_\lambda(S-v)$.

We control $(\ast)$ by a multiple of $\cE_{\lambda}(f,f)$. To do this, observe that for each $S \in \cI(\cM) \setminus \Omega$ with $d_{S} \geq 2$ and each pair $\{u,v\} \subseteq D_{S}$, its contribution to $(\ast)$ can be viewed as a ``swap'' move that transitions from $S-u$ to $S-v$. This swap can be simulated using two steps of Glauber dynamics $P_{\lambda}$, namely $S - u \to S - u - v$ followed by $S - u - v \to S - v$. With this intuition in mind, we invoke the numerical inequality $(a-b)^2\le2((a-c)^2+(b-c)^2)$ and
$\mu_\lambda(S-u)=\lambda\mu_\lambda(S-u-v)$ to obtain
\begin{align*}
 (\ast)
 &\le
 \sum_{\substack{S\in\cI(\cM)\setminus\Omega\\d_S\ge2}}
 \frac{2\lambda^2}{d_S}
 \sum_{\substack{u,v\in D_S\\u\ne v}}
 \mu_\lambda(S-u-v)\wrapp{f(S-u)-f(S-u-v)}^2\\
 &\le\lambda^2
 \sum_{\substack{T\in\Omega,\ v\notin T\\T+v\in\Omega}}
 (n-|T|-1)\mu_\lambda(T)\wrapp{f(T+v)-f(T)}^2
 \tag{Setting $T = S - u - v$ and using $d_{S} \geq 2$}\\
 &\le\lambda(n-1)\cE_\lambda(f,f).
 \tag{\(|T|\ge0\)}
\end{align*}
This finishes the proof.
\end{proof}



\end{document}